\tikzstyle{hole}=[fill=white, draw=black, shape=circle, minimum size=9mm]
\tikzstyle{non_hole}=[fill={rgb,255: red,99; green,222; blue,255}, draw=black, shape=circle, minimum size=9mm]
\tikzstyle{fp_node}=[fill={rgb,255: red,255; green,190; blue,190}, draw=black, shape=circle, minimum size=9mm]
\tikzstyle{sm_hole}=[fill=white, draw=black, shape=circle, minimum size=7mm, minimum size=5mm, inner sep=0mm]
\tikzstyle{sm_non_hole}=[fill={rgb,255: red,99; green,222; blue,255}, draw=black, shape=circle, minimum size=7mm]
\tikzstyle{sm_fp_node}=[fill=green!20, draw=black, shape=circle, minimum size=5mm, inner sep=0mm]
\tikzstyle{dedge}=[->,-latex]
\tikzstyle{edge}=[draw=blue, thick=]
\tikzstyle{fedge}=[->,-latex, line width=0.6mm]
\tikzstyle{edges}=[-]
\tikzstyle{gate}=[fill=white, draw=black, shape=circle, minimum size=4mm]
\tikzstyle{inp}=[fill=white, draw=none, shape=circle, minimum size=5mm]
\tikzstyle{inpneg}=[fill=white, draw=none, shape=circle, minimum size=1mm,inner sep=1mm]
\tikzstyle{tnode}=[fill=green!20, draw=black, shape=circle, minimum size=2mm, inner sep=1mm,line width=1pt]
\newcommand{\bbN}{\mathbb{N}}
\newcommand{\cT}{\mathcal{T}}
\newcommand{\cA}{\mathcal{A}}
\newcommand{\cE}{\mathcal{E}}
\newcommand{\cP}{\mathcal{P}}
\newcommand{\cN}{\mathcal{N}}
\newcommand{\cG}{\mathcal{G}}
\newcommand{\cC}{\mathcal{C}}
\newcommand{\cq}{\text{\rm CQ}}
\newcommand{\cL}{\mathcal{L}}
\newcommand{\leaves}[1]{\operatorname{leaves}(#1)}
\newcommand{\parent}[1]{\operatorname{parent}(#1)}
\newcommand{\trees}{\operatorname{Trees}}
\newcommand{\treesb}{\operatorname{Trees}_{\operatorname{B}}}
\newcommand{\sinit}{s_\text{init}}
\newcommand{\sfin}{s_\text{final}}
\newcommand{\St}{S}
\newcommand{\st}{s}
\newcommand{\str}{r}
\newcommand{\stq}{q}
\newcommand{\oSt}{\overline{\St}}
\newcommand{\ocT}{\overline{\cT}}
\newcommand{\oDelta}{\overline{\Delta}}
\newcommand{\aN}{\widetilde{N}}
\newcommand{\aT}{\widetilde{T}}
\newcommand{\aW}{\widetilde{W}}
\newcommand{\wt}[1]{\widetilde{#1}}
\newcommand{\virtsizew}{\operatorname{fsize}}
\newcommand{\virtsize}[1]{\virtsizew(#1)}
\newcommand{\duniv}{D_{1}}
\newcommand{\setc}{\mathbf{C}}
\newcommand{\setv}{\mathbf{V}}
\newcommand{\rsize}[1]{\|#1\|}
\newcommand{\Vand}{V_{\wedge}}
\newcommand{\nodesymbol}{}
\newcommand{\quicktree}[2]{
\mbox{\tikz[baseline=-2ex] \path[level distance=3mm,
	level 1/.style={sibling distance=1cm}]
	node[minimum size=2mm, inner sep=0mm] (r) {$\bullet$} 
	child { node[minimum size=0mm, inner sep=0mm] {#1}}
	child { node[minimum size=0mm, inner sep=0mm] {#2}};
	}
}
\newcommand{\singlenode}[1]{
	\mbox{\tikz[baseline=-0.7ex] \node [fill=white, draw=black, shape=circle, minimum size=2mm, inner sep=0.5mm] (1) {#1};
	}
}
\newcommand{\R}{\mathbb{R}}                     % Reals.
\newcommand{\poly}{\text{poly}}
\newcommand{\eps}{\epsilon}
\newcommand{\pr}[1]{\text{\bf Pr}\normalfont\lbrack #1 \rbrack} %probability
\newcommand{\ex}[1]{\mathbb{E}\normalfont\lbrack #1 \rbrack}%expectation
\newcommand{\bpr}[1]{\text{\bf Pr}\normalfont \Big[#1 \Big]} %probability
\newcommand{\ttx}[1]{\texttt{#1}}
\theoremstyle{plain}
\newtheorem{theorem}{Theorem}[section]
\newtheorem{lemma}[theorem]{Lemma}
\newtheorem{corollary}[theorem]{Corollary}
\newtheorem{proposition}[theorem]{Proposition}
\theoremstyle{definition}
\newtheorem{definition}[theorem]{Definition}
\newtheorem{claim}[theorem]{Claim}
\newtheorem{example}[theorem]{Example}
\newtheorem{remark}[theorem]{Remark}
\newcommand{\rajesh}[1]{\todo[inline,linecolor=green,backgroundcolor=green!25,bordercolor=green]{Rajesh: #1}}
\newcommand{\eval}{W}
\newcommand{\FAIL}{\text{\rm {\bf FAIL}}}
\newcommand{\EstimatePartitionSize}{\text{\sc EstimatePartition}}
\newcommand{\Sample}{\text{\sc Sample}}
\newcommand{\SampleFromVertex}{\text{\sc SampleFromState}}
\newcommand{\astta}{\text{\sc FprasBTA}}
\newcommand{\DD}{\mathcal{D}}
\newcommand{\BB}{\mathcal{B}}
\newcommand{\FF}{\mathcal{F}}
\newcommand{\es}{\lambda}
\newcommand{\np}{\textsc{NP}}
\newcommand{\ptime}{\textsc{P}}
\newcommand{\bpp}{\textsc{BPP}}
\newcommand{\fp}{\textsc{FP}}
\newcommand{\wo}{\textsc{W}[1]}
\newcommand{\fpt}{\textsc{FPT}}
\newcommand{\sharpp}{\textsc{\#P}}
\newcommand{\logcfl}{\textsc{LogCFL}}
\newcommand{\nlog}{\textsc{NL}}
\newcommand{\rp}{\textsc{RP}}
\newcommand{\acone}{\textsc{AC}^1}
\newcommand{\skhw}{\#k\text{\rm -HW}}
\newcommand{\sohw}{\#1\text{\rm -HW}}
\newcommand{\skuhw}{\#k\text{\rm -UHW}}
\newcommand{\sacq}{\text{\rm \#ACQ}}
\newcommand{\suacq}{\text{\rm \#UACQ}}
\newcommand{\sta}{\text{\rm \#TA}}
\newcommand{\stta}{\text{\rm \#BTA}}
\newcommand{\snfa}{\text{\rm \#NFA}}
\newcommand{\snwa}{\text{\rm \#NWA}}
\newcommand{\aecsp}{\text{\rm AECSP}}
\newcommand{\saecsp}{\text{\rm \#AECSP}}
\newcommand{\skhwecsp}{\#k\text{\rm -HW-ECSP}}
\newcommand{\slp}{\#\text{\rm SuccinctNFA}}
\newcommand{\ssddnnf}{\#\text{\rm StructuredDNNF}}
\newcommand{\salp}{\#\text{\rm UnrolledSuccinctNFA}}
\newcommand{\nun}{\mathcal{N}_{\emph{unroll}}}
\newcommand{\prs}{\leq_{\text{\rm PAR}}}
\newcommand{\sol}{\text{\rm sol}}
\newcommand{\var}{\text{\rm var}}
\newcommand{\tw}{\text{\rm tw}}
\newcommand{\hw}{\text{\rm hw}}
\newcommand{\ext}{\text{\rm ext}}
\newcommand{\atoms}{\text{\rm atoms}}
\newcommand{\hill}{\text{\tt a}}
\newcommand{\ray}{\text{\tt b}}
\newcommand{\cso}{\text{\tt c1}}
\newcommand{\cst}{\text{\tt c2}}
\newcommand{\matho}{\text{\tt c3}}
\newcommand{\Grad}{\text{\it G}}
\newcommand{\Enr}{\text{\it E}}
\newcommand{\CourseCS}{\text{\it C}}
\newcommand{\CourseMath}{\text{\it M}}
\newcommand{\w}{{\bar{w}}}
\newcommand{\cd}{\Delta_C}
\newcommand{\id}{\Delta_I}
\newcommand{\rd}{\Delta_R}
\newcommand{\Vars}{\operatorname{Vars}}
\newcommand{\nt}[1]{{#1}}
\tikzset{
    rect/.style={
           rectangle,
           rounded corners,
           draw=black, 
           thick,
           text centered},
    rectw/.style={
           rectangle,
           rounded corners,
           draw=white, 
           thick,
           text centered},
    arrout/.style={
           ->,
           -latex,
           thick,
           },
    arrin/.style={
           <-,
           latex-,
           thick,
           },
    circ/.style={
           circle,
           draw=black, 
           thick,
           text centered},       
    circw/.style={
           circle,
           draw=white, 
           thick,
           text centered}
}
\author{
	Marcelo Arenas\\
	PUC \& IMFD Chile \\
	\texttt{marenas@ing.puc.cl} 
	\and
	Luis Alberto Croquevielle\\
	PUC \& IMFD Chile\\
	\texttt{lacroquevielle@uc.cl}
	\and 
	Rajesh Jayaram\\
	Carnegie Mellon University\\
	\texttt{rkjayara@cs.cmu.edu}
	%\thanks{Rajesh Jayaram would like to thank the partial support
%	from the Office of Naval Research (ONR) grant N00014-18-1-2562, and the National Science Foundation (NSF) under
%	Grant No. CCF-1815840.}
	\and
	Cristian Riveros\\
	PUC \& IMFD Chile\\
	\texttt{cristian.riveros@uc.cl}
}
\title{{\bf When is Approximate Counting \\
		 for Conjunctive Queries Tractable?}}
\date{}
\begin{document}
	
	\begin{titlepage}
\maketitle
\begin{abstract}   
%Conjunctive queries are one of the most common class of queries used
%in database systems, and the best studied in the literature. A
%seminal result of Grohe, Schwentick, and Segoufin (STOC 2004)
%demonstrates that a class $C$ of conjunctive queries can be evaluated
%in polynomial time if and only if it has bounded treewidth. In this
%work, we extend this characterization to the counting problem for
%conjunctive queries. Specifically, we introduce the first fully
%polynomial time randomized approximation scheme (FPRAS) for
%estimating the size of all conjunctive queries with bounded
%tree-width. Additionally, we obtain the first polynomial time
%algorithm for sampling uniformly from this class of queries. As a
%corollary, it follows that a class $C$ of conjunctive queries admits
%an FPRAS if and only if it has bounded tree-width (unless $\bpp \neq
%\ptime$). Our FPRAS also applies to all conjunctive queries with
%bounded \textit{hypertree-width}, as well as unions of such queries.

Conjunctive queries are one of the most common class of queries used
in database systems, and the best studied in the literature. A seminal
result of Grohe, Schwentick, and Segoufin (STOC 2001) demonstrates
that \nt{for every class $\cG$ of graphs, the evaluation of all conjunctive queries whose underlying graph is in $\cG$ is tractable if, and only if, $\cG$ has bounded treewidth.}
In this work, we extend this characterization to the
counting problem for conjunctive queries. Specifically, for every class $\cC$ of conjunctive queries
with bounded treewidth, we introduce the first fully polynomial-time
randomized approximation scheme (FPRAS) for counting
answers to a query in $\cC$, and the first
polynomial-time algorithm for sampling answers uniformly from a query
in $\cC$. \nt{As a corollary, it follows that for every class $\cG$ of graphs, the counting problem for conjunctive queries whose underlying graph is in $\cG$ admits an FPRAS if, and only if, $\cG$ has bounded treewidth (unless $\bpp \neq \ptime$)}. In fact, our FPRAS is more general, and also applies
to conjunctive queries with bounded \textit{hypertree width}, as well
as unions of such queries.

The key ingredient in our proof is the resolution of a fundamental
counting problem from automata theory. Specifically, we demonstrate the first FPRAS and polynomial time 
sampler for the set of trees of size $n$ accepted by a \textit{tree
  automaton}, which improves the prior quasi-polynomial time
randomized approximation scheme (QPRAS) and sampling algorithm of
Gore, Jerrum, Kannan, Sweedyk, and Mahaney '97. We demonstrate how
this algorithm can be used to obtain an FPRAS for many hitherto open
problems, such as counting solutions to constraint satisfaction
problems (CSP) with bounded hypertree-width, counting the number of
error threads in programs with nested call subroutines, and counting
valid assignments to structured DNNF circuits.

\end{abstract}

\thispagestyle{empty}
\end{titlepage}

\newpage
\thispagestyle{empty}
\tableofcontents
\newpage

\pagenumbering{arabic}
\setcounter{page}{1}
%\newpage
%\ofcontents
%\thispagestyle{empty}
%\newpage
%\pagenumbering{arabic}

\section{Introduction}
\label{sec:intro}
% !TeX spellcheck = en_US
%!TEX root = main.tex

Conjunctive queries (CQ) are expressions
%in first-order logic
of the form $Q(\bar x) \leftarrow R_1(\bar y_1), \ldots, R_n(\bar y_n)$
%\begin{eqnarray*}
%%\exists y_1 \cdots \exists y_k \, (\alpha_1 \wedge \dots \wedge \alpha_n),
%  Q(\bar x) & \leftarrow & R_1(\bar y_1), \ldots, R_n(\bar y_n)
%\end{eqnarray*}
where each $R_i$ is a relational symbol, each $\bar y_i$ is a tuple of
variables, and $\bar x$ is a tuple of output variables with $\bar x
\subseteq \bar y_1 \cup \cdots \cup \bar y_n$.
%an atomic formula.
Conjunctive queries are
the most common class of queries used in database systems.
%, and the most studied in the database literature.
They correspond
to \textit{select-project-join} queries in relational algebra
and \textit{select-from-where} queries in SQL, and are closely related to \textit{constraint satisfaction problems} (CSPs). Therefore, the computational
complexity of tasks related to the evaluation of conjunctive
queries is a fundamental object of study.
%in database theory.
%therefore one of the foundational topics in databases.
%Join queries are also closely related to Constraint Satisfaction Problems (CSP)'s, as the evaluation problem for CSP's can be equivalently modeled as the evaluation problem for a join. 
%When considering the complexity of database joins, there are two notions to distinguish: \textit{Data Complexity} and \textit{Combine Complexity}. The former measures the computational complexity when the query is fixed, and the database is given as input, whereas the latter allows both the query and database to be given as input.
Given as input a database instance $D$ and a conjunctive query $Q(\bar x)$,
%where $\bar x =(x_1, \ldots, x_n)$, % is the tuple of output variables of $Q$,
the {\em query evaluation problem} is defined as the problem of
computing $Q(D) := \{ \bar a \mid D \models
Q(\bar a)\}$. Namely, $Q(D)$ is the set of answers $\bar a$ to $Q$ over $D$, where $\bar a$ is an assignment of the variables $\bar x$ which agrees with the relations $R_i$. The
corresponding \textit{query decision problem} is to verify whether or
not $Q(D)$ is empty. It is well known that even the query decision
problem is $\np$-complete for conjunctive queries~\cite{CM77}. Thus, a
major focus of investigation
%of the database literature
in the area has been to find tractable special
%special
cases~\cite{Y81,CR97,GLS98,GSS01,GLS02,FG06,GGLS16}.

In addition to evaluation, two fundamental problems for conjunctive queries are counting the number of answers to a query and uniformly sampling such answers.
%On the one hand,
The counting problem for CQ is of fundamental importance for query optimization~\cite{ramakrishnan2003database,PS13}. Specifically, the optimization process of a relational query engine requires, as input, an estimate of the number of answers to a query (without evaluating the query).  %Furthermore, uniform sampling allows for the efficient construction of some solutions from the query, improving variety amongst the results~\cite{AD20}.
Furthermore, uniform sampling is used to efficiently generate representative subsets of the data, instead of computing the entire query, which are often sufficient for data mining and statistical tasks~\cite{AD20}.
%allows for the efficient construction of some solutions from the query, improving variety amongst th results~\cite{AD20}.
Starting with the work of Chaudhuri, Motwani and Narasayya~\cite{ChaudhuriMN99}, the study of random sampling from queries has attracted significant attention from the database community~\cite{ZhaoC0HY18,Chen020}.
%, and has had some very interesting results
% which is vital to data mining applications and statistical analysis tasks

Beginning with the
work in~\cite{Y81}, a fruitful line of research for finding tractable cases for \nt{CQs} has been to study the {\em degree of acyclicity} of a CQ.
%\rajesh{Fill in more here}
In particular, the \textit{treewidth} $\tw(Q)$~\cite{CR97,GSS01} of a graph representing $Q$,
and more generally the \textit{hypertree width} $\hw(Q)$ of~$Q$~\cite{GLS02},
are two primary measurements of the degree of acyclicity. 
It is known that the query decision problem can be solved in polynomial time for every class
$\cC$ of CQs with bounded treewidth \cite{CR97,GSS01} or bounded
hypertree width \cite{GLS02}.\footnote{\nt{$\cC$ has bounded
	treewidth (hypertree width) if $\tw(Q) \leq k$ ($\hw(Q) \leq k$) for every $Q \in \cC$, for a fixed constant $k$.}}
A seminal result of Grohe, Schwentick,
and Segoufin~\cite{GSS01} demonstrates that
\nt{for every class $\cG$ of graphs, the evaluation of all conjunctive queries whose representing graph is in $\cG$ is tractable if, and only if, $\cG$ has bounded treewidth.}
%(that is, $\tw(G) \leq k$ for every $G \in \cG$, for a fixed constant $k$).}
%if we focus on
%conjunctive queries over graphs,\footnote{That is, conjuntive queries
%	over relations of arity at most 2.} then the query decision problem
%for a class $\cC$ of CQs is in $\ptime$ if, and only if, $\cC$ has
%bounded treewidth.
%Specifically, for queries $\varphi$ with
%arity-$2$,\footnote{The arity of a formula $\phi$ is the maximum
%  number of variables in any atom $\alpha_i$.}, one can naturally
%associate a graph $G(\varphi)$ to $\varphi$.
Hence, the property of bounded treewidth \nt{provides a
characterization of} tractability of the query decision problem.

Unfortunately, uniform generation and exact counting are more challenging than query evaluation for CQs. Specifically,  given as input a conjunctive query $Q$
and database $D$, computing $|Q(D)|$ is $\sharpp$-complete even when $\tw(Q) = 1$~\cite{PS13} (that is, for
so called {\em acylic} CQs~\cite{Y81}).
Moreover, even approximate counting is intractable for queries with unbounded treewidth, since any
multiplicative approximation clearly solves the decision
problem. 
On the other hand, these facts do not preclude the existence of efficient
approximation algorithms for classes of CQs with bounded treewidth, as
the associated query decision problem is in $\ptime$.
Despite this possibility, to date no efficient
approximation algorithms for these classes are known.
%Moreover, we would like to have a characterization, such as the result of Grohe, Schwentick,
%and Segoufin, which demonstrates when approximate counting or uniform generation for conjunctive queries is tractable. 

In this paper, we fill this gap by demonstrating the existence of a
fully polynomial-time randomized approximation scheme (FPRAS) and a fully polynomial-time almost uniform sampler (FPAUS) for
every class of CQs with bounded hypertree width. Since $\hw(Q) \leq
\tw(Q)$ for every CQ $Q$~\cite{GLS02}, our result also
includes every class of CQs with bounded treewidth, as well as 
classes of CQs with bounded hypertree width but unbounded treewidth~\cite{GLS02}.
% In this way, our result
%includes every class of CQs with bounded treewidth, as $\hw(Q) \leq
%\tw(Q)$ for every CQ $Q$~\cite{GLS02}, as well as the classes of CQs
%with bounded hypertree width and unbounded treewidth, whose existence
%is proved in \cite{GLS02}. 
Specifically, we show the following.

\begin{theorem}[Theorem \ref{thm:acq-skhw-2} informal]\label{thm:intromain}
Let $\cC$ be a class of CQs with bounded hypertree width. Then there
	exists a fully polynomial-time randomized approximation scheme
	(FPRAS) that, given $Q \in \cC$ and a database $D$, estimates
	$|Q(D)|$ to multiplicative error $(1 \pm \eps)$. Moreover,
        there is a fully polynomial-time almost uniform sampler (FPAUS)
	that generates samples from $Q(D)$.
\end{theorem}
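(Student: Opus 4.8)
The plan is to reduce counting answers to a bounded-hypertree-width CQ to a counting problem on tree automata, and then invoke the FPRAS/FPAUS for tree automata promised in the introduction. First I would fix $Q \in \cC$, so $\hw(Q) \le k$ for a constant $k$, and compute in polynomial time a hypertree decomposition of $Q$ of width at most $k$; this is possible by the results of Gottlob, Leone, and Scarcello since $k$ is fixed. Given the database $D$, the standard Yannakakis-style machinery turns $(Q,D)$ together with its decomposition into a tree-shaped structure whose nodes carry partial answers (tuples over at most $k$ atoms, hence polynomially many per node, since each atom's relation has at most $|D|$ tuples), and whose consistency constraints are exactly the semijoin conditions between a node and its parent. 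The set of full answers $Q(D)$ is then in bijection with the set of "globally consistent" labelings of this tree — precisely the set of trees (of a fixed shape and polynomial size) accepted by a tree automaton $\cA_{Q,D}$ whose states at a node record the relevant partial tuple, and whose transition relation checks the local semijoin conditions. Building $\cA_{Q,D}$ takes polynomial time in $|D|$ and $|Q|$ because $k$ is constant.

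The second step is to observe that answers are counted with multiplicity equal to the number of accepting runs (or, after a determinization/pruning pass, exactly once per answer), so that $|Q(D)|$ equals, up to an easily computable factor, the number of size-$n$ trees accepted by $\cA_{Q,D}$ for the appropriate $n$. I would then apply the FPRAS for $\stta$ (counting trees accepted by a tree automaton) and the corresponding polynomial-time almost uniform sampler — the key automata-theoretic ingredient announced in the abstract — to $\cA_{Q,D}$. Pushing a uniform sample of an accepted tree back through the bijection yields an almost uniform sample from $Q(D)$, establishing the FPAUS; the approximate count yields the FPRAS. The extension to unions of bounded-hypertree-width CQs follows by the standard union-of-sets / Karp–Luby style inclusion reasoning, or more directly by taking the disjoint union of the per-query automata and counting in the combined automaton.

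The main obstacle — and the part that requires genuine care rather than routine bookkeeping — is ensuring the reduction is \emph{multiplicity-preserving}, i.e. that each answer $\bar a \in Q(D)$ corresponds to exactly one accepted tree (or to a known, uniform number of accepting runs), so that an almost uniform sample over accepted trees really does push forward to an almost uniform sample over $Q(D)$ rather than a sample biased by the number of decomposition witnesses. Naively, a single answer may extend to many consistent labelings of the interior bags (many "proofs" of the same answer), which would skew the distribution; handling this requires canonicalizing the tree representation — e.g. projecting onto the output variables and making the automaton deterministic on the relevant components, or otherwise quotienting so that the tree encodes the answer and nothing else. A secondary technical point is verifying that the size parameter $n$ and the automaton size remain polynomial and that the approximation/sampling guarantees compose correctly (the $(1\pm\eps)$ error and the total-variation error of the sampler are preserved through the bijection, with only a polynomial blow-up in the running time's dependence on $1/\eps$ and $1/\delta$). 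Once the automaton is set up cleanly, the rest is a direct appeal to the tree-automaton FPRAS.
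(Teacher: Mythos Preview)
Your proposal is correct and follows essentially the same route as the paper: compute a width-$k$ hypertree decomposition, build a tree automaton over the decomposition tree whose accepted trees are in bijection with $Q(D)$, and invoke the FPRAS/FPAUS for $\sta$. You also correctly isolate the only nontrivial step, namely making the reduction parsimonious despite an answer possibly having many witnesses.

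One clarification on how that obstacle is resolved, since your phrasing wavers a bit. The fix is \emph{not} a determinization pass on the automaton (that would blow up exponentially and would in any case defeat the purpose, since deterministic tree automata admit exact counting). Rather, the tree \emph{labels} record only the node $p$ of the decomposition together with the values assigned to the \emph{output} variables in $\chi(p)$; the full bag assignment (including the existentially quantified variables and the witnessing tuples in the $\xi(p)$-atoms) lives in the automaton \emph{state}. Thus one answer $\bar a$ yields exactly one labeled tree $t_{\bar a}$, while the many witnesses for $\bar a$ become many accepting \emph{runs} of the nondeterministic automaton on that single tree. Since $\sta$ counts accepted trees, not runs, this is exactly what is needed, and the nondeterminism is precisely what the tree-automaton FPRAS is built to handle. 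Your phrase ``projecting onto the output variables\ldots so that the tree encodes the answer and nothing else'' is spot on; just be explicit that the projection happens in the alphabet while the witness data is pushed into the state space (which stays polynomial because $|\xi(p)|\le k$ implies at most $\|D\|^{O(k)}$ states per node).
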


\noindent Our algorithm of Theorem \ref{thm:intromain} in fact holds for a
larger class of queries, including \textit{unions} of conjunctive
queries with bounded hypertree width
(Proposition \ref{prop:skuhw}). Note that, as defined
in \cite{JVV86}, an FPAUS samples from a distribution with variational
distance $\delta$ from uniform (see Section \ref{sec:fpras-fpaus} for
a formal definition).

An interesting question is whether there exists a larger class of queries $\mathcal{C}$ that admits an FPRAS. Since the decision problem for $\mathcal{C}$ is in $\bpp$ whenever  $\mathcal{C}$ admits an FPRAS, as a corollary of  Theorem \ref{thm:intromain} and the characterization of~\cite{GSS01}, \nt{we obtain the following answer to this question.}
%the answer to this question is negative for CQs over graphs. 
%Specifically:

%Thus, the FPRAS of Theorem~\ref{thm:intromain} characterizes the classes of CQ where approximate counting and almost uniform generation is tractable. 
%Can we find a larger class of CQ $\mathcal{C}$ that admits an FPRAS? It is interesting to see that the answer is no, giving that having an FPRAS for $\mathcal{C}$, will show that the query decision problem for $\mathcal{C}$ is in $\bpp$, breaking the characterization of CQ in~\cite{GSS01}. Thus, as an unexpected result the FPRAS of Theorem~\ref{thm:intromain} characterizes the classes of CQ where approximate counting and almost uniform generation is tractable. 
\begin{corollary}[Corollary \ref{cor:characterization} informal]\label{cor:intromain}
	\nt{Let $\cG$ be a class of graphs and $\cC$ be the class of all CQs whose representing graph is in $\cG$.
	Then assuming $\wo \neq \fpt$ and $\bpp = \ptime$, the following are
	equivalent: (1) the problem of computing $|Q(D)|$ and sampling from $Q(D)$, given as input $Q \in \cC$ and a database $D$, admits an FPRAS and an FPAUS; and (2) $\cG$ has bounded treewidth.} 
%	Let $\cC$ be a class of CQs over graphs.
%	Then assuming $\wo \neq \fpt$ and $\bpp = \ptime$, the following are
%	equivalent: (1) the problem of computing $|Q(D)|$ and sampling from $Q(D)$, given as input $Q \in \cC$ and a database $D$, admits an FPRAS and an FPAUS; and (2) $\mathcal{C}$ has bounded treewidth. 
%	
%	\begin{enumerate}[noitemsep,topsep=0pt]
%		\item The problem of computing $|Q(D)|$, given as input $Q \in \cC$ and a database $D$, admits an FPRAS and a polynomial time almost uniform sampler.\footnote{An almost uniform sampler samples from a distribution with variational distance $\delta$ from uniform (see Definition~\ref{def:FPAUS}).}
%		%and a polynomial time uniform sampler. 
%		\item $\mathcal{C}$ has bounded treewidth. 
%	\end{enumerate}
\end{corollary}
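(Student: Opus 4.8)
The plan is to prove the two implications of the equivalence separately, using Theorem~\ref{thm:intromain} and the Grohe--Schwentick--Segoufin characterization~\cite{GSS01} as black boxes. For the direction $(2) \Rightarrow (1)$, assume $\cG$ has bounded treewidth, so there is a constant $k$ with $\tw(Q) \le k$ for every $Q \in \cC$. Since $\hw(Q) \le \tw(Q)$ for every CQ~\cite{GLS02}, the class $\cC$ has bounded hypertree width, and Theorem~\ref{thm:intromain} directly supplies both the FPRAS for $|Q(D)|$ and the FPAUS for $Q(D)$. Nothing further is needed for this direction.

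For the direction $(1) \Rightarrow (2)$, the idea is that an FPRAS for $|Q(D)|$ already decides the query decision problem. Given $Q \in \cC$ and a database $D$, run the FPRAS with error parameter $\eps = 1/2$ and confidence parameter $1/4$ to obtain an estimate $\hat{N}$, and output ``$Q(D) \neq \emptyset$'' exactly when $\hat{N} \ge 1/2$. If $Q(D) = \emptyset$ then $|Q(D)| = 0$ and every multiplicative approximation of $0$ equals $0$, so $\hat{N} = 0$ with probability at least $3/4$; if $Q(D) \neq \emptyset$ then $|Q(D)| \ge 1$, so $\hat{N} \ge (1-\eps)\,|Q(D)| \ge 1/2$ with probability at least $3/4$. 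With $\eps$ a constant the FPRAS runs in time polynomial in $|Q| + |D|$, hence the query decision problem for $\cC$ lies in $\bpp$. Now observe that the representing graph of a CQ depends only on its body, not on the choice of output variables, so this decision problem coincides with the evaluation problem for Boolean conjunctive queries whose underlying graph is in $\cG$. Under the assumption $\bpp = \ptime$ this problem is in $\ptime$, and then the characterization of~\cite{GSS01}, which under $\wo \neq \fpt$ asserts that this problem is tractable only if $\cG$ has bounded treewidth (here we also inherit whatever effectiveness condition on $\cG$, e.g.\ recursive enumerability, the GSS01 theorem requires), yields $(2)$.

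Essentially all of the difficulty is concentrated in Theorem~\ref{thm:intromain}; the corollary itself is a short reduction, and there is no genuine obstacle to overcome. The only two points that warrant care are (i) verifying that a constant-error multiplicative estimator of $|Q(D)|$ truly decides emptiness, which is handled above via the observation that any multiplicative approximation of $0$ is exactly $0$; and (ii) checking that the decision problem produced by the reduction is literally the one parameterized by the underlying graph class that~\cite{GSS01} characterizes, so that its complexity-theoretic hypotheses ($\wo \neq \fpt$ and any uniformity assumption on $\cG$) transfer verbatim. Both are routine.
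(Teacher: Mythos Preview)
Your proposal is correct and follows essentially the same approach as the paper: for $(2)\Rightarrow(1)$ you reduce bounded treewidth to bounded hypertree width and invoke Theorem~\ref{thm:intromain}, and for $(1)\Rightarrow(2)$ you observe that an FPRAS with constant $\eps$ decides emptiness, hence places the decision problem in $\bpp=\ptime$, contradicting the GSS01 lower bound under $\wo\neq\fpt$ if $\cG$ had unbounded treewidth. The paper additionally notes that an FPAUS alone also yields a $\bpp$ decision procedure, but since condition~(1) assumes both an FPRAS and an FPAUS, your use of only the FPRAS suffices.
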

\noindent
Corollary \ref{cor:intromain} shows that the results of \cite{GSS01}
can be extended to the approximate counting problem for
\nt{CQs}.
%conjunctive queries.
Perhaps surprisingly, this demonstrates that the classes of CQs
%over graphs
for which the decision problem is tractable, \nt{in the sense studied
in \cite{GSS01}}, are precisely \textit{the same} as the classes which
admit an FPRAS.
%Furthermore,
Besides, this gives a positive answer to the line
of research started \nt{in~\cite{ChaudhuriMN99}},
%by Chaudhuri et. al.~\cite{ChaudhuriMN99},
by \nt{providing a characterization of} the class of queries that
admit an almost uniform sampler.

\smallskip

\noindent \textbf{Tree automata.} The key to our results is the resolution of a fundamental counting problem from automata theory; namely, the counting problem for \textit{tree automata}. Specifically, we first demonstrate that the solution space $Q(D)$ of a conjunctive query with bounded hypertree-width can be efficiently expressed as the language accepted by a tree automaton $\mathcal{T}$. We then demonstrate the first FPRAS for the problem of counting the number of trees accepted by a tree automaton $\mathcal{T}$. %The latter is our main technical contribution, therefore, we briefly recall several basic facts tree automata.

Tree automata are the natural extension of \textit{non-deterministic finite automata} (NFA) from words to trees. This extension is a widely studied topic, since they have a remarkable capacity to model problems, while retaining many of the desirable computational properties of NFAs~\cite{S90,T97}.
%Indeed, the popularity of tree automata has been fueled by the large number of applications they have found over the years.
 Beginning with the strong decidability result established by Rabin \cite{R69}, many important problems have been shown to be decidable via tree automata. Moreover, the fact that tree automata are equivalent to monadic second order-logic \cite{TR68} is a basic component of the proof of Courcelle's theorem~\cite{C90}. %, which states that every graph property definable in monadic second-order logic can be decided in linear time on graphs with bounded treewidth. 
Further applications of tree automata, among others, include
model checking~\cite{emerson1991tree,V95}, program analysis~\cite{AEM04,AM04,AM09}, databases~\cite{N02,S07}, and knowledge representation~\cite{T99,calvanese1999reasoning,DBLP:conf/dlog/2003handbook} (see also~\cite{T97,tata2007} for a survey).

\smallskip

\noindent \textbf{The counting problem of tree automata.} Just as a non-deterministic finite automaton $N$ accepts a language $\cL(N)$ of words, a tree automata $\mathcal{T}$ accepts a language $\cL(\cT)$ of labeled trees.
%\emph{labeled ordered trees}, which are trees where everyA node has a label from a fixed alphabet, and the children of each node are totally ordered (e.g. one can talk about the $i$-th child of a node). 
Given a tree automaton $\cT$,  and an integer $n$, define the $n$-slice of $\cL(\cT)$ as $\cL_n(\cT) = \{t \in \cL(\cT) \mid |t| = n\}$, where $|t|$ is the number of vertices in $t$.  %We now consider the tasks of approximating $|\cL_n(\cT)|$, and generating uniform samples from $\cL_n(\cT)$. 
While exactly computing the size of the $n$-slice for \textit{deterministic} finite automata and tree automata is tractable~\cite{mairson1994generating,bertoni2000random,kannan1995counting}, this is not the case for their \textit{non-deterministic} counterparts. In fact, given as input an NFA $\cA$ and a number $n$ in unary, the problem of computing $|\cL_n(\cA)|$ is $\sharpp$-hard \cite{AJ93}, which implies $\sharpp$-hardness for tree automata. Naturally, this does not rule out the possibility of efficient approximation algorithms. This observation was first exploited by Kannan, Sweedyk, and Mahaney, who gave a \textit{quasi polynomial-time approximation scheme} (QPRAS) for NFAs \cite{kannan1995counting}, which was later extended by the aforementioned authors, along with Gore and Jerrum~\cite{gore1997quasi}, to the case of tree automata.\footnote{The QPRAS of \cite{gore1997quasi} holds slightly more generally for \textit{context free grammars} (CFG).} %Roughly, the language accepted by a CFG is equivalent to the language of \textit{leaves} accepted by a tree-automata.}
%tree automata.
Specifically, the algorithm of \cite{gore1997quasi} runs in time $\eps^{-2}(nm)^{O(\log(n))}$, where $m = |\mathcal{T}|$ is the size of the description of $\cT$, and $\eps$ is the error parameter. Improving the complexity of this algorithm to polynomial time has been a longstanding open problem. 
%In particular, if $\cT$ is a tree automaton and $m = |\mathcal{T}|$ is the size of the description of $\cT$, then the algorithm of \cite{gore1997quasi} runs in time $\eps^{-2}(nm)^{O(\log(n))}$, where $\eps$ is the error parameter. 
%Very recently, the autors in \cite{Arenas19} demonstrated the existence of an FPRAS for NFA's, which was the first  polynomial time algorithms for approximate counting of ambiguous automaton. However, the techniques in \cite{Arenas19} break down in several fundamental ways (discussed in the following) when applied to tree automata. This motivates the following important open problem:

The algorithms of \cite{gore1997quasi} and \cite{kannan1995counting} are based on a recursive form of Karp-Luby sampling \cite{karp1989monte}, which is a type of rejection sampling. This approach has the drawback that the probability a sample is chosen is exponentially small in the depth of the recursion. Recently, using a different sampling scheme, it was shown that an FPRAS and an FPAUS exist for NFAs \cite{Arenas19}. However, the techniques in \cite{Arenas19} break down fundamentally (discussed in the following) when applied to tree automata. The main technical contribution of this work is to address the failing points of \cite{Arenas19}, and design an FPRAS for tree automata. % Key components of our algorithm involve a new partitioning scheme based on tree prefixes, and a reduction to the problem of estimating the $n$-slice 
%of a \emph{succinct} NFA, which is an NFA whose transition are succinctly encoded (see below).~
%Formally:

\begin{theorem}[Theorem \ref{theo:fpras-bta} and  \ref{thm:samplemain} abbreviated]\label{thm1}
	Given a tree automaton $\mathcal{T}$ and $n \geq 1$, there is an algorithm which runs in time $\poly(|\cT|,n,\eps^{-1},\log(\delta^{-1}))$ and with probability $1-\delta$, outputs an estimate $\aN$ with: $	(1 - \eps) |\cL_n(\cT)| \leq \aN \leq (1 + \eps) |\cL_n(\cT)|$. Conditioned on the success of this event, there is a sampling algorithm where each call runs in time~$\poly(|\cT|,n,\log(\delta^{-1}))$, and either outputs a uniformly random tree $t \in \cL_n(\mathcal{T})$, or $\bot$. Moreover, it outputs $\bot$ with probability at most $1/2$. 
\end{theorem}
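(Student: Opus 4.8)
The plan is to follow the self-reducibility paradigm that worked for NFAs in \cite{Arenas19}, but adapted to the branching structure of trees, which is precisely where the naive adaptation breaks down. First I would normalize $\cT$: assume it is a bottom-up automaton over binary trees (a standard reduction, since arbitrary ranked trees can be re-encoded as binary ones with at most a polynomial blow-up in $|\cT|$ and $n$), and work with the natural ``product'' structure of partial runs. For a state $q$ and a size $k \le n$, let $N(q,k)$ be the number of trees of size exactly $k$ on which there is an accepting partial run ending in state $q$ at the root. These satisfy the quadratic recurrence $N(q,k) = \big|\bigcup_{\delta=(q_1,q_2)\to q \in \Delta}\ \bigcup_{k_1+k_2=k-1} \{(t_1,t_2) : t_i \text{ reaches } q_i \text{ with } |t_i|=k_i\}\big|$, where the union is over a polynomial number of ``boxes'' but the sets overlap, since the same pair $(t_1,t_2)$ may be reachable via many transitions and many state-pairs. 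The goal is to maintain, for every $(q,k)$, a good multiplicative estimate $\aN(q,k)$ of $N(q,k)$ together with an almost-uniform sampler for the corresponding tree-set, computed by dynamic programming in increasing order of $k$.

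The core subroutine is a union-of-sets estimator in the style of Karp--Luby: given $\aN$-estimates and samplers for a polynomial family of sets $A_1,\dots,A_r$ (here each $A_j$ is a product box $\{(t_1,t_2)\}$ for a fixed transition and size split, whose count is $\approx \aN(q_1,k_1)\cdot \aN(q_2,k_2)$ and whose sampler is the product of the two recursively-built samplers), estimate $|\bigcup_j A_j|$ and sample almost-uniformly from it. The standard Karp--Luby trick — sample $j$ with probability $\propto |A_j|$, sample $x \in A_j$, accept with probability $1/(\#\{j' : x \in A_{j'}\})$ — requires a membership test ``is the pair $(t_1,t_2)$ in box $A_{j'}$?''. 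This is where the NFA argument fails when lifted verbatim: in the word case the analogous sets are prefixes/suffixes with a clean total order, but here membership of a \emph{pair of trees} in another transition's box is not obviously cheap, and more importantly the variance of the Karp--Luby estimator can be too large when $r$ is large relative to coverage. The fix — and the step I expect to be the main obstacle — is to control this: one shows the number of boxes containing any fixed pair is polynomially bounded (it is at most the number of transitions times the number of size-splits, i.e.\ $\le |\Delta|\cdot n$), so the acceptance probability is at least inverse-polynomial, giving a polynomial-time estimator per $(q,k)$ via a Chernoff/median-of-means argument. Membership testing reduces to running the (deterministic skeleton of the) automaton on $t_1$ and $t_2$, which is polynomial in $|\cT|$ and $n$.

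With the union estimator in hand, the DP proceeds over $k=1,\dots,n$: the base case $k=1$ is the leaf transitions, counted exactly; the inductive step sets $\aN(q,k)$ to be the union estimator applied to the boxes built from the already-computed estimates $\aN(\cdot,k')$ for $k' < k$, and defines the sampler at $(q,k)$ to be: pick a box with probability proportional to its estimated size, run the two recursive samplers to get $(t_1,t_2)$, and accept/reject as in Karp--Luby. The usual subtlety is error propagation: each level introduces a $(1\pm\eps')$ factor and the recursion depth is $O(n)$, so one takes $\eps' = \Theta(\eps/n)$ and correspondingly boosts the confidence of each estimate to $1-\delta/\poly(n,|\cT|)$ by median amplification, so that a union bound over all $O(n\,|\cT|)$ DP cells gives overall failure probability $\delta$; this keeps the running time $\poly(|\cT|,n,\eps^{-1},\log\delta^{-1})$. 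Finally, $\aN := \sum_{q \in F}\aN(q,n)$ (adjusting for the disjointness of the accepting states, or treating $F$ as one more union) is the desired estimate, and the sampler at the ``virtual'' root state, conditioned on the success of all estimates, outputs a tree whose distribution is within the accumulated variational-distance budget of uniform; padding the acceptance probabilities so the total is a constant and emitting $\bot$ otherwise yields the ``$\bot$ with probability $\le 1/2$'' guarantee in the statement. The genuinely new work over \cite{Arenas19} is entirely in making the union/product estimator robust to the quadratic blow-up of pairs of subtrees and to the fact that subtree-multiplicities, unlike prefix-suffix counts for words, do not split along a linear order; everything else is bookkeeping of $\eps$ and $\delta$ through the recursion.
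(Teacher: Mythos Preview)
Your proposal has a genuine gap in the sampler, and that is precisely where all the difficulty in this problem lies. The Karp--Luby union estimator at a \emph{single} level $(q,k)$ is fine: with inverse-polynomial acceptance probability you get a polynomial-time estimate, as you say. But your sampler at level $k$ is a rejection sampler that, on each trial, must call the samplers at levels $k_1,k_2<k$ to manufacture the candidate pair $(t_1,t_2)$. The acceptance probability per trial is $|\bigcup_j A_j|/\sum_j|A_j|$, which is not $1$ for a non-deterministic automaton, so the expected number of trials to emit one accepted sample at level $k$ is some $R>1$, and each trial consumes fresh independent samples from the lower-level samplers, which themselves reject at the same rate. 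The cost recurrence $C(k)\ge R\cdot\bigl(C(k_1)+C(k_2)\bigr)$ unrolls to $R^{\Theta(\text{depth})}$; without depth reduction this is exponential, and even with depth $O(\log n)$ it reproduces exactly the quasi-polynomial bound of Gore, Jerrum, Kannan, Sweedyk and Mahaney~\cite{gore1997quasi}. Precomputing a polynomial pool of samples at each level does not help: to generate $M$ independent samples at level $k$ you need $\Omega(MR)$ independent samples one level down, hence $\Omega(MR^{k-1})$ at the leaves; reusing pooled samples destroys the independence your Chernoff bounds require, and you give no argument for why that is tolerable. The ``padding'' you mention cannot repair this: either you retry on rejection at every intermediate level (exponential time) or you propagate $\bot$ upward on the first rejection (exponentially small success probability).

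The paper's route is structurally different and was engineered to avoid this compounding. Instead of decomposing $T(\st^i)$ by \emph{transitions} (overlapping sets, hence rejection), it partitions by \emph{subtree sizes} via partial trees: the sample is grown top-down, and at each step one expands the smallest hole by fixing only the sizes of its two children. These pieces are disjoint, so sampling is rejection-free except for a single constant-probability correction at the very end (Lemma~\ref{lem:sampmain}). The price is that the size of a piece $T(\st^i,\tau)$ is no longer a simple product; the paper reduces its estimation to counting words in a \emph{succinct NFA} whose transition labels are the sets $T(\str^j)$ for $j<i$, via the ``main path'' of $\tau$ (Lemma~\ref{lem:treetograph}), and then proves a new FPRAS for succinct NFAs (Theorem~\ref{thm:progmain}). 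It is inside that subroutine that precomputed sketches $\aT(\str^j)$ are reused, with a deterministic conditioning argument to handle the resulting dependence. None of this machinery is present in your plan; what you have sketched is, up to bookkeeping, the prior QPRAS.
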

Note that conditioned on the success of the above FPRAS (run once), every subsequent call to the sampler generates a \textit{truly} uniform sample (or $\bot$). Observe that this notion of sampling is stronger than the standard notion of FPAUS (see Section \ref{sec:fpras-fpaus}). We note that the
existence of an FPAUS is in fact a corollary of the existence of an FPRAS for the above~\cite{JVV86}.

\smallskip
\noindent\textbf{Succinct NFAs.}
A key step in the proof of Theorem \ref{thm1} is a reduction to
counting and sampling from a \emph{succinct} NFA $\mathcal{N}$, which
is an NFA with succinctly encoded alphabet and transitions. Formally,
a succinct NFA $\mathcal{N}$ is a $5$-tuple $(S,\Sigma,\Delta,
s_{\text{init}}, s_{\text{final}})$, where $S$ is a set of states,
$\Sigma$ is an alphabet, $ s_{\text{init}}, s_{\text{final}}\in S$ are
the initial and final states, and $\Delta \subseteq S \times
2^\Sigma \times S $ is the transition relation, where each transition
is labeled by a set $A\subseteq \Sigma$. We assume that $\Sigma$ is
succinctly encoded via some representation (e.g. a DNF formula), and
likewise for each set $A$ such that $e = (s,A,s')$ is a transition in
$\Delta$.
%we assume that the set $A$ is succinctly encoded
%via some representation (e.g. a DNF formula).
Therefore, the size of the alphabet $\Sigma$ and the size of each such
set $A$
%(and, in particular, the size of the alphabet $\Sigma$)
can be exponentially large in the representation of $\mathcal{N}$.  A
word $w = w_1 w_2 \dots w_n \in \Sigma^*$ is \textit{accepted} by
$\mathcal{N}$ if there is a sequence $s_{\text{init}} =
s_0,s_1,\dots,s_n = s_{\text{final}}$ of states such that there exists
a transition $(s_{i-1},A, s_{i}) \in \Delta$ with $w_i \in A$ for each
$i=1,2,\dots,n$.
%\footnote{We remark that a succinct NFA $\mathcal{N}$
%can be described by a standard NFA with at most $O(|\Sigma|\cdot |S|^2
%)$ transitions, but this representation will be exponential in the
%size of $\mathcal{N}$.}
Note that the special case where each transition $(s,A,s') \in \Delta$
satisfies $|A| = 1$ is precisely the standard definition of an NFA.
To solve the aforementioned problems for succinct NFA, we must assume
that the encodings of the label sets satisfy some basic
conditions. Specifically, we require that for each transition $(s,A,s')$,
we are given an oracle which can \textbf{(1)} test membership in
$A$, \textbf{(2)} produce an estimate of the size of $|A|$,
and \textbf{(3)} generate almost-uniform samples from $A$. Our full
algorithm is given formally in
Section \ref{sec:partition}.

\begin{theorem}[Theorem \ref{thm:progmain} informal]\label{thm:introlabeledpaths}
	Let $\mathcal{N} =(S,\Sigma,\Delta, s_{\text{init}}, s_{\text{final}})$
	be a succinct NFA and $n \geq 1$.  %where $m = |\Delta| + |S|$ and $|\Sigma| = \exp( (nm)^{O(1)})$. 
	Suppose that the sets $A$ in each transition $(s,A,s') \in \Delta$ satisfy the properties described above. Then there is an FPRAS and an FPAUS for $\mathcal{L}_n(\mathcal{N})$.
	%$\poly(n,m,\eps^{-1},\log(\delta^{-1}))$-time algorithm that with probability $1-\delta$ outputs an estimate $\aN$ such that $(1 - \eps) |\mathcal{L}_n(\mathcal{N})| \leq \aN \leq (1 + \eps) |\mathcal{L}_n(\mathcal{N})|$, and there is an almost-uniform sampler for $\mathcal{L}_n(\mathcal{N})$ with error $\eps$ running in $\poly(n,m,\eps^{-1})$ time. 
\end{theorem}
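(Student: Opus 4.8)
The plan is to reduce the problem to estimating the size of, and almost‑uniformly sampling from, a union of Cartesian products arranged in a layered DAG, and then to run a mutually recursive Monte Carlo scheme over that structure, in the spirit of the known FPRAS for ordinary NFAs but with the label‑set oracles standing in for direct access to the exponential alphabet. First I would \emph{unroll} $\mathcal{N}$ over $n$ steps: introduce vertices $(s,i)$ for $s\in S$ and $0\le i\le n$, put an edge $(s,i-1)\xrightarrow{A}(s',i)$ whenever $(s,A,s')\in\Delta$, and delete every vertex not lying on some directed path from $(s_{\text{init}},0)$ to $(s_{\text{final}},n)$ (a reachability computation). For $v=(s,i)$ let $W(v)\subseteq\Sigma^{\,n-i}$ be the set of words readable from $v$ to $(s_{\text{final}},n)$, so that $W(s_{\text{init}},0)=\mathcal{L}_n(\mathcal{N})$, $W(s_{\text{final}},n)=\{\varepsilon\}$, and $W(v)=\bigcup_{v\xrightarrow{A}v'} A\cdot W(v')$. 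The two facts I would lean on are: (a) $|A\cdot W(v')|=|A|\cdot|W(v')|$, so an estimate of the product follows from the size‑estimation oracle for $A$ and a recursive estimate of $|W(v')|$; and (b) for a fixed word $w$, deciding $w\in W(v)$ is polynomial‑time, by simulating $\mathcal{N}$ forward from $s$ and testing each letter of $w$ against the relevant label sets with the membership oracle.

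\smallskip
\noindent\textbf{Bottom‑up size estimation.} I would compute estimates $\wt{N}(v)\approx|W(v)|$ by induction on $n-i$, from the sinks up to the source. Fix $v$ with out‑edges $e_1,\dots,e_d$, $e_k=(v\xrightarrow{A_k}v'_k)$, write $U_k:=A_k\cdot W(v'_k)$, and decompose the union into the \emph{disjoint} pieces $B_k:=U_k\setminus(U_1\cup\dots\cup U_{k-1})$, so that $|W(v)|=\sum_{k=1}^d|B_k|$. An estimate of $|U_k|$ is the product of the oracle estimate of $|A_k|$ and $\wt{N}(v'_k)$, so it remains to estimate each fraction $|B_k|/|U_k|$. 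For this I draw almost‑uniform samples $w=a\,w'$ from $U_k$ — the sampling oracle for $A_k$ gives $a$, the recursive almost‑uniform sampler for $W(v'_k)$ (below) gives $w'$ — and for each sample test whether $w\notin U_j$ for all $j<k$, which by fact (b) is a polynomial‑time check (membership of $a$ in $A_j$ via the oracle, membership of $w'$ in $W(v'_j)$ via simulation). The empirical success fraction times the estimate of $|U_k|$ estimates $|B_k|$; summing over $k$ yields $\wt{N}(v)$, and $\wt{N}(s_{\text{init}},0)$ is the final estimate.

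\smallskip
\noindent\textbf{Recursive sampler and error bookkeeping.} An almost‑uniform sample from $W(v)$ is drawn by first picking $k$ with probability proportional to the estimated $|B_k|$, then drawing $w=a\,w'$ from $U_k$ as above and rejecting until $w\notin U_j$ for all $j<k$ (emitting $\bot$ if this takes too long). The size estimator and the sampler are mutually recursive, but the recursion descends through the $n$ layers and, once the number of samples drawn at each vertex is fixed, has polynomial fan‑out, so the total running time is $\poly(|\mathcal{N}|,n,\eps^{-1},\log\delta^{-1})$. I would split the error and failure budgets over the $O(|S|\,n)$ vertices and over the oracle calls, apply Chernoff/Hoeffding concentration at each vertex with the per‑vertex sample counts inflated by the usual $\poly$‑logarithmic factors, and propagate the bounds up the layers to obtain the global $(1\pm\eps)$ guarantee and an abort probability at most $1/2$; the FPAUS then follows from the FPRAS by the standard self‑reducibility reduction of~\cite{JVV86}.

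\smallskip
\noindent\textbf{Main obstacle.} The delicate point — and exactly where the Karp--Luby‑style recursion of~\cite{gore1997quasi,kannan1995counting} degrades to quasipolynomial time — is the set‑difference step: if some piece $B_k$ is a vanishingly small fraction of $U_k$, then the rejection sampler for $B_k$ succeeds with exponentially small probability and $|B_k|/|U_k|$ cannot be estimated from $\poly$‑many samples. Handling this needs the more careful partition‑estimation machinery rather than naive rejection — reorganizing the decomposition (for instance by reordering transitions, or by estimating a suitably weighted quantity) so that every retained piece is guaranteed to be a non‑negligible fraction of the set one samples from — together with an argument that composing only \emph{approximate} size oracles and \emph{almost}‑uniform samplers through $n$ layers accumulates error additively rather than multiplicatively. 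I expect the bulk of the proof to consist of instantiating the paper's general estimation procedure to this layered union‑of‑products structure, with the above reduction as the easy part.
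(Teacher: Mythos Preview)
Your setup---unrolling, the decomposition $W(v)=\bigcup_k A_k\cdot W(v'_k)$ into disjoint pieces $B_k$, and estimating $|B_k|/|U_k|$ by sampling---matches the paper's size-estimation argument. The genuine gap is the sampler, and it is not a detail you can defer. Your scheme draws a full word $aw'$ from $U_k=A_k\cdot W(v'_k)$ and rejects until $aw'\in B_k$; each such trial invokes the recursive sampler for $W(v'_k)$. The expected number of trials at one level, averaged over your choice of $k\propto|B_k|$, is $\sum_k|U_k|/|W(v)|$, which can be as large as the out-degree $d\le r$, so producing $M$ samples at level $0$ costs up to $r^nM$ recursive calls at level $n$. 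Your ``polynomial fan-out'' claim is therefore unjustified, and reordering the $U_k$ does not help: any scheme that samples a whole suffix and then rejects pays this price. The JVV86 shortcut you invoke at the end does not rescue this either, since with an exponential alphabet one cannot enumerate the next-symbol marginal as the standard self-reduction requires.

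The paper avoids recursion inside the sampler altogether: it samples one symbol at a time and \emph{commits} once a symbol is accepted, so total trials are $O(nr)$ rather than $r^n$. Given the partial suffix $w$ and the frontier $\FF(s_i,w)=\{(x_j,A_j,y_j)\}_j$, it (i) draws $j$ with probability proportional to $Z_j:=\aN(x_j)\aN(A_j)$, (ii) draws $a\sim A_j$ from the oracle, and (iii) accepts $a$ with probability
\[
q_{a,j}\;=\;\frac{\bigl|\aW(x_j)\setminus\bigcup_{j'\in\BB(a),\,j'<j}W(x_{j'})\bigr|}{\bigl|\aW(x_j)\bigr|},
\]
where $\BB(a)=\{j':a\in A_{j'}\}$ and $\aW(x_j)$ is a \emph{precomputed polynomial-size sketch} of $W(x_j)$ (the paper's Property~4). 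A short telescoping calculation shows $a$ is then accepted with probability $(1\pm O(\eps))$ times the correct next-symbol marginal. Ordering the $Z_j$ decreasingly guarantees the largest transition always accepts, bounding the per-symbol rejection probability by $1-1/r$. Finally, the algorithm tracks a running estimate $q$ of the probability it produced $w$ and performs a single terminal rejection with probability $1/(2q\,\aN(s_i))$; this correction is what makes the errors across the $n$ symbols accumulate additively rather than multiplicatively.
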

While standard
(non-succinct) NFAs are known to admit an FPRAS by the results
of \cite{Arenas19}, Theorem \ref{thm:introlabeledpaths} is a strong
generalization of the main result of \cite{Arenas19}, and requires
many non-trivial additional insights and techniques.
%beyond the techniques in \cite{Arenas19}. 
%Besides, notice that 

\smallskip
\noindent\textbf{Applications.}
We demonstrate that the FPRAS of Theorem \ref{thm1} results in the
first polynomial-time randomized approximation algorithms for
many previously open problems in the fields of constraint satisfaction
problems,
verification of correctness of programs with nested calls to subroutines, and knowledge compilation.
We give a brief overview of these results in
Section \ref{sec:introapp}, and describe them in detail in
Section~\ref{sec:applications}.

\vspace{-.1in}

%\subsection{From Acyclic Joins to Tree Automata}
%\input{joins-to-trees}

\subsection{Technical Overview}
\label{sec:tech}
% !TeX spellcheck = en_US
%!TEX root = main.tex

\noindent \textbf{From acyclic conjuctive queries to tree automata.}
We first explain the role of trees when answering acyclic
conjuctive queries (acyclic CQs), and the role of tree automata when
counting the answers to such queries.  Consider a CQ:
$Q_1(x) \leftarrow \Grad(x), \Enr(x,y), \Enr(x,z), \CourseCS(y), \CourseMath(z)$.
This query is said to be acyclic as it can be encoded by a {\em
join tree}, that is, by a tree $t$ where each node is labeled by the
relations occurring in the query, and which satisfies the following
connectedness property: each variable in the query induces a connected
subtree of $t$~\cite{Y81}. In particular, a join tree for $Q_1(x)$ is
depicted in Figure \ref{fig-cq-w-aw-A}, where the connected subtree
induced by variable $x$ is marked in green.  An acyclic conjunctive
query $Q$ can be efficiently evaluated by using a join tree $t$
enconding it~\cite{Y81}; in fact, a tree {\em witnessing} the fact
that $\bar a \in Q(D)$ can be constructed in polynomial time.  For
example, if $\duniv = \{\Grad(\hill)$, $\Grad(\ray)$,
$\Enr(\hill, \cso)$, $\Enr(\ray, \cso)$, $\Enr(\ray, \cst)$,
$\Enr(\ray, \matho)$, $\CourseCS(\cso)$, $\CourseCS(\cst)$,
$\CourseMath(\matho)\}$, then $\ray$ is an answer
to $Q_1$ over $\duniv$. In fact, two witness trees for this answer are
shown in Figure~\ref{fig-cq-w-aw-B}. Notice that the assignments to
variable $y$ that distinguish these two trees are marked in blue.

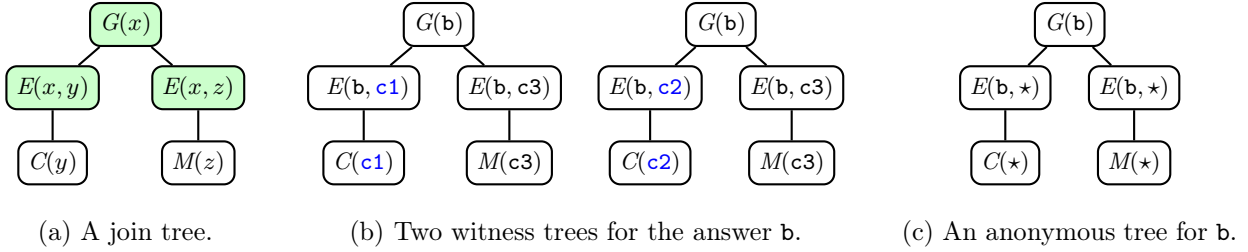
\begin{figure}[t]
	\begin{subfigure}{0.20\textwidth}
		\begin{center}
			\begin{tikzpicture}
			\node[rect, fill=green!20] (a1) {{\footnotesize $\Grad(x)$}};
			\node[rectw, below=4mm of a1] (a2) {};
			\node[rect ,fill=green!20, left=2mm of a2] (a3) {{\footnotesize $\Enr(x,y)$}}
			edge [thick] (a1);
			\node[rect, fill=green!20, right=2mm of a2] (a4) {{\footnotesize $\Enr(x,z)$}}
			edge [thick] (a1);
			\node[rect, below=4mm of a3] (a5) {{\footnotesize $\CourseCS(y)$}}
			edge [thick] (a3);
			\node[rect, below=4mm of a4] (a6) {{\footnotesize $\CourseMath(z)$}}
			edge [thick] (a4);
			\end{tikzpicture}
		\end{center}
	\vspace{-2mm}
		\caption{A join tree.\label{fig-cq-w-aw-A}}
	\end{subfigure}
	\begin{subfigure}{0.51\textwidth}
		\begin{center}
			\begin{tikzpicture}
			\node[rect] (wa1) {{\footnotesize $\Grad(\ray)$}};
			\node[rectw, below=4mm of wa1] (wa2) {};
			\node[rect, left=1mm of wa2] (wa3) {{\footnotesize \ $\Enr(\ray,{\color{blue} \cso})$}}
			edge [thick] (wa1);
			\node[rect, right=1mm of wa2] (wa4) {{\footnotesize $\Enr(\ray,\matho)$}}
			edge [thick] (wa1);
			\node[rect, below=4mm of wa3] (wa5) {{\footnotesize $\CourseCS({\color{blue} \cso})$}}
			edge [thick] (wa3);
			\node[rect, below=4mm of wa4] (wa6) {{\footnotesize $\CourseMath(\matho)$}}
			edge [thick] (wa4);

			\node[rect, right=28mm of wa1] (swa1) {{\footnotesize $\Grad(\ray)$}};
			\node[rectw, below=4mm of swa1] (swa2) {};
			\node[rect, left=1mm of swa2] (swa3) {{\footnotesize $\Enr(\ray,{\color{blue} \cst})$}}
			edge [thick] (swa1);
			\node[rect, right=1mm of swa2] (swa4) {{\footnotesize $\Enr(\ray,\matho)$}}
			edge [thick] (swa1);
			\node[rect, below=4mm of swa3] (swa5) {{\footnotesize $\CourseCS({\color{blue} \cst})$}}
			edge [thick] (swa3);
			\node[rect, below=4mm of swa4] (swa6) {{\footnotesize $\CourseMath(\matho)$}}
			edge [thick] (swa4);
			\end{tikzpicture}
		\end{center}
	\vspace{-2mm}
		\caption{Two witness trees for the answer $\ray$.\label{fig-cq-w-aw-B}}
	\end{subfigure}
	\begin{subfigure}{0.27\textwidth}
		\begin{center}
			\begin{tikzpicture}
			\node[rect] (awa1) {{\footnotesize $\Grad(\ray)$}};
			\node[rectw, below=4mm of awa1] (awa2) {};
			\node[rect, left=1mm of awa2] (awa3) {{\footnotesize $\Enr(\ray,\star)$}}
			edge [thick] (awa1);
			\node[rect, right=1mm of awa2] (awa4) {{\footnotesize $\Enr(\ray,\star)$}}
			edge [thick] (awa1);
			\node[rect, below=4mm of awa3] (awa5) {{\footnotesize $\CourseCS(\star)$}}
			edge [thick] (awa3);
			\node[rect, below=4mm of awa4] (awa6) {{\footnotesize $\CourseMath(\star)$}}
			edge [thick] (awa4);
			\end{tikzpicture}
		\end{center}
		\vspace{-2mm}
		\caption{An anonymous tree for~$\ray$.\label{fig-cq-w-aw-C}}
	\end{subfigure}
	\caption{Join, witness and anonymous trees for a CQ.\label{fig-cq-w-aw}} \vspace{-.2 in}
\end{figure}
Consider the problem 
%In this work, we are interested in understanding the complexity of
$\sacq$, which is to count, given an acyclic CQ $Q$ and
a database $D$, the number of answers to $Q$ over $D$. 
Since the number of witness trees can be counted in polynomial
time, one might think
that $\sacq$ can also be solved in polynomial time. %given the tree
%tructure of an acyclic CQ $Q$, the use of witness trees to check in
%polynomial time whether a tuple is an answer to $Q$,
%over a database,
 However, there is no one-to-one correspondence between the
answers to an acyclic CQ and their witness trees; as shown in
Figure~\ref{fig-cq-w-aw-B}, two trees may witness the same answer. In fact, $\sacq$ is
$\sharpp$-complete~\cite{PS13}. 

%Instead, to obtain an FPRAS for $\sacq$, we will . 
However, %A first observation here is that if
we first observe that in a witness tree $t$, if only output variables are given actual values
and non-output variables are assigned an anonymous symbol $\star$,
then there \textit{will} be a one-to-one correspondance between answers to a
query and witnesses. Let's us denote such structures as {\em
anonymous} trees, an example of which is given in
Figure \ref{fig-cq-w-aw-C}. But how can we specify when an anonymous
tree is valid? For example, if $t'$ is the anonynomous tree
obtainined by replacing $\ray$ by $\hill$ in
Figure \ref{fig-cq-w-aw-C}, then $t'$
%should not be considered as a
is not a valid anonymous tree, because $\hill$ is not an answer to $Q_1$
over $\duniv$. %To handle this, 
We demonstrate %a second observation is 
that tree automata provide the
right level of abstraction to specify the validity of such anonymous trees,
so that $\sacq$ can be reduced to a counting problem over tree
automata. In Section \ref{sec:cqreduction} we present this construction, where we consider the more general notion of
bounded hypertree width. In what follows, we focus on
the approximate counting problem for tree automata. % studied in this work,and how it can be approximated efficiently.

\smallskip
\noindent \textbf{Tree automaton.} To capture the essence of the problem, in this section we consider a simplified version of tree automata. Specifically,
 we restrict the discussion to \textit{unlabeled} binary ordered trees, which are sufficient to present the main ideas of the algorithm. A binary ordered tree $t$ (or just tree) is a rooted binary tree where the children of each node are ordered; namely, one can distinguish between the left and right child of each non-leaf node. For a non-leaf node $u$ of $t$, we write $u1$ and $u2$ to denote the left and right children of $u$, respectively, and we denote the root of any tree $t$ by $\es$. We will write $u \in t$ to denote that $u$ is a node of $t$, and $|t|$ to denote the number of nodes of $t$. For example, Figure~\ref{fig:trees} depicts a binary ordered tree $t_1$ with $|t_1| = 9$, and another tree $t_2$ with $|t_2| = 13$, where %the root is always the top node and
  the ordering on the children is given from left to right. %The tree $t_2$ in Figure~\ref{fig:trees} is another example of a  tree with $|t_2| = 13$.
\begin{figure}
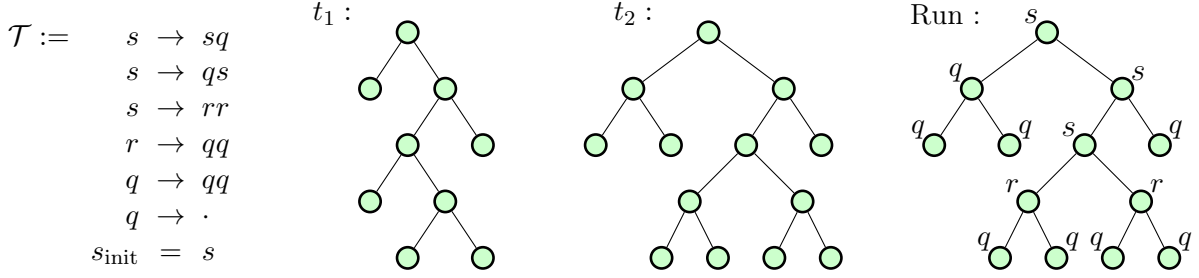

	\ctikzfig{trees} 
	\vspace{-5mm}
	\caption{A tree automata $\cT$, binary ordered trees $t_1$ and $t_2$, and a run of $\cT$ over $t_2$.}
	\label{fig:trees}
	\vspace{-.1 in}
\end{figure}

%\rajesh{We may need to cut this example figure \ref{fig:trees} for space.}

A tree automaton $\cT$ (over binary ordered trees) is a tuple $(\St, \Delta, \sinit)$ where $\St$ is a finite set of states, $\Delta \subseteq (\St \times \St \times \St) \cup \St$ is the transition relation, and $\sinit \in \St$ is the initial state.\footnote{We omit the alphabet in this definition because we consider unlabeled trees in this discussion, see Section~\ref{sec:preliminaries} for a definition of tree automata over labeled trees.} A \textit{run} $\rho$ of $\cT$ over a tree $t$ is a function $\rho:t \to \St$ mapping nodes to states that respects the transition relation. Namely, for every node $u$ of $t$ we have $\rho(u) \in \Delta$ whenever $u$ is a leaf, and $(\rho(u), \rho(u1), \rho(u2)) \in \Delta$, otherwise. %If $\cT$ is clear from context, we just say that $\rho$ is a run over $t$. 
We say that $\cT$ accepts $t$ if there exists a run $\rho$ of $\cT$ over $t$ such that $\rho(\es) = \sinit$, and such a run $\rho$ is called an accepting run of $\cT$ over $t$. 
The set of all trees accepted by $\cT$ is denoted by $\cL(\cT)$, and the $n$-slice of  $\cL(\cT)$, denoted by $\cL_n(\cT)$, is the set of trees $t \in \cL(\cT)$ with size $n$. 
For the sake of presentation, in the following we write $s \rightarrow qr$ to represent the transition $(s,q,r) \in \Delta$ and $s \rightarrow \cdot$ to represent $s \in \Delta$. Note that transitions of the form $s \rightarrow \cdot$ correspond to leaves that have no children, and can be thought as ``final states'' of a run. 

Figure~\ref{fig:trees} gives an example of a tree automaton $\cT$ with states $\{s, r, q\}$. The right-hand side of Figure~\ref{fig:trees} shows an example of an accepting run of $\cT$ over $t_2$. One can easily check from the transitions of $\cT$ in this example that a tree $t$ is in $\cL(\cT)$ if, and only if, there exists a node $v \in t$ such that both children of $v$ are internal (non-leaf) nodes.  For example, $t_2$ satisfies this property and $t_2 \in \cL(\cT)$.  
On the other hand, all nodes $v \in t_1$ have at least one child that is a leaf, and thus there is no accepting run of $\cT$ over $t_1$, so $t_1 \notin \cL(\cT)$. %Given such a tree automaton $\mathcal{T}$ over binary unlabeled trees and an integer $n \geq 1$, we can now consider the problems of approximating $|\cL_n(\cT)|$ and sampling from $\cL_n(\cT)$. 

%\begin{center}
%	\framebox{
%		\begin{tabular}{ll}
%			\textbf{Problem:} &  $\stta_{\text{unlabeled}}$\\
%			\textbf{Input:} & A tree automaton $\cT$ over (unlabeled) trees and an integer $n \geq 1$\\
%			\textbf{Output:} & $|\{t \mid t \in \cL(\cT)$ and  $|t| = n\}|$ 
%		\end{tabular}
%	}
%\end{center}
%\paragraph{Unrolling the Automaton.} 
\smallskip
\noindent \textbf{Unrolling the Automaton.}
Fix $n \geq 1$ and a tree automaton $\cT = (S,\Delta,s_{init})$ as defined above. Our first step will be to \textit{unroll} the automaton, so that each state is restricted to only producing trees of a fixed size. Specifically, we construct an automaton $\ocT = (\oSt,\oDelta,\sinit^n)$, where each state $s \in S$ is duplicated $n$ times into $s^1,s^2,\dots,s^n \in \oSt$, and where $s^i$ is only allowed to derive trees of size $i$. To enforce this, each transition $s \to rq$ in $\Delta$ is replaced with $s^i \to r^jq^k \in \oDelta$ for all $j,k > 0$ such that $i=j+k+1$, and each transition $s \to \cdot$ in $\Delta$ is replaced with $s^1 \to \cdot \in \oDelta$. 
Now for every $s \in \St$, let $T(s^i)$ be set of trees that can be derived beginning from the state $s^i$ (all of which have size $i$). When $i > 1$, we can then define $T(s^i)$ via the relation
\begin{equation}\label{eqn:prod}
T(s^i) = \bigcup_{(s^i \to r^j q^k) \in \overline{\Delta}} \left(T(r^j) \otimes T(q^k) \right)	
\end{equation}
where $T(r^j) \otimes T(q^k)$ is a shorthand to denote the set of all trees that can be created by taking every $t_1 \in T(r^j)$ and $t_2 \in T(q^k)$ and forming the tree \!\!\quicktree{$t_1$}{$t_2$}\!\!\!. 
%\footnote{Our usage of $\times$ here is not incidental. Specifically, we remark that it is possible to represent these unlabeled trees as \textit{nested ordered pairs}, e.g. the complete binary tree $t$ with $7$ nodes can be written as $t=((x,x),(x,x))$ where $x$ is used to represent a leaf node. In this case the operation in Equation~(\ref{eqn:prod}) is precisely the Cartesian product of sets.} 
%Namely, $T(s^i)$ is the union of the set of trees which can be produced by each of the transitions $s^i \to r^j q^k$ from $s^i$. 
This fact allows us to define each set $T(s^i)$ recursively as a union of  ``products'' of other such sets. Our goal is then to estimate $|T(\sinit^n)|$ and sample from $T(\sinit^n)$. 

We remark that for so-called ``bottom-up deterministic'' automata $\mathcal{T}$ \cite{tata2007}, the sets $T(r^j) \otimes T(q^k)$ in the union in Equation~(\ref{eqn:prod}) are disjoint, so $|T(s^i)| = \sum_{(s^i \to r^jq^k) \in \oDelta} |T(r^j)| \cdot |T(q^k)|$ and one can then compute the values $|T(s^i)|$ exactly via dynamic programming. 
Thus, the core challenge is the \textit{ambiguity} of the problem: namely, the fact that trees $t \in T(\sinit^n)$ may admit exponentially many runs in the automata. 
For example, the tree automaton $\cT$ from Figure~\ref{fig:trees} can accept $t_2$ by two different runs. In what follows, we will focus on the problem of uniform sampling from such a set $T(s^i)$, since given a uniform sampler the problem of size estimation is routine.

%Now suppose for the moment that we had a \textit{deterministic} tree automata $\mathcal{T}$. %Here, deterministic means that any tree $t \in \cL(\cT)$ admits exactly one run. 
%Formally, $\cT$ is deterministic if for every $r, q \in \St$ there exists at most one state $s\in \St$ such that $s \to rq \in \St$ and, in particular, there exists at most one state $s$ with $s \to \cdot \in \Delta.$\footnote{This is usually called ``bottom-up deterministic''~\cite{tata2007}.}  If this is the case, then for each state $s^i \in \oSt$ and for every tree $t \in T(s^i)$, there is exactly one production $s^i \to r^jq^k$ such that $t \in T(r^j) \otimes T(q^k)$. Therefore the sets $T(r^j) \otimes T(q^k)$ in the union in Equation~(\ref{eqn:prod}) are disjoint, so $|T(s^i)| = \sum_{(s^i \to r^jq^k) \in \oDelta} |T(r^j)| \cdot |T(q^k)|$ and one can then compute the values $|T(s^i)|$ exactly via dynamic programming. Thus, the core challenge is the \textit{ambiguity} of the problem: namely, the fact that trees $t \in T(\sinit^n)$ may admit many derivations in the automata. For example, the tree automaton $\cT$ from Figure~\ref{fig:trees} can accept $t_2$ by two different runs. 
%In what follows, we will focus on the problem of uniform sampling from such a set $T(s^i)$, since given a uniform sampler the problem of size estimation is routine.

%\rajesh{Maybe remove this paragraph and add short sentences about Karp Luby. Partitioning is not new in this paper.}
\smallskip
\noindent \textbf{A QPRAS via Karp-Luby Sampling.} 
To handle the problem of sampling with ambiguous derivations, Gore, Jerrum, Kannan, Sweedyk, and Mahaney \cite{gore1997quasi} utilized a technique known as \textit{Karp-Luby} sampling.
%\footnote{We remark that the QPRAS of \cite{gore1997quasi} holds for a slightly more general setting where we replace the $\otimes$ operation in Equation~(\ref{eqn:prod}) with concatenation $\cdot$, which corresponds to the class of context free languages.\marcelo{It is not clear why if one replaces $\otimes$ with concatenation, then one obtains the class of context free languages.}} 
%Karp-Luby sampling 
This technique is a form of rejection sampling, where given sets $T_1,\dots,T_k$ and $T = \cup_i T_i$, one can sample from $T$ by: \textbf{(1)} sampling a set $T_i$ with probability proportional to $|T_i|$, \textbf{(2)} sampling an element $t$ uniformly from $T_i$, \textbf{(3)} accepting $t$ with probability $1/m(t)$, where $m(t)$ is the total number of sets $T_j$ which contain $t$. %In other words, $m(t)$ is the number of times that $t$ is over-counted in the union.
 The QPRAS of \cite{gore1997quasi} applied this procedure recursively, using approximations $\widetilde{N}(T_i)$ in the place of $|T_i|$, where the union $T = \cup_i T_i$ in question is just the union in Equation~(\ref{eqn:prod}), and each $T_i$ is a product of smaller sets $T_i = T_{i,1} \otimes T_{i,2}$ which are themselves unions of sets at a lower depth. So to carry out \textbf{(2)}, one must recursively sample from $T_{i,1}$ and $T_{i,2}$. The overall probability of acceptance in \textbf{(3)} is now exponentially small in the sampling depth. Using a classic depth reduction technique~\cite{valiant1983fast},  %of Valiant, Skyum, Berkowitz, and Rackoff 
they can reduce the depth to $\log(n)$, but since $m(t)$ can still be as large as $\Omega(n|\mathcal{T}|)$ at each step, the resulting acceptance probability is quasi-polynomially small.

 %To handle this, \cite{gore1997quasi} reduce to the problem of sampling monomials from a polynomial computed by an arithmetic $(+,\times)$-circuit, and then applies a classic depth reduction technique of Valiant, Skyum, Berkowitz, and Rackoff \cite{valiant1981fast} to reduce the recursion depth to $\log(n)$. However the fan-out of each $\cup$ gate can be $\Omega(n)$, and so the resulting acceptance probability is quasi-polynomially small. 

%\paragraph{A Partition Based Approach.} 

\noindent \textbf{A Partition Based Approach.} 
The difficult with Karp-Luby sampling is that it relies on a rejection step to compensate for the fact that some elements can be sampled in multiple ways. Instead, our approach will be to \textit{partition} the sets in question, so that no element can be sampled in more than one way. Simply put, to sample from $T$, we will first partition $T$ into disjoint subsets $T_1',\dots,T_\ell'$. Next, we sample a set $T_i'$ with probability (approximately) proportional to $|T_i'|$, and lastly we set $T \leftarrow T_i'$ and now recursively sample from the new $T$. The recursion ends when the current set $T$ has just one element. Clearly no rejection procedure is needed now for the sample to be approximately uniform. To implement this template, however, there are two main implementation issues which we must address. Firstly, how to partition the set $T$, and secondly, how to efficiently estimate the size of each part $T_i$. In the remainder, we will consider these two issues in detail. 

\vphantom\\

%\vspace{.05 in}

\noindent
\fbox{\parbox{\textwidth}{
		\textbf{Our High-Level Sampling Template}\\
		\ttx{Input:} Arbitrary set $T$.
		\begin{enumerate}[topsep=0pt,itemsep=-1ex,partopsep=1ex,parsep=1ex]
			\item If $|T| = 1$, return $T$. Otherwise, find some partition $T = \cup_{i=1}^\ell T_i'$. 
			\item Call subroutine to obtain estimates $\aN(T_i') \approx |T_i'|$
			\item Set $T \leftarrow T_i'$ with probability $\aN(T_i') / \sum_j \aN(T_j')$, and recursively sample from $T$. 
		\end{enumerate}			
}}

\vphantom\\

%\paragraph{Algorithmic Overview and Setup.}
\smallskip
\noindent \textbf{Algorithmic Overview and Setup.}
For the rest of the section, fix some state $s^i$. It will suffice to show how to generate a uniform sample from the set $T(s^i)$. To implement the above template, we will rely on having inductively pre-computed estimates of $|T(r^j)|$ for every $r \in \St$ and $j<i$. Specifically, our algorithm proceeds in rounds, where on the $j$-th round we compute an approximation $\wt{N}(r^j) \approx |T(r^j)|$ for each state $r \in \St$. In addition to these estimates, a key component of our algorithm is that, on the $i$-th round, we also store \textit{sketches} $\wt{T}(r^j)$ of each set $T(r^j)$ for $j<i$, which consist of polynomially many uniform samples from $T(r^j)$. 
One can uses these sketches $\aT(r^j)$ to aid in the generation of uniform samples for the larger sets $T(s^i)$ on the $i$-th round. For instance, given a set of trees $T = \cup_{j=1}^k T_{j}$ for some sets $T_1,\dots,T_k$ where we have estimates $\aN(T_j) \approx |T_j|$ and sketches $\aT_j \subseteq T_j$, one could estimate~$|T|$ by the value 
\begin{equation}\label{eqn:inclusion}
\sum_{j =1}^k \wt{N}(T_{j}) \left( 	\frac{\left| \aT_j \setminus \bigcup_{j' < j} T_{j'} \right|	}{\left|\aT_j 	\right|}	\right)
\end{equation} 
Here, the term in parenthesis in~\ref{eqn:inclusion} estimates the fraction of the set $\aT_{j}$ which is not already contained in the earlier sets $T_{j'}$. 

% Also note that these sketches are \textit{mergable}: given sketches $\aT_1,\aT_2$, we can obtain a sketch $\wt{T_1 \times T_2}$ of $T_1 \times T_2$ by sampling (without replacement) from each of the sketches, $t_1 \sim \aT_{1}, t_2 \sim \aT_2$, and adding the sample $\quicktree{$t_1$}{$t_2$}$ to the merged sketch. Similarly, one has the estimate $\aN(T_1 \times T_2) = \aN(T_1) \cdot \aN(T_2)$. Observe that (\ref{eqn:inclusion}) can be applied to estimate the size of a union of product sets using such merged sketches. 

\smallskip
\noindent \textbf{The Partition Scheme for NFA.}
The above insight of sketching the intermediate subproblems $T(r^j)$ of the dynamic program and applying~\ref{eqn:inclusion} was made by \cite{Arenas19} in their FPRAS for non-deterministic finite automata (NFA). Given an NFA $\mathcal{N}$ with states $S$, $\Sigma = \{0,1\}$, and any state $s \in S$ of $\mathcal{N}$, one can similarly define the intermediate subproblem $W(s^i)$\footnote{We use $W$ to denote sets of words, and $T$ for sets of trees.} as the set of \textit{words} of length $i$ that can be derived starting at the state $s$. The FPRAS of \cite{Arenas19} similarly pre-computes sketches for these sets in a bottom-up fashion. To sample a string $w = w_1 \cdots w_i \in W(s^i)$, they sampled the symbols in $w$ bit by bit, effectively ``growing'' a prefix of $w$. %\footnote{We remark in \cite{Arenas19}, the algorithm instead grows a suffix of $w$, however the two approaches are equivalent.}   
First, $W(s^i)$ is partitioned into $W(s^i,0)\cup W(s^i,1)$, where $W(s^i,b) \subseteq W(s^i)$ is the subset of strings $x  = x_1 \cdots x_i \in W(s^i)$ with first bit $x_1$ equal to $b$. If for any prefix $w'$, we define $R_{w'} \subseteq S$ to be the set of states $r$ such that there is a path of transitions from $s$ to $r$ labeled by $w'$, then observe that $W(s^i,b) = \{b\} \cdot \cup_{r \in R_b} W(r^{i-1})$, where $\cdot$ is the concatenation operation for sets of words.
% and $X_1 \cdot X_2 = \{ x_1 \cdot x_2 \mid x_1 \in X_1$ and $x_2 \in X_2\}$ for two sets $X_1$, $X_2$ of words. 
Thus $|W(s^i,b)|$ can be estimated directly by Equation~\ref{eqn:inclusion} in polynomial time. After the first bit $w_1 = b$ is sampled, they move on to sample the second bit $w_{2}$ conditioned on the prefix $w_1 =b$. By partitioning the strings again into those with prefix equal to either $b 0$ or $b 1$, each of which is described compactly as $\{ b b' \} \cdot \cup_{r \in R_{b b'}} W(r^{i-2})$ for $b' \in \{0,1\}$, one can use Equation~\ref{eqn:inclusion} again to sample $w_{2}$ from the correct distribution, and so on.

The key ``victory'' in the above approach is that for NFAs, one can compactly condition on a prefix $w'$ of a word $w \in W(s^i)$ as a union $\cup_{r \in R_{w'}}W(r^{i - |w'|})$ taken over some easy to compute subset of states $R_{w'} \subseteq S$. In other words, to condition on a partial derivation of a word, one need only remember a subset of states. This is possible because, for NFAs, the overall configuration of the automata at any given time is specified only by a single current state of the automata.  However, this fact breaks down fundamentally for tree automata. % for which a configuration is specified by a combination of many states. 
Namely, at any intermediate point in the derivation of a tree, the configuration of a tree automata is described not by a single state, but rather by the combination of states $(r_{t_1}^{j_1},\dots,r_{t_k}^{j_k})$ assigned to the (possibly many) leaves of the partially derived tree. So the number of possible configurations is exponential in the number of leaves of the partial tree. Consequentially, the number of sets in the union of Equation~\ref{eqn:inclusion} is exponentially large.\footnote{By being slightly clever about the order in which one derives the tree, one can reduce the number of ``active'' leafs to $O(\log n)$, which would result in a quasi-polynomial $|S|^{O(\log n)}$ time algorithm following the approach of \cite{Arenas19}, which in fact is a slight improvement on the $(|S|n)^{O(\log n)}$ obtained from \cite{gore1997quasi}.} Handling this lack of a compact representation is the main challenge for tree automata, and will require a substantially different approach to sampling. 

%The success of this approach was crucially rooted in the the fact that if $t$ is always a path, then this can be equivalently characterized by each term in the product in Equation \eqref{eqn:prod} being of the form $\{a\} \times T(q^k)$ for some label $a$. In this restricted case, expanding \eqref{eqn:prod} only results in a one-sided recursion. At each step they fix a label $a$, and recurse on the sets $T(q^k)$ with $\{a\} \times T(q^k)$ in the union. So at every step, they only need to keep track of a small number of sets $T(q^k)$ to sample from. However for tree languages, the two-sided recursion of Equation \eqref{eqn:prod} immediately causes this approach to fail, since after each step of the recursion the number of possible combinations of sets increases by a factor of $n\cdot |\mathcal{T}|$, resulting in an exponential blow-up. We will see later how, using the choice of partitioning scheme introduced in this paper, one can bound the number of combinations by $|S|^{O(\log n)}$. However to obtain truly polynomial runtime, a substantially different approach is required here. 

%\paragraph{The Partition Scheme for Tree Languages.}
\smallskip
\noindent \textbf{The Partition Scheme for Tree Languages.}
Similarly at a high level to the word case, our approach to sampling will be to ``grow'' a tree $t$ from the root down. However, unlike in the word case, there is no longer any obvious method to partition the ways to grow a tree (for words, one just partitions by the next bit in the prefix). Our solution to this first challenge is to partition based on the \textit{sizes} of the subtrees of all the leaves of $t$. Namely, at each step we expand one of the leaves $\ell$ of $t$, and choose what the final sizes of the left and right subtrees of $\ell$ will be. By irrevocably conditioning on the final sizes of the left and right subtrees of a leaf $\ell$, we partition the set of possibles trees which $t$ can grow into based on the sizes that we choose. Importantly, we do \textbf{not} condition on the states which will be assigned to any of the vertices in $t$, since doing so would no longer result in a partition of $T(s^i)$.%, as there can be many distinct valid assignments of states to vertices (runs) of the automata over a given tree $t \in T(s^i)$. 

More formally, we grow a \textit{partial tree} $\tau$, which is an ordered tree with the additional property that some of its leaves are labeled with positive integers, and these leaves are referred to as \textit{holes}. 
For an example, see the leftmost tree in Figure \ref{fig:fundamental_pathintro}. A partial tree $\tau$ is called complete if it has no holes.
For a hole $H$ of $\tau$, we denote its integral label by $\tau(H) \geq 1$, and call $\tau(H)$ the \textit{final size} of $H$, since $\tau(H)$ will indeed be the final size of the subtree rooted at $H$ once $\tau$ is complete.
Intuitively, to complete $\tau$ we must replace each hole $H$ of $\tau$ with a subtree of size exactly $\tau(H)$.
Because no states are involved in this definition, a partial tree $\tau$ is by itself totally independent of the automata.

%any particular run of the automata. In fact, as defined partial tree $\tau$ has no relation at all to the automata. 

We can now define the set $T(s^i, \tau) \subseteq  T(s^i) $ of \textit{completions} of $\tau$ as the set of trees $t \in  T(s^i)$ such that $\tau$ is a subtree of $t$ sharing the same root, and such that for every hole $H \in \tau$ the subtree rooted at the corresponding node $H \in t$ has size $\tau(H)$. Equivalently, $t$ can be obtained from $\tau$ by replacing each hole $H \in \tau$ with a subtree $t_H$ of size $\tau(H)$.
 If $\!\singlenode{$i$}\!\!$ is the partial tree consisting of a single hole with final size $i$, then we have $T(s^i, \!\text{\singlenode{$i$}}\!\!\!)  =  T(s^i)$. So at each step in the construction of $\tau$, beginning with $\tau = \!\text{\singlenode{$i$}}\!\!\!$, we will attempt to sample a tree $t$ uniformly from $T(s^i,\tau)$. To do so, we can pick any hole $H \in \tau$, and expand it by adding left and right children and fixing the final sizes of the subtrees rooted at those children. %If $H$ has final size $\tau(H)$,
 There are $\tau(H)$ ways of doing this: namely, we can fix the final size of the left and right subtrees to be $j$ and $\tau(H)-j-1$ respectively, for each $j \in \{0,1,\dots,\tau(H)-1\}$. So let $\tau_j$ be the partial tree resulting from fixing these final sizes to be $j$ and $\tau(H)-j-1$, and notice that $T(s^i,\tau_0),\dots,T(s^i,\tau_{\tau(H)-1})$ partitions the set $T(s^i,\tau)$. %, as the sizes of the subtrees of the vertex corresponding to $H$ in any tree $t \in T(s^i,\tau)$ must be $j$ and $d-j-1$ for exactly one such value $j$.
  Thus it will now suffice to efficiently estimate the sizes $|T(s^i, \tau_j)|$ of each piece in the partition.

% To do so, we will partition $T(s^i,\tau)$ into $T(s^i,\tau_1),\dots,T(s^i,\tau_k)$, where each $\tau_j$ is a partial tree resulting from a ``small'' extension (defined below) of $\tau$. Once we sample an extension $\tau_j$, we can grow the partial tree $\tau$ into $\tau_j$, and recursively apply the procedure. 
%At each step, we construct the extensions $\tau_j$ by choosing one of the holes $H \in \tau$ to expand.
 % We now point out the following important fact: for any hole $H \in \tau$, fixing the sizes of the left and right subtrees of $H$ partitions the set of possible completions of $\tau$ for different values of this fixing.  Thus we define the extensions $\tau_1,\tau_2,\dots,\tau_{d-2}$ with $d = \tau(H)$, where $\tau_j$ is the partial tree obtained from appending two children to $H \in \tau$, one with final size $j$ and one with final size $d-j-1$.  So in a completion $t \in T(s^i,\tau_j)$, the left subtree of the node $H \in \tau$ will have size $j$, while the right subtree will have size $d-j-1$. In particular, $T(s^i, \tau) = \uplus_j T(s^i, \tau_j)$. This partition admits several useful properties which will allow us to efficiently estimate the sizes $|T(s^i, \tau_j)|$. 

\begin{figure}
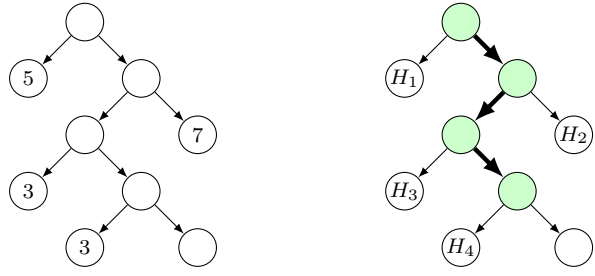

	\ctikzfig{fig_unlabel}
	\vspace{-5mm}
	\caption{Two examples of partial trees. 
		The left-hand side tree shows the label of each hole written inside the node.
		% in the tree. 
		The right-hand side tree
		%tree second two trees 
		illustrates the main path, where non-white 
		(green) 
		nodes and thick arcs are used to highlight the vertices and edges on the main path. 
	%	Here, holes are indicated by the letter $H$,
%		and the white unlabeled nodes 
%		are neither holes nor on the main path.
		\label{fig:fundamental_pathintro}}
\end{figure}

\smallskip
%\rajesh{Don't describe this as a reduction, because black-boxing this reduction and details makes the problem sound less complex than it is.}
%\rajesh{Our algorithm approximately simulates the algorithm of \cite{Arenas19} on an exponentially sized NFA}
\noindent \textbf{Estimating the number of completions via the Main Path.}  
The remaining challenge can now be rephrased in following way: given any partial tree $\tau$, design a subroutine to estimate the number of completions $|T(s^i,\tau)|$.
The key tool in our approach to this is a reduction which allows us to represent the set $T(s^i,\tau)$ as the language generated by a succinct NFA, whose transitions are labeled by large sets which are succinctly encoded (see earlier definition before Theorem \ref{thm:introlabeledpaths}).  In our reduction, on round $i \leq n$, the alphabet $\Sigma$ of the succinct NFA will be the set of all ordered trees of size at most $i$. Note that this results in $\Sigma$ and the label sets $A$ being exponentially large in $n$, preventing one from applying the algorithm of \cite{Arenas19}.% for standard NFAs.

%Recall that we define such a NFA $\mathcal{N}$ as a set-labeled graph (see Theorem \ref{thm:introlabeledpaths} earlier), which is a directed multigraph $G= (V,E)$, with a distinguished starting and ending state $s_{\text{start}},s_{\text{final}} \in V$, where each edge $e \in E$, called a transition, is labeled by a set $A_e \subset \Sigma$ of symbols. The $k$-slice of the language accepted by $\mathcal{N}$ is just  $\lpaths_n(s_{\text{init}}, s_{\text{final}} )$  (see earlier discussion by Theorem \ref{thm:introlabeledpaths} for the definition.
 %In our reduction, the alphabet $\Sigma$ will be the set of all ordered trees of size at most $i$. This will result in $\Sigma$ and $|A_e|$ being exponentially sized in $|V| + |E|$, thus the algorithm of \cite{Arenas19} will not apply, and a new approach will be needed.

%However, we will see how using our previously constructed sketches $\wt{T}(r^j)$ for $j<i$, we will be able to approximately simulate the behavior of the algorithm of \cite{Arenas19} on $\mathcal{N}$, despite the exponentially sized alphabet. 

Our first observation is that by always choosing the hole $H$ at the lowest depth to expand in the partitioning scheme, the resulting holes $H_1,\dots,H_k \in \tau$ will be \textit{nested} within each other. Namely, for each $i>1$, $H_i$ will be contained in the subtree rooted at the sibling of $H_{i-1}$. Using this fact, we can define a distinguished path $P$ between the parent of $H_1$ and the parent of $H_k$. Observe that each hole $H_j$ must be a child of some node in $P$. We call $P$ the \textit{main path} of $\tau$ (see Figure \ref{fig:fundamental_pathintro}). For simplicity, assume that each vertex $v \in P$ has exactly one child that is a hole of $\tau$,\footnote{Extra care should be taken when this is not the case.} and label the vertices of the path $P = \{v_1,v_2,\dots,v_k\}$, so that $H_j$ is the child of $v_j$. Notice by the above nestedness property, the holes $H_{j},H_{j+1},\dots,H_{k}$ are all contained in the subtree rooted at $v_j$. 

Observe that any completed tree $t\in T(s^i,\tau)$ can be uniquely  represented by the trees $(t_1,\dots,t_k)$, such that $t$ is obtained from $\tau$ by replacing each hole $H_i \in \tau$ by the tree $t_i$. Thinking of each tree $t_i$ as a symbol in the alphabet $\Sigma$ of all ordered trees, we can thus specify the tree $t$ by a \textit{word} $t_1 t_2 \cdots t_k \in \Sigma^*$. So our goal is to show that the set of words $T(s^i,\tau) = \{t_1 \cdots t_k \in \Sigma^*  \mid 
%t \equiv  (t_1 t_2 \cdots t_k) \in \Sigma^*  , t \in T(s^i,\tau) \}$ 
t_1 \cdots t_k \in T(s^i,\tau) \}$
 is the language accepted by an succinct NFA $\mathcal{N}$ over the alphabet of ordered trees $\Sigma$ with polynomially many states and set-labeled transitions. 

%We now describe our construction of $\mathcal{N}$.
 %Intuitively, our reduction to a succinct NFA is possible for the following reasons. 
Now NFAs can only express labeled paths (i.e., words) and not trees. However, the key observation is that if we restrict ourselves to the main path $P$, then the sequences of states from $\mathcal{T}$ which can occur along $P$ can indeed be expressed by an NFA. Informally, for every vertex $v_j \in P$ with (wlog) left child $H_j$, and for every transition $s \to r s' $ in the tree automata $\mathcal{T}$ which could occur at $v_j$, we create a unique transition $s \to s'$ in the succinct NFA. Here, the two states $s,s'$ are assigned to the vertices $v_j,v_{j+1}$ on the main path $P$, and the state $r$ is placed inside of the hole $H_j$.  Now the set of trees $t_j$ which could be placed in $H_j$ by this transition \textit{only} depends on the state $r$. Specifically, this set of trees is exactly $T(r^{\tau(H_j)})$. Thus, if we label this transition $s \to s'$ in the succinct NFA by the set $T(r^{\tau(H_j)})$, the language accepted by the NFA will be precisely $T(s^i,\tau)$. %Note this is only possible since every hole is a child of $P$.
 The full details can be found in Section \ref{sec:partition}. 
 
The crucial fact about this construction is that the transition labels of the succinct NFA are all sets of the form $T(r^j)$ for some $r \in S$ and $j < i$.  Since $j< i$, our algorithm has already pre-computed the sketches $\wt{T}(r^j)$ and estimates $\wt{N}(T^j)$ of the label sets $T(r^j)$ at this point. We will use these sketches and estimate to satisfy the ``oracle'' assumptions of Theorem \ref{thm:introlabeledpaths}.

\smallskip
\noindent \textbf{An FPRAS for Succinct NFAs.} 
Now that we have constructed the succinct NFA $\mathcal{N}$ which recognizes the language $T(s^i,\tau)$ as its $k$-slice, we must devise a subroutine to approximate the size of the $k$-slice of $\mathcal{N}$. Let $S',\Delta'$ be the states and transitions of $\mathcal{N}$.
% Recall here that a word $w \in T(s^i,\tau)$ corresponds to a tree $t$ via $w = (w_1w_2 \dots w_k) = (t_1 t_2\dots t_k) \equiv t$, where $t$ is obtained from $\tau$ by placing the tree $t_i$ in the hole $H_i \in \tau$. 
%This task is an estimation sub-problem for an automata running on a tree automata $\mathcal{T}$.
 In order to estimate $|\mathcal{L}_k(\mathcal{N})|$, we mimic the inductive, dynamic programming approach of our ``outside'' algorithm.\footnote{We think of this subroutine to estimate $|T(s^i,\tau)|$ as being the ``inner loop'' of the FPRAS.} Namely, we define partial states of a dynamic program on  $\mathcal{N}$, by setting $W(x^\ell)$ to be the set of words of length $\ell$ accepted by $\mathcal{N}$ starting from the state $x \in S'$. We then similarly divide the computation of our algorithm into rounds, where on  round $\ell$ of the subroutine, we inductively pre-compute new NFA sketches $\wt{W}(x^\ell)$ of $W(x^\ell)$ and estimates $\wt{N}(x^\ell)$ of $|W(x^\ell)|$ for each state $x \in S'$. Given these estimates and sketches, our procedure for obtaining the size estimates $\wt{N}(x^{\ell})$ is straightforward. Thus, similar to the outside algorithm, the central challenge is to design a polynomial time algorithm to sample from the set $W(x^\ell)$, allowing us to construct the sketch $\wt{W}(x^\ell)$. %, given that $\wt{W}(y^j),\wt{N}(y^{j})$ have been pre-computed for all $j <\ell$ and $y \in S'$. 
 
For a string $u \in \Sigma^*$, define $W(x^\ell, u)$ to be the set of strings $w \in W(x^\ell)$ with prefix equal to $u$.
Recall the approach of \cite{Arenas19} to this problem for standard NFAs began by partitioning $W(x^\ell)$ into $\bigcup_{\alpha \in \Sigma} W(x^\ell, \alpha)$ and estimating the size $|W(x^\ell, \alpha)|$ for each $\alpha \in \Sigma$. Then one chooses $\alpha$ with probability (approximately) $\pr{\alpha} = |W(x^\ell, \alpha)| / \sum_{\beta \in \Sigma} |W(x^\ell, \beta)|$ and recurses into the set $W(x^\ell, \alpha)$. Clearly we can no longer follow this strategy, as $|\Sigma|$ is of exponential size with respect to $\mathcal{N}$. Specifically, we cannot estimate $|W(x^\ell, \alpha)|$ for each $\alpha \in \Sigma$. Instead, our approach is to approximate the behavior of the ``idealistic'' algorithm which \textit{does} estimate all these sizes, by sampling from~$\Sigma$ without explicitly estimating the sampling probabilities $\pr{\alpha}$. 
Namely, for a prefix $u$ we must sample a string $v \sim W(x^\ell, u)$, by first sampling the next symbol $\alpha \sim \Sigma$ from a distribution $\wt{\mathcal{D}}(u)$ which is close to the true distribution $\mathcal{D}(u)$ over $\Sigma$ given by $\pr{\alpha} = |W(x^\ell, u \cdot \alpha)| /|W(x^\ell, u)|$ for each $\alpha \in \Sigma$.  %without explicitly estimating the sampling probabilities.  
 
%% OLD VERSION
%Recall the approach of \cite{Arenas19} to solving this problem for standard NFAs began by partitioning a word $w \in W(x^\ell)$ by its first symbol $w_1$, and estimating the number of $w \in W(x^\ell)$ with $w_1 = \alpha$ for each $\alpha \in \Sigma$. Clearly one can no longer do this here, as $|\Sigma| = \exp(n)$ so we cannot even enumerate all of the pieces of this partition. Instead, our approach is to approximate the behavior of an ``idealistic'' algorithm which does indeed enumerate each piece of the partition and compute their respective sampling probabilities. %Recall that such a sampler would, for each symbol $w_{j+1}$ of a word $w$, decided to sample the symbol $w_{j+1} \in \Sigma$ from the correct distribution conditioned on the previously sampled prefix $w_1\dots w_{j}$. 
%Namely, for a prefix $w^*$,  define $W_{w^*}(x^\ell)$ to be the set of strings $w \in W(x^\ell)$ with prefix equal to $w^*$. Then conditioned on a prefix $w^* = w_1 \dots w_{j}$, we must sample a string $w \sim W_{w^* }(x^\ell)$, which we do by first sampling $w_{j+1} \sim \Sigma$ from the distribution $\mathcal{D}_{w^*}$ over $\Sigma$ given by $\pr{w_{j+1} = \alpha} =|W_{w^* \circ \alpha}(x^\ell)| /|W_{w^* }(x^\ell)|$ for each $\alpha \in \Sigma$, without explicitly estimating the sampling probabilities.   
 
To do this, first note that we can write 
$W(x^\ell,u) = \{u\} \cdot \cup_{y \in R(x,u)} W(y^{\ell - |u|})$, where $R(x,u) \subseteq S'$ is the set of states $y$ such that there is a a path of transitions from $x$ to $y$ labeled by sets $A_1 \dots A_{|u|}$ with $u_j \in A_{j}$ for each $j \in \{1, \ldots, |u|\}$.  %Notice that while we already have sketches $\wt{W}(u^{\ell - j})$ for each $u \in S'$, we cannot reuse the samples in $\wt{W}(u^{\ell - j})$ to construct the new samples needed for $\wt{W}(x^\ell)$, as this would violate the independence of samples in the subroutine. 
Thus the set of possible symbols $\alpha$ that we can append to $u$ is captured by the sets of labels of the transitions out of some state $y \in R(x,u)$.
Now consider the set of transitions $\{(y, A, z) \in \Delta' \mid y \in R(x,u)\}$, namely, all transitions out of some state in $R(x,u)$.
Furthermore, suppose for the moment that we were given an oracle which generates uniform samples from each label set $A$ of a transition $(y, A, z)$, and also provided estimates $\wt{N}(A)$ of the size of that set $|A|$.  Given such an oracle,  we design a multi-step rejection procedure to sample a symbol $\alpha$ approximately from  $\mathcal{D}(u)$, based on drawing samples from the external oracle and then rejecting them based on intersection ratios of our pre-computed internal NFA sketches $\wt{W}(y^{\ell - |u|})$.

 Since $\alpha$ is generated by a transition out of $R(x,u)$, we first sample such a transitions with probability proportional to the number of remaining suffixes which could be derived by taking that transition.
More specifically, the number of suffixes that can be produced by following a transition $(y, A, z)$ is given by $|A| \cdot |W(z^{\ell-|u|-1})|$, which can be approximated by $\wt{N}(A) \cdot \wt{N}(z^{\ell-|u|-1})$ using the oracle and our internal estimates. 
Then if $Z$ is the sum of the estimates $\wt{N}(A) \cdot \wt{N}(z^{\ell-|u|-1})$ taken over all transitions $\{(y, A, z) \in \Delta' \mid y \in R(x,u)\}$,  we choose a transition $(y,A,z)$ with probability $\wt{N}(A) \cdot \wt{N}(z^{\ell-|u|-1}) / Z$ and then call the oracle to obtain a sample $\alpha \sim A$. The sample $\alpha$ now defines a piece $W(x^\ell, u\cdot \alpha)$ of the partition of $W(x^\ell, u)$ which the idealistic algorithm would have estimated and potentially chosen. However, at this point $\alpha$ is not drawn approximately from the correct distribution $\mathcal{D}(u)$, since the sample from the oracle does not taken into account any information about the other transitions which could also produce $\alpha$. To remedy this, we show that it suffices to accept the symbol $\alpha$ with probability:
%Marcelo: replacing \text{z} with \zeta and \text{y} with \eta
\begin{equation}\tag{$\dagger$}
\frac{\big|\aW(z^{\ell-|u|-1}) \setminus \bigcup_{\zeta \in \BB(\alpha) \,:\, \zeta \prec z} W(\zeta^{\ell - |u| - 1}) \big|}{\big|\aW(z^{\ell-|u|-1}) \big|} \label{rej-prob}
\end{equation}
where $\prec$ is an ordering over $S'$ and  $\mathcal{B}(\alpha)$ is the set of all states that can be reached from $R(x,u)$ by reading $\alpha$, namely, all states $\zeta$ such that there exists a transition $(\eta, B, \zeta) \in \Delta'$ with $\eta \in R(x, u)$ and $\alpha \in B$.
Otherwise, we reject $\alpha$. 
Intuitively, probability (\ref{rej-prob}) is small when the sets of suffixes which could be derived following transitions $\mathcal{B}(\alpha)$ that could also produce $\alpha$ intersect heavily. If this is the case, we have ``overcounted'' the contribution of the set $W(x^\ell, u \cdot \alpha)$ in the partition, and so the purpose of the probability (\ref{rej-prob}) is to compensate for this fact. We show that this procedure results in samples $\alpha$ drawn from a distribution $\wt{\mathcal{D}}(u)$ which is close in statistical distance to the exact distribution $\mathcal{D}(u)$. %(and moreover, which is close to the distribution obtained by the idealistic algorithm which explicitly estimated all of the sampling probabilities $\pr{\alpha}$). 
Furthermore, one can bound the acceptance probability by $\text{(\ref{rej-prob})}\geq 1/\poly(n)$ in expectation over the choice of $\alpha$, so after repeating the oracle call
 polynomially many times, we will accept a sample $\alpha$. Once $\alpha$ is accepted, we condition on it and move to the next symbol, avoiding any recursive rejection sampling.

We now return to the assumption of having a oracle to sample from and approximate the size of the label sets $A$. % of a transition $(y, A, z)$. % admitted an external oracle to generate samples from $A$. 
By construction, $A$ is a set of trees $T(s^j)$ for which we have pre-computed sketches and estimates $\wt{T}(s^j),\wt{N}(s^j)$ from the external algorithm. To simulate this oracle, we reuse the samples within the sketches $\wt{T}(s^j)$ for each call to the succinct NFA sub-routine, pretending that they are being generated fresh and on the fly. %Without conditioning on the identities of the samples $\wt{T}(s^j)$, within a call to the succinct NFA subroutine, the samples procedure by this oracle are indeed uniform and independent. 
However, since the same sketches must be reused on each call to the subroutine, we lose independence between the samples generated within subsequent calls.  Ultimately, though, all that matters is that the estimate of  $|T(s^i,\tau)|$ produced by the subroutine is correct.
So to handle this, we show that one can condition on a deterministic property of the sketches $\{\wt{T}(s^j)\}_{s \in S, j < i}$, so that \textit{every} possible run of the succinct NFA subroutine will yield a good approximation, allowing us to ignore these dependencies. 

Lastly, we handle the propagation of error resulting from the statistical distance between $\wt{\mathcal{D}}(u)$ and $\mathcal{D}(u)$. % for each prefix $u$ during the production of a sample $w \sim W(x^\ell)$. 
This statistical error feeds into the error for the estimates $\wt{N}(x^{\ell+1})$ on the next step, both of which feed back into the statistical error when sampling from $W(x^{\ell+1})$, doubling the error at each step. We handle this by introducing an approximate rejection sampling step, inspired by an exact rejection sampling technique due to~\cite{jerrum1986random} (the exact version was also used in \cite{Arenas19}). This approximately corrects the distribution of each sample $w$, causing the error to increases linearly in the rounds instead of geometrically, which will be acceptable for our purposes.

\subsection{Other applications of the FPRAS}
\label{sec:introapp}
% !TeX spellcheck = en_US
%!TEX root = main.tex
%\rajesh{Move to after Techniques}

%Theorems~\ref{thm1} and~\ref{thm2} have several consequences on different areas of computer science. 
%In the sequel, we give a brief overview of how our main results imply the existence of efficient randomized approximation algorithms for fundamental problems in areas such as constraint satisfaction problems, database query evaluation, verification of correctness of programs with nested calls to subroutines, and knowledge compilation. These application are described in detail in Section~\ref{sec:applications}.

%\smallskip
\noindent \textbf{Constraint satisfaction problems.} Constraint satisfaction problems (CSPs) offer a general and natural setting to represent a large number of problems where solutions must satisfy some constraints, and which can be found in different areas~\cite{V00,DBLP:books/daglib/0004131,rossi2006handbook,DBLP:books/daglib/0013017,DBLP:series/faia/2009-185,russell2016artificial}. 
%such as artificial intelligence, satisfiability, programming languages, temporal reasoning, scheduling, graph theory, and databases~\cite{V00,DBLP:books/daglib/0004131,rossi2006handbook,DBLP:books/daglib/0013017,DBLP:series/faia/2009-185,russell2016artificial}. 
%Formally, a constraint satisfaction problem (CSP) is a triple $\cP = (V,D,C)$ such that $V = \{x_1, \ldots, x_m\}$ is a set of variables, $D$ is a set of values and $C = \{C_1, \ldots, C_n\}$ is a set of constraints, where each constraint $C_i$ is a pair $(\bar t_i,R_i)$ such that 
%$\bar t_i$ is a tuple of variables from $V$ of arity $k$, for some $k \geq 1$, and $R_i \subseteq D^k$. Moreover, an assignment $\nu : V \to D$ is said to be a solution for $\cP$ if for every $i \in [n]$, it holds that $\nu(\bar t_i) \in R_i$~\cite{russell2016artificial}, where $\nu(\bar t_i)$ is obtained by replacing each variable $x_j$ occurring in $\bar t_i$ by $\nu(x_j)$. The set of solution for CSP $\cP$ is denoted by $\sol(\cP)$. 
The most basic task associated to a CSP is the problem of verifying whether it has a solution, which corresponds to an assignment of values to the variables of the CSP that satisfies all the constraints of the problem. Tightly related with this task is the problem of counting the number of solution to a CSP. In this work, we consider this counting problem in the usual setting where a projection operator for CSPs is allowed, so that it is possible to indicate the output variables of the problem. We denote this setting 
as ECSP.

As counting the number of solutions of an ECSP is $\sharpp$-complete and cannot admit an FPRAS (unless $\np = \rp$), we focus 
%in our work 
on two well known 
%and studied 
notions of acyclicity 
%for ECSP 
that ensure that solutions can be found in polynomial time~\cite{GLS00,GLS02}. More precisely, we define $\saecsp$ as the problem of counting, given an acyclic ECSP $\cE$, the number of solutions to $\cE$. Moreover, given a fixed $k \geq 0$, we define $\skhwecsp$ as the problem of counting, given an ECSP $\cE$ whose hypertree-width is at most $k$, the number of solution for $\cE$. Although both problems are known to be $\sharpp$-complete~\cite{PS13}, we obtain as a consequence of Theorem~\ref{thm1} that both $\saecsp$ and $\skhwecsp$ admit FPRAS.

\smallskip
\noindent \textbf{Software verification.} Nested words have been proposed as a model for the formal verification of correctness of structured programs that can contain nested calls to subroutines~\cite{AEM04,AM04,AM09}. In particular, the execution of a program is viewed as a linear sequence of states, but where a matching relation is used to specify the correspondence between each point during the execution at which a procedure is called with the point when we return from that procedure call. This idea gives rise to the notion of nested word, which is defined as a regular word accompanied by a matching relation. Moreover, properties of programs to be formally verified are specified by using nested word~automata (NWA). 
The emptiness problem for nested word automata ask whether, given a NWA $\cN$, there exists a nested word accepted by $\cN$. This is a fundamental problem when looking for faulty executions of a program with nested calls to subroutines; if $\cN$ is used to encode the complement of a property we expect to be satisfied by a program, then a nested word accepted by $\cN$ encodes a bug of this program. In this sense, the following is also a very relevant problem for understanding how faulty a program~is. Define $\snwa$ as the problem of counting, given a nested word automaton $\cN$ and a string $0^n$, the number of nested words of length $n$ accepted by $\cN$. As expected, $\snwa$ is a $\sharpp$-complete problem. Interestingly, from Theorem \ref{thm1} and the results in \cite{AM09} showing how nested word automata can be represented by using tree automata over binary trees, it is possible to prove that $\snwa$ admits an FPRAS.
% fully-polynomial time randomized approximation schema.

\smallskip
\noindent \textbf{Knowledge compilation.} Model counting is the problem of counting the number of satisfying assignments given a propositional formula. Although this problem is $\#\textsc{P}$-complete~\cite{valiant1979complexity}, there have been several approaches to tackle it~\cite{GomesSS09}.
One of them comes from the field of \emph{knowledge compilation}, a subarea in artificial intelligence~\cite{darwiche2002knowledge}.
Roughly speaking, this approach consists in dividing the reasoning process in two phases. 
The first phase is to compile the formula into a target language (e.g. Horn formulae, BDDs, circuits) that has good algorithmic properties. The second phase is to use the new representation to solve the problem efficiently.
The main goal then is to find a target language that is expressive enough to encode a rich set of propositional formulae and, at the same time, that allows for efficient algorithms to solve the counting problem.

%Further, the variable uses by an input gate $g$ is denoted by $\svar(g)$. 
A target language for knowledge compilation that has attracted a lot of attention is the class of DNNF circuits \cite{darwiche2001decomposable}. DNNF has good algorithmic properties in terms of satisfiability and logical operations. Furthermore, DNNF can be seen as a generalization of DNF formulae and, in particular, of binary decision diagrams (BDD), in the sense that every BDD can be transformed into a DNNF circuit in polynomial time. Moreover, DNNF is exponentially more succinct than DNF or BDD, and then it is a more appealing language for knowledge compilation.
Regarding model counting, DNNF circuits can easily encode \#P-complete problems (e.g. \#DNF) and, therefore, researchers have look into subclasses of DNNF where counting can be done more efficiently. One such a class that has recently received a lot of attention is the class of structured DNNF~\cite{pipatsrisawat2008new}, which has been used for efficient enumeration~\cite{AmarilliBJM17,AmarilliBMN19}, and has proved to be appropriate to compile propositional CNF formulae with bounded width (e.g. CV-width)~\cite{oztok2014cv}.
%, as each such a formula can be efficiently compiled into a structured DNNF circuit~\cite{oztok2014cv}.
Unfortunately, the problem of computing the number of propositional variable assignments that satisfy a structured DNNF circuit is a $\sharpp$-complete problem, as these circuits include the class of DNF formulae.
% and, therefore, its underlying counting problem is also $\sharpp$-complete. 
However, and in line with the idea that structured DNNF circuits allow for more efficient counting algorithms, 
we prove that the counting problem of structured DNNF circuits admits a fully-polynomial time randomized approximation schema as a consequence of Theorem~\ref{thm1}.

%\smallskip
%\noindent \textbf{Computational Biology and Other Applications.} The problem of approximate counting and sampling from tree languages have many known applications in the domains discussed earlier. For instance, in knowledge compilation where approximating the size of \textbf{what?} is a common task (\textbf{cite}). Generating random members of tree languages also allows us to generate random test programs for parsers, which is a an important problem in programming languages. 
%In addition, another interesting application is in computation biology, where it is common to measure the performance of biological algorithms on ``random'' DNA or RNA sequences. It has been often observed that these sequences are well-modeled by formal languages \cite{nussinov1980fast,osti_373867}, and in particular by regular tree languages. RNA folding specifically can be expressed as a regular tree language. To see this, one applies a equivalence between trees and \textit{nested words} \cite{AM04}, which are then sufficent to capture the matching property of RNA (see \cite{saha2017fast} for a definition of the language for RNA folding). Thus, uniform generation of trees implies uniform generating of the above biological sequences, and provides a rich test set for algorithms that run on them. 

\subsection{Related work}
\label{sec:related}
%CSPs
%\cite{10.1145/3389390}
%homomorphisms
%\cite{DALMAU2004315}
%stefan mengel
%\cite{DBLP:journals/mst/0001M15}
%pichler (#P-hardness)
%\cite{PICHLER2013984}

Several works have looked into the counting problem for CQs
(and the related problems we listed above, like CSPs). In order to
clarify the discussion, we will give a rough characterization of the
research in this area. This will better illustrate how our results
relate to previous work. So as a first idea, when counting solutions
to CQs, an importante source of difficulty is the presence of
existentially quantified variables. Consider the query we used in
Section \ref{sec:tech}:
\begin{eqnarray*}
	Q_1(x) & \leftarrow & \Grad(x), \Enr(x,y), \Enr(x,z), \CourseCS(y), \CourseMath(z).
\end{eqnarray*}
Notice that there are three variables $x$, $y$ and $z$ in the
right-hand side, while only $x$ is present in the left-hand
side. Thus, $x$ is an output variable, while $y$ and $z$ are
existentially quantified variables. An alternative notation for CQs
makes the quantification even more explicit:
\begin{eqnarray*}
	Q_1(x) & \leftarrow & \exists y\exists z \,
	(\Grad(x)\wedge \Enr(x,y)\wedge \Enr(x,z)\wedge \CourseCS(y)\wedge \CourseMath(z)).
\end{eqnarray*}
As we mentioned in Section \ref{sec:tech}, when variables are
existentially quantified, there is no one-to-one correspondence
between the answers to a CQ and their witness trees. This introduces a
level of ambiguity (i.e. potentially several witness trees for each
answer) into the counting problem, which makes it more difficult, even
though it does not make the evaluation problem any harder. In fact, it
is proved in Theorem 4 in \cite{PS13} that the counting problem is
$\sharpp$-complete for acyclic CQs over graphs (i.e. bounded arity),
even if queries are allowed a single existentially quantified variable
(and an arbitrary number of output variables). In contrast, it is
known (e.g. \cite{DALMAU2004315}) that for each class of CQs with
bounded treewidth and without existentially quantified variables, the
counting problem can be solved exactly in polynomial time.

It was open what happens with the counting problem when CQs are
considered with all their features, that is, when output and
existentially quantified variables are combined.
%if you include
%existential variables in these classes, and
In particular, it was open whether the counting problem admits an
approximation in that case. Our paper aims to study precisely that
case, in contrast with previous work that does not consider such
output variables combined with existentially quantified variables
\cite{10.1145/3389390, DALMAU2004315}.
%existential variables (combined with free variables).

As a second idea, approaches to make the counting or evaluation
problem for CQs tractable usually revolve around imposing some
structural constraint on the query, in order to restrict its degree of
cyclicity. Most well-known is the result in \cite{Y81}, which proves
that the evaluation problem is tractable for acyclic queries. In
generalizations of this result (e.g. \cite{GSS01}), the acyclicity is
usually measured as the width of some query decomposition. Specific to
the counting problem, this type of notion is used
in \cite{DBLP:journals/mst/0001M15} to characterize tractable
cases. Notice, however, that they rely not only on the width of
different query decompositions, but also on a measure of how free
variables are spread in the query, which they call \textit{quantified
star size}. In contrast, we rely only on the structural width of the
hypertree decomposition.

%\newpage
%\thispagestyle{empty}
%\tableofcontents
%\newpage
\section{Preliminaries}
\label{sec:preliminaries}
% !TeX spellcheck = en_US
%!TEX root = main.tex

%\marcelo{notation $r = (1\pm \eps) s$ is no longer defined in the paper (it was at some point, but somehow it got lost)}

In this section, we introduce the main terminology used in this paper. 

\subsection{Intervals, strings, trees and tree automata}
\paragraph{Basic notation.}
Given $m \leq n$ with $n,m \in \mathbb{N}$, we use notation $[m,n]$ for the set $\{m, m+1, \ldots, n\}$, and notation $[n]$ for the set~$[1,n]$.
Moreover, given $u,\eps \in \mathbb{R}$ with $\eps \geq 0$, let $(u \pm \eps)$ denote the real interval $[u- \eps, u+ \eps]$. In general, we consider real intervals of the form $(1 \pm \eps)$, and we use $x(1 \pm \eps)$ to denote the range $[x- x\eps, x+ x\eps]$, and $x = (1 \pm \eps )y$ to denote the containment $x \in [y - \eps y, y+ \eps y]$.

\paragraph{Strings and Sequences.}
 Given a finite alphabet $\Sigma$, a finite string over $\Sigma$ is a sequence $w = w_1 \ldots w_n$ such that $n \geq 0$ and $w_i \in \Sigma$ for every $i \in [n]$. Notice that if $n = 0$, then $w$ is the empty word, which is denoted by~$\lambda$. We write $|w|= n$ for the length of $w$.
% and denote by $\es$ the string of $0$-length. 
As usual, we denote by $\Sigma^*$ all strings over $\Sigma$. For two sets $A, B \subseteq \Sigma^*$ we denote by $A \cdot B = \{u \cdot v \mid u \in A, v\in B\}$, where $u \cdot v$ is the concatenation of two strings $u$ and $w$, and by $A^i$ the concatenation of $A$ with itself $i$ times, that is, $A^0 = \{\es\}$ and $A^{i+1} = A \cdot A^i$ for every $i \in \mathbb{N}$.

\paragraph{Ordered Trees.}
Fix $k \in \bbN$ with $k\geq 1$.
% and define $[k] = \{1, \ldots, k\}$. 
A finite ordered $k$-tree (or just a $k$-tree) is a prefix-closed non-empty finite subset $t \subseteq [k]^*$, namely,  if $w \cdot i \in t$ with $w \in [k]^*$ and $i \in [k]$, then $w \in t$ and $w\cdot j \in t$ for every $j \in [i]$. For a $k$-tree $t$, $\es \in t$ is the called the root of $t$ and every maximal element in $t$ (under prefix order) is called a leaf. We denote by $\leaves{t}$ the set of all leaves of $t$.
For every $u,v \in t$, we say that $u$ is a child of $v$, or that $v$ is the parent of $u$, if $u = v \cdot i$ for some $i \in [k]$. We say that $v$ has $n$ children if $v\cdot1, \ldots, v\cdot n \in t$ with $n = \max_{v\cdot i \in t}\{i\}$. 
We denote by $v = \parent{u}$ when $v$ is the parent of $u$ (if $u$ is the root, then $\parent{u}$ is undefined). 
Furthermore, we say that $v$ is an ancestor of $u$, or $u$ is a descendant of $v$, if $v$ is a prefix of $u$. The size of $t$, i.e. the number of nodes, is denoted by $|t|$.

Let $\Sigma$ be a finite alphabet and $t$ be a $k$-tree. Slightly abusing notation, we also use $t$ to denote a $k$-tree labeled over $\Sigma$. That is, we also consider $t$ as a function such that for every $u \in t$, it holds that $t(u) \in \Sigma$ is the label assigned to node $u$.
%A labeled $k$-tree over $\Sigma$ is a function $t: \dom{t} \rightarrow \Sigma$ where $\dom{t}$ is a $k$-tree. For the sake of simplification, we write $u \in t$ when $u$ belongs to the $k$-tree $\dom{t}$ (i.e. $u \in \dom{t}$) and write $t(u)$ for the labeled assigned to $u$ in $t$. 
For $a \in \Sigma$, we denote just by $a$ the tree consisting of one node labeled with $a$.
For labeled $k$-trees $t$ and $t'$, and  a leaf $\ell \in t$, we define $t[\ell \rightarrow t']$ the labeled $k$-tree resulting from ``hanging'' $t'$ on the node $\ell$ in $t$. Formally, we have that $t[\ell \rightarrow t'] = t \, \cup \, (\{\ell\} \cdot t')$, $t[\ell \rightarrow t'](u) = t(u)$ whenever $u \in (t \smallsetminus \{\ell\})$ and $t[\ell \rightarrow t'](\ell \cdot u) = t'(u)$ whenever $u \in t'$. 
Note that the leaf $\ell$ takes in $t[\ell \rightarrow t']$ the label on $t'$ instead of its initial label on~$t$. 
When $t$ consists of just one node with label $a$ and with two children, we write $a(t_1, t_2)$ for the tree defined as $t[1 \rightarrow t_1][2 \rightarrow t_2]$, namely, the tree consisting of a root $a$ with $t_1$ and $t_2$ hanging to the left and right, respectively.
In particular, $t = a(b, c)$ is the tree with three nodes such that $t(\lambda) = a$, $t(1) = b$, and $t(2) = c$.
Finally, we denote by $\trees_k[\Sigma]$ the set of all $k$-trees labeled over $\Sigma$ (or just $k$-trees over $\Sigma$). 

\paragraph{Tree Automata.}
A (top-down) tree automaton $\cT$ over $\trees_k[\Sigma]$ is a tuple $(\St, \Sigma, \Delta, \sinit)$ where $\St$ is a finite set of states, $\Sigma$ is the finite alphabet, $\Delta \subseteq \St \times \Sigma \times (\cup_{i=0}^k \St^i)$ is the transition relation, and  $\sinit \in \St$ is the initial state.
We will usually use $\st$, $\stq$, and $\str$ to denote states in $\St$.
A run $\rho$ of $\cT$ over a $k$-tree $t$ is a function $\rho: t \rightarrow \St$ that assigns states to nodes of $t$ such that for every $u \in t$, if $u\cdot1, \ldots, u\cdot n$ are the children of $u$ in $t$, then $(\rho(u), t(u), \rho(u\cdot1) \rho(u\cdot 2) \ldots \rho(u\cdot n)) \in \Delta$. 
In particular, if $u$ is a leaf, then it holds that $(\rho(u), t(u), \es) \in \Delta$. 
We say that $\cT$ accepts $t$ if there exists a run of $\cT$ over $t$ with $\rho(\lambda) = \sinit$, and we define $\cL(\cT) \subseteq \trees_k[\Sigma]$ as the set of all $k$-trees over $\Sigma$ accepted by $\cT$.  We write $\cL_n(\cT)$ to denote the $n$\textit{-slice} of $\cL(\cT)$, namely $\cL_n(\cT)$ is the set $\{ t \in \cL(\cT) \mid |t| = n\}$ of all $k$-trees of size $n$ in $\cL(\cT)$.

Give a state $\st \in \St$, we will usually parameterize $\cT$ by the initial state $s$, specifically, we write $\cT[s] = (\St, \Sigma, \Delta, s)$ for the modification of $\cT$ where $s$ is the new initial state.
Furthermore, let 
%$\tau = (s, a, q \cdot r) \in \Delta$ 
$\tau = (s, a, w) \in \Delta$
be any transition. We denote by $\cT[\tau] = (\St, \Sigma, \Delta \cup \{(s^\star, a, w)\}, s^\star)$ where $s^\star$ is a fresh state not in $Q$. In other words, $\cT[\tau]$ is the extension $\cT$ that recognizes trees where runs are forced to start with transition $\tau$.  

A binary labeled tree $t$ is a labeled $2$-tree such that every node has two children or is a leaf. Notice that $2$-trees are different from binary trees, as in the former a node can have a single child, while in the latter this is not allowed.
For every non-leaf $u \in t$, we denote by $u\cdot1$ and $u\cdot 2$ the left and right child of $u$, respectively. Similar than for $k$-trees, we denote by $\treesb[\Sigma]$ the set of all binary trees.
We say that a tree automaton $\cT = (\St, \Sigma, \Delta, \sinit)$ is over $\treesb[\Sigma]$ if $\Delta \subseteq \St \times \Sigma \times (\{\lambda\} \cup \St^2)$.

%
%\cristian{Previous version}
%Let $(1 \pm \eps)$ denote the interval $[1- \eps, 1+ \eps]$. Thus $x(1 \pm \eps)$ denotes the range $[x- x\eps, x+ x\eps]$, so $x = (1 \pm \eps )y$ denotes the containment $x \in [y - \eps y, y+ \eps y]$. 
%
%\begin{definition}
%	A \textit{top-down tree automata} is defined by a tuple $(V,\Sigma, \delta, V_i)$ where $V_i = \{S\}$ is an initial state, $V$ is a set of variables, $\Sigma$ is a set of terminal symbols, and 
%	\begin{eqnarray*}
%	\delta & \subseteq & \big(V \times \Sigma\big) \cup \big(V \times \Sigma \times V \times V \big)
%	\end{eqnarray*}
%	%	The symbol  $\bot$ is used if the production is creating only $0$ or $1$ children.
%\end{definition}
%Notice that two sets of tuples of different lengths are used to define $\delta$, as every node in a tree can have either 0 or 2 children.

\subsection{Approximate Counting, Almost Uniform Sampling, and \\
	 Parsimonious Reductions}
\label{sec:fpras-fpaus}
\paragraph{Definition of FPRAS.}
Given an input alphabet $\Sigma$, a \textit{randomized approximation scheme} (RAS) for a function $f : \Sigma^* \to \R$ is a randomized algorithm $\cA : \Sigma^* \times (0,1) \to  \R$ such that for every $w \in \Sigma^*$ and $\eps \in (0,1)$:
\begin{eqnarray*}
\pr{|\cA(w,\eps) - f(w)| \leq \eps \cdot f(w)} & \geq & \frac{3}{4}.
\end{eqnarray*}

%	Let $\Sigma$ be an input alphabet and $f : \Sigma^* \to \R$. 
\noindent
A randomized algorithm $\cA : \Sigma^* \times (0,1) \to  \R$ is a \textit{fully polynomial-time randomized approximation scheme} (FPRAS)~\cite{JVV86} for $f$, if it is a randomized approximation scheme for $f$ and, for every $w \in \Sigma^*$ and $\eps \in (0,1)$, $\cA(w,\epsilon)$ runs in polynomial time over $|w|$ and $\eps^{-1}$.
%\end{definition}
%Moreover, if there exists a polynomial $q(x,y)$ such that for every $w \in \Sigma^*$ and $\eps \in (0,1)$, the number of steps needed to compute $\cA(w,\eps)$ is at most $q(|w|,\eps^{-1})$, then $\cA$ is said to be a \textit{fully polynomial-time randomized approximation scheme} (FPRAS) for $f$~\cite{JVV86}. 
Thus, if $\cA$ is an FPRAS for $f$, then $\cA(w, \eps)$ approximates the value $f(w)$ with a relative error of $(1 \pm \eps)$, and it can be computed in polynomial time in the size $w$ and $\eps^{-1}$. 

\paragraph{Definition of FPAUS.}
In addition to polynomial time approximation algorithms, we also consider polynomial time (almost) uniform samplers.
%\begin{definition}[Fully Polynomial-time Almost Uniform Sampler~\cite{JVV86}]%\label{def:FPAUS}
%\begin{definition}[Fully Polynomial Time Almost Uniform Sampler]\label{def:FPAUS}
Given an alphabet $\Sigma$ and a finite universe $\Omega$, let $g: \Sigma^* \to 2^\Omega$. % be a function which maps $w\in \Sigma^*$% to a finite set of elements $g(w) \subseteq \Omega$. 
We say that $g$ admits a \textit{fully polynomial-time almost uniform sampler} (FPAUS)~\cite{JVV86}  if there is a randomized algorithm $\cA : \Sigma^* \times (0,1) \to  \Omega \cup \{\bot\}$ such that for every $w \in \Sigma^*$ with $g(w) \neq \emptyset$, and $\delta\in (0,1)$, $\cA(w,\delta)$ outputs a value $x^* \in g(w) \cup \{\bot\}$  with
\[	\pr{x^* = x} = (1 \pm \delta)\frac{1}{|g(w)|} \quad \text{ for all } x \in g(w)\]
and, moreover, $\cA(w,\delta)$ runs in polynomial time over $|w|$ and $\log \frac{1}{\delta}$. If $g(w) = \emptyset$, a FPAUS must output a symbol $\bot$ with probability $1$.
%\end{definition}
The symbol $\bot$ can be thought of as a ``failure'' symbol, where the algorithm produces no output. Notice that whenever $g(w)$ admits a deterministic polynomial time membership testing algorithm (i.e. to test if $x \in g(w)$), it is easy to ensure that a sampler only outputs either a element $x \in g(w)$ or $\bot$. Also notice that the conditions imply that if $g(w) \neq \emptyset$, we have $\pr{x^* = \bot }\leq \delta$.  Given a set $S = g(w)$, when the function $g$ and the input $w$ is clear from context, we will say that the set $S$ admits an FPAUS to denote the fact that $g$ admits an FPAUS. 

For an example of an FPAUS, $w$ could be the encoding of a non-deterministic finite automata $\mathcal{N}$ and a number $n \in \mathbb{N}$ given in unary, and $g(w)$ could be the set of strings of length $n$ accepted by $\mathcal{N}$. A poly-time almost uniform sampler must then generate a string from $\cL_n(\cN)$ from a distribution which is pointwise a $(1 \pm \delta)$ approximation of the uniform distribution over $\cL_n(\cN)$, output $\bot$ with probability at most $\delta$, and run in time $\poly(|\cN|,n,\log\frac{1}{\delta})$. Notice that an FPAUS must run in time $\poly(\log \frac{1}{\delta})$, whereas an FPRAS may run in time $\poly(\frac{1}{\epsilon})$.

\paragraph{Parsimonious Reduction.}
Finally, given functions $f,g : \Sigma^* \to \mathbb{N}$, a polynomial-time parsimonious reduction from $f$ to $g$ is a polynomial-time computable function $h: \Sigma^* \to \Sigma^*$ such that, for every $w \in \Sigma^*$, it holds that $f(w) = g(h(w))$. If such a function $h$ exists, then we use notation $f \prs g$.
Notice that if $f \prs g$ and $g$ admits an FPRAS, then $f$ admits an FPRAS.

%\cristian{We should mention here that if we get an FPRAS for a problem, then we get a almost uniform generation, by the result of Valiant.} 
%\marcelo{I wouldn't mention this here, we need to introduce relations to state this result. We should mention it when stating our main results.}
  
\smallskip

\subsection{The counting problems for tree automata} 
The following is the main counting problem studied in this paper regarding tree automata:
%In this paper, we study the following counting problem:
\begin{center}
	\framebox{
		\begin{tabular}{ll}
			\textbf{Problem:} &  $\sta$\\
			\textbf{Input:} & A tree automaton $\cT$ over $\trees_k[\Sigma]$ and a string $0^n$ \\
			\textbf{Output:} & $|\cL_n(\cT)|$ 
		\end{tabular}
	}
\end{center}
By the results in~\cite{tata2007} about encoding $k$-trees as binary trees using an extension operator $@$, it is possible to conclude the following:
\begin{lemma}\label{lem-tata}
	Let $\Sigma$ be a finite alphabet and $@ \notin \Sigma$. Then there exists a polynomial-time algorithm that, given a tree automata $\cT$ over $\trees_k[\Sigma]$, produces a tree automaton $\cT'$ over $\treesb[\Sigma \cup \{@\}]$ such that, for every $n \geq 1$:
	\begin{eqnarray*}
	\big|\{t \mid t \in \cL(\cT) \text{ and } |t| = n\}\big|  & = & \big|\{t' \mid t' \in \cL(\cT') \text{ and } |t'| = 2n-1\}\big|
	\end{eqnarray*}
%	\begin{itemize} 
%		\item There exists a injective function $h: \trees_k[\Sigma] \rightarrow \trees_b[\Sigma \cup \{@\}]$ such that $2|t|-1 = |h(t)|$ for every $t\in \trees_k[\Sigma]$. Moreover, for every $t$ the tree $h(t)$ can be computed in time linear in~$|t|$.
%		\item For every tree automata $\cT$ over $\trees_k[\Sigma]$ there exists a tree automata $\cT'$ over $\trees_b[\Sigma \cup \{@\}]$ such that $t \in \cL(\cT)$ if, and only if, $h(t) \in \cL(\cT')$. Moreover, $\cT'$ can be constructed from $\cT$ in polynomial time. 
%	\end{itemize}
\end{lemma}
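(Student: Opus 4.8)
The plan is to make explicit the standard first-child/next-sibling (or ``currying'') encoding of ranked/unranked $k$-trees as binary trees, which is precisely the extension operator $@$ alluded to in \cite{tata2007}, and then to track what it does to sizes and to simulate it at the level of automata. First I would define the encoding $\operatorname{enc}: \trees_k[\Sigma] \to \treesb[\Sigma \cup \{@\}]$ recursively: a leaf $a \in \Sigma$ maps to the single node $a$; and a node $u$ with label $a$ and children $t_1, \ldots, t_n$ (with $1 \le n \le k$) maps to a right-leaning spine of $n$ fresh $@$-labeled binary nodes, where the $j$-th $@$-node has left child $\operatorname{enc}(t_j)$ and right child the rest of the spine, and the top of the spine is attached as the unique child-structure under a node labeled $a$. (One must pick one fixed convention — e.g. $a(\operatorname{enc}(t_1), @(\operatorname{enc}(t_2), @(\ldots)))$ — and check it yields a genuine binary tree in $\treesb[\Sigma\cup\{@\}]$, i.e. every node has either two children or none.) The key combinatorial fact is then: if $t$ has $m$ internal (non-leaf, i.e. $\Sigma$-labeled with children) nodes and $\ell$ leaves, the encoding introduces exactly one $@$-node per parent/child edge, and a short induction on $|t|$ shows $|\operatorname{enc}(t)| = 2|t| - 1$. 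I would prove this by induction: the base case $|t|=1$ gives $2\cdot 1 - 1 = 1$; for the inductive step, if the root has children $t_1,\ldots,t_n$ then $|t| = 1 + \sum_j |t_j|$ and $|\operatorname{enc}(t)| = 1 + n + \sum_j |\operatorname{enc}(t_j)| = 1 + n + \sum_j(2|t_j|-1) = 1 + 2\sum_j |t_j| = 2|t| - 1$ using $n + \sum_j|t_j| = |t| - 1 + \sum_j |t_j|$... more carefully, $1 + n + \sum_j(2|t_j| - 1) = 1 + n - n + 2\sum_j |t_j| = 1 + 2(|t|-1) = 2|t|-1$.

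Next I would argue that $\operatorname{enc}$ is a bijection between $\trees_k[\Sigma]$ and the set of binary trees over $\Sigma \cup \{@\}$ of a certain restricted ``well-formed'' shape (alternating $\Sigma$-labeled nodes and $@$-spines, no two $\Sigma$-nodes adjacent in the wrong way, $@$-spines of length $\le k$, etc.), and that this well-formedness is itself recognizable by a tree automaton with $O(|\Sigma| \cdot k)$ states. Then, given the input tree automaton $\cT = (S, \Sigma, \Delta, s_{\text{init}})$, I would construct $\cT'$ over $\treesb[\Sigma \cup \{@\}]$ whose states are roughly $S \cup (S \times \{@\text{-spine position}\} \times \ldots)$: a transition $(s, a, s_1 \cdots s_n) \in \Delta$ is simulated by a transition of $\cT'$ reading $a$ and entering an $@$-gadget that, reading $n$ successive $@$-labeled nodes down the right spine, dispatches $s_j$ into the left child (the encoding of $t_j$) at step $j$ and terminates the spine correctly after exactly $n$ steps; leaf transitions $(s,a,\lambda)$ are simulated directly. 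By the bijection, accepting runs of $\cT'$ on $\operatorname{enc}(t)$ correspond bijectively to accepting runs of $\cT$ on $t$, and $\cT'$ rejects any binary tree not of the form $\operatorname{enc}(t)$ (via the well-formedness component intersected into the state space). Hence $\cL(\cT') = \{\operatorname{enc}(t) \mid t \in \cL(\cT)\}$, and combined with $|\operatorname{enc}(t)| = 2|t|-1$ this gives the claimed size equality; the whole construction is clearly polynomial-time computable.

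The main obstacle I anticipate is purely bookkeeping: getting the $@$-gadget to count the spine length against the arity $n$ of each transition while also handling the full range $1 \le n \le k$ and the boundary where $\cT'$-nodes must have exactly two children or zero (so a unary node of $t$ must still encode to a binary fragment, which the $@$-convention handles since an $@$-node always gets a left child and a right child — the right child being the length-zero remainder of the spine, which must encode to a leaf; this is where one has to be careful about what ``terminates the spine'' means and introduce a small terminal gadget). There is also a cosmetic mismatch to reconcile: the lemma statement writes $|t|=n$ versus $|t'| = 2n-1$, so I should double check whether $|t|$ counts all nodes (leaves included), under which convention $2|t|-1$ is exactly right; the alternative convention counting only internal nodes would change the formula, so I would fix the convention up front to match the preliminaries, where $|t|$ is defined as the number of nodes. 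Everything else — the induction on sizes, the bijectivity, the automaton simulation — is routine once the encoding is pinned down, so I would cite \cite{tata2007} for the encoding itself and only spell out the size computation and the automaton construction in the necessary detail.
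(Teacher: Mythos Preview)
The paper does not actually prove this lemma; it states it as a direct consequence of the standard $@$-extension encoding from \cite{tata2007} and gives no further argument. Your plan is exactly that encoding, so there is no meaningful difference in approach to report.

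One bookkeeping point is worth fixing, since you correctly flag it as the main obstacle. The convention you wrote down, $a(\operatorname{enc}(t_1), @(\operatorname{enc}(t_2), \ldots))$, places the $\Sigma$-label $a$ at the \emph{root} of the gadget. That forces only $n-1$ $@$-nodes (so your count $1+n+\sum_j$ would be off by one), and for $n=1$ it leaves $a$ with a missing right child, which is why you are reaching for a ``terminal gadget.'' The standard currying encoding from \cite{tata2007} instead places $a$ as a \emph{leaf} at the bottom of the spine, e.g.\
\[
\operatorname{enc}\big(a(t_1,\ldots,t_n)\big) \;=\; @\big(\cdots @\big(@\big(a,\operatorname{enc}(t_1)\big),\operatorname{enc}(t_2)\big)\cdots,\operatorname{enc}(t_n)\big),
\]
with exactly $n$ $@$-nodes and $a$ as the innermost left leaf. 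This is automatically a tree in $\treesb[\Sigma\cup\{@\}]$ for every $n\ge 0$ (no terminal gadget needed), your induction $1+n+\sum_j(2|t_j|-1)=2|t|-1$ goes through verbatim, and the automaton simulation is the obvious one: a transition $(s,a,s_1\cdots s_n)\in\Delta$ becomes a chain of $@$-transitions that first reads the leaf $a$ in state $s$ and then peels off $s_1,\ldots,s_n$ down the spine. With that correction your proposal is complete.
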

\noindent
Therefore, we also consider in this paper the following problem:
\begin{center}
	\framebox{
		\begin{tabular}{ll}
			\textbf{Problem:} &  $\stta$\\
			\textbf{Input:} & A tree automaton $\cT$ over $\treesb[\Sigma]$ and a string $0^n$ \\
			\textbf{Output:} & $|\cL_n(\cT)|$ 
		\end{tabular}
	}
\end{center} 
As we know from Lemma \ref{lem-tata} that there exists a polynomial-time parsimonious reduction from $\sta$ to $\stta$, we can show that $\sta$ admits an FPRAS by proving that $\stta$ admits an FPRAS.

\section{From Conjunctive Queries to Tree Automata}
\label{sec:cqreduction}

% !TeX spellcheck = en_US

%\cristian{TODO: Discuss and prove Corollary 1.2 about the characterization of CQs that can be approximate.}

We now provide the formal link between Conjunctive Queries (CQ) and
tree automata. As was briefly mentioned in the introduction, it is
possible to reduce $\sacq$ to $\sta$, where $\sacq$ is the problem of
counting the number of solutions to an acyclic CQ. Hence, the
existence of an FPRAS for $\sacq$ is inferred from the existence of an
FPRAS for $\sta$.
%given an acyclic CQ $Q$ and a database $D$, it is possible
%to construct in polynomial time a tree automaton $\cT$ such that for
%every tuple $\bar a$ of values: $\bar a \in Q(D)$ if and only if the
%anonymous tree for $\bar a$ is accepted by $\cT$. This amounts to the
%existence of a polynomial-time parsimonious reduction from 
In this section, we formalize these claims and prove them.

Our results apply to a more general notion of acyclicity, known as the 
\textit{hypertree width} of a CQ, and the formal analysis will
focus on the more general setting. We start by formalizing conjunctive
queries, and introducing this more general notion of
acyclicity. Assume that there exist disjoint (countably) infinite sets
$\setc$ and $\setv$ of constants and variables, respectively.  Then a
conjunctive query (CQ) is an expression of the form:
\begin{eqnarray} \label{eq:cq-def-2}
	Q(\bar x)  & \leftarrow & R_1(\bar u_1), \ldots, R_n(\bar u_n),
\end{eqnarray} 
%\rajesh{Does $k$ need to be the same for each relation $R_i$? If not, we should say this.}
where for every $i \in [n]$, $R_i$ is a $k_i$-ary relation symbol
($k_i \geq 1$) and $\bar u_i$ is a $k_i$-ary tuple of variables and
constants (that is, elements from $\setv$ and $\setc$), and $\bar x =
(x_1,\dots,x_m)$ is a tuple of variables such that each variable $x_i$
in $\bar x$ occurs in some $\bar u_i$.
%, where repetitions are allowed; for instance, we can
%write $R(x,x,y)$ and $O(a,x,a,x)$, where $x,y$ are variables and $a$ is a constant.  
The symbol $Q$ is used as the name of the query, and $\var(R_i)$ is used to denote the set of variables in relation symbol $R_i$. Moreover, $\var(Q)$ denotes the set of all variables appearing in the query (both in the left- and right-hand sides).
%, $\atom(Q) = \{S_1(\bar u_1), \cdots, S_n(\bar u_n)\}$, and for each $A \in \atom(Q)$, $\var(A)$ is used to denote the set of variables occurring in $A$, so that $\var(Q)$ is defined as $\var(S_1(\bar u_1)) \cup \cdots \cup \var(S_n(\bar u_n))$. 

Intuitively, the right-hand side $R_1(\bar u_1), \ldots, R_n(\bar u_n)$ of $Q$ is used to specify a pattern over a database, while the tuple $\bar x$ is used to store the answer to the query when such a pattern is found. More precisely, a database $D$ is a set of facts of the form $T(\bar a)$ where $\bar a$ is a tuple of constants (elements from $\cC$), which indicates that $\bar a$ is in the table $T$ in $D$. Then a homomorphism from $Q$ to $D$ is a function $h$ from the set of variables occurring in $Q$ to the constants in $D$ such that for every $i \in [n]$, it holds that $R_i(h(\bar u_i))$ is a fact in $D$,
%tuple in the table $S_i$ in $D$, 
where $h(\bar u_i)$ is obtained by applying $h$ to each component of $\bar u_i$ leaving the constants unchanged. Moreover, given such a homomorphism $h$, the tuple of constants $h(\bar x)$ is said to be an answer to $Q$ over the database $D$, and $Q(D)$ is defined as the set of answers of $Q$ over $D$.

%\rajesh{Need to make sure we formally define acyclic. Also, what is an answer, what is a constant (should define a universe of constants)? Can we more precisely define things for someone who doesn't know CQs?}

\paragraph{Notions of Acyclicity.}
In Section \ref{sec:tech}, we consider a CQ as \textit{acyclic} if it
can be encoded by a join tree. We now introduce a more general notion
of acyclicity. Let $Q$ be a CQ of the form $Q(\bar x) \leftarrow
R_1(\bar u_1), \ldots, R_n(\bar u_n)$. A hypertree for $Q$ is a triple
$\langle T, \chi, \xi\rangle$ such that $T = (N,E)$ is a rooted tree,
and $\chi$ and $\xi$ are node-labeling functions such that for every
$p \in N$, it holds that $\chi(p) \subseteq \var(Q)$ and
$\xi(p) \subseteq \{R_1, \ldots, R_n\}$. Moreover, $\langle
T, \chi, \xi\rangle$ is said to be a hypertree decomposition for
$Q$ \cite{GLS02} if the following conditions~hold:
\begin{itemize}
	\item for each atom $i \in [n]$, there exists $p \in N$ such that $\var(R_i) \subseteq \chi(p)$;
	
	\item for each variable $x \in \var(Q)$, the set $\{ p \in N \mid x \in \chi(p)\}$ induces a (connected) subtree of $T$;
	
	\item for each $p \in N$, it holds that 
	\begin{eqnarray*}
		\chi(p) & \subseteq & \bigcup_{R \in \xi(p)} \var(R)
	\end{eqnarray*}
	
	\item for each $p \in N$, it holds that
	\begin{eqnarray*}
		\bigg(\bigcup_{R \in \xi(p)} \var(R)\bigg) \cap \bigg(\bigcup_{p' \,:\, p' \text{ is a descendant of } p \text{ in } T} \chi(p')\bigg) &  \subseteq & \chi(p)
	\end{eqnarray*}
\end{itemize}
The width of the hypertree decomposition $\langle T, \chi, \xi\rangle$ is defined as the maximum value of $|\xi(p)|$ over all vertices $p \in N$. Finally, the hypertree width $\hw(Q)$ of CQ $Q$ is defined as the minimum width over all its hypertree decompositions~\cite{GLS02}. 

\begin{example}
Consider the CQ $Q(x,y,z) \leftarrow R(x,y), S(y,z), T(z,x)$. It is
easy to see that $Q$ is a non-acyclic query (it cannot be represented
by a join tree as defined in Section \ref{sec:tech}), but we can still
study its degree of acyclicity using the idea of hypertree width. In
particular, the following is a hypertree decomposition for $Q$, where
the values of $\chi(p)$ and $\xi(p)$ are shown on the left- and
right-hand sides of the rectangle for node $p$:
\begin{center}
	\begin{tikzpicture}
		\node[rect] (a3) {$\{x,y,z\}$, $\{R,S\}$};
		\node[rect, below=8mm of a3] (a5) {$\{x,z\}$, $\{T\}$}
		edge[arrin] (a3);
	\end{tikzpicture}
\end{center}
Notice that the width of this hypertree decomposition is 2, as
$|\xi(p)| = 2$ for the root. And in fact, no hypertree
decomposition of width~1 can be constructed for $Q$, so that $\hw(Q) =
2$ (otherwise, $Q$ would be acyclic). In some way, we were forced to
bundle two of the atoms ($R$ and $S$) together and in the process
increase the width, in order to create a \textit{join tree}-like
structure.\qed
\end{example}

It was shown in \cite{GLS02} that a CQ $Q$ is acyclic if and only if
$\hw(Q) = 1$. Thus, the notion of hypertree width generalizes the
notion of acyclicity given before. We are interested in classes of
queries with bounded hypertree width, for which it has been shown that
the evaluation problem can be solved efficiently~\cite{GLS02}.
More precisely, for every $k \geq 1$
%define the language
%\begin{equation*}
%	\khw = \{ ((Q,D), \bar a) \mid Q \text{ is a CQ such that } \hw(Q) \leq k, D \text{ is a database and } \bar a \in \answer(Q,D)\}.
%\end{equation*}
%Based on this, we
define the following counting problem.
\begin{center}
	\begin{tabular}{c}
		\framebox{
			\begin{tabular}{lp{10.5cm}}
				\textbf{Problem:} &  $\skhw$\\
				\textbf{Input:} & A conjunctive query $Q$ such that $\hw(Q) \leq k$ and a database $D$\\
				\textbf{Output:} & $|Q(D)|$
			\end{tabular}
		}
	\end{tabular}	
\end{center}
It is important to notice that $\sacq = \sohw$, as it is proved in \cite{GLS02} that a CQ is acyclic if and only if $\hw(Q) = 1$. However, we will keep both languages for historical reasons, as acyclic conjunctive queries were defined two decades earlier, and are widely used
%and more popular
in databases.
Both $\sacq$ and $\skhw$, for a fixed $k \geq 1$, are known to be $\sharpp$-complete~\cite{PS13}. On the positive side, based on the relationship with tree automata that we show below, we can conclude that these problems admit FPRAS and a FPAUS, as formalized in Section~\ref{sec:fpras-fpaus}.
\begin{theorem}\label{thm:acq-skhw-2}
	\sacq\ admits an FPRAS and a FPAUS, and for every constant $k \geq 1$, $\skhw$ admits an FPRAS and a FPAUS.
\end{theorem}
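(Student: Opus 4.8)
The plan is to build a polynomial-time \emph{parsimonious} reduction from $\skhw$ to $\sta$ and then invoke the machinery already in place: Lemma~\ref{lem-tata} gives a parsimonious reduction $\sta \prs \stta$, Theorem~\ref{thm1} gives an FPRAS and an (essentially truly) uniform sampler for $\stta$, parsimonious reductions compose and preserve the existence of an FPRAS (immediate from the definitions), and — because the reductions I construct are bijective with polynomial-time computable inverses — the sampler transfers as well (the FPAUS then also follows as in \cite{JVV86}). Since $\sacq = \sohw$ by \cite{GLS02}, proving the $\skhw$ case for every fixed $k$ also settles $\sacq$, so I only build the reduction for $\skhw$.

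For the reduction, on input $Q(\bar x)\leftarrow R_1(\bar u_1),\dots,R_n(\bar u_n)$ with $\hw(Q)\le k$ and a database $D$, I would first compute in polynomial time a width-$\le k$ hypertree decomposition $\langle T,\chi,\xi\rangle$ \cite{GLS02}, and then replace each node having more than two children by a small binary tree of copies of that node, reusing the same $\chi$ and $\xi$ labels; one checks straightforwardly that this keeps the branching at most $2$ while preserving all four conditions of a hypertree decomposition and the width. Let $N$ be the node set of the resulting tree and set $n:=|N|$. The key local notion is that of a \emph{valid $p$-assignment}: a map $f\colon\chi(p)\to\setc$ that (a) extends to $\bigcup_{R\in\xi(p)}\var(R)$ by some choice of tuples $t_R\in R^D$ for each $R\in\xi(p)$, and (b) satisfies $R_i(f(\bar u_i))\in D$ for every atom $R_i$ with $\var(R_i)\subseteq\chi(p)$. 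There are at most $\prod_{R\in\xi(p)}|R^D|\le|D|^{k}$ of these, so they can all be enumerated in polynomial time. I then build a tree automaton $\cT$ whose node labels record only the \emph{output} part of a valid assignment (its restriction to $\chi(p)\cap\{x_1,\dots,x_m\}$, tagged by $p$), whose states are a fresh $\sinit$ together with all pairs $(p,f)$ with $f$ a valid $p$-assignment, and whose transitions follow the shape of the binarized $T$: from $(p,f)$ at a node with children $p_1,p_2$ one may descend to $(p_1,f_1)(p_2,f_2)$ provided $f_i$ agrees with $f$ on $\chi(p)\cap\chi(p_i)$; leaves terminate; and $\sinit$ is wired to guess any valid assignment at the root. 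For fixed $k$ this is polynomial-time computable, and since branching is $\le 2$ the number of transitions is polynomial.

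The heart of the argument — and the step I expect to be the main obstacle — is showing that the map $\bar a = h(\bar x)\mapsto t_{\bar a}$, sending an answer to the tree that labels each node $p$ with the output restriction of the witnessing homomorphism $h$ on $\chi(p)$, is a \emph{bijection} between $Q(D)$ and $\cL(\cT)=\cL_n(\cT)$; note that accepting runs force the underlying tree to have exactly the shape of the binarized $T$, so $\cL_m(\cT)=\emptyset$ for $m\ne n$. Well-definedness (independence of the choice of $h$) and injectivity (every $x_i$ lies in some $\chi(p)$, so $\bar a$ can be read back off the labels, giving the polynomial-time inverse) are easy; the content is the equivalence ``accepted tree $\Leftrightarrow$ valid answer''. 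One direction: $p\mapsto h|_{\chi(p)}$ is a legal run because the connectedness condition forces these local assignments to agree along every edge, and conditions (a),(b) hold since $h$ is a homomorphism. The other direction: given an accepting run $\rho$, one reassembles the local assignments $\rho(p)$ into a single map $h$ — connectedness of each set $\{p : x\in\chi(p)\}$ together with the edge-agreement enforced by the transitions makes the $\rho(p)$'s globally consistent, condition (a) at each node realizes the tuples of every $R\in\xi(p)$, and because each atom $R_i$ has $\var(R_i)\subseteq\chi(p)$ for some $p$, condition (b) there forces $R_i(h(\bar u_i))\in D$; hence $h$ is a homomorphism and $h(\bar x)\in Q(D)$ is the unique answer whose tree is $t$. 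This is precisely where the hypertree-width setting (as opposed to plain treewidth) demands care: nodes carry \emph{sets} $\xi(p)$ of atoms, so ``valid local assignment'' must be defined as a bounded projection of a join of those relations, and one must verify both that there are only polynomially many of them and that checking each atom ``at a covering node'' certifies the global homomorphism.

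With the parsimonious reduction $(Q,D)\mapsto(\cT,0^n)$ in hand, composing with Lemma~\ref{lem-tata} and Theorem~\ref{thm1} yields the FPRAS for $\skhw$, and hence for $\sacq$. For sampling, the sampler of Theorem~\ref{thm1} produces (conditioned on one successful run of its estimator) a uniformly random element of the relevant $n$-slice; since the chain of bijections $Q(D)\leftrightarrow\cL_n(\cT)\leftrightarrow$ (its binary encoding via Lemma~\ref{lem-tata}) is polynomial-time computable in both directions, we map such a sample back to an answer, and repeat $O(\log\frac{1}{\delta})$ times to drive the at-most-$\tfrac12$ failure probability below $\delta$; together with running the estimator with failure probability $\le\delta$, this gives a FPAUS for $\skhw$, and thus for $\sacq$ as well.
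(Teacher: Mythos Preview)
Your proposal is correct and follows essentially the same approach as the paper: a polynomial-time parsimonious reduction from $\skhw$ to $\sta$ built from the hypertree decomposition, with states recording full local assignments on $\chi(p)$ and labels recording only their output-variable part, followed by invoking the FPRAS and sampler for tree automata (and reading the answer back off the labels for the FPAUS). The only differences are cosmetic --- you binarize the decomposition up front whereas the paper works with $k$-ary tree automata and defers to Lemma~\ref{lem-tata}, you check all atoms covered by $\chi(p)$ via your condition~(b) whereas the paper assumes a \emph{complete} decomposition so every atom appears in some $\xi(p)$ and is verified there, and the paper's states additionally carry the specific witnessing tuples from the guard relations rather than just $f$ --- none of which affects the argument.
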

\nt{
Before presenting the proof of Theorem \ref{thm:acq-skhw-2}, we show
how it results in the characterization of CQs over graphs which admit
an FPRAS. For the sake of presentation, we focus on CQs without
constants.  Given a CQ $Q(\bar x) \leftarrow R_1(\bar y_1), \ldots,
R_n(\bar y_k)$, define a graph $G_Q$ represeting $Q$ as follows. The
set of vertices in $Q$ is the set of variables $\bar
y_1 \cup \cdots \cup \bar y_k$, and there exists an edge between two
variables $x$ and $y$ if, and only if, there exists $i \in [1,k]$ such
that both $x$ and $y$ occur in $\bar y_i$.
%Recall that a class of conjuntive queries $\cC$ over graphs
%is a class of CQs over relations of arity at most 2 and without
%including constants.
Notice that a class $\cG$ of graphs has bounded treewidth if there
exists a constant $k$ such that $\tw(G) \leq k$ for every
$G \in \cG$. Moreover, define $\cq(\cG)$ as the class of all
conjunctive queries $Q$ whose represeting graph $G_Q$ is in $\cG$.  By
the results of \cite{GSS01}, assuming that $\wo \neq \fpt$, for every
class $\cG$ of graphs, the evaluation of $\cq(\cG)$ is tractable if,
and only if, $\cG$ has bounded treewidth.
%the decision problem for a
%class $\cC$ of conjunctive queries over graphs is in $\ptime$ if and
%only if the class $\cC$ has bounded treewidth.
Since an FPRAS or FPAUS for the set $Q(D)$ of answers of a conjunctive
query results in a $\bpp$ algorithm for the query decision problem
(that is, to verify whether $Q(D) \neq \emptyset$), it follows that if
$\bpp = \ptime$, it is not possible to obtain an FPRAS or an FPAUS
for any class of CQs of the form $\cq(\cG)$ for a class of graphs
$\cG$ with unbounded treewidth. This demonstrates that, in a sense,
the class of CQs with bounded treewidth is precisely the class of CQs
which admit efficient approximation algorithms and~samplers.

\begin{corollary}\label{cor:characterization}
		Let $\cG$ be a class of graphs.
	Then assuming $\wo \neq \fpt$ and $\bpp = \ptime$, the following are
	equivalent:
	\begin{enumerate}[noitemsep,topsep=0pt]
		\item The problem of computing $|Q(D)|$ and sampling from $Q(D$), given as input $Q \in \cq(\cG)$ and a database $D$, admits an FPRAS and an FPAUS.
		\item $\cG$ has bounded treewidth. 
	\end{enumerate}
\end{corollary}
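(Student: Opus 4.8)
The plan is to derive the equivalence by combining Theorem~\ref{thm:acq-skhw-2} with the dichotomy of~\cite{GSS01} recalled in the paragraphs preceding the statement, handling the two implications separately.

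For the direction $(2)\Rightarrow(1)$, I would first observe that bounded treewidth of $\cG$ implies bounded hypertree width of $\cq(\cG)$: if $\tw(G)\le k$ for every $G\in\cG$, then, since $\hw(Q)\le\tw(G_Q)$ for every CQ $Q$~\cite{GLS02}, we get $\hw(Q)\le k$ for all $Q\in\cq(\cG)$. Hence the problem of computing $|Q(D)|$ and of sampling from $Q(D)$ for $Q\in\cq(\cG)$ is, for this fixed constant $k$, an instance of $\skhw$, and Theorem~\ref{thm:acq-skhw-2} directly supplies the required FPRAS and FPAUS. This direction is therefore immediate once Theorem~\ref{thm:acq-skhw-2} is available, the only point to check being the (standard) comparison between treewidth of the representing graph and hypertree width.

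For the direction $(1)\Rightarrow(2)$ I would argue contrapositively: assuming $\cG$ does \emph{not} have bounded treewidth, I want to conclude, using $\wo\neq\fpt$ and $\bpp=\ptime$, that $\cq(\cG)$ admits neither an FPRAS nor an FPAUS. The bridge is the query decision problem ``given $Q\in\cq(\cG)$ and a database $D$, is $Q(D)\neq\emptyset$?''. An FPRAS for $|Q(D)|$ places this problem in $\bpp$: run the FPRAS with $\eps=1/2$ and answer ``non-empty'' iff the returned estimate is positive; this is correct with probability at least $3/4$, since $Q(D)=\emptyset$ forces the estimate to be $0$, whereas $Q(D)\neq\emptyset$ gives $|Q(D)|\ge 1$ and hence an estimate at least $(1-1/2)\cdot 1 = 1/2 > 0$. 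An FPAUS does the same job (run it with $\delta=1/2$ and answer ``non-empty'' iff the output is not $\bot$), so either ingredient of condition~(1) suffices. Thus an FPRAS or FPAUS for $\cq(\cG)$ would put the query decision problem for $\cq(\cG)$ in $\bpp$, and therefore in $\ptime$ under $\bpp=\ptime$; but by~\cite{GSS01} (invoked with $\wo\neq\fpt$), the evaluation/decision problem for $\cq(\cG)$ is in $\ptime$ only when $\cG$ has bounded treewidth, a contradiction. Combining the two implications gives the claimed equivalence.

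The substantive ingredient here is Theorem~\ref{thm:acq-skhw-2}, whose proof occupies the bulk of the paper; granting it, the corollary is a short composition of known facts rather than a new argument. The part that needs the most care is lining up the objects correctly: making sure that~\cite{GSS01} is applied to exactly the Boolean/decision version of evaluating $\cq(\cG)$, that its hardness half genuinely uses the $\wo\neq\fpt$ hypothesis, and that the trivial reduction from this decision problem to the FPRAS/FPAUS is phrased in terms of the graph $G_Q$ as defined immediately before the statement. I do not expect any genuine obstacle beyond this bookkeeping.
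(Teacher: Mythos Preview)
Your proposal is correct and follows essentially the same approach as the paper: for $(2)\Rightarrow(1)$ you invoke Theorem~\ref{thm:acq-skhw-2} after noting that bounded treewidth implies bounded hypertree width via~\cite{GLS02}, and for $(1)\Rightarrow(2)$ you argue contrapositively that an FPRAS or FPAUS would place the query decision problem in $\bpp=\ptime$, contradicting the dichotomy of~\cite{GSS01} under $\wo\neq\fpt$. This matches the paper's proof almost line for line.
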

\begin{proof}
$(2) \Rightarrow (1)$ is implied immediately by
Theorem \ref{thm:acq-skhw-2}, noting that if $\cG$ is a class of
graphs with bounded treewidth, then $\cq(\cG)$ is a class of CQs with
bounded hypertree width~\cite{GLS02}.
%$\hw(Q) \leq\tw(Q)$ for
%every CQ $Q$~\cite{GLS02}.
For the other direction $(1) \Rightarrow
(2)$, let $\cG$ be a class of graph with unbounded treewidth.
%let $\cC$ be a
%class of CQs over graphs which does not have bounded treewidth.
Then by Corollary $19$ of \cite{GSS01}, assuming $\wo \neq \fpt$, it
holds that the problem of deciding whether $Q(D) = \emptyset$ for
$Q \in \cq(\cG)$ is not in $P$. Now suppose we had a FPRAS for
$\cq(\cG)$. Such an algorithm gives a $(1 \pm 1/2)$ approximation to
$|Q(D)|$ with probability $3/4$. Thus, such an FPRAS distinguishes
whether or not $Q(D) = \emptyset$ with probability $3/4$ with
two-sided error. Assuming $\bpp=\ptime$, there must exist
a \textit{deterministic} polynomial time algorithm for testing whether
$Q(D) = \emptyset$, which is a contradiction. Additionally, notice
that a sample $x \sim Q(D)$ which comes from an FPAUS with $\delta =
1/2$ implies that $Q(D) \neq \emptyset$, whereas a FPAUS must fail to
output any sample if $Q(D) = \emptyset$. Such an FPAUS therefore also
yields a $\bpp$ algorithm for the decision problem on $\cq(\cG)$, which
again is a contradiction of $\bpp = \ptime$.
\end{proof}
}

We are now ready to present the proof of Theorem \ref{thm:acq-skhw-2}.

\begin{proof}[Proof of Theorem \ref{thm:acq-skhw-2}]
	Fix $k \geq 1$. We provide a polynomial-time parsimonious reduction from $\skhw$ to $\sta$. In Section~\ref{sec:fpras}, we will show that $\sta$ admits an FPRAS (see Corollary \ref{cor:fpras-ta-bta}), which proves that $\skhw$ admits an FPRAS as well. Moreover, the reduction will be performed in such a way that given a tree accepted by the constructed tree automata $\cT$, one can uniquely construct a corresponding $x \in Q(D)$ in polynomial time. As a result, an FPAUS for tree automata implies an FPAUS for CQ's with bounded hypertree width. 

Let $D$ be a database and $Q(\bar x)$ a CQ over $D$ such that its atoms are of the form $R(\bar t)$ and $\hw(Q) \leq k$. We have from~\cite{GLS02} that there exists a polynomial-time algorithm that, given $Q$, produces a hypertree decomposition $\langle T, \chi, \xi\rangle$ for $Q$ of width $k$, where $T = (N,E)$.
Moreover, $\atoms(Q)$ is used to denote the set of atoms occurring in the right-hand side of $Q$, and for every $R \in \atoms(Q)$, notation $\bar t_R$ is used to indicate the tuple of variables in atom $R$. Whenever we have atoms indexed like $R_i$, we shall refer to $\bar t_{R_i}$ as $\bar t_i$ for the sake of clarity. Also, for every tuple $\bar x$ of variables, we use $\var(\bar x)$ to denote its set of variables, i.e., $\var((x_1, \ldots, x_r)) = \{x_1,\ldots, x_r\}$.
Finally, we can assume that $\langle T, \chi, \xi\rangle$ is a complete hypertree decomposition in the sense that for every $R \in \atoms(Q)$, there exists $p \in N$ such that $\var(\bar t_R) \subseteq \chi(p)$ and $R \in \xi(p)$~\cite{GLS02}. Finally, let $n = |N|$.

In what follows, we define a tree automaton $\cT = (S,\Sigma,\Delta,S_0)$ such that
\begin{eqnarray*}
	|Q(D)| &=& |\{ t \in \cL(\cT) \mid |t| = n\}|.
\end{eqnarray*}
Notice that for the sake of presentation, we are assuming that $\cT$ has a set $S_0$ of initial states, instead of a single initial state. Such an automaton can be translated in polynomial time into a tree automaton with a single initial state. Given a tuple of variables $\bar x = (x_1, \ldots, x_r)$ and a tuple of constants $\bar a = (a_1, \ldots, a_r)$, we use notation $\bar x \mapsto \bar a$ to indicate that variable $x_i$ is assigned value $a_i$ for every $i \in [r]$. Notice that $\bar x$ can contain repeated variables, and if this is the case then each occurrence of a repeated variable is assigned the same value.  For example, $(x,y,x,y) \to (a,b,a,b)$ is an assignment, while $(x,y,x,y) \to (a,b,a,c)$ is not an assignment if $b \neq c$.
Besides, notice that $\emptyset \mapsto \emptyset$ is an assignment.
Moreover, two such assignments $\bar x \mapsto \bar a$ and $\bar y \mapsto \bar b$ are said to be consistent if for every variable $z$ that occurs both in $\bar x$ and $\bar y$, it holds that the same value is assigned to $z$ in $\bar x \mapsto \bar a$ and in $\bar y \mapsto \bar b$. Then for every $p \in N$ such that:
\begin{eqnarray}\label{eq-chi-xi-cq-1}
	\chi(p) &=& \{y_1, \ldots, y_r\}\\
	\xi(p) &=& \{R_1, \ldots, R_s\}, \label{eq-chi-xi-cq-2}
\end{eqnarray}
and assuming that $\chi(p) \cap \var(\bar x) = \{z_1, \ldots, z_o\}$, $\bar y = (y_1, \ldots, y_r)$ and $\bar z = (z_1, \ldots, z_o)$, we define
\begin{align*}
	&S(p) \ = \ \big\{\big[p,\, \bar y \mapsto \bar a,\, \bar z \mapsto \bar b,\, \bar t_1 \mapsto \bar c_1,\, \ldots,\, \bar t_s \mapsto \bar c_s\big]\ \big|\\ 
	& \hspace{180pt} \bar R_i(\bar c_i) \text{ is a fact in } D \text{ for every } i \in [s],\\ 
	& \hspace{180pt} \bar y \mapsto \bar a \text{ is consistent with } \bar z \mapsto \bar b,\\
	&\hspace{180pt}\bar y \mapsto \bar a \text{ is consistent with } \bar t_i \mapsto \bar c_i \text{ for every } i \in [s],\\ 
	&\hspace{180pt}\text{and }  \bar t_i \mapsto \bar c_i \text{ is consistent with } \bar t_j \mapsto \bar c_j \text{ for every } i,j \in [s]\},
\end{align*}
and
\begin{eqnarray*}
	\Sigma(p) & = & \big\{\big[p,\, \bar z \mapsto \bar b\big] \mid \exists \bar a \exists \bar c_1 \cdots \exists \bar c_s : \big[p,\, \bar y \mapsto \bar a,\, \bar z \mapsto \bar b,\, \bar t_1 \mapsto \bar c_1,\, \ldots,\, \bar t_s \mapsto \bar c_s\big] \in S(p) \big\}
\end{eqnarray*}
With this terminology, we define $S_0 = S(p_0)$, where $p_0$ is the root of the hypertree decomposition $\langle T, \chi, \xi\rangle$, and we define:
\begin{eqnarray*}
	S &=& \bigcup_{p \in N} S(p)\\
	\Sigma &=& \bigcup_{p \in N} \Sigma(p)
\end{eqnarray*}
Finally, the transition relation $\Delta$ is defined as follows.
Assume again that $p \in N$ satisfies \eqref{eq-chi-xi-cq-1} and~\eqref{eq-chi-xi-cq-2}. If $p$ has children $p_1$, $\ldots$, $p_\ell$ in $T$, where $\ell \geq 1$ and for every $i \in [\ell]$:
\begin{eqnarray*}
	\chi(p_i) &=& \{u_{i,1}, \ldots, u_{i,r_i}\}\\
	\xi(p_i) &=& \{R_{i,1}, \ldots, R_{i,s_i}\},
\end{eqnarray*}
with $s_i \leq k$. Then assuming that $\chi(p_i) \cap \var(\bar x) = \{w_{i,1}, \ldots, w_{i,o_i}\}$, $\bar u_i =  (u_{i,1}, \ldots, u_{i,r_i})$ and $\bar w_i = (w_{i,1}, \ldots, w_{i,o_i})$ for each $i \in [\ell]$, the following tuple is included in $\Delta$
\begin{multline*}
	\big(\big[p,\, \bar y \mapsto \bar a,\, \bar z \to \bar b, \, \bar t_1 \mapsto \bar c_1,\, \ldots,\, \bar t_s \mapsto \bar c_s\big], \ \big[p,\, \bar z \mapsto \bar b\big],\\
	\big[p_1,\, \bar u_1 \mapsto \bar d_1,\, \bar w_1 \mapsto \bar e_1, \, \bar t_{1,1} \mapsto \bar f_{1,1},\, \ldots,\, \bar t_{1,s_1} \mapsto \bar f_{1,s_1}\big]\ \cdots\\ 
	\big[p_\ell, \,\bar u_\ell \mapsto \bar d_\ell, \, \bar w_\ell \mapsto \bar e_\ell,\, \bar t_{\ell,1} \mapsto \bar f_{\ell,1},\, \ldots,\, \bar t_{\ell,s_\ell} \mapsto \bar f_{\ell,s_\ell}\big]\big)
\end{multline*}
whenever the following conditions are satisfied:
(a) $\big[p,\, \bar y \mapsto \bar a, \, \bar z \mapsto \bar b ,\, \bar t_1 \mapsto \bar c_1, \ldots, \bar t_s \mapsto \bar c_s\big] \in S(p)$;
%(b) $\big(p,\, \bar z \mapsto \bar b\big) \in \Sigma(p)$; 
(b) $\big[p_i,\, \bar u_i \mapsto \bar d_i, \, \bar w_i \mapsto \bar e_i,\, \bar t_{i,1} \mapsto \bar f_{i,1},\, \ldots,\, \bar t_{i,s_i} \mapsto \bar f_{i,s_i}\big] \in S(p_i)$ for each $i \in [\ell]$;
(c) $\bar t_i \mapsto \bar c_i$ is consistent with $\bar t_{j_1,j_2} \mapsto \bar f_{j_1,j_2}$ for every $i \in [s]$, $j_1 \in [\ell]$ and $j_2 \in [s_{j_1}]$; and 
(d) $\bar t_{j_1,j_2} \mapsto \bar f_{j_1,j_2}$ is consistent with $\bar t_{j_3,j_3} \mapsto \bar f_{j_3,j_4}$ for every $j_1 \in [\ell]$, $j_2 \in [s_{j_1}]$, $j_3 \in [\ell]$, $j_4 \in [s_{j_3}]$. On the other hand, if $p$ has no children in $T$, then the following tuple is included in $\Delta$
\begin{eqnarray*}
	\big(\big[p,\, \bar y \mapsto \bar a, \, \bar z \mapsto \bar b,\, \bar t_1 \mapsto \bar c_1,\, \ldots,\, \bar t_s \mapsto \bar c_s\big], \, \big[p,\, \bar z \mapsto \bar b\big],\, \lambda\big)
\end{eqnarray*}
whenever $\big[p,\, \bar y \mapsto \bar a, \, \bar z \mapsto \bar b,\, \bar t_1 \mapsto \bar c_1,\, \ldots,\, \bar t_s \mapsto \bar c_s\big] \in S(p)$. 

It is straightforward to see that there exists a polynomial-time algorithm that generates $S$, $S_0$, $\Sigma$ and $\Delta$ from $Q$, $D$ and the hypertree decomposition $\langle T, \chi, \xi\rangle$ for $Q$. In particular, we have that $|S(p)|$ is $O(\|D\|^k)$, where $|S(p)|$ is the number of elements in $S(p)$ and $\|D\|$ is the size of the database $D$, by definition of $S(p)$ and the fact that $\chi(p) \subseteq \bigcup_{R \in \xi(p)} \var(\bar t_R)$. Notice that this implies that each $S(p)$ is of polynomial size given that $k$ is fixed and each tuple in $S$ is of polynomial size in~$\|D\|$. Moreover, observe that as $n = |N|$, we can construct the (unary) input $0^n$ for the problem $\sta$ in polynomial time in the size of $Q$, given that the hypertree decomposition $\langle T, \chi, \xi\rangle$ is of polynomial size in the size of $Q$. 

Finally, we need to prove that $|Q(D)| = |\{ t \in \cL(\cT) \mid |t| = n\}|$. To see this, for every $\bar a \in Q(D)$, define a labeled tree $t_{\bar a}$ as follows. Tree $t_{\bar a}$ has the same structure as $T$, but every node $p \in N$ is assigned the following label in $\Sigma$. Assume that $\chi(p) \cap \var(\bar x) = \{z_1, \ldots, z_r\}$ and $\bar z = (z_1, \ldots, z_r)$. Moreover, assume that $z_i$ receives the value $a_i$ in $\bar a$ for every $i \in [r]$.  Then the label of $p$ in $t_{\bar a}$ is $[p,\, \bar z \mapsto \bar a]$, where $\bar a = (a_1, \ldots, a_r)$. By definition of $\cT$, we have that $\cL(\cT) = \{t_{\bar a} \mid \bar a \in Q(D)\}$. Therefore, given that  $t_{\bar a} \neq t_{\bar a'}$ for every $\bar a, \bar a' \in  Q(D)$ such that $\bar a \neq \bar a'$, we conclude that $|Q(D)| = |\{ t \in \cL(\cT) \mid |t| = n\}|$, as every tree accepted by $\cT$ has $n$ nodes. Moreover, given any $t_{\bar{a}} \in \cL(\cT)$, one can read off the labels of the nodes in the tree $t_{\bar{a}}$ and uniquely reconstruct the corresponding $\bar{a} \in Q(D)$ in polynomial time, which verifies the second claim that a sample from $\cL(\cT)$ yields in polynomial time a unique sample from $Q(D)$.

%\rajesh{Check this last sentence is alright!}

\end{proof}

\subsection{Union of conjunctive queries}
An important and well-studied extension of the class of conjunctive queries is obtained by adding the union operator. A union of conjunctive queries (UCQ) is an expression of the form:
\begin{eqnarray}\label{eq:ucq_def}
Q(\bar x) & \leftarrow & Q_1(\bar x) \vee \cdots \vee Q_m(\bar x),
\end{eqnarray}
where $Q_i(\bar x)$ is a conjunctive query for each $i \in [m]$, and the same tuple $\bar x$ of output variables is used in the CQs $Q_1(\bar x)$, $\ldots$, $Q_m(\bar x)$. As for the case of CQs, the symbol $Q$ is used as the name of the query. 
%For instance, if we consider again the CQ $Q_2(x)$ and the database given in Example~\ref{exa-cq-dat}, and if we define $Q_6(x)$ as the following CQ:
%\begin{eqnarray*}
%	Q_6(x) &=& \exists y \exists z \, \Enrolled(x,y) \wedge \Course(y,z) \wedge \Dept(z,\text{``Statistics''}),
%\end{eqnarray*}
%then the union of conjunctive queries $Q(x) \leftarrow Q_2(x) \vee Q_6(x)$ can be used to retrieve the list of students that are taking at least one course in the Department of Computer Science or in the Department of Statistics. 
A tuple $\bar a$ is said to be an answer of UCQ $Q$ in \eqref{eq:ucq_def} over a database $D$ if and only if $\bar a$ is an answer to $Q_i$ over $D$ for some $i \in [m]$. Thus, we have that:
\begin{eqnarray*}
Q(D) &=& \bigcup_{i=1}^m Q_i(D)
\end{eqnarray*} 
%As for the case of CQs, without loss of generality we can focus on UCQs without constants. 
As expected, the problem of verifying, given a UCQ $Q$, a database $D$ and a tuple of constants $\bar a$, whether $\bar a$ is an answer to $Q$ over $D$ is an $\np$-complete problem~\cite{CM77}. Also as expected, the evaluation problem for union of acyclic conjunctive queries can be solved in polynomial time, given that the evaluation problem for acyclic CQs can be solved in polynomial time.
Concerning to our investigation, we are interested in the following  problem associated to the evaluation problem for union of acyclic conjunctive queries:
\begin{center}
	\framebox{
		\begin{tabular}{ll}
			\textbf{Problem:} &  $\suacq$\\
			\textbf{Input:} & A union of acyclic conjunctive queries $Q$ and a database $D$\\
			\textbf{Output:} & $|Q(D)|$
		\end{tabular}
	}
\end{center}
As expected from the result for conjunctive queries, $\suacq$ is \sharpp-complete~\cite{PS13}. However, $\suacq$ remains $\sharpp$-hard even if we focus on the case of UCQs without existentially quantified variables, that is, UCQs of the form \eqref{eq:ucq_def} where $\bar x$ consists of all the variables occurring in CQ $Q_i(\bar x)$ for each $i \in [m]$. Notice that this is in sharp contrast with the case of CQs, where $\sacq$ can be solved in polynomial time if we focus on case of CQs without existentially quantified variables~\cite{PS13}. 
% even when restricted to union of full acyclic conjunctive queries~\cite{PS13}. 
%Unfortunately, the tractability of $\sacq$ for full acyclic CQs does not carry over the union of such queries. More precisely, and in sharp contrast with what was mentioned before, the problem $\suacq$ is \sharpp-complete even when restricted to union of full acyclic conjunctive queries~\cite{PS13}. 
However, by using Theorem \ref{thm:acq-skhw-2}, we are able to provide a positive result about the possibility of efficiently approximating $\suacq$.
\begin{proposition}\label{prop:uacq}
	\suacq\ admits an FPRAS and an FPAUS.
\end{proposition}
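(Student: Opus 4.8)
The plan is to treat $Q(D)=\bigcup_{i=1}^m Q_i(D)$ as a union of sets each of which we can already handle, and apply the classical union-of-sets (coverage) technique of Karp and Luby~\cite{karp1989monte} on top of the primitives provided by Theorem~\ref{thm:acq-skhw-2}. Since a CQ is acyclic exactly when it has hypertree width $1$, Theorem~\ref{thm:acq-skhw-2} gives, for each $i$, an FPRAS returning an estimate $\widetilde{N}_i = (1\pm\eps')|Q_i(D)|$ and an FPAUS returning an almost-uniform sample from $Q_i(D)$. The third primitive we need is a polynomial-time membership test ``$\bar a\in Q_j(D)$?'': substituting $\bar x\mapsto\bar a$ into $Q_j$ yields a Boolean acyclic CQ (binding output variables to constants preserves acyclicity), which is evaluated in polynomial time by the algorithm of~\cite{Y81}. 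The same test lets us discard in advance every $i$ with $Q_i(D)=\emptyset$, so we may assume $|Q_i(D)|\geq 1$ for the remaining $i$; if all $Q_i(D)$ are empty we output $0$ (resp.\ $\bot$) as required.

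The FPRAS then works as follows. Consider the disjoint ``tagged union'' $U=\{(i,\bar a)\mid \bar a\in Q_i(D)\}$, so $|U|=\sum_i|Q_i(D)|$ and $|U|\le m\,|Q(D)|$ since each $Q_i(D)\subseteq Q(D)$. We estimate $|U|$ by $\widetilde{N}:=\sum_i\widetilde{N}_i$, we draw an approximately uniform $(i,\bar a)$ from $U$ by first choosing $i$ with probability $\widetilde{N}_i/\widetilde{N}$ and then sampling $\bar a$ from the FPAUS for $Q_i(D)$, and we accept iff $i=\min\{j\mid \bar a\in Q_j(D)\}$ (decided using the membership test). Each element of $Q(D)$ is accepted under exactly one of its tags, so the acceptance probability is $\approx |Q(D)|/|U|\ge \tfrac1{2m}$; hence $\poly(m,\eps^{-1},\log\delta^{-1})$ trials suffice to estimate the acceptance probability to relative error $O(\eps)$, and multiplying by $\widetilde{N}$ gives a $(1\pm\eps)$-approximation of $|Q(D)|$ with probability $1-\delta$. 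For the FPAUS we run the same rejection loop but return the accepted $\bar a$ itself; since the acceptance probability is $\ge\tfrac1{2m}$, after $O(m\log\delta^{-1})$ trials we output a sample with probability $\ge 1-\delta$ and otherwise output $\bot$, and the output law is within the required tolerance of uniform on $Q(D)$. Alternatively, since $|Q(D)|$ is polynomially self-reducible (fixing one output variable to a constant of $D$ again yields a union of acyclic CQs), the FPAUS follows from the FPRAS by~\cite{JVV86}.

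The only delicate point is the propagation of error through the approximate oracles, which is routine but must be accounted for. We set the accuracy of each invoked sub-FPRAS to $\eps'=\Theta(\eps/m)$ and of each sub-FPAUS to $\delta'=\Theta(\eps/m)$, and set all their failure probabilities to $1/\poly(m,\eps^{-1},\log\delta^{-1})$; a union bound over the polynomially many oracle calls then shows that, except with probability $\delta$, every $\widetilde{N}_i$ is within $(1\pm\eps')$ of $|Q_i(D)|$ and every returned sample is drawn from a distribution within total variation $\delta'$ of uniform on the corresponding $Q_i(D)$. Treating a $\bot$ output of a sub-FPAUS as an immediate rejection perturbs the sampling probabilities by at most an extra $\delta'$. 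Conditioned on this good event, $(i,\bar a)$ is within $O(\eps)$ of uniform on $U$, so the empirical acceptance frequency concentrates around $|Q(D)|/|U|$ by a Hoeffding bound, and combining with $\widetilde{N}=(1\pm O(\eps))|U|$ yields the claimed approximation; all running times are polynomial in $\|D\|$, $|Q|$, $\eps^{-1}$, and $\log\delta^{-1}$. I expect this bookkeeping — rather than any conceptual hurdle — to be the main work of the proof.
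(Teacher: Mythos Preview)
Your proposal is correct and follows essentially the same approach as the paper: both reduce to the Karp--Luby union-of-sets technique on $Q(D)=\bigcup_i Q_i(D)$, using polynomial-time membership testing for acyclic CQs together with the FPRAS and FPAUS for each $Q_i(D)$ supplied by Theorem~\ref{thm:acq-skhw-2}. The paper's version is terser, delegating the ``approximate oracles suffice'' argument to \cite{gore1997quasi} and \cite{JVV86} rather than spelling out the error propagation as you do.
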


\begin{proof}
	We need to prove that there exists a randomized algorithm $\cA$ and a polynomial $p(x,y)$ such that $\cA$ receives as input a union of acyclic conjunctive queries $Q$, a database $D$ and $\eps \in (0,1)$, $\cA$ works in time $p(\|Q\|+\|D\|,\eps^{-1})$, where $\|Q\|+\|D\|$ is the size of $Q$ and $D$, 
	and $\cA$ satisfies the following condition:
	\begin{eqnarray*}
		\pr{|\cA(Q,D,\eps) - |Q(D)|| \leq \eps \cdot |Q(D)|} & \geq & \frac{3}{4}.
	\end{eqnarray*}
	Assume that $Q$ is of the form \eqref{eq:ucq_def}, from which we have that $Q(D) = \bigcup_{i=1}^m Q_i(D)$ and, therefore, $|Q(D)| = |\bigcup_{i=1}^m Q_i(D)|$. Thus, we know from \cite{KL83} that the algorithm $\cA$ can be constructed if three conditions are satisfied: (a) there exists a polynomial-time algorithm that verifies whether $\bar a \in Q_i(D)$; (b) there exists a randomized polynomial-time algorithm that generates an element in $Q_i(D)$ with uniform distribution; and (c) there exists a polynomial-time algorithm that computes $|Q_i(D)|$. In our case, property (a) holds as each $Q_i(\bar x)$ is an acyclic conjunctive query, while condition (c) cannot hold unless $\fp = \sharpp$, given that $\sacq$ is $\sharpp$-complete. However, as shown in \cite{gore1997quasi}, the existence of algorithm $\cA$ can still be guaranteed under condition (a) and the existence of an FPRAS for the function $(Q_i,D) \mapsto |Q_i(D)|$, as this latter condition also implies the existence of a fully polynomial-time almost uniform sampler for $Q_i(D)$~\cite{JVV86}. Therefore, we conclude that algorithm $\cA$ exists from Theorem~\ref{thm:acq-skhw-2}.
\end{proof}

As a final fundamental problem, we consider the problem of counting the number of solutions of a union of conjunctive queries of bounded hypertree width.
\begin{center}
	\framebox{
		\begin{tabular}{lp{11cm}}
			\textbf{Problem:} &  $\skuhw$\\
			\textbf{Input:} & A union of conjunctive query $Q(\bar x) \leftarrow Q_1(\bar x) \vee \cdots \vee Q_m(\bar x)$ such that $\hw(Q_i) \leq k$ for every $i \in [m]$, and a database $D$\\
			\textbf{Output:} & $|Q(D)|$
		\end{tabular}
	}
\end{center}
By using the same ideas as in the proof of Proposition \ref{prop:uacq}, we obtain from Proposition \ref{thm:acq-skhw-2} that:
\begin{proposition}\label{prop:skuhw}
	For every $k \geq 1$, it holds that $\skuhw$ admits an FPRAS and an FPAUS.
\end{proposition}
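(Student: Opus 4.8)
The plan is to follow \emph{verbatim} the template of the proof of Proposition~\ref{prop:uacq}, replacing the use of acyclicity with bounded hypertree width throughout. Fix $k \geq 1$, let $Q(\bar x) \leftarrow Q_1(\bar x) \vee \cdots \vee Q_m(\bar x)$ be a UCQ with $\hw(Q_i) \leq k$ for every $i \in [m]$, and let $D$ be a database. Since $Q(D) = \bigcup_{i=1}^m Q_i(D)$, what we need is an FPRAS for the cardinality of a union of sets, and we invoke the union--of--sets estimator of~\cite{KL83} (in the form used in~\cite{gore1997quasi} and already applied in the proof of Proposition~\ref{prop:uacq}): it suffices to exhibit, for each $i \in [m]$, (a) a polynomial-time algorithm testing whether a given tuple $\bar a$ belongs to $Q_i(D)$, (b) an almost uniform sampler for $Q_i(D)$ running in polynomial time, and (c)~--- which we relax, exactly as in Proposition~\ref{prop:uacq}, to an FPRAS~--- a way to estimate $|Q_i(D)|$.

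For (a), note that substituting the output variables $\bar x$ of $Q_i$ by the constants of $\bar a$ yields a Boolean CQ $Q_i'$ with $\hw(Q_i') \leq k$: the same hypertree decomposition witnesses this, since deleting the assigned variables from each $\chi$-label and leaving the $\xi$-labels untouched preserves all the defining conditions and does not change the width $\max_p |\xi(p)|$. By~\cite{GLS02}, the evaluation problem for CQs of bounded hypertree width is solvable in polynomial time, so we can decide $\bar a \in Q_i(D)$ in polynomial time. For (c), Theorem~\ref{thm:acq-skhw-2} gives an FPRAS for $\skhw$, hence for the function $(Q_i,D) \mapsto |Q_i(D)|$; by~\cite{JVV86} such an FPRAS yields an FPAUS for $Q_i(D)$, which supplies (b). Feeding (a), (b), (c) into the union estimator of~\cite{KL83,gore1997quasi} produces an FPRAS for $|Q(D)|$, and then~\cite{JVV86} again gives an FPAUS for $Q(D)$. (Alternatively one can sample from the union directly: pick $i$ with probability proportional to an estimate of $|Q_i(D)|$, draw a near-uniform $\bar a \sim Q_i(D)$, and reject with probability $1 - 1/|\{j \in [m] : \bar a \in Q_j(D)\}|$, using the membership test from (a); by Proposition~\ref{prop:skuhw}'s hypotheses $m$ is part of the input, but this only costs polynomial factors.)

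The only genuinely delicate point~--- and it is bookkeeping, not a new idea~--- is choosing the internal accuracy parameters: each $|Q_i(D)|$ is known only to a $(1 \pm \eps')$ factor and each sampler is only $\delta'$-close to uniform, so we take $\eps',\delta' = \poly(\eps/m)$ so that the accumulated error propagating through the union estimator stays within $(1\pm\eps)$ with probability at least $3/4$; this is precisely the analysis already carried out in~\cite{KL83,gore1997quasi} and invoked in the proof of Proposition~\ref{prop:uacq}. I do not expect any real obstacle here: the proposition is a direct corollary of Theorem~\ref{thm:acq-skhw-2} together with the polynomial-time evaluability of bounded-hypertree-width CQs~\cite{GLS02}, and the argument is structurally identical to that of Proposition~\ref{prop:uacq}.
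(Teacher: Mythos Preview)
Your proposal is correct and follows essentially the same approach as the paper, which simply says ``by using the same ideas as in the proof of Proposition~\ref{prop:uacq}, we obtain from [Theorem~\ref{thm:acq-skhw-2}]'' without spelling out the details. Your verification of condition~(a) via the preservation of hypertree width under substitution of constants, and your explicit handling of the error propagation, are more detailed than what the paper writes but entirely in line with the intended argument.
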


\section{An FPRAS and Uniform Sampler for Tree Automata}
\label{sec:fpras}
% !TeX spellcheck = en_US
%!TEX root = main.tex

In this section, we provide an FPRAS for $\stta$. Thus, we obtain as well that $\sta$ admits an FPRAS, since by Lemma \ref{lem-tata} there exists a polynomial-time parsimonious reduction from $\sta$ to $\stta$.

Fix a tree automaton $\cT = (S, \Sigma, \Delta, \sinit)$ over binary trees and let $n \geq 1$ be a natural number given in unary. We assume that every state in $S$ is mentioned in $\Delta$, and that every symbol in $\Sigma$ is mentioned in $\Delta$ (if that is not the case, then the elements that are not mentioned in $\Delta$ can just be removed from the tree automaton).
Let $m$ be the size of the tree automaton $\cT$, defined as $m = \|\Delta\|$, where $\|\Delta\|$ is the size of the transition relation $\Delta$ (represented as a string over an appropriate alphabet). In the following, fix an error parameter $\eps > 0$. Since our algorithm will run in time $\poly(n, m, 1/\eps)$, we can assume $\epsilon < \frac{1}{(4nm)^{18}}$ without loss of generality. Note that if we are only interested in uniform sampling, we can just fix $\eps = 1/\poly(nm)$.  Finally, recall that $\cL_n(\cT) = \{ t \mid t \in \cL(\cT)$ and $|t| =n \}$.

\begin{remark}\label{rem:size}
	We can assume that $m,n = \omega(1)$, since if $n = O(1)$, then the number of unlabeled trees is constant, so the number of labeled trees is a polynomial in $m$, and we can check whether each such a tree is in $\mathcal{L}(\cT)$ to compute $|\cL_n(\cT)|$ in polynomial time. If $m = O(1)$, then we can transform $\mathcal{T}$ into a constant sized deterministic bottom-up tree automaton,\footnote{A tree automaton is bottom-up if it assigns states to a labeled tree $t$ starting from the leaves, and moving toward the root~\cite{tata2007}. In particular, if $t$ is a binary tree, then the transition function is of the form $\Delta : S \times S \times \Sigma \to S$, that is, a state is assigned to a node depending on the states of its two children and its label.} and then $|\cL_n(\cT)|$
%	the estimation problem 
	can be 
	%accomplished 
	computed in polynomial time by dynamic programming. Thus, for the remainder we can now assume that $n \geq 2$ and $m \geq 3$.  \qed
\end{remark}

\paragraph{Unfolding of the tree automaton $\cT$.} 
We begin by making a number of copies of the states in $\cT$ in order to ``unfold'' $\cT$ into $n$ \textit{levels}. For this, let the new set of states be $\oSt = \{\st^i \, \mid \, i \in [n], \st \in \St\}$. Intuitively, from $\st^i$ we only want to accept trees of size $i$. This will allow us to define a natural partition scheme for the sampling procedure.
To enforce this constraint, we build a new tree automaton $\ocT = (\oSt, \Sigma, \oDelta, \sinit^n)$ such that for every transition $(\st, a, \stq \cdot \str) \in \Delta$ and $i \in [2,n]$, we add the transition $(\st^i, a, \stq^j \cdot \str^{i-j-1})$ to $\oDelta$ for every $j \in [1,i-2]$. Also, for every transition $(\st, a, \es) \in \Delta$ we add $(\st^1, a, \es)$ to $\oDelta$. 
We say that $i$ is the \textit{level} of $\st^i$.
Note that one can construct the set $\oSt$ and the automaton $\ocT$ in polynomial time in the size of $\cT$~\cite{tata2007}.

Given the definition of $\ocT$, one can easily check that $\cL(\ocT[\st^i]) = \{t \in \cL(\cT[\st]) \, \mid \, |t| = i\}$ for every $\st^i \in \oSt$.
In particular, we have that $|\cL(\ocT)| = |\{t \in \cL(\cT) \mid |t| = n\}|$ and, thus, the goal becomes to estimate  $|\cL(\ocT)| = |\cL(\ocT[\sinit^n])|$.
For clarity of notation, we write $T(\st^i)$ for $\cL(\ocT[\st^i])$ and $N(\st^i)$ for $|T(\st^i)|$. Note that the goal becomes to estimate $N(\sinit^n)$.

\begin{remark}[Proviso on the sizes of trees]
Every binary tree has an odd number of nodes. Thus, we will have that $T(\st^{2i}) = \emptyset$ and $N(\st^{2i}) = 0$ for each $i \geq 1$. However, to make the notation simpler, we do not limit ourselves to the trees of odd sizes. On the contrary, the algorithms provided in this article are able to compute $N(\st^{2i}) = 0$, and also to realize that no sample has to be produced from~$T(\st^{2i})$.  \qed
\end{remark}

\paragraph{Two basic properties, and the estimation of $N(\st^i)$.} 
Our algorithm simultaneously computes estimates $\aN(\st^i)$ for the set sizes $N(\st^i)$, as well as \textit{sketches} $\aT(\st^i)$ of $T(\st^i)$ which consist of polynomially many uniform samples from $T(\st^i)$. Specifically, at each level $i$ and for every $\st \in \St$, our algorithm will store an estimate which satisfies $\aN(\st^i) = (1 \pm i \eps) N(\st^i)$. At step $i$, for each $j < i$, our algorithm will also store $i$ distinct independent uniformly sampled subsets sets $\aT_1(\st^j),\aT_2(\st^j),\dots,\aT_i(\st^j)$ of $T(s^j)$ which satisfy certain deterministic criteria that will result in the correctness of our sampling algorithm on states $\st^i$ (see Lemma \ref{lem:estimatepart}). Using these estimates $\aN(s^j)$ and sketches $\aT_i(s^j)$ for $j < i$ as input, we will construct a procedure that allows us to obtain fresh, independent samples from the sets $T(\st^i)$ for all $\st \in S$. Formally, the properties we need to inductively condition on are as follows:

% Indeed, at step $i$ we will use a procedure to uniformly sample from the sets $T(\st^j)$ for all $j < i$ to obtain an estimate of $N(\st^i)$ for all $\st \in \St$. We will also use this sampling procedure to construct new, fresh uniform sample sets $\aT_i(\st^j)$ for all $j < i$ and $\st \in \St$. At this point, we will use the value of this estimate as well as the sample sets $\aT_i(\st^j)$ as input to a new sampling algorithm which will obtain uniform samples of the sets $T(\st^i)$ in polynomial time for each $\st \in \St$.First, for any size $i$ consider the following two properties:

\paragraph{Property 1:} For a fixed $i \in [n]$, we have $\aN(\st^i) = (1\pm i\eps)  N(\st^i)$ for all $\st \in \St$. \label{prop-one}

\paragraph{Property 2:} For a fixed $i \in [n]$, we have an oracle which returns uniform, independent samples $t \sim T(\st^j)$ for every $j \leq i$ and $\st \in \St$, and runs in $\poly(n,m,1/\eps,\gamma)$ time, for some fixed parameter $\gamma$ which we will later choose. The oracle is allowed to fail with probability at most $3/4$, in which case it outputs no sample.

%\marcelo{Shouldn't the time of the oracle also depends on $1/\eps$. So it should be $\poly(n,m,1/\eps,\gamma)$} 
%\rajesh{The sampler should not depend on $\eps$, because it is a perfect sampler, so there is no $\epsilon$ in the notion of the sampler. We can fix this by specifying that the $\epsilon$ used in the estimate partition size does not need to depend on the $\epsilon$ used in the overall alg -- we can set that $\eps$ equal to $1/(\poly(nm))$. I can fix this when I make the next pass.}
%\alberto{In that case, $\eps$ should be removed from the definitions and calls of the Sample and EstimatePartitionSize procedures.}
\medskip
We remark that the parameter $\gamma$ will later be set to $\log(1/\delta) + n$, where $\delta$ is the failure probability. 
Fix an arbitrary $i \in [n]$, and suppose we have computed $\aN(\stq^j)$ and $\aT_{k}(\stq^j)$ for all $\stq \in \St$, $j < i$ and $k \in [i]$. Fix now a state $\st$. We first show how to compute the estimate $\aN(\st^i)$. 
\begin{proposition}\label{prop:prop1}
	Fix $\delta \in (0,1)$. If Property $1$ and $2$ hold for all levels $j < i$, then with probability $1-\delta$ and time $\poly(n,m,1/\eps,\log(1/\delta))$
	we can compute a value $\aN(\st^i)$ such that $\aN(\st^i) = (1 \pm i\eps)N(\st^i)$. In other words, Property $1$ holds for level $i$.
\end{proposition}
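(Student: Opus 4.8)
The plan is to realize $T(\st^i)$ as a union of ``product'' sets indexed by the transitions out of $\st^i$ and to estimate the size of that union with a Karp--Luby-type estimator, drawing uniform samples from the pieces via the Property~2 oracle and weighting them with the Property~1 estimates. The case $i=1$ is immediate: $T(\st^1)$ is just the set of one-node trees $a$ with $(\st,a,\es)\in\oDelta$, so $\aN(\st^1)$ is computed exactly; assume $i\ge 2$ below. Enumerate the transitions of $\oDelta$ with source $\st^i$ as $\tau_\ell=(\st,a_\ell,\stq_\ell^{j_\ell}\str_\ell^{k_\ell})$ for $\ell\in[L]$, where necessarily $j_\ell+k_\ell=i-1$ and $L\le mn$, and put $A_\ell=\{\,a_\ell(t_1,t_2)\mid t_1\in T(\stq_\ell^{j_\ell}),\ t_2\in T(\str_\ell^{k_\ell})\,\}$. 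Then $T(\st^i)=\bigcup_{\ell\in[L]}A_\ell$, each $A_\ell\subseteq T(\st^i)$, and $(t_1,t_2)\mapsto a_\ell(t_1,t_2)$ is a bijection onto $A_\ell$, so $|A_\ell|=N(\stq_\ell^{j_\ell})\cdot N(\str_\ell^{k_\ell})$, which (since $j_\ell,k_\ell<i$) we estimate by $\aN(\stq_\ell^{j_\ell})\cdot\aN(\str_\ell^{k_\ell})$ using Property~1. Let $\tilde U=\sum_\ell \aN(\stq_\ell^{j_\ell})\aN(\str_\ell^{k_\ell})$; this is computable exactly from the stored estimates and equals $0$ precisely when $T(\st^i)=\emptyset$, in which case we output $\aN(\st^i)=0$.

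Assuming $T(\st^i)\neq\emptyset$, the estimator is $\aN(\st^i)=\tilde U\cdot\widehat\mu$, where $\widehat\mu$ is the mean of $M=O(mn\,\eps^{-2}\log(1/\delta))$ i.i.d.\ copies of the $\{0,1\}$-variable $X$ produced as follows: pick $\ell$ with probability $\aN(\stq_\ell^{j_\ell})\aN(\str_\ell^{k_\ell})/\tilde U$; use the Property~2 oracle, boosted to failure probability $\le\delta/(10M)$ by $O(\log(M/\delta))$ independent retries, to draw independent uniform $t_1\sim T(\stq_\ell^{j_\ell})$ and $t_2\sim T(\str_\ell^{k_\ell})$; form $t=a_\ell(t_1,t_2)$; and set $X=1$ iff $\ell=\min\{\,\ell':t\in A_{\ell'}\,\}$. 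Membership ``$t\in A_{\ell'}$'' is decided by checking $a_{\ell'}=a_\ell$, $j_{\ell'}=|t_1|$, $k_{\ell'}=|t_2|$ and then running the polynomial-time membership test for $\ocT$ on $t_1$ from $\stq_{\ell'}^{j_\ell}$ and on $t_2$ from $\str_{\ell'}^{k_\ell}$. If any boosted oracle call fails outright we declare failure, which happens with probability at most $\delta/2$; conditioned on no such failure the samples are genuinely uniform and independent, $t$ is uniform on $A_\ell$ given $\ell$, hence $\ex{X\mid\ell}=\rho_\ell:=|A_\ell\setminus\bigcup_{\ell'<\ell}A_{\ell'}|/|A_\ell|$, and therefore $\tilde U\cdot\ex{\widehat\mu}=\sum_\ell \aN(\stq_\ell^{j_\ell})\aN(\str_\ell^{k_\ell})\rho_\ell$.

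It then remains to control the two sources of error. Termwise, $\aN(\stq_\ell^{j_\ell})\aN(\str_\ell^{k_\ell})=(1\pm j_\ell\eps)(1\pm k_\ell\eps)|A_\ell|=(1\pm((i-1)\eps+n^2\eps^2))|A_\ell|$ because $j_\ell+k_\ell=i-1\le n$, and $\sum_\ell|A_\ell|\rho_\ell=|\bigcup_\ell A_\ell|=N(\st^i)$, so $\tilde U\cdot\ex{\widehat\mu}=(1\pm((i-1)\eps+n^2\eps^2))\,N(\st^i)$. Also $\tilde U\le 2\sum_\ell|A_\ell|\le 2L\cdot N(\st^i)\le 2mn\cdot N(\st^i)$ since each $A_\ell\subseteq T(\st^i)$, whence $\ex{\widehat\mu}\ge 1/(4mn)$. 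A multiplicative Chernoff bound then yields $\widehat\mu=(1\pm\eps/4)\,\ex{\widehat\mu}$ with probability at least $1-\delta/2$ for our choice of $M$; combining with the previous bound and using $\eps<(4nm)^{-18}$ to absorb the $O(\eps^2)$ cross terms, one checks $\aN(\st^i)=(1\pm\eps/4)(1\pm((i-1)\eps+n^2\eps^2))\,N(\st^i)=(1\pm i\eps)\,N(\st^i)$, with overall failure probability at most $\delta$ after a union bound with the oracle failures. Each of the $M$ trials costs one boosted oracle draw plus $O(L)=O(mn)$ membership tests, all polynomial, so the total running time is $\poly(n,m,1/\eps,\log(1/\delta))$. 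The one delicate point is exactly this final accounting: the slack between the inherited error $(i-1)\eps$ and the target $i\eps$ must simultaneously absorb the Monte Carlo error and all the $O(\eps^2)$ cross terms — which is precisely why $\eps$ is taken super-polynomially small and why the lower-level guarantees are stated with the level-dependent factor $(1\pm j\eps)$ rather than a fixed $(1\pm\eps)$.
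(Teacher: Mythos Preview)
Your proof is correct and follows essentially the same approach as the paper: decompose $T(\st^i)$ as the union $\bigcup_\ell A_\ell$ over transitions out of $\st^i$, use Property~1 to approximate each $|A_\ell|$ as a product, draw uniform samples from the $A_\ell$ via the Property~2 oracle, and correct for overlaps with the ``first covering index'' trick. The only difference is organizational: the paper estimates each fraction $\rho_\ell$ separately with $h$ samples per transition and an additive Hoeffding bound, then forms $\sum_\ell \aN(\tau_\ell)\tilde p_\ell$, whereas you use the integrated Karp--Luby estimator (sample $\ell$ first with weight $\aN(\tau_\ell)/\tilde U$, then sample from $A_\ell$) and a single multiplicative Chernoff bound on $\widehat\mu$ via the lower bound $\ex{\widehat\mu}\ge 1/(4mn)$. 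Both routes give the same $(1\pm i\eps)$ guarantee with the same error accounting, and your bound $L\le mn$ on the number of unrolled transitions out of $\st^i$ is in fact a bit more careful than the paper's.
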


\begin{proof}
	If $i = 1$, we can compute $N(\st^i)$ exactly in time $O(m)$, and we make $\aN(\st^i) = N(\st^i)$. Thus, assume that $i \geq 2$. 
	For each transition $\tau = (\st^i, a, \stq^j \cdot \str^{i-j-1}) \in \oDelta$, recall the definition of the extension $\cT[\tau]$ (see Section~\ref{sec:preliminaries}),
	which recognizes trees where runs are forced to start with transition $\tau$. We now define $N(\tau) = |\cL(\cT[\tau])|$, and observe that $N(\tau) = |T(\stq^j) \times T(\str^{i-j-1})| = N(\stq^j) \cdot N(\str^{i-j-1})$. Thus, we obtain an estimate $\aN(\tau)$ of $N(\tau)$ via:
	\begin{equation*}
	\begin{split}
	\aN(\tau) &= \aN(\stq^j) \cdot \aN(\str^{i-j-1})  \\
	&= (1 \pm j \eps)( 1 \pm (i-j-1)\eps) \cdot N(\stq^j) \cdot N(\str^{i-j-1})\\
%	&= (1 \pm j \eps)( 1 \pm (i-j-1)\eps) \cdot  |\cL(\cT[\tau])| \\
	& = (1 \pm( j \eps + (i-j-1) \eps) +j(i-j-1) \eps^2 )  \cdot N(\tau) \\
	& = (1 \pm((i-1) \eps) +j(i-j-1) \eps^2 )  \cdot N(\tau) \\
%	&= \left(1 \pm \left((i-1) \eps + \frac{\epsilon}{n}\right)\right) |\cL(\cT[\tau])| \\
	&= \left(1 \pm \left(i-1 + \frac{1}{n}\right)\eps\right) \cdot N(\tau)
	\end{split}
	\end{equation*}
	Where in the last equation, we used our assumption that $\eps < 1/(4nm)^{18} < 1/n^3$ and then applied the fact that $j(i-j-1)\eps^2 \leq n^2 \eps^2 \leq \eps/n$. Also, notice that we are using the fact that Property 1 holds for all sizes $j<i$.
	%In the same way, for each production $v^i \to (a,u^{i-1})$, we have an estimate of the size of $U_{u^{i-1}}$ given by $R_{u^{i-1}}$, such that $R_{u^{i-1}} = (1 \pm (i-1)\eps) |U_{u^{i-1}}|$. 
	Now let $\tau_1, \tau_2, \dots, \tau_\ell \in \oDelta$ be all the transitions of the form $\tau_j = (\st^i, a_j, \stq_j \cdot \str_j)$ with $\stq_j,\str_j \in \oSt$ and $a_j \in \Sigma$. Observe that $N(\st^i) = |\bigcup_{j=1}^\ell \cL(\cT[\tau_j])|$. 
	Now for each $j \in [\ell]$, let $p_j$ be the probability that a uniform sample $t \sim \cL(\cT[\tau_j])$ is not contained in $\cL(\cT[\tau_{j'}])$ for all $j' < j$.	Then $N(\st^i) \ = \ \sum_{j=1}^\ell N(\tau_j) p_j$, so in order to estimate $N(\st^i)$ it suffices to estimate the values $p_j$.  Since Property 2 holds for all levels less than $i$, by making calls to oracles $t_{\stq} \sim \cL(\cT[\stq_j])$ and $t_{\str} \sim \cL(\cT[\str_j])$ we can obtain an i.i.d. sample $a_j(t_{\stq}, t_{\str})$ from $\cL(\cT[\tau_j])$ (recall the notation for trees introduced in Section \ref{sec:preliminaries}). By repeating this process, we can obtain i.i.d. samples $t_1,t_2,\dots,t_h \sim \cL(\cT[\tau_j])$ uniformly at random, where $h = O(\log(4m/\delta) m^2/\eps^2)$. Now let $\tilde{p}_j$ be the fraction of the samples $t_{k}$ such that $t_{k} \not\in \cL(\cT[\tau_{j'}])$ for each $j'< j$. Note that checking if
	$t_{k} \not\in \cL(\cT[\tau_{j'}])$ can be done in $\poly(n,m)$ time via a membership query for tree automata. 	
%	\begin{comment}
	Thus if we let \[X_{k} =
	\begin{cases} 
	1 & \text{if } t_{k} \not\in \cL(\cT[\tau_{j'}]) \text{ for each } j'<j \\
	0 & \text{otherwise.} \end{cases}\]
	then we have $\tilde{p}_j =h^{-1}\sum_{k=1}^h X_k$. Then setting $p_j = \ex{X_k}$, by Hoeffding's inequality we have $|\tilde{p}_j - p_j| \leq \frac{\eps}{4m}$ with probability at least $1- \delta/(2m)$, 
	%\begin{equation*}
	%\bar{X}=\frac{1}{h}\sum_{k=1}^h X_k=\tilde{p}_j, \quad\quad
	%\ex{\bar{X}}
	%=\frac{1}{h}\sum_{k=1}^h \ex{X_k}
	%=\frac{1}{h}\sum_{k=1}^h \pr{X_k=1}
	%=\frac{1}{h}\sum_{k=1}^h p_j
	%=p_j.
	%\end{equation*}
%	\end{comment}
	%\alberto{I introduced an additional 2 in the denominator, so as to make the end of the proof easier. Also, I removed $n$ from the denominator, as it is not needed later in the paper and just makes things more confusing.}	
	so we can union bound over all $j \in [\ell]$ and obtain $|\tilde{p}_j - p_j| \leq \frac{\eps}{4m}$ for all $j \in [\ell]$ with probability at least $1-\delta$.
%	\begin{comment}
%	\begin{eqnarray*}
%		\bpr{\bigvee_{j=1}^\ell \bigg(|\aGamma_j - \Gamma_j| > \frac{\eps}{4m}\bigg)} 
%		& \leq  & 
%		\sum_{j=1}^\ell \bpr{|\aGamma_j - \Gamma_j| > \frac{\eps}{4m}}\\
%		& \leq &  
%		\sum_{j=1}^\ell \frac{\delta}{nm} \ \leq \  \frac{\ell\delta}{nm} \ \leq \  \frac{\delta}{n}.
%	\end{eqnarray*}
%	Or equivalently,
%	\begin{eqnarray*}
%		\bpr{\bigwedge_{j=1}^\ell \bigg(|\aGamma_j - \Gamma_j| \leq \frac{\eps}{4m}\bigg)} & \geq & 1 - \frac{\delta}{n}
%	\end{eqnarray*}
%	\end{comment}
	Putting all together, we can derive an estimate $\aN(\st^i)$ for $N(\st^i)$ by using the estimates $\aN(\tau_j)$ and $\tilde{p}_j$ of $N(\tau_j)$ and $p_j$, respectively, as follows:
	\begin{eqnarray*}
		\aN(\st^i) 
		&=& \sum_{j=1}^\ell \aN(\tau_j) \tilde{p}_j	\\
		&=& \left(1 \pm \left(i-1 + \frac{1}{n} \right) \eps\right)\sum_{j=1}^\ell N(\tau_j)  \tilde{p}_j	\\
	%	&=& \left(1 \pm \left(i-1 + \frac{1}{n} \right) \eps\right)\sum_{j=1}^\ell N(\tau_j) \left(p_j	\pm  \frac{\eps}{4m}\right)	\\
		&=& \left(1 \pm \left(i-1 + \frac{1}{n} \right) \eps\right)\bigg(\sum_{j=1}^\ell N(\tau_j) p_j \pm\frac{\eps}{4m} \sum_{j=1}^\ell N(\tau_j)\bigg)	\\
		%&=& \left(1 \pm \left(i-1 + \frac{1}{n} \right) \eps\right)\bigg(\sum_{j=1}^\ell N(\tau_j) p_j \pm\frac{\eps}{4m} \sum_{j=1}^\ell N(s^i)\bigg)	\\
	%	&=& \left(1 \pm \left(i-1 + \frac{1}{n} \right) \eps\right)\Bigg(N(\st^i) \pm\frac{\eps}{4m} \sum_{j=1}^\ell N(\st^i) \Bigg)	\\
	%	&=& \left(1 \pm \left(i-1 + \frac{1}{n} \right) \eps\right)\Bigg(N(\st^i) \pm\frac{\ell \eps}{4m} N(\st^i)\Bigg)	\\
	%	&=& \left(1 \pm \left(i-1 + \frac{1}{n} \right) \eps\right)\Bigg(N(\st^i) \pm\frac{ \eps}{4} N(\st^i) \Bigg)	\\
		&=& \left(1 \pm \left(i-1 +\frac{1}{n} \right) \eps\right) \left( N(\st^i)\left(1 \pm \frac{\eps}{4} \right) \right) \\
	%	&=& \left(1 \pm  \left(i - 1  + \frac{1}{n} + \frac{1}{4} +   \left(i-1 +\frac{1}{n} \right)\frac{\eps}{4} \right)\eps\right)N(\st^i) \\
		&=& \left(1 \pm i\eps\right)N(\st^i).
	\end{eqnarray*}
	% \marcelo{I am confused about how $m$ gets introduced in $\pm m\frac{\eps}{m}$}	
	Where we use that $\sum_{j=1}^\ell N(\tau_j) \leq \sum_{j=1}^\ell N(\st^i) = \ell N(\st^i) \leq m N(\st^i)$ in the second to last step, and the fact that $n \geq i \geq 2$ in the last step. For runtime, notice that the key result $\pr{|\tilde{p}_j - p_j| \leq \eps/(4m)}\geq 1 - \delta/(2m)$ is conditioned on the event that we were able to obtain $h$ samples $t_k$ using the oracle. Recall that the sampling oracle can fail with probability at most $3/4$. Then, the required number of calls $h'$ to the poly-time sampling oracle is at most $4h/3$ in expectation. For our purposes, $h'=O(h)$ will also be enough, as we now show. For $j\in[\ell]$ call $G_j$ the event that we obtain $h$ samples from $\cL(\cT[\tau_{j}])$ and $H_j$ the event that $|\tilde{p}_j - p_j| \leq \eps/(4m)$. Then, as we showed above,
	\begin{equation*}
		\pr{\aN(\st^i) = (1 \pm i\eps)N(\st^i)}
		\geq \bpr{\bigcap_{j=1}^\ell(H_j\cap G_j)}
		= 1-\bpr{\bigcup_{j=1}^\ell(\overline{H_j}\cup \overline{G_j})}
		\geq 1-m\pr{\overline{H_{j_0}}\cup \overline{G_{j_0}}},
	\end{equation*}
where the last inequality is due to a union bound obtained considering $j_0 = \operatorname{argmax}_{j \in[\ell]} \pr{\overline{H_{j}}\cup \overline{G_{j}}}$. Recall that we want $\pr{\aN(\st^i) = (1 \pm i\eps)N(\st^i)} \geq 1-\delta$, hence it suffices to show 
	\begin{equation}\label{bound_fpras}
		1-m\pr{\overline{H_{j_0}}\cup \overline{G_{j_0}}} \geq 1-\delta
		\iff \frac{\delta}{m} \geq \pr{\overline{H_{j_0}}\cup \overline{G_{j_0}}} 
		\iff \pr{H_{j_0}\cap G_{j_0}} \geq 1 - \frac{\delta}{m}
	\end{equation}
By Hoeffding's inequality, as we showed before, $\pr{H_{j_0} \mid G_{j_0}}\geq 1 - \delta/(2m)$. Suppose that we also have that $\pr{G_{j_0}}\geq 1 - \delta/(2m)$. Then,
	\begin{equation*}
		\pr{H_{j_0} \cap G_{j_0}}
		= \pr{H_{j_0} \>|\> G_{j_0}}\cdot \pr{G_{j_0}}
		\geq \left(1 - \frac{\delta}{2m}\right)^2
		\geq 1 - 2\cdot \frac{\delta}{2m}
		= 1 - \frac{\delta}{m}
	\end{equation*}
as required by equation \eqref{bound_fpras}. %Now, how big does $h'$ need to be so we can get $\pr{G_{j_0}}\geq 1 - \delta/(2m)$?
Thus, it suffices to show $\pr{G_{j_0}}\geq 1 - \delta/(2m)$. Letting $X_i$ be the random variable that indicates whether the $i$-th call to the sampling procedure was successful, then the total number of samples obtained is $X=\sum_{i=1}^{h'}X_i$, where $\ex{X}\geq h'/4$, so by a Chernoff bound we have
	\begin{equation*}
		\pr{G_{j_0}}
		= 1 - \pr{\overline{G_{j_0}}}
		= 1 - \pr{X<h}
		%= 1 - \pr{X<(1-(1-4h/h'))h'/4}
		\geq 1 - \exp\left( -\frac{h'}{8}\Big(1-\frac{4h}{h'}\Big)^2 \right)
		\geq 1 - \exp\left( -\frac{h'}{8}\Big(1-\frac{4h}{h'}\Big) \right)
	\end{equation*}
assuming that $4h < h'$. 
Hence, 
	\begin{equation*}
		1 - \exp\bigg( -\frac{h'}{8}\Big(1-\frac{4h}{h'}\Big) \bigg)
		\geq 
		1 - \frac{\delta}{2m}
		\iff 
		\frac{\delta}{2m}
		\geq
		\exp\bigg( -\frac{h'}{8}\Big(1-\frac{4h}{h'}\Big) \bigg)
		\iff
		h'
		\geq
		4h+8\ln\Big(\frac{2m}{\delta}\Big).
	\end{equation*}
so by definition of $h$, it is sufficient to set $h'=5h$, which completes the proof.
\end{proof}

\paragraph{The notion of a partial tree.}
We need to demonstrate how to obtain uniform samples from $T(\st^i)$ to build the sets $\aT_j(\st^i)$. To do this, we will provide an algorithm that recursively samples a tree $t \in T(\st^i)$ from the top down. But before showing this procedure, we need to introduce the notion of a \textit{partial tree}. In the following, recall that $\Sigma$ is a finite alphabet and assume, without loss of generality, that $\Sigma \cap [n] =\emptyset$.  

\begin{definition}\label{def:partialtree}
A \textit{partial tree} is a binary labeled tree $t$ over $\Sigma \cup [n]$. A node $u$ labeled by $t(u) \in [n]$ is called a \textit{hole} of $t$, and we assume that holes can appear only at the leaves of $t$. 
The \textit{full} size of $t$, denoted by $\virtsize{t}$, is defined as $|\{u \mid t(u) \in \Sigma\}| + \sum_{u\,:\, t(u) \in [n]} t(u)$. Moreover, a partial tree $t$ is said to be \textit{complete} if $t$ contains no holes.
\end{definition}
Intuitively, in a partial tree $t$, a hole $u$ represents a placeholder where a subtree of size $t(u)$ is going to be hanged. 
That is, partial tree $t$ is representing all trees over $\Sigma$ that have the same trunk as $t$ and, for each hole $u$, the subtree rooted at $u$ is of size $t(u)$. 
Notice that all trees represented by $t$ will have the same size $|\{u \mid t(u) \in \Sigma\}| + \sum_{u\,:\, t(u) \in [n]} t(u)$ and, therefore, we define the full size of $t$ as this quantity. 
Finally, observe that if a partial tree $t$ is complete, then $t$ contains no holes and, hence, no extension is needed. For an example of a partial tree, see Figure \ref{fig:partial_tree}.

%\rajesh{Diagram for partial tree here}
%\alberto{I include a preliminay diagram. It is easily changeable. I use \textit{vsize} as notation in the meantime, as the final notation is not yet decided. When the final notation is decided, I can improve the display.}

\begin{figure}[t]
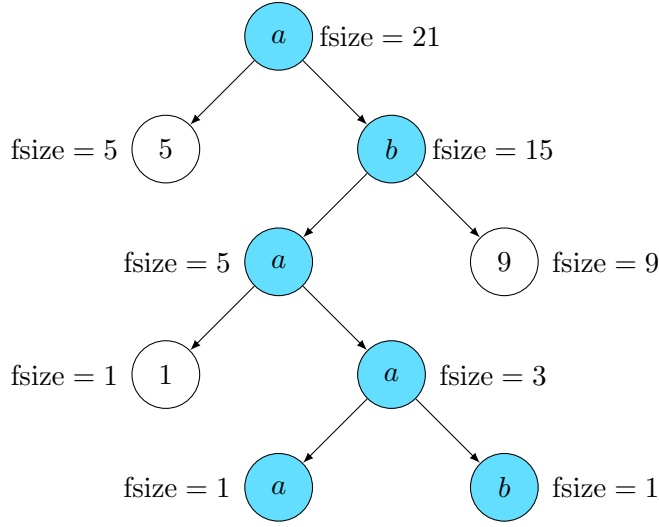

    \ctikzfig{partial_tree}
    \caption{An example of a partial tree. White nodes corresponds to holes, which are labeled by integers. Nodes that are not holes are labeled with symbols from $\Sigma=\{a,b\}$.}
%is not complete since it has holes (the white nodes, labeled by integers).}
    \label{fig:partial_tree}
\end{figure}

%Intuitively, a hole $u$ in $t$ represents the place where a subtree of size $t(u)$ will be placed. 
For every partial tree $t$ and node $x \in t$, write $t_x$ to denote the 
%naturally induced 
partial subtree of $t$ rooted at $x$. For each hole $u \in t$ with size $t(u) = i$, we say that $t'$ is an \textit{immediate extension} of $t$ over $u$ if $t' = t[u \rightarrow a(j, i-j-1)]$ for some $a\in \Sigma$ and $j \in [i-2]$. That is, $t$ is extended by replacing the label of $u$ with $a$ and hanging from $u$ two new holes whose sizes sum to $i-1$ (note that the resulting partial subtree $t_u'$ has full size $i$). 
In case that $i=1$, then it must hold that $t' = t[u \rightarrow a]$ for some $a\in \Sigma$.
We define the set of all immediate extensions of $t$ over $u$ as $\ext(t, u)$. Note that $|\ext(t, u)| = (i-2)|\Sigma|$.
Finally, given two partial trees $t$ and $t'$, we write $t \hookrightarrow_u t'$ if $t'$ is an immediate extension of $t$ over $u$, and $t \hookrightarrow t'$ if $t'$ is an immediate extension of $t$ over some hole $u \in t$. We then define the reflexive and transitive closure $\hookrightarrow^*$  of $\hookrightarrow$, and say that $t'$ is an \textit{extension} of $t$ if $t \hookrightarrow^* t'$. In other words, $t \hookrightarrow^* t'$ if either $t' = t$ or $t'$ can be obtained from $t$ via a non-empty sequence of immediate extensions $t \hookrightarrow t_1 \hookrightarrow t_2 \hookrightarrow \dots \hookrightarrow t'$. We say that $t'$ is a \textit{completion} of $t$ when $t \hookrightarrow^* t'$ and $t'$ is complete. 

%Then if $t \hookrightarrow^* t'$, $t \neq t'$, and $t'$ is not a complete tree, we say that $t'$ is a partial completion of $t$. On the other hand, if $t \hookrightarrow^* t'$ and $t'$ is a complete tree, then $t'$ is said to be a completion of $t$. 

\paragraph{Obtaining uniform samples from $T(\st^i)$.}
Given a partial tree $t$ with $\virtsize{t} = i$, consider now the set $T(\st^i, t)$ of all completions $t'$ of $t$ derivable with $\st^i$ as the state in the root node, namely, $T(\st^i, t) = \{t' \in T(\st^i) \, \mid\, \text{$t'$ is a completion of $t$}\}$.
Further, define $N(\st^i, t) = |T(\st^i, t)|$.
To obtain a uniform sample from $T(\st^i)$, we start with a partial tree $t = i$ (i.e. $t$ is a partial tree with one node, which is a hole of size $i$). %, and we apply immediate extensions to $t$ in order to sample a tree in $T(\st^i, t)$.
At each step, we choose the hole $u \in t$ with the smallest size $t(u)$, and consider an immediate extension of $t$ over $u$.  Note that the set $T(\st^i, t)$ can be partitioned by the sets $\{T(\st^i, t')\}_{t \hookrightarrow_u t'}$ of such immediate extensions. The fact that $T(\st^i, t') \cap T(\st^i, t'') = \emptyset$, whenever $t \hookrightarrow_u t'$, $t \hookrightarrow_u t''$ and $t' \neq t''$,
%this is a partition 
follows immediately from the 
fact that 
%any two distinct partitions either have differing labels $a \in \Sigma$ in the place of $u$, or unequal sizes of the left and right subtrees of $u$. 
$t'$ and $t''$ have different labels from $\Sigma$ in the place of $u$ or unequal sizes of the left and right subtrees of $u$. 
We then will sample each partition $T(\st^i, t')$ with probability approximately proportional to its size $N(\st^i, t')$, set $t' \leftarrow t$, and continue like that recursively. 
Formally, the procedure to sample a tree in $T(\st^i)$ is shown in Algorithm \ref{alg:sample}.
%Given that this value $N(\st^i, t')$ is unknown, we show how to obtain  an $(1 \pm \eps)$ error estimates of $N(\st^i, t')$ by a special procedure for estimating the partition size.

%\marcelo{What should be the input of \Sample?}
%\marcelo{As far as I understand we need $i$ copies of $\aT(\str^j)$, right?}
\begin{algorithm}[!h]
	\caption{\Sample$\big(\st^i,\,\{\aT_i(\str^j)\}_{\str \in S, j < i},\, \{\aN(\str^j)\}_{\str \in S, j \leq i},\, \eps,\, \delta\big)$ } \label{alg:sample}
	\smallskip
	Initialize a partial tree $t = i$, and set $\varphi = 1$\\ 
	\smallskip
	\While{$t$ is not complete}{
		\smallskip
		Let $u$ be the hole of $t$ with the minimum size $t(u)$. If more than one node reaches this minimum value, choose the first such a node according to a prespecified order on the holes of $t$. \label{algo:min-size-node}\\
		\smallskip
		Let $\ext(t, u) = \{t_1, \ldots, t_\ell\}$ be the set of immediate extensions of $t$ over $u$. \\
		\smallskip
		For each $k \in [\ell]$, call \EstimatePartitionSize$(t_k,\, \st^i,\, \{\aT_i(\str^j)\}_{\str \in S, j < i},\, \{\aN(\str^j)\}_{\str \in S, j \leq i},\, \eps,\, \delta)$ to obtain an estimate $\aN(\st^i, t_k)$ of $N(\st^i, t_k)$. 
		\hfill\tcp{Recall that $\cT = (S, \Sigma, \Delta, \sinit)$} \label{alg:sample:cps}
		%such that $\aN(\st^i, t_k) = (1 \pm (2nm)^5\eps )N(\st^i, t_k)$ .	\\
		\smallskip
		Sample partition $k \in [\ell]$ with probability $\frac{\aN(\st^i, t_k)}{\sum_{k'=1}^\ell \aN(\st^i, t_{k'})}$. \\
		\smallskip	
		Set $\varphi \leftarrow \varphi\cdot\frac{\aN(\st^i, t_k)}{\sum_{k'=1}^\ell \aN(\st^i, t_{k'})}$.\\ 
		\smallskip
		Set $t \leftarrow  t_k$. \medskip}  
		\smallskip
	\Return{$t$ with probability $\frac{1}{2\varphi \aN(\st^i)}$, otherwise output \FAIL}.
\end{algorithm}

Notice that \Sample$\big(\st^i,\,\{\aT_i(\str^j)\}_{\str \in S, j < i},\, \{\aN(\str^j)\}_{\str \in S, j \leq i},\, \eps,\, \delta \big)$ uses the precomputed values $\aT_i(\str^j)$ for every $\str \in S$ and $j \in [i-1]$, and the precomputed values $\aN(\str^j)$ for every $\str \in S$ and $j \in [i]$. This procedure 
%\Sample$(\st^i)$ 
first selects a hole $u$ with the minimum  size $t(u)$, and then calls a procedure $\EstimatePartitionSize$ to obtain an estimate $\aN(\st^i, t_k)$ of $N(\st^i, t_k)$ for every immediate extensions $t_k$ of $t$ over $u$. Thus, to prove our main theorem about the procedure \Sample, we first need  the following lemma about the correctness of the partition size estimates. The proof of Lemma~\ref{lem:estimatepart} is the main focus of Section \ref{sec:partition}. 
%We now prove our main theorem about the sampling procedure. To do so, we will need the following lemma about the correctness of the partition size estimates. The proof of the Lemma is the main focus of Section \ref{sec:partition}. 

\begin{lemma}\label{lem:estimatepart}
	Let $\delta \in (0, 1/2)$, and fix independent and uniform samples sets $\aT_i(s^j)$ of $T(s^j)$ each of size $O(\log^2(\delta^{-1}) (nm)^{13} /\eps^5)$, for every $\st \in \St$ and $j < i$. Suppose further that we have values $\aN(s^j) = (1 \pm j\eps)N(s^j)$ for every $\st\in S$ and $j \leq i$ Then with probability $1-\delta^{nm}$, the following holds: for every state $\st \in \St$ and for every partial tree $t$ with $\virtsize{t} = i$, the procedure 
	\[\EstimatePartitionSize\left(t,\,\st^i,\left\{\aT_i(r^j)\right\}_{r \in S, j < i},\,\left\{\aN(r^j)\right\}_{r \in S, j \leq i},\, \eps,\, \delta\right)\] runs in $\poly(n,m,\eps^{-1},\log \delta^{-1})$-time and returns a value $\aN(\st^i, t)$ such that 
\[ 
\aN(\st^i, t ) \ = \ \left(1 \pm  (4nm)^{17}\eps\right) N\left(\st^i, t\right)	
\]
\end{lemma}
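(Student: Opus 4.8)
The plan is to reduce the estimation of $N(\st^i,t)$ to counting the $k$-slice of a \emph{succinct} NFA and then invoke Theorem~\ref{thm:progmain}. First I would formalize the \emph{main path} construction sketched in Section~\ref{sec:tech}: since $\EstimatePartitionSize$ is only ever called on partial trees $t$ whose holes were created by repeatedly expanding the hole of minimum size, the holes $H_1,\dots,H_k$ of $t$ are nested, and there is a distinguished path $P = v_1,\dots,v_k$ in the trunk of $t$ with $H_j$ a child of $v_j$. I would then build a succinct NFA $\mathcal{N}_{\st^i,t} = (S',\Sigma',\Delta',\ldots)$ over the alphabet $\Sigma'$ of all binary labeled trees of size $<i$, whose states encode a node of $P$ together with a state of $\cT$, and where the transition of $\mathcal{N}_{\st^i,t}$ corresponding to a transition $\st\to a\,\str\,\st'$ of $\cT$ used at $v_j$ (with, say, $H_j$ the left child) is labeled by the set $T(\str^{\,t(H_j)})$. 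The key points are that $|S'|,|\Delta'| = \poly(n,m)$, that $k\le n$, that every label set is of the form $T(\str^\ell)$ with $\str\in S$ and $\ell<i$, and that there is a size- and multiplicity-preserving bijection $\cL_k(\mathcal{N}_{\st^i,t}) \to T(\st^i,t)$, so that $N(\st^i,t) = |\cL_k(\mathcal{N}_{\st^i,t})|$. The routine but notation-heavy verification of this bijection, including the case where $v_j$ has two holes as children, is what Section~\ref{sec:partition} carries out in detail.

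Next I would supply the oracle that Theorem~\ref{thm:progmain} requires for each label set $A = T(\str^\ell)$: membership ``$t'\in T(\str^\ell)$?'' is an $O(\poly(n,m))$ tree-automaton membership test; the size estimate is the precomputed value $\aN(\str^\ell)$, which by hypothesis satisfies $\aN(\str^\ell) = (1\pm\ell\eps)N(\str^\ell)$ with $\ell<i\le n$; and an almost-uniform sample from $A$ is obtained by reading off an element of the precomputed sketch $\aT_i(\str^\ell)$, which is an exactly uniform sample from $T(\str^\ell)$. Feeding these into the succinct-NFA FPRAS of Theorem~\ref{thm:progmain} gives an estimate of $|\cL_k(\mathcal{N}_{\st^i,t})|$; tracing the error guarantee of that theorem through the $(1\pm n\eps)$-accuracy of the label-set sizes and the $\poly(|S'|,k)$ internal blow-up of the succinct-NFA algorithm yields a multiplicative error of $(1\pm(4nm)^{17}\eps)$, with running time $\poly(|\mathcal{N}_{\st^i,t}|,k,\eps^{-1},\log\delta^{-1}) = \poly(n,m,\eps^{-1},\log\delta^{-1})$. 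I would also fold in here the approximate rejection-sampling step (in the style of \cite{jerrum1986random}) so that the statistical error of the per-symbol sampling distribution accumulates linearly rather than geometrically across the $k$ symbols, which is precisely what keeps the final exponent polynomial in $nm$.

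The main obstacle — and the reason the statement quantifies over \emph{all} states $\st$ and \emph{all} partial trees $t$ of full size $i$ at once, with the unusual success probability $1-\delta^{nm}$ — is that the single family of sketches $\{\aT_i(\str^\ell)\}_{\str\in S,\ell<i}$ is reused across every invocation of $\EstimatePartitionSize$, and even within one invocation a given sketch element is reused for many logical samples; so the samples the succinct-NFA FPRAS ``sees'' are neither independent nor refreshed. I would handle this as announced in Section~\ref{sec:tech}: arrange for $\EstimatePartitionSize$ to be \emph{deterministic} given $\big(\{\aT_i(\str^\ell)\},\{\aN(\str^\ell)\}\big)$, by replacing every Monte-Carlo fraction inside the generic succinct-NFA algorithm (including the ones used to build its internal string-sketches) with the exact empirical fraction against the fixed sketch, and then isolate a deterministic event $\mathcal{E}$ on the sketches: for every $\str\in S$, every $\ell<i$, and every set $V$ in the bounded family of ``reference sets'' that any execution of the subroutine can possibly query, $\big|\,|\aT_i(\str^\ell)\cap V|/|\aT_i(\str^\ell)| - |V|/N(\str^\ell)\,\big|\le \eps/\poly(nm)$. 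Conditioned on $\mathcal{E}$, the correctness analysis of Theorem~\ref{thm:progmain} goes through verbatim for every $\st$ and $t$ simultaneously, since it touches the sketches only through such fractions; and $\pr{\neg\mathcal{E}}$ is bounded by a Hoeffding inequality for each fixed $(\str,\ell,V)$ — deviation probability $\exp(-\Omega(\eps^2|\aT_i(\str^\ell)|/\poly(nm)))$ — together with a union bound over the at most $2^{\poly(nm)}$ reference sets, which with the stated sketch size $O(\log^2(\delta^{-1})(nm)^{13}/\eps^5)$ is at most $\delta^{nm}$. The delicate point is pinning down the reference family $V$ so that it simultaneously covers everything the algorithm inspects and stays singly exponential in $\poly(nm)$ and independent of $t$ (it ultimately reduces to unions indexed by subsets of the tree-automaton states $S$ and by sizes $\ell<i$); this, together with the exact error bookkeeping down to the constant $(4nm)^{17}$, is where the bulk of Section~\ref{sec:partition} is spent.
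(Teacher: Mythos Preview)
Your proposal is correct and follows the same overall structure as the paper: reduce $N(\st^i,t)$ to $|\cL_k(\mathcal{N})|$ for a succinct NFA via the main-path construction (this is Lemma~\ref{lem:treetograph}), instantiate the oracle of Definition~\ref{def:prop} with the precomputed sketches $\aT_i(\str^\ell)$ and estimates $\aN(\str^\ell)$, and invoke Theorem~\ref{thm:progmain}. Your identification of the error parameters and of the bijection details is accurate.

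Where you diverge is in handling the sketch-reuse issue. You propose---following the informal description in Section~\ref{sec:tech}---to make $\EstimatePartitionSize$ fully deterministic in the sketches by replacing every Monte-Carlo fraction inside the succinct-NFA subroutine with an empirical fraction against the fixed sketch, and then to isolate a single deterministic sketch event $\mathcal{E}$ covering all ``reference sets'' the algorithm might query. The paper's formal proof does \emph{not} carry this out. Instead it treats Theorem~\ref{thm:progmain} as a black box, sets its failure parameter to $\delta_0 = \delta^{(nm)^3}$, and simply union-bounds over all state/partial-tree pairs $(\st,t)$: it crudely bounds the number of partial trees of full size $i$ by $n\cdot(nm)^{(nm)^2}$ (via Cayley's formula and a labeling count), and observes that $(nm)\cdot(nm)^{(nm)^2}\cdot\delta^{(nm)^3} \le \delta^{nm}$. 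Since the union bound does not require independence, the shared sketch randomness across different $(\st,t)$ is harmless, and no determinization is needed.

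Your route would also work, but it is harder to execute than you suggest: the internal string-sketches $\aW(\cdot)$ of the succinct-NFA algorithm are built by the genuinely randomized procedure $\SampleFromVertex$ (with per-symbol accept/reject steps and the estimates $\rho_\beta$ of intermediate failure probabilities), and making these deterministic in the outer tree-sketches is not just ``replace a fraction''---you would have to rewrite that inner sampler substantially and show that your reference family stays $2^{\poly(nm)}$-bounded. The paper's direct union bound over $(\st,t)$ sidesteps all of this at the cost of a crude but sufficient count of partial trees.
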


Notice that the guarantee of Lemma \ref{lem:estimatepart} is that  \EstimatePartitionSize always runs in polynomial time. Conditioned on the success of Lemma \ref{lem:estimatepart}, lines $1-9$ of the algorithm \Sample
always run in polynomial time (i.e, with probability $1$). Thus, conditioned on Lemma \ref{lem:estimatepart}, only on line $10$ of \Sample is it possible for the algorithm to output \FAIL. In the following Lemma, we demonstrate that, given the success of Lemma \ref{lem:estimatepart}, the $\Sample$ algorithm produces \textit{truly} uniform samples, and moreover the probability of outputting $\FAIL$ is at most a constant. Notice that this implies that, after running the inner loop of $\Sample$ a total of $O(\log \delta^{-1})$ times, we will obtain a sample with probability at least $1-\delta$. 

%Notice that $\delta$ is the parameter that controls the success probability of the \EstimatePartitionSize $\>$ procedure. In the statement of Lemma \ref{lem:estimatepart}, we ask that $\delta < 1/2$. This is due to technical reasons in some proofs, but notice that it can be safely assumed, because a smaller $\delta$ will yield a better success probability. So as long as it does not cause an exponential runtime blowup, it is never a problem to assume an upper bound for $\delta$, as we do now and in subsequent results.

\begin{lemma}\label{lem:sampmain}
	 Given $\delta \in (0, 1/2)$, $\{\aT_i(\str^j)\}_{\str \in S, j < i}$ and $\{\aN(\str^j)\}_{\str \in S, j \leq i}$, suppose that the procedure 
	 \[\EstimatePartitionSize\left(t,\,\st^i,\,\left\{\aT_i(\str^j)\right\}_{\str \in S, j < i},\, \left\{\aN(\str^j)\right\}_{\str \in S, j \leq i},\, \eps,\, \delta\right)\]
	  produces an estimate $\aN(\st^i, t)$ with $\aN(\st^i, t) = (1 \pm (4nm)^{17}\eps) N(\st^i, t)$ for every partial tree $t$ of size $i$ and state $\st^i$.
	  
	 %\begin{sloppypar}
Further suppose that Property $1$ holds for all $j \leq i$ (see page \pageref{prop-one}), and $n \geq 2$.
	Then conditioned on not outputting \FAIL, each call to the procedure 
	\[\Sample\left(\st^i,\,\left\{\aT_i(\str^j)\right\}_{\str \in S, j < i},\, \left\{\aN(\str^j)\right\}_{\str \in S, j \leq i},\, \eps,\, \delta\right)\]
	%\Sample$(\st^i)$ 
	produces an independent, uniform sample $t \sim T(\st^i)$. Moreover, the probability that a given call %to %\Sample
	%\Sample$(\st^i)$ 
	outputs \FAIL\ is at most $3/4$, and the 
	number of times \EstimatePartitionSize\ is called in each iteration of the loop is at most $nm$.
%	\end{sloppypar}
\end{lemma}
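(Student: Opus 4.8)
The plan is to analyze \Sample\ run by run and compute, exactly, the probability that it outputs any prescribed tree $t^\star\in T(\st^i)$; the three assertions then drop out by summing over $t^\star$. Throughout, treat the estimates $\aN(\st^i,t)$ returned by \EstimatePartitionSize\ as fixed values satisfying $\aN(\st^i,t)=(1\pm(4nm)^{17}\eps)N(\st^i,t)$ for every partial tree $t$ of size $i$ (if \EstimatePartitionSize\ is internally randomized, condition on its coins; the conclusion below is unchanged), and assume $N(\st^i)>0$, the case $N(\st^i)=0$ being vacuous since then $\aN(\st^i)=0$ by Property~$1$ and no sample is produced, per the proviso on tree sizes. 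First I would record the loop invariants: each immediate extension turns one hole into one $\Sigma$-labelled node (adding two holes of strictly smaller size, or none when that hole had size $1$), so $\virtsize{t}$ stays equal to $i$ while the number of $\Sigma$-labelled nodes increases by one per step; hence the loop runs exactly $r=|t^\star|=i\le n$ iterations and ends at a complete tree. Moreover \Sample\ only moves to an extension $t_k$ with $\aN(\st^i,t_k)>0$, which by the estimate guarantee is equivalent to $N(\st^i,t_k)>0$; since $\{T(\st^i,t')\}_{t\hookrightarrow_u t'}$ partitions $T(\st^i,t)$ --- so $\sum_{t\hookrightarrow_u t'}N(\st^i,t')=N(\st^i,t)$ --- an induction shows every partial tree $t$ visited has $N(\st^i,t)>0$, so the denominators $\sum_{k'}\aN(\st^i,t_{k'})$ in line~6 are positive, the loop never stalls, and the output $t^\star$ lies in $T(\st^i)$.

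Now fix $t^\star\in T(\st^i)$. Because the hole expanded at each step is determined (smallest label, prescribed tie-break) and, given the current partial tree $t_s$, there is a unique immediate extension of $t_s$ over that hole that is a prefix of $t^\star$ under $\hookrightarrow^*$, a run can produce $t^\star$ only by traversing one specific sequence $t_0\hookrightarrow t_1\hookrightarrow\dots\hookrightarrow t_r=t^\star$, where $t_0$ is the one-hole partial tree of size $i$ (so $T(\st^i,t_0)=T(\st^i)$). The run follows this sequence with probability exactly
\[
\varphi^\star \;=\; \prod_{s=0}^{r-1}\frac{\aN(\st^i,t_{s+1})}{\sum_{t_s\hookrightarrow_{u_s}t'}\aN(\st^i,t')},
\]
which is precisely the value of the variable $\varphi$ held by \Sample\ when $t=t^\star$; hence \Sample\ outputs $t^\star$ with probability $\varphi^\star\cdot\frac{1}{2\varphi^\star\aN(\st^i)}=\frac{1}{2\aN(\st^i)}$, a quantity independent of $t^\star$. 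The point requiring care is that the return probability $\frac{1}{2\varphi^\star\aN(\st^i)}$ is at most $1$: by the partition identity and the estimate guarantee each factor of $\varphi^\star$ equals $\frac{1\pm(4nm)^{17}\eps}{1\pm(4nm)^{17}\eps}\cdot\frac{N(\st^i,t_{s+1})}{N(\st^i,t_s)}$, so the product telescopes to $\varphi^\star=\big(\frac{1\pm(4nm)^{17}\eps}{1\pm(4nm)^{17}\eps}\big)^{r}\cdot\frac{N(\st^i,t^\star)}{N(\st^i)}$ with $N(\st^i,t^\star)=1$ (as $t^\star$ is complete); since $\eps<(4nm)^{-18}$ and $r\le n$, the error factor stays within, say, $[\frac56,\frac32]$, and since $\aN(\st^i)=(1\pm n\eps)N(\st^i)$ by Property~$1$, this gives $\varphi^\star\ge\frac{1}{2\aN(\st^i)}$ with room to spare.

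Granting that \Sample\ outputs each $t^\star\in T(\st^i)$ with probability exactly $\frac{1}{2\aN(\st^i)}$, the conclusions follow. Summing over the $N(\st^i)$ trees in $T(\st^i)$ gives $\Pr[\Sample\text{ does not output }\FAIL]=\frac{N(\st^i)}{2\aN(\st^i)}\ge\frac{1}{2(1+n\eps)}>\frac14$, so \FAIL\ is returned with probability at most $3/4$; and conditioned on not returning \FAIL, the output is uniform on $T(\st^i)$. Distinct calls are independent: each invocation of \Sample\ (with the same precomputed $\{\aT_i(\str^j)\}$ and $\{\aN(\str^j)\}$) uses fresh independent internal coins, and the analysis above shows the output law of a single call depends only on those coins. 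Finally, within one loop iteration \EstimatePartitionSize\ is invoked once per element of $\ext(t,u)$, and $|\ext(t,u)|\le n|\Sigma|\le nm$ because every symbol of $\Sigma$ appears in $\Delta$, so $|\Sigma|\le\|\Delta\|=m$.

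I do not expect a single sharp obstacle; the proof is essentially an exact-probability bookkeeping argument. The one genuinely delicate point is verifying $\frac{1}{2\varphi\aN(\st^i)}\le1$ on every run, which is exactly where the slack between the $(4nm)^{17}$ error of \EstimatePartitionSize\ and the global bound $\eps<(4nm)^{-18}$ is spent (through the telescoping estimate for $\varphi^\star$), and which also requires being scrupulous about what is random (the partition choice in line~7, and possibly the internals of \EstimatePartitionSize) versus what is fixed (the sketches $\aT_i(\str^j)$, the estimate $\aN(\st^i)$, and the target $t^\star$), so that the cancellation $\varphi^\star\cdot\frac{1}{2\varphi^\star\aN(\st^i)}$ is legitimate.
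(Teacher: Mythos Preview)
Your proposal is correct and follows essentially the same approach as the paper's proof: fix a target tree $t^\star$, observe that the loop traces a unique sequence of partial trees toward $t^\star$, telescope the product $\varphi^\star$ against the true ratios $N(\st^i,t_{s+1})/N(\st^i,t_s)$ using the $(1\pm(4nm)^{17}\eps)$ guarantee and $\eps<(4nm)^{-18}$, and then use the exact cancellation $\varphi^\star\cdot\frac{1}{2\varphi^\star\aN(\st^i)}=\frac{1}{2\aN(\st^i)}$ to obtain uniformity and bound the \FAIL\ probability. You are in fact slightly more careful than the paper in two places: you explicitly verify that the acceptance probability $\frac{1}{2\varphi^\star\aN(\st^i)}$ is at most $1$ (the paper only checks the lower bound), and you dispose of the degenerate case $N(\st^i)=0$ and argue that the loop never stalls; the bound on calls to \EstimatePartitionSize\ and the independence claim are identical to the paper's.
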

%\alberto{Changed $(2nm)^5\eps$ to $(4nm)^5\eps$ to be consistent with the previous lemma. I think it holds easily for $m\geq 32$, but probably less than that. I can do the detailed calculation if needed (maybe for an appendix).}
%\marcelo{I replaced $T$ by $\Phi$ to denote the runtime of \EstimatePartitionSize. $T$ is used too many times, we need a letter that is not used anywhere else.}
%\cristian{Here the probability of failing in each called to {\tt EstimatePartitionSize} is not included in the analysis. Please check.}
\begin{proof}
	Fix a tree $t \in T(\st^i)$. Then there is a unique sequence of partial trees $i = t_0 ,t_1,t_2,\dots,t_i = t$ such that $T(\st^i) = T(\st^i, t_0) \supseteq T(\st^i, t_1) \supseteq T(\st^i, t_2) \supseteq \dots \supseteq T(\st^i, t_i) = \{t\}$, which gives a sequence of nested partitions which could have been considered in the call \Sample$(\st^i,\,\{\aT_i(\str^j)\}_{\str \in S, j < i},$ $\{\aN(\str^j)\}_{\str \in S, j \leq i},\, \eps,\, \delta)$. 
	%\Sample$(\st^i)$. 
	For $j \in [i]$, let $p_j$ be the true ratio of $\frac{N(\st^i, t_{j})}{N(\st^i, t_{j-1})}$, which is the probability that we should have chosen partition $T(\st^i, t_j)$ conditioned on being in partition $T(\st^i, t_{j-1})$. Note that $\prod_{j=1}^i p_j = \frac{1}{|T(\st^i)|} = \frac{1}{N(\st^i)}$. Now assuming \EstimatePartitionSize\  always returns an estimate with at most $(1 \pm (4nm)^{17}\eps)$-relative error, it follows that conditioned on being in partition $T(\st^i, t_{j-1})$, we chose the partition $T(\st^i, t_{j})$ with probability $\tilde{p}_j = (1 \pm (4nm)^{17} \eps )p_j$. Thus the probability that we choose $t$ at the end of the loop in step $2$ of the \Sample\ procedure is:
	\begin{multline*}
	\varphi \ = \ \prod_{j=1}^i \tilde{p}_j
	\  = \ (1 \pm (4nm)^{17}\eps)^i \prod_{j=1}^i p_j
	\ = \ (1\pm 1/n) \prod_{j=1}^i p_j
	\ = \ (1\pm 1/n) \frac{1}{N(\st^i)} 
	\ = \\ (1\pm 1/n) (1 \pm i \eps) \frac{1}{\aN(\st^i)}
	\ = \ (1\pm 1/n) (1 \pm 1/(4n)) \frac{1}{\aN(\st^i)} 
	\ = \ (1 \pm 2/n) \frac{1}{\aN(\st^i)}.
	\end{multline*}
	Notice that we use the fact that $\eps < (4nm)^{-18}$ and that Property 1 holds for all $j \leq i$. 
	%in the second line. 
	The probability that we do not output \FAIL\ can be bounded by
	 \begin{eqnarray*}
	 \frac{1}{2\varphi \aN(\st^i)} \ =\ \frac{1}{2 \aN(\st^i)}\prod_{j=1}^i \frac{1}{\tilde{p}_j} \ \geq\  \frac{1}{2 \aN(\st^i)} \frac{\aN(\st^i)}{(1 + 2/n)} \ \geq \ \frac{1}{2(1 + 2/n)} \ \geq\  1/4
	 \end{eqnarray*}
	 since $n \geq 2$, 
	 which completes the proof that the probability that the call 
	 \Sample$(\st^i,\,\{\aT_i(\str^j)\}_{\str \in S, j < i},$ $\{\aN(\str^j)\}_{\str \in S, j \leq i},\, \eps,\, \delta)$ 
	 %\Sample$(\st^i)$ 
	 outputs \FAIL\ is at most $3/4$. For the uniformity claim, note that we accept $t$ at the end with probability $\varphi \cdot \frac{1}{2\varphi \aN(\st^i)} = \frac{1}{2 \aN(\st^i)}$, which is indeed uniform conditioned on not outputting \FAIL, as it does not depend on $t$.
	Finally, notice that \EstimatePartitionSize\ is called at most $(t(u) - 2)\cdot |\Sigma| \leq nm$ times in each iteration of the loop.
%		$\sum_{j=1}^i(j-2)|\Sigma|\leq nm$ times.
	%\alberto{Shouldn't the bound be $n^2m$?}
\end{proof}

%\rajesh{Add overall algorithm box}
%\alberto{I include an overall algorithm box. It is a first version, so let's discuss where to put it and any fixes that may be relevant. It may also be useful to explain somewhere which lemma/proposition/theorem allows us to do each step, and with which runtime.}

\paragraph{An FPRAS for $\stta$ and $\sta$.}
We show in Algorithm \ref{alg:fpras} a fully polynomial-time approximation schema for $\stta$, which puts together the different components mentioned in this section. The correctness of this algorithm is shown in the following theorem. %Notice that the algorithm receives a parameter $\delta$, which controls the success probability. Setting $\delta=1/4$ would constitute an FPRAS as defined in Section \ref{sec:preliminaries}.

\phantom{please do not remove, this is solve a problem with the footnote of the algorithm}

\begin{algorithm}[!h]
	\SetKwInOut{Input}{Input}
	\caption{\astta$(\cT, 0^n, \eps, \delta)$ } \label{alg:fpras}
	Set $m \leftarrow |\cT|$\\
	\smallskip
	\If{$n < 2$ or $m<3$} {
	\smallskip
Edge case,  $|\cL_n(\cT)|$ can be exactly computed  (Remark \ref{rem:size})
		%	Calculate $N=|\{t \in \cL(\cT) \mid |t| = 1\}|$ via brute force and \Return{$N$}. \smallskip
	}
%	\smallskip
%	\If{$m < 3$} {
	%\smallskip
	%	Transform $\cT$ into a deterministic bottom-up tree automaton $\cD$.\\
	%	\smallskip
	%	Calculate $N=|\{t \in \cL(\cD) \mid |t| = n\}|$ via dynamic programming and \Return{$N$}. \smallskip
	%}
	%\smallskip
	Construct the tree automaton $\ocT$\\
	\smallskip
	Set $\eps \leftarrow  \min\{\eps, 1/(4mn)^{18}-1\}$ \\
	%\smallskip
	%Set $\gamma \leftarrow 2n + \log(1/\delta)$ \\
	\smallskip
	Set $\gamma = \log(1/\delta) + 2n$  \label{alg:fpras:gamma}\\
	\smallskip
%	Set $\delta' \leftarrow \exp(-n^{20}(2n+\log(1/\delta)))$ \\
%	\smallskip
	Set $\alpha \leftarrow O(\log^2(1/\delta) (nm)^{13} /\eps^5)$,\ $\textit{total} \leftarrow O(\alpha)$ \\
	\smallskip
	For each $\st\in \St$, compute $N(\st^1)$ exactly and set $\aN(s^1) \leftarrow N(s^1)$ \label{alg:fpras:ns1} \\
	\smallskip
	For each $\st\in \St$, create set $\aT_2(\st^1)$ with $\alpha$ uniform, independent samples from $T(\st^1)$ \label{alg:fpras:ts1} \\
	\smallskip
	\For{$i=2, \dots, n$}{
		\smallskip
		For each $\st\in\St$, compute $\aN(\st^i)$ such that $\pr{\aN(\st^i)=(1 \pm i\eps)N(\st^i)}\geq 1-\exp(-\gamma n^{20})$ \\
		\smallskip
		\If{$i < n$}{
%		\smallskip
		\For{each $\st\in\St$ and $j=1, \dots, i$}{
			\smallskip
			Set $\aT_{i+1}(\st^j) \leftarrow \emptyset$,  $\textit{counter} \leftarrow 1$\\
			\smallskip
			\While{$|\aT_{i+1}(\st^j)|<\alpha$ and $\textit{counter} \leq \textit{total}$}{
				\smallskip
				Call 
				%O(1)$ times 
				the procedure 
				\Sample$\big(\st^j,\{\aT_i(\str^k)\}_{\str \in S, k < j}, \{\aN(\str^k)\}_{\str \in S, k \leq j},\, \eps,\, 2^{-2n}\delta\big)$ \label{alg:fpras:sample}\\
				\smallskip
				If this procedure returns a tree $t$, then 
				%at least one attempt did not fail, call $t$ the first sample obtained and 
				set $\aT_{i+1}(\st^j)\leftarrow \aT_{i+1}(\st^j)\cup\{t\}$\smallskip\\
			Set $\textit{counter} \leftarrow \textit{counter} +1$\smallskip
			}\smallskip
			\If{$|\aT_{i+1}(\st^j)|<\alpha$}{\smallskip
			\Return{\FAIL} \label{alg:fpras:fail}\smallskip
			}\smallskip
		}\smallskip
		}\smallskip
	} 
	\smallskip
	\Return{$\aN(\sinit^n)$}.
\end{algorithm}

\phantom{please do not remove, this is solve a problem with the footnote of the algorithm}

\begin{theorem}\label{theo:fpras-bta}
	Let $\eps,\delta \in (0,1/2)$, $n \geq 1$, $\cT = (\St, \Sigma, \Delta, \sinit)$ be a tree automaton, and $m = \|\Delta\|$ be the size of $\cT$. Then the call \astta$(\cT, 0^n, \eps, \delta)$\footnote{Here we write $0^n$ as the unary representation of $n$. Since our algorithms are polynomial in $n$, the algorithm is polynomial in the size of the input. } returns, 
	with probability at least $1-\delta$,
	a value $\aN$ such that 
	%$\aN = (1 \pm \eps)|\{t \in \cL(\cT) \mid |t| = n\}|$.
	$\aN = (1 \pm \eps)|\cL_n(\cT)|$.
	Moreover, the runtime of the algorithm \astta\ is~$\poly(n,m,1/\eps,\log(1/\delta))$.
\end{theorem}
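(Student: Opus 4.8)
The plan is to prove correctness by induction on the level $i$, showing that after the $i$-th iteration of the outer loop of \astta, with high probability Properties~1 and~2 hold for level $i$. First I would handle the edge cases $n<2$ or $m<3$ via Remark~\ref{rem:size}, so assume $n\geq 2$, $m\geq 3$, and that $\eps$ has been shrunk below $(4nm)^{-18}$ as in the algorithm. The base case is lines~\ref{alg:fpras:ns1}--\ref{alg:fpras:ts1}: for level $1$ the sets $T(\st^1)$ are exactly the trees of size $1$ accepted from $\st$, so $N(\st^1)$ and uniform sampling are computed exactly, giving Property~1 (trivially, with relative error $0 \leq \eps$) and Property~2 (a zero-failure sampler) for $i=1$.

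For the inductive step, assume Properties~1 and~2 hold for all levels $j<i$. First, the sketches $\aT_i(\str^j)$ for $j<i$ built during the previous iteration consist of $\alpha = O(\log^2(1/\delta)(nm)^{13}/\eps^5)$ independent uniform samples from $T(\str^j)$; this is exactly the hypothesis of Lemma~\ref{lem:estimatepart}, so with probability $1-\delta^{nm}$ the deterministic event on the sketches holds, under which every call to \EstimatePartitionSize\ during this iteration returns a $(1\pm(4nm)^{17}\eps)$-approximation and runs in polynomial time. Conditioned on this, Lemma~\ref{lem:sampmain} gives that each \Sample\ call on a state $\st^j$ ($j\leq i$) produces a truly uniform sample from $T(\st^j)$ or \FAIL, with failure probability at most $3/4$; hence after $\textit{total}=O(\alpha)$ attempts — more precisely, driving the failure probability of collecting $\alpha$ samples below $2^{-2n}\delta$ by a Chernoff bound, which is why $\gamma$ is set to $\log(1/\delta)+2n$ on line~\ref{alg:fpras:gamma} — the while loop fills each $\aT_{i+1}(\st^j)$ with $\alpha$ fresh independent uniform samples except with probability at most $2^{-2n}\delta$ per state, giving Property~2 for level $i$ (the sampler is simply: run \Sample\ repeatedly until it succeeds). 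For Property~1 at level $i$, we invoke Proposition~\ref{prop:prop1}: since Properties~1 and~2 hold for $j<i$, \astta\ computes $\aN(\st^i) = (1\pm i\eps)N(\st^i)$ with the failure probability driven down to $\exp(-\gamma n^{20})$ by repetition.

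Finally I would collect the error and union bound. The output is $\aN(\sinit^n)$, which by Property~1 at level $n$ satisfies $\aN(\sinit^n) = (1\pm n\eps)N(\sinit^n)$; since $n\eps \leq \eps_{\text{orig}}$ after rescaling (or rather, since we only decreased $\eps$ and the final relative error $n\eps < n(4nm)^{-18} < \eps_{\text{orig}}$ when the user's $\eps$ was not already tiny — here one just notes $N(\sinit^n) = |\cL_n(\cT)|$ by the construction of $\ocT$), we get the claimed $(1\pm\eps)|\cL_n(\cT)|$. For the total failure probability: over all $n$ levels and all $\leq m$ states, the bad events are (i) the sketch event of Lemma~\ref{lem:estimatepart} failing, probability $\leq n\delta^{nm}$; (ii) some \aN(\st^i) estimate failing, probability $\leq nm\exp(-\gamma n^{20})$; (iii) some sketch $\aT_{i+1}(\st^j)$ failing to fill, probability $\leq n^2 m \cdot 2^{-2n}\delta$. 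Each is at most $\delta/3$ for the chosen parameters, so the algorithm succeeds with probability $\geq 1-\delta$. The runtime is polynomial since each of the $O(nm)$ invocations of \Sample\ runs the loop $O(n)$ times, each iteration making $\leq nm$ calls to \EstimatePartitionSize, each polynomial-time by Lemma~\ref{lem:estimatepart}, and $\alpha,\textit{total} = \poly(n,m,1/\eps,\log(1/\delta))$; the estimation step is polynomial by Proposition~\ref{prop:prop1}.

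The main obstacle — and the reason Lemma~\ref{lem:estimatepart} is the technical heart of the paper — is that the sketches $\aT_i(\str^j)$ are \emph{reused} across the many \EstimatePartitionSize\ calls within an iteration (and inside the succinct-NFA subroutine), so the samples are not independent across calls; the resolution, already packaged into Lemma~\ref{lem:estimatepart}, is that one conditions on a single deterministic high-probability property of the sketch collection under which \emph{every} execution path gives a good estimate, decoupling the analysis. At the level of this theorem, the only remaining subtlety is bookkeeping the failure probabilities so that the $\delta^{nm}$ and $\exp(-\gamma n^{20})$ terms survive the union bound over levels and states, and checking that the geometric blowup of relative error from $i\eps$ to $(i+1)\eps$ per level stays within the $(4nm)^{-18}$ budget — both of which are routine given the parameter choices hard-coded into Algorithm~\ref{alg:fpras}.
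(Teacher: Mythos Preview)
Your proposal is correct and follows essentially the same approach as the paper: induction on the level $i$, establishing Properties~1 and~2 at each step via Proposition~\ref{prop:prop1}, Lemma~\ref{lem:estimatepart}, and Lemma~\ref{lem:sampmain}, then reading off the final estimate $\aN(\sinit^n)$. The only real difference is in the failure-probability bookkeeping: the paper carries an explicit inductive bound $\pr{\bigwedge_{j\leq i}(\mathcal{E}_j^1\wedge\mathcal{E}_j^2)}\geq 1-2^{-\gamma+2i}$ through the levels, whereas you do a single union bound at the end over the three failure modes; both yield the same conclusion once $\gamma=\log(1/\delta)+2n$.
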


%From henceforth, let $\mathcal{E}_i$ be the event that \texttt{EstimatePartitionSize}$(t,v^i,\{Q_i(u^j)\}_{u \in V, j < i}, \{R_{u^j}\}_{u \in V, j < i} )$ gives a estimate \[ R_t = (1 \pm  n m^2\eps)\left|U_{v^i}|_t\right|		\] for every $v \in V$ and for every partial tree $t$ of size $i$. 

%\cristian{This proof must be rewritten with the new notation!}

\begin{proof}
	Set $\alpha = O(\log^2(1/\delta) (nm)^{13} / \eps^5)$. 
	For every $j \in [n]$, let $\mathcal{E}_j^1$ denote the event that Property $1$ holds for level $j$, and similarly define $\mathcal{E}_j^2$ for Property $2$. Set $\gamma = \log(1/\delta)$. We prove inductively that 
	\begin{eqnarray*}
	\bpr{\bigwedge_{j \leq i}\left( \mathcal{E}_j^1 \wedge \mathcal{E}_j^2\right)} & \geq & 1- 2^{-\gamma +2i}
	\end{eqnarray*}
	 for each $i \in[n]$. Since $N(s^1)$ is computed exactly in step \ref{alg:fpras:ns1} of \astta$(\cT, 0^n, \eps, \delta)$ and the size of each tree in $T(s^1)$ is 1, 
	 %the set sizes are constant for $i=1$,
	  the base case $i = 1$ trivially holds. Now at an arbitrary step $i \geq 2$, suppose $\mathcal{E}_j^1 \wedge \mathcal{E}_j^2$ holds for all $j < i$. By considering $\exp(-\gamma n^{20}  )$ as the value for the parameter $\delta$ in Proposition \ref{prop:prop1},
	  % setting the $\delta = \exp(-n^{20} \gamma )$ in the Proposition, 
	  it follows that $\mathcal{E}_i^1$ holds with probability at least $1-\exp(-\gamma n^{20})$, and the runtime to obtain Property 1 is $\poly(n,m, \frac{1}{\eps}, \gamma)$.  Thus,
	   \begin{eqnarray*}
	   \bpr{\mathcal{E}_i^1 \; \big| \; \bigwedge_{j < i}\left( \mathcal{E}_j^1 \wedge \mathcal{E}_j^2\right)}&  \geq & 1-\exp(- \gamma n^{20}). 
	\end{eqnarray*}
We now must show that $\mathcal{E}^2_i$ holds -- namely, that we can obtain uniform samples from all sets $T(s^i)$. By Lemma \ref{lem:estimatepart}, if we can obtain fresh uniform sample sets $\aT_i(s^j)$ of $T(s^j)$ for each $\st \in S$ and $j < i$, each of size $\alpha$, then with probability at least $1-2^{-\gamma nm}$, we have that for every partial tree $t'$ of size $i$ (that is, $\virtsize{t'} = i$) and state $\st \in S$, the procedure \EstimatePartitionSize$(t',\, s^i,\, \{\aT_i(s^j)\}_{\st \in S, j < i},$ $\{\aN(s^j)\}_{\st \in S, j \leq i},\, \eps,\, \delta )$  produces an estimate $\aN(s^i, t')$ such that $\aN(s^i, t') = (1 \pm (4nm)^{17}\eps)N(s^i,t')$. Since we have to call \EstimatePartitionSize\ at most $inm$ times (see Lemma \ref{lem:sampmain}), after a union bound we get that the conditions of Theorem \ref{lem:sampmain} are satisfied with probability at least $1-2^{-\gamma}$, and it follows that we can sample uniformly from the set $T(s^i)$ for each $\st \in S$ in polynomial time.
	%\marcelo{Do we conclude that the the conditions of Theorem \ref{lem:sampmain} are satisfied with probability at least $1-2^{-\gamma}$ just by using that $1-2^{-\gamma nm} \geq 1 - 2^{-\gamma}$.}
	
	It remains to show that we can obtain these fresh sample sets $\aT_i(\st^j)$ of $T(\st^j)$ for each $\st \in S$ and $j < i$ in order to condition on the above. But the event $\mathcal{E}_{i-1}^2$ states precisely that can indeed obtain such samples in $\poly(n,m,\frac{1}{\eps},\gamma)$ time per sample. Thus the conditions of the above paragraph are satisfied, so we have 
	\begin{eqnarray*}
	\bpr{\mathcal{E}_i^2 \; \big| \; \mathcal{E}_i^1 \wedge  \bigwedge_{j < i}\left( \mathcal{E}_j^1 \wedge \mathcal{E}_j^2\right)} & \geq & 1-2^{- \gamma}. 
	\end{eqnarray*}
	Therefore, we conclude that
	\begin{eqnarray*}
	\bpr{\mathcal{E}_i^1 \wedge \mathcal{E}_i^2 \; \big| \; \bigwedge_{j < i}\left( \mathcal{E}_j^1 \wedge \mathcal{E}_j^2\right)} \ \geq \ 1-2^{- \gamma} - \exp(-\gamma n^{20}) + 2^{- \gamma}\exp(-\gamma n^{20}) \ \geq \ 1 - 2^{-\gamma + 1}
	\end{eqnarray*}
	Hence, by induction hypothesis:
	\begin{eqnarray*}
	\bpr{\bigwedge_{j \leq i}\left( \mathcal{E}_j^1 \wedge \mathcal{E}_j^2\right)} &=& \bpr{\mathcal{E}_i^1 \wedge \mathcal{E}_i^2 \; \big| \; \bigwedge_{j < i}\left( \mathcal{E}_j^1 \wedge \mathcal{E}_j^2\right)} \cdot \bpr{\bigwedge_{j <i}\left( \mathcal{E}_j^1 \wedge \mathcal{E}_j^2\right)} \\
	&\geq& (1-2^{-\gamma+1})(1-2^{-\gamma + 2(i-1) })\\
	&=& 1-2^{-\gamma+1}   -  2^{-\gamma + 2i -2} + 2^{-2 \gamma + 2i -1}\\
	&\geq& 1-2^{-\gamma+1}   -  2^{-\gamma + 2i -1}\\
	&\geq& 1-2^{-\gamma+2i-1}   -  2^{-\gamma + 2i -1}\\
	&=&  1-2^{-\gamma+2i}
	\end{eqnarray*}
	which completes the inductive proof.
		%Rescaling $\gamma$ by a factor of $2n$, that is, 
	Redefining $\gamma = \log(1/\delta) + 2n$ (see Line \ref{alg:fpras:gamma} of Algorithm \ref{alg:fpras:sample}) and considering $2^{-2n} \delta$ when using Lemma \ref{lem:estimatepart} (see Line \ref{alg:fpras:sample} of Algorithm \ref{alg:fpras:sample}), we obtain that the success probability of the overall algorithm is $1-\delta$ as needed. 
	
	For runtime, note that by Lemma \ref{lem:sampmain}, the expected number of trials to obtain $\alpha$ samples $\aT_i(s^j)$ for each $s^j \in \overline{S}$ and $i \in [n]$ is $O(\alpha)$, and thus is $O(\alpha)$ with probability $1-2^{-\alpha} > 1-2^{-mn\gamma}$ by Chernoff bounds.
	%(note that we can safely fail the entire algorithm if this runtime is exceeded). 
	That is, with $O(\alpha)$ trials, we have probability at least $1-2^{-mn\gamma}$ of not failing in step \ref{alg:fpras:fail} of Algorithm \ref{alg:fpras}. Since we go through that step at most $O(n^2m)$ times during the whole run of the algorithm, that means that the overall probability of returning \FAIL$\>$ can be bounded by $1-2^{\gamma}=1-\delta$, which is a loose bound but enough for our purposes. Moreover, by Lemma \ref{lem:sampmain}, the runtime of each sampling trial in step \ref{alg:fpras:sample} of Algorithm \ref{alg:fpras} is polynomial in $n$, $m$, $1/\eps$ and $\log(1/(2^{-2n}\delta)) = \gamma$. It follows that the entire algorithm runs in $\poly(n,m,1/\eps,\log(1/\delta))$ time, which completes the proof.
%	\marcelo{What is $s$ in $1-2^{-s}$?}
%	\rajesh{$s$ was the variable I was originally using for the size of the set $\wt{T}_i(s^j)$, before we changed $s$ to refer to states. I believe that it should be changed to $\alpha$ now, and I have done this above.}
%	\marcelo{We are assuming that 2 is the base of $\log(x)$, right?}
%	\rajesh{either way would work for this, but lets say $2$}
	\end{proof}

We now provide our main theorem for uniformly sampling from tree automata. The notion of sampling we get is in fact stronger than the definition of a FPAUS as defined in Section~\ref{sec:preliminaries}. Specif
\begin{theorem}\label{thm:samplemain}
	Let $\delta \in (0,1/2)$, $n \geq 1$, $\cT = (\St, \Sigma, \Delta, \sinit)$ be a tree automaton, and $m = \|\Delta\|$ be the size of $\cT$. Then there is a sampling algorithm $\mathcal{A}$ and a pre-processing step with the following property. The preprocessing step runs in $\poly(n,m,\log \delta^{-1})$ time, and with probability $1-\delta$ over the randomness used in this pre-processing step,\footnote{Note that we cannot detect if the event within the preprocessing step that we condition on here fails, which occurs with probability $\delta$.} each subsequent call to the algorithm $\mathcal{A}$ runs in time $\poly(n,m,\log \delta^{-1})$ time, and returns either a uniform sample $t \sim \cL_n(\cT)$ or \FAIL. Moreover,  if $\cL_n(\cT) \neq \emptyset$, the probability that the sampler returns \FAIL\ is at most $1/2$. 
	
Additionally, this implies that there is an FPAUS for $\cL_n(\cT)$ as defined in Section~\ref{sec:preliminaries}.

%,\footnote{Note that the runtime has exponential tails: the probability that the runtime is a factor of $K$ larger than the expectation is $e^{-\Theta(K)}$} 
%is an FPAUS for the set there exists a sampler such that if $\cL_n(\cT) \neq \emptyset$, always runs in time $\poly(n,m,\log(1/\delta))$ time, and outputs samples $t \sim_\mathcal{D} \cL_n(\cT)$ from a distribution $\mathcal{D}$ over $\cL_n(\cT)$ such that
%\[	\mathcal{D}(t) = \frac{(1 \pm \delta)}{|\cL_n(\cT)|}	\]
%	for all $t \in \cL_n(\cT)$, and if $\cL_n(\cT) = \emptyset$ the algorithm always outputs $\bot$.
\end{theorem}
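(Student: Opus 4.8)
The idea is that Algorithm~\ref{alg:fpras} ($\astta$), used to prove Theorem~\ref{theo:fpras-bta}, already constructs, as a byproduct, exactly the data structures that the procedure $\Sample$ needs, and that the ``truly uniform'' property falls out of the structure of that analysis. For the \textbf{preprocessing step}, I would fix a polynomially small $\eps_0 = 1/\poly(nm)$ (say $\eps_0 < (4nm)^{-18}$) and run $\astta(\cT, 0^n, \eps_0, \delta)$, but retain the estimates $\aN(\str^j)$ for all $\str \in \St$, $j \le n$, and the sketches $\aT_n(\str^j)$ for all $\str \in \St$, $j < n$ (these are precisely the sample sets built at iteration $i = n-1$ of the main loop of Algorithm~\ref{alg:fpras}). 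Since $1/\eps_0 = \poly(nm)$, this runs in $\poly(n,m,\log\delta^{-1})$ time, and by the inductive analysis already carried out for Theorem~\ref{theo:fpras-bta}, with probability at least $1-\delta$ the events $\mathcal{E}^1_j \wedge \mathcal{E}^2_j$ hold for every $j \le n$ --- call this the \emph{good event}. On the good event: Property~1 holds for all $j \le n$ (so $\aN(\sinit^n) = 0$ iff $\cL_n(\cT) = \emptyset$); and the deterministic property of $\{\aT_n(\str^j)\}_{\str\in\St,\,j<n}$ demanded by Lemma~\ref{lem:estimatepart} at level $n$ holds, so that \emph{every} run of $\EstimatePartitionSize(t,\st^n,\{\aT_n(\str^j)\},\{\aN(\str^j)\},\eps_0,\delta)$ returns a $(1 \pm (4nm)^{17}\eps_0)$-accurate estimate of $N(\st^n, t)$, for every partial tree $t$ with $\virtsize{t} = n$ and every $\st \in \St$.

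\textbf{The sampler $\mathcal{A}$.} Each call first returns $\FAIL$ if $\aN(\sinit^n) = 0$; otherwise it runs $\Sample(\sinit^n,\{\aT_n(\str^j)\}_{\str\in\St,\,j<n},\{\aN(\str^j)\}_{\str\in\St,\,j\le n},\eps_0,\delta)$ up to three times and returns the first tree produced, or $\FAIL$ if all three return $\FAIL$. On the good event the hypotheses of Lemma~\ref{lem:sampmain} are satisfied ($\EstimatePartitionSize$ always accurate by the above, Property~1 holds, and $n \ge 2$), so each invocation of $\Sample$ returns either a \emph{truly} uniform $t \sim T(\sinit^n) = \cL_n(\cT)$ or $\FAIL$ with probability at most $3/4$, calls $\EstimatePartitionSize$ at most $nm$ times per loop iteration, and runs in $\poly(n,m,\log\delta^{-1})$ time; three repetitions push the failure probability below $(3/4)^3 < 1/2$. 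The point I would be most careful about --- and the main ``obstacle'' here, though a mild one given the two lemmas --- is independence across calls: by Lemma~\ref{lem:sampmain}, on the good event the output distribution of $\Sample$ is the \emph{fixed} distribution ``uniform on $\cL_n(\cT)$ with probability $|\cL_n(\cT)|/(2\aN(\sinit^n))$, else $\FAIL$'', which does not depend on the internal randomness of the call (only on the pre-computed, now-conditioned sketches and estimates); hence distinct calls to $\mathcal{A}$, using independent internal randomness, yield independent uniform samples. This establishes the first part of the theorem.

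\textbf{From the uniform sampler to an FPAUS.} For the last sentence, the cleanest route is the standard reduction of~\cite{JVV86}: the FPRAS of Theorem~\ref{theo:fpras-bta}, together with the self-reducibility of $\stta$ (fix the transition $\tau$ at the root, pass to the modified automata $\cT[\tau]$, and recurse), yields an FPAUS for $\cL_n(\cT)$ in the sense of Section~\ref{sec:fpras-fpaus}. Alternatively one derives it directly from $\mathcal{A}$, handling the single delicate point that the preprocessing failure (probability $\delta$) is \emph{undetectable}: since $|\cL_n(\cT)| \le (4m)^n = 2^{O(n\log m)}$, running the preprocessing with failure parameter $\delta' = \delta\,2^{-\Theta(n\log m)}$ is still $\poly(n,m,\log\delta^{-1})$-time and makes the undetectable failure contribute at most $\delta/(2|\cL_n(\cT)|)$ to any point probability, while $O(\log\delta^{-1})$ repetitions of $\mathcal{A}$ drive the $\FAIL$ probability below $\delta/2$; combining these gives $\pr{x^* = x} = (1 \pm \delta)/|\cL_n(\cT)|$ for every $x \in \cL_n(\cT)$ and $\pr{x^* = \bot} \le \delta$, as required. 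No genuinely new difficulty arises beyond Lemmas~\ref{lem:estimatepart} and~\ref{lem:sampmain}, which are assumed; the only care needed is to use the \emph{same} $\eps_0$ in the preprocessing and in every subsequent call, so that the conditioned sketch property of Lemma~\ref{lem:estimatepart} continues to apply.
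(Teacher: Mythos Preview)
Your proposal is correct and follows essentially the same route as the paper: run the FPRAS of Theorem~\ref{theo:fpras-bta} with a fixed $\eps = (nm)^{-C}$ as the preprocessing step, then each call to $\mathcal{A}$ invokes $\Sample(\sinit^n,\{\aT_n(\str^j)\}_{\str,j<n},\{\aN(\str^j)\}_{\str,j\le n},\eps,\delta)$ (repeated three times to push the $\FAIL$ probability below $1/2$), with correctness and runtime supplied by Lemmas~\ref{lem:estimatepart} and~\ref{lem:sampmain}; the FPAUS is then derived by shrinking the preprocessing failure to $\delta/|\cL_n(\cT)|$ and repeating $\poly(n,m,\log\delta^{-1})$ times. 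One small point the paper makes explicit that you should add to your direct FPAUS derivation: after $\Sample$ outputs a tree $t$, deterministically test whether $t \in \cL_n(\cT)$ and replace it by $\bot$ if not --- this is needed because on the (undetectable) bad preprocessing event $\Sample$ may select an empty partition and output a size-$n$ tree outside $\cL_n(\cT)$, violating the requirement that an FPAUS output lies in $g(w)\cup\{\bot\}$.
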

\begin{proof}
The preprocessing step here is just the computation of the estimates $\wt{N}(s^i)$ and sketches $\wt{T}(s^i)$ for all $i \leq n$, which are obtained by a single to the FPRAS of Theorem \ref{theo:fpras-bta} using a fixed $\eps = (nm)^{-C}$ for a sufficiently large constant $C$. The sampling algorithm is then just a call to 	\[\Sample\left(\st^n_{\text{init}},\,\left\{\aT_i(\str^j)\right\}_{\str \in S, j < n},\, \left\{\aN(\str^j)\right\}_{\str \in S, j \leq n},\, \eps,\, \delta\right)\]

 Then the first result follows from Lemmas \ref{lem:sampmain} and \ref{lem:estimatepart}, as well as Theorem \ref{theo:fpras-bta}. In particular, if we condition on the success of Theorem \ref{theo:fpras-bta}, which hold with probability $1-\delta$, then by the definition of Property 2, and the fact that Property $2$ holds for the size $n$ conditioned on  Theorem \ref{theo:fpras-bta}, this is sufficient to guarantee that the samples produced by our sampling procedure are uniform. Furthermore, by Lemma \ref{lem:sampmain}, the probability that a call to $\Sample$ outputs $\FAIL$ is at most $3/4$, so repeating the algorithm three times, the probability that a sample is not output is at most $(3/4)^3 < 1/2$. Finally, by Lemma \ref{lem:estimatepart} the runtime of each call to $\EstimatePartitionSize$ is at most $\poly(n,m,\log \delta^{-1})$, and by Lemma \ref{lem:sampmain}, the procedure $\EstimatePartitionSize$ is called at most $nm$ times per call to $\Sample$, which completes the proof of the runtime. 

We now verify that the above implies that $\cL_n(\cT)$ admits an FPAUS. First,  if $\cL_n(\cT) \neq \emptyset$, note that the probability $\delta$ of failure of the pre-processing algorithm induces an additive $\delta$ difference in total variational distance from the uniform sampler. Moreover, by testing deterministically whether the tree $t$ obtained by the algorithm is contained in $\cL_n(\cT)$, we can ensure that, conditioned on not outputting $\FAIL$, the output of the algorithm is supported on $\cL_n(\cT)$.  We can then run the algorithm with $\delta_0 = \delta|\cL_n(\cT)|^{-1} = \delta\exp(-\poly(n,m))$, which does not affect the stated polynomial runtime.  This results~in
  \begin{equation}\label{eqn:additiverror}
  \begin{split}
  	\mathcal{D}(t)& =  \frac{1}{|\cL_n(\cT)|} \pm \delta_0 \\
	& =  \frac{1}{|\cL_n(\cT)|} \pm \delta|\cL_n(\cT)|^{-1}\\
  	&=\frac{(1 \pm \delta)}{|\cL_n(\cT)|} \\
  \end{split}
  \end{equation}
  for every $t \in \cL_n(\cT)$, as desired.  To deal with the $1/2$ probability that the output of our sampler is \FAIL, we can run the sampler for a total of $\Theta(\log(\delta^{-1}|\cL_n(\cT)|)) = \poly(n,m ,\log \delta^{-1})$ trials, and return the first sample obtained from an instance that did not return $\FAIL$. If all trails output $\FAIL$, then we can also output $\bot$ as per the specification of an FPAUS. By doing so, this causes another additive $\delta|\cL_n(\cT)|^{-1}$ error in the sampler, which is dealt with in the same way as shown in Equation \eqref{eqn:additiverror} above. Finally, if  $\cL_n(\cT)= \emptyset$, the algorithm must always output $\bot$, since given any potential output $t \neq \bot$, we can always test if $t \in \cL_n(\cT)$ in polynomial time, which completes the proof that the algorithm yields an FPAUS. 
  
  %by running the algorithm of Theorem \ref{theo:fpras-bta} a total of $O(\log 1/\delta)$ times, with probability at least $1-\delta$ we will correctly verify that $\cL_n(\cT)= \emptyset$ and output $\bot$. Notice that we never return a tree $t \notin \cL_n(\cT)$, since we can always first test membership of $t$ for a tree automata in polynomial time. 
\end{proof}

We conclude this section by pointing out that from Theorem \ref{theo:fpras-bta} and the existence of a polynomial-time parsimonious reduction from $\sta$ to $\stta$, and the fact that given a binary tree $t' \in \cL_n(\cT')$ after the reduction from a tree automata $\cT$ to a binary tree automata $\cT'$, the corresponding original tree $t \in \cL_n(\cT)$  can be reconstructed in polynomial time, we obtain the following corollary:
\begin{corollary}\label{cor:fpras-ta-bta}
Both $\stta$ and $\sta$ admit an FPRAUS and an FPAUS. 
\end{corollary}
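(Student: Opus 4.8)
The plan is to assemble the corollary directly from the three results already established in this section, together with the reduction of Lemma~\ref{lem-tata}. For $\stta$ there is essentially nothing new to do: Theorem~\ref{theo:fpras-bta} already asserts that $\astta$ is an FPRAS for $\stta$, and Theorem~\ref{thm:samplemain} already asserts that $\cL_n(\cT)$ admits an FPAUS for a binary tree automaton $\cT$, which is exactly the sampling half of the claim for $\stta$. So the only real work is transferring both statements from $\stta$ to $\sta$.

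For the counting half, I would note that Lemma~\ref{lem-tata} yields a polynomial-time parsimonious reduction from $\sta$ to $\stta$: the map sends an instance $(\cT, 0^n)$ with $\cT$ over $\trees_k[\Sigma]$ to the instance $(\cT', 0^{2n-1})$ with $\cT'$ over $\treesb[\Sigma \cup \{@\}]$ produced by the lemma, and Lemma~\ref{lem-tata} guarantees $|\cL_n(\cT)| = |\cL_{2n-1}(\cT')|$, which is precisely the parsimony condition witnessing $\sta \prs \stta$. By the remark following the definition of parsimonious reductions (if $f \prs g$ and $g$ admits an FPRAS, then so does $f$), the FPRAS for $\stta$ from Theorem~\ref{theo:fpras-bta} immediately yields an FPRAS for $\sta$.

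For the sampling half, I would sample from $\cL_n(\cT)$ as follows: first compute $\cT'$ via Lemma~\ref{lem-tata} in polynomial time; then invoke the FPAUS for $\stta$ on $(\cT', 0^{2n-1})$ to obtain either $\bot$ or a binary tree $t' \in \cL_{2n-1}(\cT')$; and finally apply in polynomial time the decoding of the binarization underlying Lemma~\ref{lem-tata} to recover the corresponding $k$-tree $t \in \cL_n(\cT)$ (returning $\bot$ if the sampler returned $\bot$). The point requiring care is that this encoding is not merely size-preserving but in fact a bijection between $\cL_n(\cT)$ and $\cL_{2n-1}(\cT')$ whose inverse is polynomial-time computable — this is the extra fact noted in the paragraph preceding the corollary, and is standard for the binarization of ranked trees. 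Granting it, a distribution within variational distance $\delta$ of uniform on $\cL_{2n-1}(\cT')$ pushes forward to a distribution within variational distance $\delta$ of uniform on $\cL_n(\cT)$; the failure symbol $\bot$ is output with probability at most $\delta$; and the empty case is handled since $\cL_n(\cT) = \emptyset$ iff $\cL_{2n-1}(\cT') = \emptyset$. Since the encoding, the decoding, and the FPAUS of Theorem~\ref{thm:samplemain} all run in time $\poly(n, m, \log \delta^{-1})$, the composition is an FPAUS for $\sta$.

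The argument is almost entirely bookkeeping; the only substantive step — and the one I would be most careful to justify — is that the reduction of Lemma~\ref{lem-tata} is bijective with a polynomial-time inverse, since size preservation alone suffices for transferring the FPRAS but not the FPAUS.
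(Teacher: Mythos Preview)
Your proposal is correct and follows essentially the same route as the paper: invoke Theorems~\ref{theo:fpras-bta} and~\ref{thm:samplemain} directly for $\stta$, then transfer both to $\sta$ via the parsimonious reduction of Lemma~\ref{lem-tata}, using for the sampling half exactly the additional fact the paper highlights just before the corollary---that the binarization is invertible in polynomial time. One minor quibble: the paper's FPAUS definition is a pointwise $(1\pm\delta)$ relative-error guarantee rather than a total-variation bound, but your bijection argument transfers that guarantee just as well (indeed more directly), so this does not affect correctness.
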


%\marcelo{Should $i=2, \dots, n$ be replaced by $i=3, \dots, n$? Actually, isn't the case that the size of a binary tree is always an odd number?}
%\marcelo{I am confused about the use of $i$ and $i+1$ in the algorithm}

\section{Estimating Partition Sizes via Succinct NFAs} 
\label{sec:partition}
% !TeX spellcheck = en_US
%!TEX root = main.tex

The goal of this section is to prove Lemma \ref{lem:estimatepart}, namely, to show how to implement the procedure \EstimatePartitionSize. 
%To do this, fix a tree automaton $\cT = (S, \Sigma, \Delta, \sinit)$ over binary trees and a natural number $n \geq 1$ given in unary.  Moreover, assume that $\ocT = (\oSt, \Sigma, \oDelta, \sinit^n)$ is the unfolding of $\cT$ as defined in Section \ref{sec:fpras}. 
To this end, we first show  how \EstimatePartitionSize\ can be implemented by reducing it 
to the problem of counting words accepted by a succinct NFA, which we introduced in Section \ref{sec:tech} and formally define here. Next, we demonstrate an FPRAS for counting words accepted by a succinct NFA, which will complete the proof of Lemma \ref{lem:estimatepart}.
% that we define next.

\paragraph{Succinct NFAs.} 
Let $\Gamma$ be a finite set of labels. A {\em succinct NFA}
%when it is clear from the context) 
over $\Gamma$ is a $5$-tuple \linebreak $\mathcal{N} = (S, \Gamma, \Delta, \sinit, \sfin)$ where $S$ is the set of states and each transition is labeled by a subset of $\Gamma$, namely $\Delta \subseteq S \times 2^{\Gamma} \times S$. Thus each transition is of the form $(s,  A, s')$, where $A \subseteq \Gamma$.
For each transition $(s, A, s') \in \Delta$, the set $A \subseteq \Gamma$ is given in some 
representation (e.g. a tree automaton, a DNF formula, or an explicit list of elements), and we write $\rsize{A}$ to denote the size of the representation. Note that while the whole set $A$ is a valid representation of itself, generally the number of elements of $A$, denoted by $|A|$, will be exponential in the size of the representation $\rsize{A}$.
We define the size of the succinct NFA $\mathcal{N}$ as $|\mathcal{N}| = |S|+|\Delta|+\sum_{(s,A,s') \in \Delta} \rsize{A}$. For notational simplicity, we will sometimes write $r = |\mathcal{N}|$.

Given a succinct NFA $\mathcal{N}$ as defined above and elements $w_1, \ldots, w_n \in \Gamma$, we say that $\mathcal{N}$ accepts the word $w_1 w_2 \ldots w_n$ if there exist states $s_0, s_1, \ldots, s_n \in S$ and sets $A_1, \ldots, A_n \subseteq \Gamma$ such that:
\begin{itemize}
	\item $s_0 = \sinit$ and $s_n = \sfin$
	\item $w_i \in A_i$ for all $i=1 \ldots n$
	\item $(s_{i-1}, A_i, s_i) \in \Delta$ for all $i=1 \ldots n$
\end{itemize}

We denote by $\cL_k(\mathcal{N})$ the set of all words of length $k$ accepted by $\mathcal{N}$. We consider the following general counting problem:
\begin{center}
	\framebox{
		\begin{tabular}{ll}
			\textbf{Problem:} &  $\slp$\\
			\textbf{Input:} & $k \geq 1$ given in unary and a succinct NFA $\mathcal{N}$\\
			\textbf{Output:} & $|\cL_k(\mathcal{N})|$
		\end{tabular}
	}
\end{center}
%\alberto{Begins strongly acyclic segment}
\paragraph{Reduction to Unrolled Succinct NFAs}
%Strongly Acyclic Graphs}
Our algorithm for approximating $|\cL_k(\mathcal{N})|$ first involves unrolling $k$ times the NFA $\mathcal{N} = (S, \Gamma, \Delta, \sinit, \sfin)$, to generate an unrolled NFA $\nun^k$.
Specifically, for every state $p \in S$ create $k-1$ copies $p^1, p^2,\ldots,p^{k-1}$ of $p$, and include them as states of the unrolled NFA $\nun^k$. Moreover, for every transition $(p,A,q)$ in $\Delta$,
% and $b \in \{0,1\}$, 
create the edge $(p^\alpha, A, q^{\alpha+1})$ in $\nun^k$, for every $\alpha \in \{1, \ldots, k-2\}$. Finally, if $(\sinit,A,q)$ is a transition in $\Delta$, then $(\sinit,A,q^1)$ is a transition in $\nun^k$, while if $(p,A,\sfin)$ is a transition in $\Delta$, then $(p^{k-1},A,\sfin)$ is a transition in $\nun^k$. In this way, we keep $\sinit$ and $\sfin$ as the initial and final states of $\nun^k$, respectively.
%We now observe that it suffices to consider directed graphs 
%with the property that for each pair of states $u,v$, there is exactly one value $k \geq0$ such that all paths from $u$ to $v$ have length $k$. Equivalently, all paths from $u$ to $v$ are of size $d(u,v)$, which implies that $\lpaths_G^k(u,v) = \emptyset$ for all $k \neq d(u,v)$. We call such a graph $G$ \textit{strongly acyclic}. Note that $G$ being strongly acyclic implies that $G$ is acyclic. 
% The reduction is as follows: given a directed set-labeled graph $G = (V, E)$ vertices $u,v \in V$ and $k \geq 1$, we create a new graph $\overline{G} = (\overline{V},\overline{E})$ resulting from duplicating each vertex $x \in V$ a total of $k+1$ times into $x^1,x^2,\dots,x^{k+1}$. For each edge $(x,A,y) \in E$, we create the edges $(x^i,A,y^{i+1}) \in \mathcal{E}$ for $i=1,2,\dots,k$. It is easy to see that $\lpaths_G^k(u,v) = \lpaths_{\overline{G}}(u^1,v^{k+1})$. Moreover, for any vertices $x^i,y^j$ in $\overline{G}$, all paths from $x^i$ to $y^j$ have length $j-i$ if $j>i$ and there are no such paths otherwise. 
Since $k$ is given in unary, it is easy to see that $\nun^k$ can be constructed in polynomial time from $\mathcal{N}$. Thus, for the remainder of the section, we will assume that the input succinct NFA $\mathcal{N}$ has been unrolled according to the value $k$. Thus, we consider the following problem.
\begin{center}
	\framebox{
		\begin{tabular}{ll}
			\textbf{Problem:} &  $\salp$\\
			\textbf{Input:} & $k \geq 1$ given in unary and an unrolled succinct NFA $\nun^k$\\
			\textbf{Output:} & $|\cL_k(\nun^k)|$
		\end{tabular}
	}
\end{center}
%\alberto{Ends strongly acyclic segment}
Clearly in the general case, without any assumptions on our representation $\|A\|$ of $|A|$, it will be impossible to obtain polynomial in $|\mathcal{N}|$ time algorithms for the problem above. In order to obtain polynomial time algorithms, we require the following four properties of the label sets $A$ to be satisfied.
The properties state that the sizes $|A|$ are at most singly exponential in $|\mathcal{N}|$, we can efficiently test whether an element $a \in \Gamma$ is a member of $A$, we can obtain approximations of $|A|$, and that we can generate almost uniform samples from $A$.
%Our result for counting labeled paths in set-labeled graphs requires of three properties to be satisfied. In short, the properties state 
%%roughly 
%that each label $A \subseteq \Gamma$ is represented in a way that membership in $A$ can be efficiently verifiedit, there exists a ranwe can efficiently verify whether an element is in $A$ 
%
% are representable by languages in
%
% NP which admit a FPRAS and a FPAUS.
\begin{definition}[Required properties for a succinct NFA]\label{def:prop} Fix $\eps_0 > 0$.  Then for every label set $A$ present in $\Delta$, we have:
	
%	every set-labeled graph over a finite set of labels $\Gamma$ is required to satisfy the following properties:
	\begin{enumerate}
		\item \textbf{Size bound:} There is a polynomial $g(x)$ such that $|A| \leq 2^{g(|\mathcal{N}|)}$.
		\item \textbf{Membership:} There is an algorithm that given any $a \in \Gamma$, verifies in time  $T = \poly(|\mathcal{N}|)$ whether 
		%For every $a \in \Gamma$, we can test in time $T = \poly(\rsize{A})$ whether 
		$a \in A$.
		\item \textbf{Size approximations:} We have an estimate $\aN(A) = (1\pm\epsilon_0)|A|$.
		%, for some precision $\eps_0>0$. 
		\item \textbf{Almost uniform samples:} We have an oracle which returns independent samples $a \sim A$ from a distribution $\mathcal{D}$ over $A$, such that for every $a \in A$:
% we have:
\begin{eqnarray*}
\DD(a) &=& (1 \pm \eps_0)\frac{1}{|A|}
\end{eqnarray*} \qed
	\end{enumerate}
\end{definition}
The reason for the first condition is that our algorithms will be polynomial in $\log(N)$, where $N$ is an upper bound on the size of $|\cL_k(\mathcal{N})|$. 
We remark that for the purpose of our main algorithm, we actually have truly uniform samples from each set $A$ that is a label in a transition. However, our results may be applicable in other settings where this is not the case. In fact, along with the first two conditions from Definition \ref{def:prop}, a sufficient condition for our algorithm to work is that the representations of each set $A$ allows for an FPRAS and a polynomial time almost uniform sampler. 

\paragraph{The Main Path of a Partial Tree}
Next we show that if we can approximate the number of words of a given length accepted by a succinct NFA, then we can 
%get our estimation of  
implement the procedure {\EstimatePartitionSize}. 
%by the following result. To do so, 
But first we need to introduce the notion of 
%will need the definition of the 
main path of a partial tree. %
Let $t$ be a partial tree constructed via the partitioning procedure of Algorithm~\ref{alg:sample} (see page \pageref{alg:sample}). Given that we always choose the hole with the minimal size in Line~\ref{algo:min-size-node}, one can order the holes of $t$ as $u_1,u_2,\dots,u_k$, such that for each $i$, $\parent{u_i}$ is an ancestor of $u_{i+1}$, namely, $u_{i+1},\ldots,u_{k}$ are contained in the subtree rooted at the parent of $u_i$. Note that by definition of the loop of Algorithm~\ref{alg:sample}, it could be the case that two holes $u$ and $v$ share the same parent (e.g. the last step produced a subtree of the form $a(i, j)$). If this is the case, we order $u$ and $v$ arbitrarily.
Then we define the \textit{main path} $\pi$ of $t$ considering two cases. If no two holes share the same parent, then $\pi$ is the path $\parent{u_1}, \parent{u_2}, \ldots, \parent{u_k}$ (from the most shallow node $u_1$ 
%root of the tree $\lambda = \parent{u_1}$ 
to the deepest node $u_k$). On the other hand, if two holes share the same parent, then by definitions of Algorithm~\ref{alg:sample} and sequence $u_1$, $\ldots$, $u_k$, these two nodes must be $u_{k-1}$ and $u_k$. In this case, we define $\pi$ as the path $\parent{u_1}, \parent{u_2}, \ldots, \parent{u_{k-1}},u_k$, (again, from the most shallow node $u_1$ to the deepest node $u_k$).\footnote{Strictly speaking, $\parent{u_1}, \ldots, \parent{u_{k}}$ is a sequence and is not necessarily a path in the tree, because there could be missing nodes between the elements of the sequence. However, for the purpose of the proof the missing nodes do not play any role and will be omitted.} 
%to the root $\es$ of $t$.
Observe that by definition of Algorithm~\ref{alg:sample}, every hole 
%apart of $u_1$
 is a child of some node in $\pi$.
 We illustrate the notion of main path in Figure~\ref{fig:fundamental_path}. For the partial tree in the left-hand side, we have that the main path is  $\parent{H_1}, \parent{H_2}, \parent{H_3}, \parent{H_4}$ as no two holes share the same parent. On the other hand, the main path for the partial tree in the right-hand side is $\parent{H'_1}, \parent{H'_2}, \parent{H'_3}, H'_4$, as in this case holes $H'_3$ and $H'_4$ share the same parent.

\begin{figure}
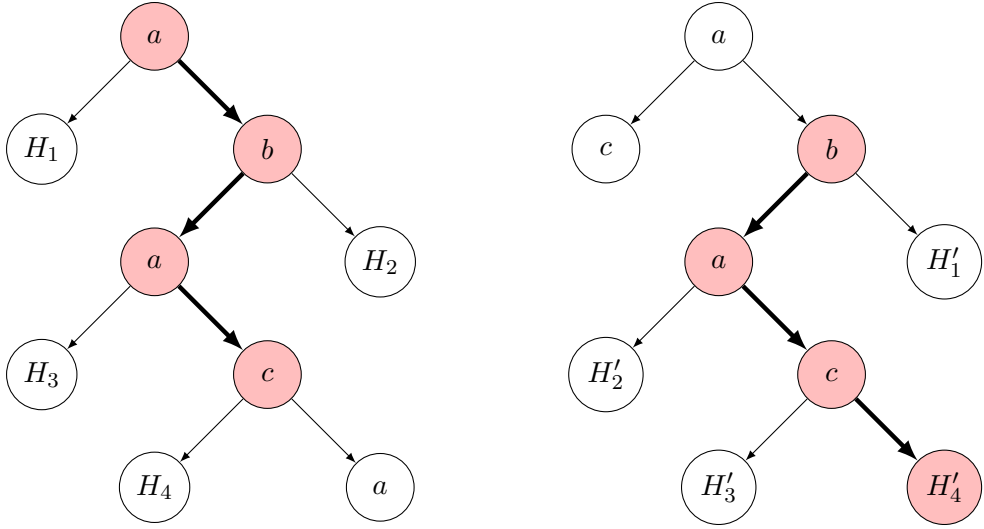

    \ctikzfig{fund_path}
    \caption{Two examples of a partial tree. The holes are indicated by the letters $H$ and $H'$, while the nodes that are not holes have labels $a, b,c \in \Sigma$. 
    % (note that the non-holes are labeled nodes). %The fundamental path is highlighted in blue.}
    Non-white nodes and thick arcs are used to highlight the main paths.}
    \label{fig:fundamental_path}
\end{figure}

%\rajesh{Maybe add an illustration for this, such as the main path, which may help a lot. }
%\cristian{Agree with Rajesh. A diagram of the main path will be very insightful here.}

%\rajesh{maybe put the proof of this in the appendix?}
	
\begin{lemma}\label{lem:treetograph}
	There exists a polynomial-time algorithm that, given a tree automaton $\cT$, a partial tree $t$ with $k$ holes
	%Let $t$ be a partial tree  
	constructed via the partitioning procedure of Algorithm \ref{alg:sample} with $i = \virtsize{t}$ and state $s$ of $\cT$, returns a succinct NFA $\mathcal{N}$ such that 
	\begin{eqnarray*}
	|T(\st^i, t)| & = & |\cL_k(\mathcal{N})|.
	\end{eqnarray*}
	Moreover, $|\mathcal{N}| \leq 3(i m)^4$, where $m$ is the size of $\cT$. 
%		For every state $s \in \St$, the problem of estimating the size of $T(\st^i, t)$ can be reduced to $\#\textsc{LabeledPaths}$ by a parsimonious reduction.
\end{lemma}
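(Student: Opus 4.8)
The plan is to build, directly from $\cT$, $t$ and $\st$, an explicit succinct NFA $\mathcal{N}$ over the alphabet $\Gamma$ of all binary labeled trees over $\Sigma$ of size at most $i$, whose length-$k$ accepted words are in bijection with $T(\st^i,t)$. The first observation is the bijection between completions and tuples of subtrees: since $t$ has holes $H_1,\dots,H_k$, a completion $t'$ is uniquely determined by the tuple $(t_1,\dots,t_k)$ of subtrees placed at the holes, and conversely $t' \in T(\st^i,t)$ iff $|t_j| = t(H_j)$ for all $j$ and $t' \in \cL(\cT[\st])$ (note $\virtsize{t} = i$, so every such completion has size $i$, and the constraint ``$t' \in T(\st^i)$'' reduces to ``$t' \in \cL(\cT[\st])$''). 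So it suffices to design $\mathcal{N}$ so that it accepts $t_1 t_2 \cdots t_k \in \Gamma^k$ exactly when the corresponding completion lies in $\cL(\cT[\st])$.

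The construction I would use exploits the nestedness of the holes along the main path $\pi = v_1,\dots,v_k$ established above (with the degenerate case where $H_{k-1},H_k$ share the parent $v_{k-1}$). The key structural fact is that every node of $t$ that is neither a hole nor an ancestor of $\pi$ belongs to a \emph{fully specified} subtree (no holes) hanging off some node of the full main path, and for a fully specified subtree $\sigma$ and a state $p$, testing $\sigma \in \cL(\cT[p])$ is a polynomial-time membership query. I would take the states of $\mathcal{N}$ to be $\sinit,\sfin$ together with the pairs $(j,p)$ for $j \in \{2,\dots,k\}$, $p \in \St$, where $(j,p)$ means ``$\cT$ is in state $p$ at $v_j$, about to fill $H_j$'', and $\sinit$ plays the role of $(1,\st)$. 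For $j \in \{1,\dots,k-1\}$, each $r \in \St$, and each $p' \in \St$ such that $\cT$ has a run on the fully specified portion of the trunk between $v_j$ and $v_{j+1}$ that starts in state $p$ at $v_j$ (with $\st$ at the root when $j=1$), puts $r$ at $H_j$, respects $\Delta$ together with the fully specified sibling subtrees, and ends in $p'$ at $v_{j+1}$, I add the transition $((j,p),\, T(r^{t(H_j)}),\, (j+1,p'))$; reading the $j$-th symbol corresponds to choosing $t_j$. The final transition ($j=k$) goes from $(k,p)$ to $\sfin$ with label $T(r^{t(H_k)})$ under the analogous condition, using one extra intermediate state in the degenerate case to carry the state that the transition at $v_{k-1}$ forces on $H_k$. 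Since $t$ is fixed, all of these conditions are decidable in polynomial time, so $\mathcal{N}$ is built in polynomial time; note that every label set has the form $T(r^\ell)$ with $\ell < i$, which is exactly the shape needed so that later the precomputed sketches $\aT_i(r^\ell)$ and estimates $\aN(r^\ell)$ can play the role of the oracle of Definition~\ref{def:prop}.

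For correctness I would argue that accepting paths of $\mathcal{N}$ on $t_1\cdots t_k$ correspond to accepting runs of $\cT[\st]$ on the completion $t'$: an accepting run $\rho$ gives $p_j := \rho(v_j)$ and $r_j := \rho(H_j)$, which trace a valid path (the trunk conditions hold because $\rho$ restricts to accepting runs on the fully specified subtrees, and $t_j \in T(r_j^{t(H_j)})$ because $\rho$ restricts to an accepting run of $\cT[r_j]$ on $t_j$); conversely an accepting path supplies the trunk states and the hole states $r_j$, which glued to accepting runs of $\cT[r_j]$ on each $t_j$ (these exist since $t_j \in T(r_j^{t(H_j)})$) yield an accepting run of $\cT[\st]$ on $t'$. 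Hence $t_1\cdots t_k \in \cL_k(\mathcal{N})$ iff $t' \in \cL(\cT[\st])$ iff $t' \in T(\st^i,t)$. Because $\cL_k(\mathcal{N})$ is a \emph{set} of words, each completion is counted exactly once no matter how many runs or paths witness it, which is precisely how the ambiguity of $\cT$ gets absorbed; thus $|\cL_k(\mathcal{N})| = |T(\st^i,t)|$.

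For the size bound, a routine count — $O(im)$ states, $O(i\,m^3)$ transitions (at most $|\St|^2$ triples $(r,p')$ per state), each label $T(r^\ell)$ represented by $\cT$ plus the pair $(r,\ell)$ of size $O(m)$ — gives $|\mathcal{N}| = \poly(im)$ and in particular the stated $3(im)^4$. The main obstacle is not any individual calculation but the structural claim underlying the whole reduction: an NFA only describes a \emph{path}, while a completion of $t$ is a tree, and the reason $T(\st^i,t)$ can still be realized as $\cL_k(\mathcal{N})$ is that every hole is a child of a main-path node and everything else in $t$ is already fixed, so the only real choices occur along $\pi$ and can be threaded through the $k$ transitions, with all side structure reduced to precomputable membership checks. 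Carefully verifying that the trunk between consecutive main-path vertices, and the degenerate configuration in which two holes share a parent, are faithfully captured by the transition relation is the most delicate part of the argument.
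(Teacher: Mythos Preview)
Your proposal is correct and follows essentially the same approach as the paper: build an NFA whose states track the tree-automaton state along the main path, with the $j$-th transition labeled by $T(r^{t(H_j)})$ for the state $r$ placed at hole $H_j$, and argue the bijection between accepting paths and accepting runs of $\cT[\st]$ on completions. The only cosmetic differences are that the paper inserts a dummy first symbol $\&$ (so its words have length $k+1$), indexes states via the unfolded automaton $\ocT$ rather than pairs $(j,p)$, and phrases each transition condition as the existence of a \emph{global} run of $\ocT^*$ on $t$ with prescribed values at $p_\ell,p_{\ell+1},u_\ell$ rather than your local segment run; the resulting NFAs may have different transition sets but accept the same language, for exactly the gluing reason you give.
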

\begin{proof}
Let $u_1, \ldots, u_k$ be the holes of $t$. Assume first that no two holes of $t$ share the same parent, so that $\pi = p_1, p_2, \ldots, p_k$ is the main path of $t$ with $p_i = \parent{u_i}$.
% = u_1$ and $p_j = \parent{u_j}$ for $1< j\leq k$. 
Counting the number of elements of $T(\st^i, t)$ is the same as counting all sequences of trees $t_1, \ldots, t_k$ over $\Sigma$ such that there exists a run $\rho$ of $\ocT$ over the tree $t[u_1 \rightarrow t_1] \cdots [u_k \rightarrow t_k]$ with $\rho(\lambda) = \st^i$. In other words, we hang $t_1$ on $u_1$, \ldots, $t_k$ on $u_k$ to form a tree that is accepted by $\ocT$ when $\st^i$ is the initial state. 
Then the plan of the reduction is to produce a succinct NFA $\mathcal{N}$ such that all words accepted by $\mathcal{N}$ are of the form $t_1 \cdots t_k$ with $t[u_1 \rightarrow t_1] \cdots [u_k \rightarrow t_k] \in T(\st^i, t)$.

For the construction of $\mathcal{N}$ it will be useful to consider the following extension of $\ocT$ over partial trees. Recall the definition of $\ocT = (\oSt, \Sigma, \oDelta, \sinit)$ from Section~\ref{sec:fpras}, but assuming here that the unfolding is done for $i$ levels. Then define $\ocT^* = (\oSt, \Sigma \cup [i], \oDelta^*, \sinit^i)$ such that $\oDelta^* = \oDelta \cup \{(s^j, j, \lambda) \mid s^j \in \oSt\}$, namely we add to $\ocT$ special transitions over holes when the level $j$ of $s^j$ coincides with the value of the hole. 
Intuitively, if we have a run $\rho$ of $\ocT^*$ over $t$ with $\rho(\lambda) = s^i$ and $T(\rho(u_\ell)) \neq \emptyset$ for every $\ell \in [k]$, then $t$ can be completed with trees $t_1 \in T(\rho(u_1)), \ldots, t_k \in T(\rho(u_k))$ such that $t[u_1 \rightarrow t_1] \cdots [u_k \rightarrow t_k] \in T(\st^i, t)$.

Let $i_1, \ldots, i_k$ be the sizes $t(u_1), \ldots, t(u_k)$ on the holes $u_1, \ldots, u_k$, respectively. Furthermore, let $j_1, \ldots, j_k$ be the final sizes of the subtrees of $t$ hanging from nodes $p_1, \ldots, p_k$, respectively. That is, if $t_1$ is the subtree hanging from $p_1$ in $t$, then $j_1 = \virtsize{t_1}$, and so on. 
Note that by the definition of the main path $\pi$, we have that $j_1 > j_2 > \ldots > j_k$ (since each $p_i$ is the parent of $p_{i+1}$). 
%For $t$ and $\cT$ 
We now have the ingredients to define the
%Now we construct now 
succinct NFA $\mathcal{N} = (S_\mathcal{N}, \Gamma, \Delta_\mathcal{N}, s_0, s_e)$. The set $S_\mathcal{N}$ of states will be a subset of the states of $\oSt$, plus two additional states $s_0$ and $s_e$, formally, $S_\mathcal{N} = \bigcup_{\ell=1}^k \{\stq^{j_{\ell}} \in \oSt \mid \stq \in \St\} \cup \{s_0, s_e\}$.
The set $\Delta_\mathcal{N}$ of transitions is defined as follows: for every states $\stq_1, \stq_2, \str \in S$ and $\ell \in [k-1]$, we add a transition $(\stq_1^{j_{\ell}}, T(\str^{i_{\ell}}), \stq_2^{j_{\ell+1}}) \in \Delta_\mathcal{N}$  if there exists a run $\rho$ of $\ocT^*$ over $t$ such that $\rho(p_{\ell}) = \stq_1^{j_{\ell}}$, $\rho(p_{\ell+1}) = \stq_2^{j_{\ell}+1}$, and $\rho(u_{\ell}) = \str^{i_{\ell}}$.
%From $u_0$ we also add edges $(u_0, T(\stq^{i_{1}}), \stq^{j_{1}}) \in E$ for every $\stq \in \St$ (observed here that, for this case, $i_{1} = j_{1}$). Finally, for a fresh symbol $\#$ and for every $\stq \in S$ we put an edge $(\stq^{j_{k}}, \{\#\}, u_e) \in E$  whenever there exists a run $\rho$ of $\ocT^*$ over $t$ such that $\rho(p_{k}) = \stq^{j_{k}}$, and $\rho(\epsilon) = \st^{i}$.
Moreover, assuming that $\&$ is a fresh symbol, we add transition $(u_0, \{\&\}, \stq^{j_{1}})$ to $\Delta_\mathcal{N}$ if there exists a run $\rho$ of $\ocT^*$ over $t$ such that $\rho(p_{1}) = \stq^{j_{1}}$, and $\rho(\lambda) = \st^{i}$. Finally, we add transition $(\stq^{j_{k}}, T(\str^{i_{k}}), u_e)$ to $\Delta_\mathcal{N}$.

%\alberto{Begins strongly acyclic segment}
Note that all transitions in the succinct NFA are directed from level $j_{\ell-1}$ to level $j_{\ell}$ with $j_{\ell-1} > j_{\ell}$, for some $\ell \in \{1,2,\dots,k-1\}$, which implies that $\mathcal{N}$ is unrolled.
%strongly acyclic. 
Here, the level $j_\ell$ is defined as the set of states $\{q^{j_\ell} \in \overline{S} \; | \; q \in S	\}$. Furthermore, 
%\alberto{Ends strongly acyclic segment}
note that transitions are labeled by sets $T(\str^{i_{\ell}})$ where $i_{\ell} < i$, which are represented by tree automaton $\ocT[\str^{i_{\ell}}]$ for $i_{\ell} < i$. Thus, the conditions required by Definition~\ref{def:prop} are satisfied since for each transition label $T(\str^{i_{\ell}})$, it holds that $|T(\str^{i_{\ell}})|$ is at most exponential in the size of $\ocT[\str^{i_{\ell}}]$, and by Algorithm~\ref{alg:sample}, we have already precomputed values such that we can check membership, approximate its size, and obtain an almost uniform sample from $T(\str^{i_{\ell}})$. 
%Then the conditions required by Definition~\ref{def:prop} are satisfied.
Finally, the existence of the run $\rho$ for the definition of each transition in $\Delta_\mathcal{N}$ can be checked in polynomial time in the size of $t$~\cite{tata2007} and, thus, $\mathcal{N}$ can be constructed from $t$ and $\cT$ in polynomial time. 

It's only left to show that $|\cL_k(\mathcal{N})| = |T(\st^i, t)|$. For this, note that every word accepted by $\mathcal{N}$ is of length $k+1$ and of the form $\& t_1 t_2 \cdots t_k$. Then consider the function that maps words $\& t_1 t_2 \ldots t_k$ to the tree $t[u_1 \rightarrow t_1] \cdots [u_k \rightarrow t_k]$. 
One can show that each such a tree is in $T(\st^i, t)$, and then the function goes from $\cL_k(\mathcal{N})$ to $T(\st^i, t)$.
Furthermore, the function is a bijection. Clearly, if we take two different words, we will produce different trees in $T(\st^i, t)$, and then the function is injective. 
To show that the function is surjective, from a tree $t' \in T(\st^i, t)$ and a run $\rho$ of $\ocT$ over $t'$, we can build the word $\& t_1 t_2 \ldots t_k$ where each $t_i$ is the subtree hanging from the node $u_i$ in $t'$. Also, this word is realized by the following sequence of transitions in $\mathcal{N}$:
$$
(s_0, \{\&\}, \rho(p_1)), (\rho(p_1),\, T(\rho(u_1)), \rho(p_2)),\, \ldots,\,  (\rho(p_{k-1}), T(\rho(u_{k-1})), \rho(p_k)),\, (\rho(p_k), T(\rho(u_k)), s_e).
$$
%Then the function is surjective. 
Thus, the function is surjective. Hence, from the existence of a bijection from $\cL_k(\mathcal{N})$ to $T(\st^i, t)$, we conclude that  $|\cL_k(\mathcal{N})| = |T(\st^i, t)|$.  

Recall that the size of succinct NFA $\mathcal{N}$ is defined as $|\mathcal{N}| = |S_\mathcal{N}|+|\Delta_\mathcal{N}|+\sum_{(s,A,s') \in \Delta_\mathcal{N}} \rsize{A}$. 
Thus, given that $|S_\mathcal{N}| = im$, each set label $A = T(\rho(u_i))$ is represented by the tree automaton $\ocT[\rho(u_i)]$ and the size of $\ocT[\rho(u_i)]$ is bounded by $(im)^2$, we conclude that $|\Delta_\mathcal{N}| \leq (im)^3$ and $\sum_{(s,A,s') \in \Delta_\mathcal{N}} \rsize{A} \leq (im)^4$. Putting everything together, we conclude that $|\mathcal{N}| \leq 3(im)^4$, which was to be shown.

To finish with the proof, we need to consider the sequence $u_1, \ldots, u_k$ of holes of $t$, and assume that two holes of $t$ share the same parent, so that $\pi = p_1, p_2, \ldots, p_k$ is the main path of $t$, with $p_i = \parent{u_i}$ for each $i \in [k-1]$ and $p_k = u_k$. The proof for this case can be done in a completely analogous way.
\end{proof}

\noindent
In the following theorem, we show how to estimate $|\cL_k(\mathcal{N})|$ for a given 
%\alberto{Begins strongly acyclic segment}
%strongly acyclic set-labeled graph $G$ and vertices $u,v$ in $G$. 
%\alberto{Ends strongly acyclic segment}
unrolled succinct NFA $\mathcal{N}$ and integer $k\geq 1$ given in unary (recall the definition of unrolled succinct NFA from the beginning of this section).
%\alberto{Begins strongly acyclic segment}
%Recall that the definition of strongly acyclic is that for all vertices $x,y$ in the graph, all paths from $x$ to $y$ have the same length.
%\alberto{Ends strongly acyclic segment}

\begin{theorem}\label{thm:progmain}
%\alberto{Begins strongly acyclic segment}
%Let $G = (V,E)$ be a directed strongly acyclic set-labeled graph, $u,v \in V$, 
%\alberto{Ends strongly acyclic segment}
Let $\mathcal{N}$ be an unrolled succinct NFA, $k\geq 1$
$\epsilon \in (100 |\mathcal{N}|^4 \eps_0,1) $, where $\eps_0$ is as in Definition \ref{def:prop}. Moreover, fix $\delta \in (0,1/2)$ and assume that $N$ satisfies that $|\cL_k(\mathcal{N})| \leq N$. Then there exists an algorithm that with probability at least $1-\delta$ outputs a value $\aN$ such that $\aN = (1 \pm \eps)|\cL_k(\mathcal{N})|$. The algorithm runs in time 
\begin{eqnarray*}
O\bigg(T \cdot \frac{\log(N/\eps)\log^2(1/\delta) |\mathcal{N}|^{18}}{\eps^4}\bigg),
\end{eqnarray*}
where $T$ is as in Definition \ref{def:prop}, and makes at most 
\begin{eqnarray*}
O\bigg(\frac{\log^2(1/\delta) \log(N/\eps) |\mathcal{N}|^{18}} { \eps^4}\bigg)
\end{eqnarray*}
queries to the sampling oracle. 	
	Furthermore, there is an almost uniform sampler which returns elements of $\cL_k(\mathcal{N})$ such that 
	\begin{eqnarray*}
	\pr{\text{outputs  }\pi}  & = & (1 \pm \eps)\frac{1}{| \cL_k(\mathcal{N})|}
	\end{eqnarray*}
	for every $\pi \in \cL_k(\mathcal{N})$, and has the same runtime and oracle complexity as above.
\end{theorem}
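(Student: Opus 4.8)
The plan is to make rigorous the inductive dynamic-programming template from Section~\ref{sec:tech}. Since $\mathcal{N}$ is unrolled, for each state $x$ and each $\ell \leq k$ write $W(x^\ell)$ for the set of length-$\ell$ words accepted by $\mathcal{N}$ started at $x$, and $N(x^\ell) = |W(x^\ell)|$; the target quantity is $N(\sinit^k) = |\cL_k(\mathcal{N})|$. I would process states in $k$ rounds, maintaining after round $\ell$ both an estimate $\aN(x^\ell) = (1 \pm \ell\eps')N(x^\ell)$ for every state $x$ (with $\eps'$ a suitably rescaled version of $\eps$) and a \emph{sketch} $\aW(x^\ell)$ consisting of $\poly(|\mathcal{N}|,1/\eps',\log(1/\delta))$ pointwise $(1\pm\eps')$-uniform samples from $W(x^\ell)$. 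The base case $\ell = 1$ is immediate, since $W(x^1) = \bigcup_{(x,A,\sfin)\in\Delta} A$, so membership, size estimation, and almost-uniform sampling are inherited from the oracle of Definition~\ref{def:prop}, with the overlap among the $\leq |\mathcal{N}|$ label sets handled by the Karp-Luby-style ratio estimator of Equation~(\ref{eqn:inclusion}), exactly as in Proposition~\ref{prop:prop1}.

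For the inductive step assume the estimates and sketches are in hand for all levels $<\ell$. Estimating $N(x^\ell)$ is the ``easy'' direction: $W(x^\ell)$ is the union over transitions $(x,A,z)$ of the product sets $A \cdot W(z^{\ell-1})$; estimate each product by $\aN(A)\cdot\aN(z^{\ell-1})$, estimate the overcount fraction of Equation~(\ref{eqn:inclusion}) by drawing fresh samples from the relevant sketches and the oracle, and conclude $\aN(x^\ell) = (1\pm\ell\eps')N(x^\ell)$ via a Hoeffding bound over the $\leq|\mathcal{N}|$ transitions, just as in Proposition~\ref{prop:prop1}. The real work is building $\aW(x^\ell)$, that is, producing an almost-uniform $w = w_1\cdots w_\ell \in W(x^\ell)$. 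I would grow $w$ symbol by symbol: given a prefix $u$, I must sample the next symbol $\alpha$ from a distribution $\wt{\DD}(u)$ close in total variation to the exact $\DD(u)$ with $\DD(u)(\alpha) = |W(x^\ell,u\alpha)|/|W(x^\ell,u)|$, where $W(x^\ell,u)$ is the set of accepted length-$\ell$ words with prefix $u$. Writing $W(x^\ell,u) = \{u\}\cdot\bigcup_{y\in R(x,u)}W(y^{\ell-|u|})$ with $R(x,u)$ the states reachable from $x$ on $u$, the admissible next symbols lie in the label sets of transitions out of $R(x,u)$. I would sample a transition $(y,A,z)$ with probability proportional to $\aN(A)\cdot\aN(z^{\ell-|u|-1})$, an estimate of the number of completions through it, call the oracle for $\alpha\sim A$, and then \emph{accept} $\alpha$ with probability~(\ref{rej-prob}), which discounts by the fraction of $\aW(z^{\ell-|u|-1})$ not already reachable via a transition into an earlier state $\zeta\in\BB(\alpha)$ under a fixed order $\prec$; on rejection I resample the transition. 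A telescoping computation shows that the overall probability of emitting a given $\alpha$ is, up to the multiplicative errors in the $\aN$'s and the sampling errors of the oracle and sketches, proportional to $|W(x^\ell,u\alpha)|$, so $\wt{\DD}(u)$ is $O(|\mathcal{N}|^4\eps_0 + \ell\eps')$-close to $\DD(u)$; and by choosing $\prec$ so that states with larger $W(\cdot)$ appear last, the expected per-trial acceptance probability is $\geq 1/\poly(|\mathcal{N}|)$, so $\poly(|\mathcal{N}|,\log(1/\delta))$ oracle calls suffice to emit a symbol.

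Multiplying the per-symbol distributions produces a sample of $W(x^\ell)$ within $\ell\cdot O(|\mathcal{N}|^4\eps_0 + \ell\eps')$ of uniform. The main obstacle is that this accumulated error feeds into round $\ell+1$'s size estimates and hence into round $\ell+1$'s sampling error, so a naive recursion doubles the error each round and becomes useless after $k$ rounds. I would control this, as flagged in Section~\ref{sec:tech}, by inserting an \emph{approximate rejection-sampling} step --- an approximate variant of the exact rejection sampler of~\cite{jerrum1986random} --- which re-weights each emitted sample toward uniform and converts the error recursion from geometric to additive, so that each round contributes only $O(|\mathcal{N}|^c\eps_0 + \eps')$ error for a constant $c$. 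Choosing $\eps' = \Theta(\eps/(k|\mathcal{N}|^c))$ together with the hypothesis $\eps \geq 100|\mathcal{N}|^4\eps_0$ then keeps the final error below $\eps$ with probability $1-\delta$, after median-boosting over independent runs to drive the per-state failure probability down to $\delta/\poly(|\mathcal{N}|)$. Tallying the sizes --- $\poly$-many samples per sketch, $k \leq |\mathcal{N}|$ rounds and states, $\poly(|\mathcal{N}|)$ oracle calls per symbol, $\leq k$ symbols per sample, and the $\log(N/\eps)\log^2(1/\delta)$ boosting and sketching factors --- yields the stated $O(T\cdot\log(N/\eps)\log^2(1/\delta)|\mathcal{N}|^{18}/\eps^4)$ running time and the matching bound on the number of oracle queries. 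The claimed almost-uniform sampler is then just a single top-level invocation of this routine using the sketches and estimates already computed, and its pointwise $(1\pm\eps)$ guarantee is exactly the per-sample bound established above.
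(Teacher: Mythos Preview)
Your proposal is essentially the paper's own approach: the same inductive maintenance of size estimates and sketches (the paper's Properties~3--5), the same transition-weighted oracle draw followed by the rejection probability~(\ref{rej-prob}) (the paper's \SampleFromVertex, Lemma~\ref{lem:beta}), and the same final approximate self-rejection with probability $\frac{1}{2q\,\aN(\cdot)}$ to linearize the error growth. Two small points where you diverge from the paper's details: (i) the paper processes states toward $\sfin$ and prepends symbols (your prefix formulation is the dual and equally valid); (ii) the $\geq 1/\poly$ acceptance bound comes from ordering \emph{transitions} by $Z_j = \aN(x_j)\aN(A_j)$ decreasingly so the top transition is never rejected, not from an ordering of states by $|W(\cdot)|$, and the computation of the running estimate $q$ also requires an inner Monte-Carlo estimate of the per-step rejection probability $\rho$ (lines~\ref{line:estimatefail}ff.\ in Algorithm~\ref{fig:sampPath}), a detail your sketch omits but which is needed to make the final rejection step well-defined.
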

%\alberto{Begins strongly acyclic segment}
As a corollary based on the reduction described earlier, we obtain the following. 
\begin{corollary}\label{cor:progmain}
	%Let $G = (V,E)$ be a directed set-labeled graph, 
	Let $\mathcal{N}$ be a succinct NFA, $k \geq 1$, $u,v \in V$, $\epsilon \in (100 ((k+1)|G|)^4 \eps_0,1) $, where $\eps_0$ is as in Definition \ref{def:prop}. Moreover, fix $\delta \in (0,1/2)$ and assume that $N$ satisfies that 
	%$|\lpaths_G^k(u,v)| \leq N$. 
	$|\cL_k(\mathcal{N})| \leq N$. Then there exists an algorithm that, with probability at least $1-\delta$ outputs a value $\aN$ such that $\aN = (1 \pm \eps)|\cL_k(\mathcal{N})| \leq N$. The algorithm runs in time 
	\begin{eqnarray*}
		O\bigg(T \cdot \frac{\log(N/\eps)\log^2(1/\delta) (k|\mathcal{N}|)^{18}}{\eps^4}\bigg),
	\end{eqnarray*}
	where $T$ is as in Definition \ref{def:prop}, and makes at most 
	\begin{eqnarray*}
		O\bigg(\frac{\log^2(1/\delta) \log(N/\eps) (k|\mathcal{N}|)^{18}} { \eps^4}\bigg)
	\end{eqnarray*}
	queries to the sampling oracle. 	
	Furthermore, there is an almost uniform sampler which returns elements of $\cL_k(\mathcal{N})$
	%$\lpaths_G^k(u,v)$ 
	such that 
	\begin{eqnarray*}
		\pr{\text{outputs  }\pi}  & = & (1 \pm \eps)\frac{1}{|\cL_k(\mathcal{N})|}
	\end{eqnarray*}
	for every $\pi \in \cL_k(\mathcal{N})$, and has the same runtime and oracle complexity as above.
\end{corollary}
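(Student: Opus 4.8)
The plan is to reduce directly to Theorem~\ref{thm:progmain} via the unrolling construction introduced at the start of this section. Given the succinct NFA $\mathcal{N} = (S,\Gamma,\Delta,\sinit,\sfin)$ and the integer $k$ (in unary), I would first build the unrolled succinct NFA $\nun^k$ exactly as specified: create copies $p^1,\dots,p^{k-1}$ of each state $p \in S$, and for each transition $(p,A,q) \in \Delta$ install the transitions $(\sinit, A, q^1)$, $(p^\alpha, A, q^{\alpha+1})$ for $\alpha \in \{1,\dots,k-2\}$, and $(p^{k-1},A,\sfin)$. Since $k$ is given in unary, this takes polynomial time, and every label set $A$ occurring in $\nun^k$ is literally a label set of $\mathcal{N}$, carried over with the same representation; hence the four conditions of Definition~\ref{def:prop} — the size bound, the membership test running in time $T$, the size estimate $\aN(A) = (1\pm\eps_0)|A|$, and the almost-uniform sampling oracle with error $\eps_0$ — hold verbatim for $\nun^k$.

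Next I would record two routine facts. First, every word accepted by $\nun^k$ has length exactly $k$, and by construction a word $w_1\cdots w_k$ is accepted by $\nun^k$ if and only if it is accepted by $\mathcal{N}$ along a run consisting of $k$ transitions; hence $\cL_k(\nun^k) = \cL_k(\mathcal{N})$ as sets of words, so in particular $|\cL_k(\nun^k)| = |\cL_k(\mathcal{N})| \leq N$. Second, $\nun^k$ has at most $(k-1)|S| + 2$ states, at most $k\,|\Delta|$ transitions, and total label-representation size at most $k \sum_{(s,A,s') \in \Delta}\rsize{A}$, so that $|\nun^k| \leq (k+1)|\mathcal{N}|$ (using $|\mathcal{N}| \geq 2$), and therefore $|\nun^k| = O(k\,|\mathcal{N}|)$.

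With these in hand, I would simply invoke Theorem~\ref{thm:progmain} on the input $\nun^k$, $k$, $\eps$, $\delta$, and the bound $N$. The hypothesis $\eps \in (100\,|\nun^k|^4\eps_0,\,1)$ of that theorem is implied by the corollary's hypothesis $\eps \in (100\,((k+1)|\mathcal{N}|)^4\eps_0,\,1)$. The theorem then produces, with probability at least $1-\delta$, a value $\aN = (1\pm\eps)|\cL_k(\nun^k)| = (1\pm\eps)|\cL_k(\mathcal{N})|$, together with an almost-uniform sampler over $\cL_k(\nun^k) = \cL_k(\mathcal{N})$ satisfying the claimed pointwise guarantee. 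Substituting $|\nun^k| = O(k\,|\mathcal{N}|)$ into the runtime $O(T \cdot \log(N/\eps)\log^2(1/\delta)\,|\nun^k|^{18}/\eps^4)$ and the oracle-query bound $O(\log^2(1/\delta)\log(N/\eps)\,|\nun^k|^{18}/\eps^4)$ of Theorem~\ref{thm:progmain} yields exactly the bounds stated in the corollary (using $(k+1)^{18} = O(k^{18})$ for $k \geq 1$), with the constant $T$ unchanged since membership is tested against the same label sets.

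I do not expect a genuine obstacle here: all of the technical weight sits in Theorem~\ref{thm:progmain}, and the remaining work is purely bookkeeping — verifying that the unrolling is polynomial-time, that it preserves the $k$-slice exactly, that Definition~\ref{def:prop} transfers (immediate, since the label sets are reused unchanged), and that the arithmetic of propagating $|\nun^k| = O(k\,|\mathcal{N}|)$ through the polynomial factors is correct. If any step deserves care, it is the size estimate for $|\nun^k|$ and its substitution into the $|\mathcal{N}|^{18}/\eps^4$-type factors, but this is a one-line calculation rather than a real difficulty.
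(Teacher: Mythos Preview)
Your proposal is correct and follows essentially the same approach as the paper: apply the unrolling construction to obtain $\nun^k$, observe that $\cL_k(\nun^k) = \cL_k(\mathcal{N})$ and $|\nun^k| = O(k\,|\mathcal{N}|)$, and invoke Theorem~\ref{thm:progmain}. The paper's own proof is a two-sentence sketch of exactly this; you have simply filled in the bookkeeping details (preservation of Definition~\ref{def:prop}, the size estimate, and the substitution into the complexity bounds).
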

\begin{proof}
	The result follows from Theorem \ref{thm:progmain}, as well as the reduction described earlier from arbitrary succinct NFAs to unrolled succinct NFAs.
%	set-labeled graphs $G$ to strongly acyclic set-labeled graphs. 
	Notice that in this reduction the size of $\mathcal{N}$ increases by a factor of $O(k)$, which completes the proof.
\end{proof}
%\alberto{Ends strongly acyclic segment}

Using  Theorem \ref{thm:progmain}, we can prove Lemma \ref{lem:estimatepart}.
%, which we now restate:
%\paragraph{Lemma \ref{lem:estimatepart}}\textit{
%	Let $\delta > 0$, and fix independent and uniform samples sets $\aT_i(s^j)$ of $T(s^j)$ each of size $\Theta(\log^2(1/\delta) (nm)^{38} /\eps^4)$, for every $s \in \St$ and $j < i$. Suppose further that we have values $\aN(s^j) = (1 \pm j\eps)N(s^j)$ for every $s\in S$ and $j \leq i$ Then with probability $1-\delta^{nm}$, the following holds: for every state $s \in \St$ and for every partial tree $t$ of size $i$, the procedure \EstimatePartitionSize$(t,\st^i,\{\aT_i(s^j)\}_{s \in S, j < i}, \allowbreak \{\aN(s^j)\}_{s \in S, j \leq i})$ runs in $\poly(n,m,\eps^{-1},\log(1/\delta))$-time and returns a value $\aN(\st^i, t)$ such that 
%\[ 
%\aN(\st^i, t ) \ = \ (1 \pm  (4nm)^5\eps) \cdot \aN(\st^i, t)	
%\]
%}
\begin{proof}[Proof of Lemma \ref{lem:estimatepart}]
Recall that we assume given a tree automaton $\cT = (S, \Sigma , \Delta,\sinit)$ over binary trees, a natural number $n \geq 1$ given in unary, the relative error $\eps \in (0,1)$ and a value $\delta \in (0,1/2)$. Moreover, we assume that $m \geq 3$ is the size of $\cT$, which we define as $m = \|\Delta\|$.

Let $s \in S$ and $t$ be a partial tree such that $\virtsize{t} = i$. By using Lemma \ref{lem:treetograph}, we can construct in polynomial-time a succinct NFA $\mathcal{N}$ such that $N(s^i,t) = |T(s^i,t)| = |\cL_k(\mathcal{N})|$. 
Moreover, by the reduction of Lemma \ref{lem:treetograph}, each label set $A$ of a transition in $\mathcal{N}$ is of the form $A = T(s^j)$ for some state $s$ and some $j < i$ in the graph.  
%By assumption of the lemma, we have size estimates  $\aN(s^j) = (1 \pm (nm)\eps)N(s^j)$ for each such set.  
%In Theorem \ref{thm:progmain}, 
Thus, if we assume $\eps_0 = \eps 4nm$, then this gives us $\aN(s^j) = (1 \pm \eps_0)N(s^j)$ as $\mathcal{N}$ is required to satisfy the properties of Definition \ref{def:prop}. Moreover, 
%where $\eps_0$ is as in Definition \ref{def:prop}, and 
define $\eps_1  = \eps_0 (4n m)^{16}$, where $\eps_1$ is the precision parameter from Theorem \ref{thm:progmain}. 
%Notice that this gives us $\aN(s^j) = (1 \pm \eps_0)N(s^j)$ 
Then we have that $\eps_1 >  300 (n m)^{16}\eps_0 \geq 100 |\mathcal{N}|^4 \eps_0$ as required by Theorem \ref{thm:progmain}, where here we used the fact that $|\mathcal{N}| \leq 3(nm)^4$ in the reduction of Lemma \ref{lem:treetograph}. Moreover, $\eps_1 = (4n m)^{17} \eps < 1$,  as also required by Theorem \ref{thm:progmain}, since we assume that $\eps < 1/(4nm)^{18}$. Finally, we also set $\delta_0 = \delta^{(nm)^3}$ to be the failure probability as in Theorem \ref{thm:progmain}. Thus, by Theorem \ref{thm:progmain},	
	 using at most $O(\log^2(1/\delta_0) \log(N/\eps_1) (nm)^{4\cdot 18} /\eps_1^4) = O(\log^2(1/\delta) \log(N/\eps) (nm)^{10} /\eps^4)$	 
	  samples, we obtain a $(1 \pm \eps_1)$-estimate of the size of the number of labeled paths with probability $1-\delta_0 = 1-\delta^{(nm)^3}$.
	   By Lemma \ref{lem:treetograph}, we therefore obtain the same estimate of the partition $t$, for a given partial tree $t$.	  
	   
	   We now bound the number of ordered, rooted, labeled trees of size $n$. 
	  By Cayley's formula, we can bound the number of unlabeled, unordered, undirected trees by $n^{n-2}$. The number of rooted, unordered, undirected, unlabeled trees can then be bounded by $n^{n-1}$. For each tree, every vertex has $|\Sigma| \leq m$ choices of a labeling, thus there are $m^n  n^{n-1} < (nm)^{nm}$ labeled unordered, undirected rooted trees (recall that $m \geq 3$). Finally, for each such a tree, we can bound the number of ways to transform it into an ordered and directed tree by $(2n)!$, which gives a bound of $(2n)! \cdot (nm)^{nm} \leq (nm)^{(nm)^2}$ (recall again that $m \geq 3$). Note that this also implies that $N \leq  (nm)^{(nm)^2}$, 
	  %and since $\eps < N$ (otherwise we  would be allowed to exactly count), this 
	  which gives a total sample complexity bound of $O(\log^2(1/\delta) (nm)^{13} /\eps^5)$. 
	  Observe that the number of partial trees $t$ such that $\virtsize{t} = i$ is bounded by $n (nm)^{(nm)^2}$. In particular, this bound is obtained by considering that $m \geq 3$ and the fact that the number of labels for partial trees is at most $|\Sigma| + n \leq m + n$.
	  Now by a union bound and considering that $\delta < 1/2$, with probability
	  \begin{eqnarray*}
	  1- (nm)(nm)^{(nm)^2} \delta^{(nm)^3} \ \geq \ 1 - \delta^{(nm)^3 - ((nm)^2+1) \log(nm)} \ \geq \ 1 - \delta^{nm},
	  \end{eqnarray*}
%	  \delta^{(nm)^3 - (nm)^2} > 1-\delta^{(nm)^2} $, 
	  \EstimatePartitionSize$(t,\,\st^i,\,\{\aT_i(s^j)\}_{s \in S, j < i}, \allowbreak \{\aN(s^j)\}_{s \in S, j \leq i},\, \eps,\, \delta)$ returns a $(1 \pm \eps_1) = (1 \pm (4nm)^{17}\eps)$ estimate for all trees $t$ such that $\virtsize{t} = i$ and for all states $s^i$ such that $s \in S$.  
	  Finally, note that the runtime is $\poly(n,m,1/\eps,\log(1/\delta))$ since 
	  it is bounded by a polynomial in the sample complexity, which is polynomial in $n$, $m$, $1/\eps$ and $\log(1/\delta)$
	  %$\poly(n,m,1/\eps,\log(1/\delta))$ 
	  by Lemma \ref{lem:beta} and Theorem \ref{thm:progmain}.
\end{proof}

\subsection{Approximate Counting of Accepted Words in Succinct NFAs}
%\alberto{I think this part talked about stronlgy acyclic graphs, so maybe begin with a statement that from now on we are talking about unrolled succinct NFAs.}
The goal of this section is to prove Theorem \ref{thm:progmain}. 
In what follows, fix a succinct NFA $\mathcal{N} =(S,\Gamma,\Delta,\sinit,\sfin)$ over a finite set of labels $\Gamma$ and recall that the label sets of $\Delta$ have to satisfy the conditions of Definition \ref{def:prop}.
Besides, assume that $\mathcal{N}$ is unrolled, and recall the definition of unrolled succinct NFA from the beginning of this section.
%\alberto{Begins strongly acyclic segment}
%Recall that the definition of strongly acyclic is that for all vertices $x,y$ in the graph, all paths from $x$ to $y$ have the same length.  
%\alberto{Begins strongly acyclic segment}
Without loss of generality, assume that $S$ only contains states which lie on a path from $\sinit$ to $\sfin$. Furthermore, let $s_0, \ldots, s_n$ be a topological order of the states in $S$ such that $s_0 = \sinit$ and $s_n = \sfin$, where $|S|=n+1$. In other words, every path from $s_0$ to $s_n$ can be written in the form $s_0,s_{i_1},s_{i_2},\dots,s_n$, where $1 \leq i_1 \leq i_2 \leq \dots \leq n$. Finally, for brevity, we write $r = |\mathcal{N}|$.

For the sake of presentation, for every state $s_i$, let $W(s_i) = \cL(\mathcal{N}_{s_i})$ and $N(s_i) = |\cL(\mathcal{N}_{s_i})|$, where $\mathcal{N}_{s_i}$ is an exact copy of $\mathcal{N}$ only with the final state changed to $s_i$. Then our goal is to estimate $N(s_n)$.
Similar than for the previous section, we will simultaneously compute estimates $\aN(s_i)$ of the set sizes $N(s_i)$, as well as multi-set sketches $\aW(s_i)$ filled with i.i.d. nearly-uniform samples from $W(s_i)$. We do this iteratively for $i=0,1,2,\dots,n$.  For the remainder of the section, fix $\eps,\eps_0,\delta$ as in Theorem \ref{thm:progmain}, and assume that $\mathcal{N}$ satisfies the conditions of Definition \ref{def:prop}, from which we know that $|W(s_i)| \leq N$ for each node $s_i$, for some $N \leq 2^{\poly(r)}$. 
 Set $\gamma = \log(1/\delta)$. 
Finally, since the FPRAS must run in time $\poly(r,1/\eps)$, we can assume $\eps < 1/(300r)$ without loss of generality.

We first observe that membership in $W(s_i)$ is polynomial-time testable given polynomial-time membership tests for each label $A$.

\begin{proposition}\label{prop:membertest}
	Suppose that given a sequence $a_1 \ldots a_t \in \Gamma^t$, we can test in time $T$ whether $a_i \in A$ for each transition label $A$  (T is the membership time in Definition \ref{def:prop}). Then given any state $s_j$, we can test whether $a_1 \dots a_t \in W(s_j)$ in time $O(|\Delta| T)$ 
\end{proposition}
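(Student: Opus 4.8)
The plan is to decide membership of $w = a_1 \dots a_t$ in $W(s_j)$ by a standard forward reachability computation over the unrolled succinct NFA, where each ``edge check'' is replaced by a membership query to the label-set oracle. First I would recall that $W(s_j) = \cL(\mathcal{N}_{s_j})$, so $w \in W(s_j)$ if and only if there is a sequence of states $\sinit = q_0, q_1, \dots, q_t = s_j$ and transitions $(q_{\ell-1}, A_\ell, q_\ell) \in \Delta$ with $a_\ell \in A_\ell$ for every $\ell \in [t]$. I would compute the set $R_\ell \subseteq S$ of states reachable from $\sinit$ by reading the prefix $a_1 \dots a_\ell$, iteratively: set $R_0 = \{\sinit\}$, and for $\ell = 1, \dots, t$ let
\[
R_\ell = \{\, q' \in S \mid \exists (q, A, q') \in \Delta \text{ with } q \in R_{\ell-1} \text{ and } a_\ell \in A \,\}.
\]
Then $w \in W(s_j)$ if and only if $s_j \in R_t$, which is the correctness claim; it follows immediately by induction on $\ell$ from the acceptance definition.

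For the running time: computing $R_\ell$ from $R_{\ell-1}$ requires, in the worst case, scanning every transition $(q,A,q') \in \Delta$ and, for those with $q \in R_{\ell-1}$, performing one membership test $a_\ell \in A$ at cost $T$. This is $O(|\Delta| \, T)$ work per symbol. A subtle point worth flagging is that the statement claims total time $O(|\Delta|\,T)$, not $O(t \, |\Delta|\, T)$; the reason this is legitimate here is that $\mathcal{N}$ is \emph{unrolled}, so every transition goes strictly from level $\alpha$ to level $\alpha+1$, and hence each transition of $\Delta$ is ``relevant'' at exactly one value of $\ell$. Thus over the whole run of $\ell = 1, \dots, t$ each transition is examined (and its label queried) at most once, giving total time $O(|\Delta|\, T)$ together with $O(|S| + |\Delta|)$ bookkeeping to maintain the reachable sets as bit-vectors. (If one does not exploit the unrolled structure, the bound is the expected $O(t\,|\Delta|\,T)$, which is still polynomial; the unrolled refinement is what yields the stated form.)

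The only genuine obstacle is the accounting in the previous paragraph — being careful that membership tests are not repeated across levels — and the observation that, since $w$ must have length exactly matching the number of levels for acceptance to be possible, one can also short-circuit and return ``no'' if $t$ does not equal the number of levels, which makes the charging argument clean. Everything else is a direct transcription of the textbook subset-construction reachability test, with the single modification that edge labels are sets queried through the Definition \ref{def:prop} membership oracle rather than single symbols compared directly.
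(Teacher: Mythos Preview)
Your proposal is correct and takes essentially the same approach as the paper. The paper phrases it as first pruning transitions not on a length-$t$ path from $\sinit$ to $s_j$ via BFS, then keeping a remaining transition $e$ only if $a_i \in A$ for the unique level $i$ at which $e$ sits (uniqueness by unrolledness), and finally checking reachability; your forward-reachability formulation and the paper's filter-then-reachability formulation both hinge on exactly the observation you flagged, that in the unrolled NFA each transition is queried at most once.
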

\begin{proof}
	We can first remove all transitions not contained in a run of length exactly $t$ from $s$ to $s_j$ in time $O(|\Delta|)$ by a BFS.
	%\alberto{Begins strongly acyclic segment}
	%(note that this is not required if $\mathcal{N}$ is unrolled).
	%\alberto{Ends strongly acyclic segment}
	Then for each transition $e = (s',A,s'')$ remaining which is on the $i$-th step from $s$ to $s_j$, with $i \leq t$, we keep $e$ if and only if $a_i \in A$. Note that $i$ is unique for $e$
	%\alberto{Begins strongly acyclic segment}
	as $\mathcal{N}$ is unrolled. 
	%\alberto{Begins strongly acyclic segment}
	It is now straightforward to check that $s_j$ is reachable from $s$ with the remaining transitions if and only if $a_1\ldots a_t \in W(s_j)$. It is easy to check that the time needed by the entire procedure is~$O(|\Delta| T)$.
\end{proof}

\noindent
Now, analogous to the prior section, we define the following properties for each state $s_i$. Recall that we use $r = |\mathcal{N}|$
% = O(|E|)$ 
to denote the size of $\mathcal{N}$ for brevity.

%Consider an $\epsilon > 0$.
%Similar than previous section, we assume that $\epsilon < \frac{1}{n^3}$.
%So, given this $\epsilon$ we define the following properties for any given $i \leq n$:

\paragraph{Property 3:} We say that $s_i$ satisfies Property 3 if $\aN(s_i) = (1 \pm i  \eps/r) N(s_i)$. 

\paragraph{Property 4:}  We say that $s_i$ satisfies Property 4 if for every subset $L \subseteq \{0,1,2,\dots,i-1\}$ , we have that
\begin{eqnarray*}
	\bigg|\frac{\big|\aW(s_i) \setminus \big(\bigcup_{j \in L} W(u_j)\big)\big|}{|\aW(s_i)|} - \frac{\big|W(s_i) \setminus \big(\bigcup_{j \in L} W(u_j)\big)\big|}{|W(s_i)|}\bigg|  & \leq & \frac{\eps}{r}
\end{eqnarray*}
Moreover, we have that the subsets $\aW(s_i)$ are of size $|\aW(s_i)| = O(\frac{r^3 \gamma}{\eps^2})$.
\paragraph{Property 5:}  We say that $s_i$ satisfies Property 5 if we have a polynomial-time algorithm which returns independent samples from $W(s_i)$, such that for all $w \in W(s_i)$:
\begin{eqnarray*}
	\pr{\, \text{outputs $w$} \mid \neg \textbf{FAIL} \, } &=& \bigg(1 \pm \frac{\eps}{3r^2}\bigg) \frac{1}{N(s_i)}
\end{eqnarray*} 
The algorithm is allowed to fail with probability at most $1/4$, in which case it returns \textbf{FAIL} (and returns no element). Finally, each run of the algorithm is allowed to use at most $O(\frac{\log(N/\eps)\gamma r^{11}}{\eps^2})$ oracle calls to the sampling oracle of Definition \ref{def:prop}.

\begin{lemma}\label{lem:alpha}
	%Fix $\gamma \geq 1$. 
	Suppose that Properties  $3$, $4$ and $5$ hold for all $s_j$ with $j < i$. Then with probability at least $1-2^{-\gamma r}$ we can return an estimate $\aN(s_i) = (1 \pm i  \eps/r) N(s_i)$. In other words, under these assumptions it follows that Property $3$ holds for $s_i$. Moreover, the total number of calls to the sampling oracle of Definition \ref{def:prop} is $O(\frac{\log(N/\eps)\gamma^2 r^{17}}{\eps^4})$, and the total runtime can be bounded by $O(T\frac{\log(N/\eps)\gamma^2 r^{17}}{\eps^4})$, where $T$ is the membership test time in Definition \ref{def:prop}.
\end{lemma}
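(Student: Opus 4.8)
The plan is to estimate $N(s_i)$ by the same union-based (Karp--Luby-style) scheme as in Proposition~\ref{prop:prop1}, now applied to the decomposition of $W(s_i)$ over the transitions of $\mathcal N$ entering $s_i$. If $s_i=\sinit$ then $W(s_i)=\{\lambda\}$ and one sets $\aN(s_i)=1=N(s_i)$, so assume $i\ge 1$ and let $e_1=(s_{j_1},A_1,s_i),\dots,e_\ell=(s_{j_\ell},A_\ell,s_i)$ be all transitions with target $s_i$; since $\mathcal N$ is unrolled and $s_0,\dots,s_n$ is topological, $\ell\le|\Delta|\le r$ and each $j_k<i$, and
\[ W(s_i)\;=\;\bigcup_{k=1}^{\ell}B_k,\qquad B_k\;:=\;W(s_{j_k})\cdot A_k, \]
where every word of $B_k$ factors uniquely into a prefix in $W(s_{j_k})$ followed by one letter of $A_k$ (all words of $W(s_i)$ have the same length). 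Fixing this order and writing $p_k=\Pr_{x\sim U(B_k)}[\,x\notin B_1\cup\dots\cup B_{k-1}\,]$, one has $|B_k|=N(s_{j_k})\,|A_k|$ and $N(s_i)=\sum_{k=1}^{\ell}|B_k|\,p_k$, so it suffices to estimate each $|B_k|$ and each $p_k$.

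For the sizes I would take $\widetilde M_k:=\aN(s_{j_k})\cdot\aN(A_k)$; by Property~3 for $s_{j_k}$ (valid since $j_k<i$) and item~3 of Definition~\ref{def:prop}, together with $j_k\le i-1\le r$ and $\eps_0\le\eps/(100r^4)$, this is $\bigl(1\pm((i-1)\eps/r+\eps/(50r^3))\bigr)|B_k|$. For the coverage probabilities I would use an empirical estimate: draw $h=O(r^5\gamma/\eps^2)$ independent samples $x=w\cdot a$ of $B_k$, with $w$ from the (conditioned-on-success) almost-uniform sampler of Property~5 for $W(s_{j_k})$ and $a$ from the almost-uniform sampler of item~4 of Definition~\ref{def:prop} for $A_k$, so that $x$ is $(1\pm\eps/(3r^2))$-pointwise uniform on $B_k$; for each $x=w\cdot a$, test in $\poly(r)$ time whether $x\in B_{k'}$ for some $k'<k$ --- equivalently whether $w\in W(s_{j_{k'}})$ (Proposition~\ref{prop:membertest}) \emph{and} $a\in A_{k'}$ (item~2 of Definition~\ref{def:prop}) --- and let $\tilde p_k$ be the fraction of samples admitting no such $k'$. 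The output is $\aN(s_i):=\sum_{k=1}^{\ell}\widetilde M_k\,\tilde p_k$. (One could instead recycle the sketches $\aW(s_{j_k})$ of Property~4 for the $w$-part after ordering the $e_k$ by source index, but those sketches have only $O(r^3\gamma/\eps^2)$ elements and Property~4 only bounds coverage ratios to $\eps/r$, which is too weak below, so fresh Property~5 samples are used.)

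\textbf{Correctness, cost, and the hard part.} Hoeffding's inequality, together with the pointwise accuracy of the two samplers, gives $\tilde p_k=p_k\pm\eps/(2r^2)$ with probability $\ge 1-2^{-\gamma r}/r$, and a union bound over $k\in[\ell]$ makes all of $\widetilde M_1,\tilde p_1,\dots,\widetilde M_\ell,\tilde p_\ell$ simultaneously accurate with probability $\ge 1-2^{-\gamma r}$. Writing $\widetilde M_k=(1+\alpha_k)|B_k|$, $\tilde p_k=p_k+\beta_k$ with $|\alpha_k|\le(i-1)\eps/r+\eps/(50r^3)$ and $|\beta_k|\le\eps/(2r^2)$,
\[ \aN(s_i)\;=\;\sum_k|B_k|p_k\;+\;\sum_k\alpha_k|B_k|p_k\;+\;\sum_k(1+\alpha_k)|B_k|\beta_k, \]
the first sum is $N(s_i)$, the second has absolute value at most $\max_k|\alpha_k|\,N(s_i)$, and --- using $B_k\subseteq W(s_i)$, hence $\sum_k|B_k|\le\ell\,N(s_i)\le r\,N(s_i)$ --- the third is at most $2r\max_k|\beta_k|\,N(s_i)\le\eps\,N(s_i)/r$ in absolute value; adding up yields $\aN(s_i)=(1\pm i\eps/r)N(s_i)$, i.e.\ Property~3 for $s_i$, with the inherited error $(i-1)\eps/r$ entering only through $\alpha_k$ (never amplified), so it grows by just $\eps/r$ per state. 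For cost, each of the $\ell\le r$ coverage estimates uses $O(r^5\gamma/\eps^2)$ samples from the Property~5 sampler, each costing $O(\log(N/\eps)\gamma r^{11}/\eps^2)$ calls to the oracle of Definition~\ref{def:prop} (with an $O(1)$-factor of resampling to absorb the $1/4$ failure probability) plus $\poly(r)$ membership and reachability tests, for a total of $O(\log(N/\eps)\gamma^2 r^{17}/\eps^4)$ oracle calls and $O\!\bigl(T\cdot\log(N/\eps)\gamma^2 r^{17}/\eps^4\bigr)$ time. The hard part is exactly this error bookkeeping: since $\sum_k|B_k|$ can reach $r\,N(s_i)$, the $p_k$ must be estimated to additive precision $\Theta(\eps/r^2)$ rather than $\Theta(\eps/r)$, which forces the use of the stronger $\eps/(3r^2)$-pointwise guarantee of the Property~5 sampler and of the slack $\eps_0\le\eps/(100r^4)$ (Property~4 being insufficient), and pushes the sample --- hence oracle-call --- count up to the stated polynomial; getting this hierarchy of precisions to line up while keeping the $2^{-\gamma r}$ failure bound through the union bound over the $\le r$ transitions is the delicate point, everything else being a routine transcription of Proposition~\ref{prop:prop1}.
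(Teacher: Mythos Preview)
Your proposal is correct and follows essentially the same approach as the paper: decompose $W(s_i)$ as the union $\bigcup_k W(s_{j_k})\cdot A_k$ over the transitions into $s_i$, estimate each product size via $\aN(s_{j_k})\cdot\aN(A_k)$, estimate each coverage probability $p_k$ to additive $\Theta(\eps/r^2)$ by drawing $O(r^5\gamma/\eps^2)$ samples $w\cdot a$ using the Property~5 sampler for $w$ and the Definition~\ref{def:prop} oracle for $a$, and combine. Your error bookkeeping (the $\sum_k|B_k|\le r\,N(s_i)$ step forcing the $\eps/r^2$ precision on $\tilde p_k$) and your oracle-call accounting match the paper's; your observation that Property~4 alone would not suffice here, and your slightly more careful handling of the base case and of the $1/4$ failure probability of the Property~5 sampler, are fine refinements.
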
 

\begin{proof}	
	First note that for $i=0$, Property 3 trivially holds since $W(s_0) = \emptyset$. Otherwise, let $i \geq 1$ and $(v_1, A_1, s_i),\ldots, (v_k, A_k, s_i) \in \Delta$ be the set of all transitions going into $s_i$ (recall that we sorted the $\{s_i\}_{i\in [0,n]}$ by a topological ordering, so $v_1,\dots,v_k \in \{s_0,\dots,s_{i-1}\}$). Observe $W(s_i) = \bigcup_{j=1}^k \left(W(v_j) \cdot A_j\right)$.
	Fix a transition $(v_j, A_j, s_i)$ and assume that $\eps_0 = \eps/(100r^4)$ in Definition \ref{def:prop}, so that we are given estimates $\aN(A_j) = (1 \pm \epsilon_0)|A_j|$ since $\mathcal{N}$ satisfies the conditions in this definition. 
	Then the number of words reaching $s_i$ through $(v_j, A_j, s_i)$ is given by $N(v_j)|A_j|$, and it can be estimated as follows assuming that $v_j = s_k$ with $k < i$:
	\begin{eqnarray*}
		\aN(v_j)\cdot  \aN(A_j) & = & (1 \pm k \eps/r)(1 \pm \eps_0)N(v_j) |A_j|\\
		& = & (1 \pm (i-1) \eps/ r)(1 \pm \eps/(100r^4))N(v_j) |A_j|\\
		& = & (1 \pm (i-1+1/r^{2})(\eps/r))N(v_j) |A_j|.
	\end{eqnarray*}
	Notice that in this deduction we use the fact that $i - 1 \leq r$. 
	Let $p_j$ denote be the probability that a uniformly drawn $s \sim \left(W(v_j) \cdot A_j\right)$ is not contained in $ \bigcup_{j' < j} \left(W(v_{j'}) \cdot A_{j'}\right)$. Then we can write $W(s_i) =   \sum_{j=1}^k N(v_j) |A_j|  p_j$.	We now estimate $p_j$ via $\tilde{p}_j$. By Property 5 and the assumptions from Definition \ref{def:prop}, we can obtain nearly uniform samples $w \sim W(v_j)$ and  $a \in A_j$ in polynomial time, 
	such that the probability of sampling a given $w$ and $a$ are $(1 \pm \eps/(3r^2)) N(v_j)^{-1}$	
	and $(1 \pm \eps_0)|A_j|^{-1}$, respectively.
	Moreover, $w \cdot a$ is a sample from $W(v_j) \cdot A_j$, such that for any $w' \cdot a' \in W(v_j) \cdot A_j$:
	\begin{equation}\label{eq:prob1}
	\pr{w \cdot a = w' \cdot a' } = \left(1 \pm \frac{\eps}{3r^2}\right) \frac{1 \pm \eps_0}{N(v_j)|A_j|} 
	\end{equation}
	Note that the relative error $(1 \pm \eps/(3r^2))(1 \pm \eps_0)$ can be bounded in the range $(1 \pm 2\eps/(5r^2))$ using the fact that $\eps_0 = \eps/(100r^4)$. 
	We repeat this sampling process $d = O(\gamma r^{5}/ \eps^2)$ times, obtaining samples $w_1a_1,\ldots,w_da_d \sim W(v_j) \cdot A_j$ and set $\tilde{p}_j$ to be the fraction of these samples not contained in $ \bigcup_{j' < j} \left(W(v_{j'}) \cdot A_{j'}\right)$. Then by Hoeffding's inequality, with probability  $1-2^{-\gamma r}$ we have that $\tilde{p}_j = p_j \pm 2\eps/(3r^2)$ (here we use \eqref{eq:prob1}, which tell us that the expectation of $\tilde{p}_j$ is at most $2\eps/(5r^2)$ far from the correct expectation $p_j$). We then set:
	\begin{eqnarray*}
		\aN(s_i) &=& \sum_{j=1}^k \aN(v_j)\cdot  \aN(A_j) \cdot  \tilde{p}_j \\
		& = & (1 \pm (i-1+1/r^{2})(\eps/r))\sum_{j=1}^k N(v_j) \cdot |A_j| \cdot  \left(p_j \pm \frac{2\eps}{3r^2}\right) \\
		& = & (1 \pm (i-1+1/r^{2})(\eps/r))\left[ \sum_{j=1}^k N(v_j)  |A_j|  p_j \pm \frac{2\eps}{3r^2}   \sum_{j=1}^k N(v_j)  |A_j| \right] \\
		& = &(1 \pm (i-1+1/r^{2})(\eps/r))\left[ N(s_i) \pm \frac{2\eps}{3r^2}  \sum_{j=1}^k N(s_i) \right] \\
		& = & (1 \pm (i-1+ 1/r^{2})(\eps/r))\left[ N(s_i) \pm 2\eps  /(3r) N(s_i) \right] \\
		%& = & (1 \pm i\eps+(i-1)\eps^2)(1 \pm 2\eps/3r) N(s_i) \\
		& = &(1 \pm i \eps/ r) N(s_i)
	\end{eqnarray*}
	as desired. Note that we need only compute $\tilde{p}_j$ for at most $r$ values of $v_j$, thus the total number of samples required is $O(\gamma r^{6}/ \eps^2)$. By Property $5$, each sample required $O(\frac{\log(N/\eps)\gamma r^{11}}{\eps^2})$ oracle calls, thus the total oracle complexity is $O(\frac{\log(N/\eps)\gamma^2 r^{17}}{\eps^4})$ as needed. By Proposition \ref{prop:membertest}, each membership test required while computing the probabilities $\tilde{p}_j$ required $O(Tr)$ time, thus the total runtime can be bounded by $O(T\frac{\log(N/\eps)\gamma^2 r^{17}}{\eps^4})$, which was to be shown. 
	
	 % For the sample complexity, note that we made $O(\gamma n^{5}/ \eps^2)$ samples, each of which used at most $O(\gamma \log(N) r^9/\eps^2)$ samples by Property $5$. Hence, the total number of calls to unit oracles is $O(\gamma^2 \log(N) r^{14}/\eps^4)$.
	
	\end{proof}
	\begin{algorithm}[!h]
		\SetKwInOut{Input}{Input}
		\caption{\SampleFromVertex$(s_i, \aN(s_i))$\label{fig:sampPath}}
		\smallskip
		%	\Input{function $M(x): \R^n \to \R$, $k$: latent dimension , $\gamma$: probability parameter\\}
		\If{$\aN(s_i) = 0$}{
		\smallskip
		\Return{$\bot$ \hfill \tcp{$\bot$ indicates that $W(s_i)$ is empty} \label{alg:empty_bot}}
		}\smallskip
		Initialize $w \leftarrow \lambda,\ q \leftarrow 1$.  \hfill \tcp{$\lambda$  is the empty string} 
		\smallskip
		 %$q \leftarrow 1$
		\For{$\beta = 1,2,\dots,d(s_0,s_i)$}{	 \label{line:outerfor}
				\smallskip
			\While{$|w| < \beta$}{ \label{Line:forstart}
					\smallskip
				Let $\FF(s_i,w) = \{(x_1, A_1,y_1),\ldots,(x_k, A_k, y_k)\}$. \\
					\smallskip
				Let $Z_j = \aN(x_j) \aN(A_j)$ for each $j \in [k]$, and $Z = \sum_{j=1}^k Z_j$.\\
					\smallskip
				Order the $Z_i$'s so that $Z_1 \geq Z_2 \geq \cdots \geq Z_k$. \\
					\smallskip
				Sample $j \sim [k]$ with probability $\frac{Z_j}{Z}$. \\
					\smallskip
			Obtain an almost uniform sample $a \sim A_j$.  \hfill	 \tcp{via Definition \ref{def:prop} }\label{line:sample}
				\smallskip
				Let $\BB(a) = \{j'  \in [k] \mid a \in A_{j'}\}$ and accept $a$ with probability:
				\begin{eqnarray*}
					q_{a,j} &=& \frac{\big|\aW(x_j) \setminus \big(\bigcup_{j' \in \BB(a) \,:\, j' < j} W(x_{j'}) \big) \big|}{\big|\aW(x_j)\big|}
				\end{eqnarray*}\label{line:acceptprob}
				
				\If{$a$ is accepted}{\smallskip
%					\smallskip
					%	Perform the following experiment to estimate the probability of failure.
					%$\rho \leftarrow 0$, $M \leftarrow \frac{200^2\log(N/\eps)\gamma r^{10}}{\eps^2}$. \\
					$\rho \leftarrow 0$, $M \leftarrow \Theta(\frac{\log(N/\eps)\gamma r^{10}}{\eps^2}$). \\
				        \tcp{$\rho$ approximates the probability that a trial fails to accept some $a$} 
					\smallskip
					\For{$h=1,2,\dots,M$}{		\smallskip 	\label{line:estimatefail}		
						Sample $j \sim [k]$ with probability $\frac{Z_j}{Z}$, then sample $a_h \sim A_j$. \\	\smallskip
						With probability $1-q_{a_h,j}$ increment $\rho \leftarrow \rho+1$.\\	\smallskip
					}	\smallskip
					$\rho \leftarrow \frac{\rho}{M}$, and update: 
					\begin{eqnarray*}
						q &\leftarrow &  q \cdot \left( \frac{\sum_{j' \in B(a)} \frac{\aN(x_{j'})}{Z} q_{a,j'}}{1-\rho}\right)
					\end{eqnarray*}
					
					$w \leftarrow a w$ \label{alg:append_a_w}\\	\smallskip
				}	\smallskip
				%	}
			}\label{Line:forend}	\smallskip
		
			%\Return{\FAIL}
			%}
		}	\tcp{$q$ approximates the probability that $w$ was sampled up to this point} 
		
		With probability $\frac{1}{2q \aN(s_i)}$ \Return{$w$}, otherwise \Return{\FAIL}.  \label{Line:final}
	\end{algorithm}
%	\rajesh{define $\Gamma^+$}
We now describe our sampling procedure.  To do so, we will first develop some notation.
We extend our previous notation and use $\mathcal{N}_{s,s'}$ to denote an exact copy of $\mathcal{N}$ but with $s$ as the initial state and $s'$ as the final state. Let $s_i$ be a vertex and $w \in \Gamma^*$ a sequence of symbols. If $w$ contains at least one symbol, then let $\FF(s_i, w) = \{(s_{a}, A, s_{b}) \in \Delta \mid  w \in \cL_k(\mathcal{N}_{s_{b}, s_i})$, namely, $\FF(s_i, w)$ is the set of all transitions $(s_{a}, A, s_{b})$ incident to a state $s_b$ from which we can reach $s_i$ by a path labeled 
%a set containing 
by $w$. Otherwise, we have that $w = \lambda$, where $\lambda$ is the empty string, and $\FF(s_i, \lambda)$ is defined as $\{(s_{a}, A, s_{b}) \in \Delta \mid  s_{b} = s_i\}$.
Moreover, let $|w|$ be the length of $w$
%denote the number of symbols from $\Gamma$  contained in the sequence $w$. Let 
and $d: S \times S \to \mathbb{N}$ be the distance metric between states of $\mathcal{N}$ when considered as a graph, i.e. $d(s, s')$ is the number of transitions that we need to make to get from $s$ to $s'$. Without loss of generality (by unrolling the succinct NFA if needed), we can make sure that $d$ is well defined.

We now present our main sampling algorithm of this section: Algorithm \ref{fig:sampPath}. For ease of presentation, Algorithm \ref{fig:sampPath} is written as a Las Vegas randomized algorithm, which could potentially have unbounded runtime. However, by simply terminating the execution of the algorithm after a fixed polynomial runtime and outputting an arbitrary string of bits, the desired correctness properties of the sampler will hold. The analysis of Algorithm \ref{fig:sampPath}, along with the finite-time termination procedure, is carried out in the proof of Lemma \ref{lem:beta} below.

\begin{lemma}\label{lem:beta}
	%Fix any $\gamma \geq 1$. Suppose \SampleFromVertex\ is run on a extended graph $G$ such taht for each $u_i \in V$, we have $|W(s_i)| \leq N$ for some $N \leq 2^{\poly(r)}$. 
	Suppose Property 3 holds for all levels $j \leq i$, and Property 4 holds for all levels $j < i$. 
	If $W(s_i) = \emptyset$, then \SampleFromVertex$(s_i, \aN(s_i))$ return $\bot$ with probability 1. Otherwise, conditioned on not outputting \FAIL, \SampleFromVertex$(s_i, \aN(s_i))$ returns $w \sim W(s_i)$ from a distribution $\DD$ over $W(s_i)$ such that 
	\begin{eqnarray*}
	\DD(w) & = & \bigg(1 \pm \frac{\eps}{3r^2}\bigg) \frac{1}{|W(s_i)|}
	\end{eqnarray*}
	for all $w \in W(s_i)$. Moreover, the algorithm uses at most $O(\frac{\log(N/\eps)\gamma r^{11}}{\eps^2})$ calls to the uniform sampling oracle of Definition \ref{def:prop}, runs in time $O(T\frac{\log(N/\eps)\gamma^2 r^{15}}{\eps^4})$, and outputs \FAIL\ with probability at most $3/4$. 
\end{lemma}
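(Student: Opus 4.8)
The plan is to analyze Algorithm \ref{fig:sampPath} step by step, tracking the probability that a given word $w \in W(s_i)$ is the final output, and showing that the variable $q$ maintained by the algorithm is (up to the allowed relative error) an estimate of this probability, so that the final rejection step in Line \ref{Line:final} equalizes the output distribution. First I would dispose of the trivial case: if $W(s_i) = \emptyset$ then $\aN(s_i) = 0$ (Property 3 with no samples, or simply $N(s_i) = 0$ forces $\aN(s_i) = (1 \pm i\eps/r)\cdot 0 = 0$), so Line \ref{alg:empty_bot} returns $\bot$ with probability $1$. For the main case, I would set up the following invariant: after the inner \textbf{while} loop (Lines \ref{Line:forstart}--\ref{Line:forend}) has grown the current suffix to $w$, the random variable $q$ equals, up to a multiplicative $(1 \pm \eps/(3r^2))$ factor accumulated over the at most $r$ symbols appended, the quantity $|W_w(s_i)| / N(s_i)$, where $W_w(s_i) \subseteq W(s_i)$ is the set of accepted words of the appropriate length ending with the suffix $w$ (i.e. the "conditional mass" of the suffix grown so far). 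Granting this invariant for the full word $w$ of length $d(s_0,s_i)$, the probability that $w$ is reached is $q_{\text{true}} = 1/N(s_i)$ exactly summed over the branching, and the algorithm accepts with probability $1/(2q\aN(s_i))$; combining, each $w$ is output with probability $\frac{q_{\text{true}}}{2q\aN(s_i)} = (1 \pm \eps/(3r^2)) \frac{1}{2\aN(s_i)}\cdot\frac{N(s_i)}{N(s_i)}$, which is independent of $w$ up to the stated error, giving $\DD(w) = (1 \pm \eps/(3r^2)) / |W(s_i)|$. I would also check that $1/(2q\aN(s_i)) \le 1$ (so it is a valid probability) and is $\ge$ a constant, yielding the $\FAIL$ probability bound of $3/4$, exactly as in the proof of Lemma \ref{lem:sampmain}.

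The heart of the argument — and the step I expect to be the main obstacle — is proving the per-symbol invariant: that one iteration of the \textbf{while} loop appends a symbol $a$ to the front of $w$ with probability (approximately) proportional to $|W_{aw}(s_i)|$, and that the multiplicative update to $q$ correctly records this proportion. Here I would use the decomposition $W_w(s_i) = \cup_{j: (x_j,A_j,y_j) \in \FF(s_i,w)} (W(x_j) \cdot A_j \cdot \{\text{path}\})$ restricted to words whose next symbol is some $a \in A_j$, and argue as in the overview around display \eqref{rej-prob}: the algorithm first picks transition $j$ with probability $Z_j/Z = \aN(x_j)\aN(A_j)/Z$, then an almost-uniform $a \sim A_j$ (error $\eps_0$ from Definition \ref{def:prop}), then accepts with probability $q_{a,j}$ from Line \ref{line:acceptprob}, which by Property 4 (the sketch-accuracy property applied with $L = \{\text{indices } j' \in \BB(a), j' < j\}$) equals $|W(x_j) \setminus \cup_{j'\in\BB(a), j'<j} W(x_{j'})|/|W(x_j)|$ up to additive $\eps/r$. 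Summing the acceptance probability over all $j' \in \BB(a)$ and telescoping the set differences, the total probability that $a$ is appended is proportional to $\sum_{j'\in\BB(a)} |W(x_{j'}) \setminus \cup_{j''<j'} W(x_{j''})| \cdot |A_{j'}|\,(\text{with the }|A_{j'}|\text{ cancelled by uniform sampling within }A_{j'})$, which is exactly $|W_{aw}(s_i)|$ up to the compounded relative errors from $\aN$, $\aN(A_j)$, and the oracle. The update $q \leftarrow q\cdot\big(\sum_{j'\in B(a)}\frac{\aN(x_{j'})}{Z}q_{a,j'}\big)/(1-\rho)$ records precisely this conditional probability, where $1-\rho$ normalizes for the rejected trials: $\rho$ is estimated in Lines \ref{line:estimatefail} by Monte Carlo with $M = \Theta(\log(N/\eps)\gamma r^{10}/\eps^2)$ trials, so by Hoeffding $\rho = \pr{\text{reject}} \pm \eps/(\text{poly}(r))$ with probability $1 - 2^{-\Omega(\gamma r)}$; I would need to confirm this additive error in $\rho$ translates to the required multiplicative error in $q$, which holds since $1-\rho$ is bounded below by $1/\text{poly}(n)$ as asserted in the overview (the rejection-acceptance probability is $\ge 1/\text{poly}(n)$ in expectation, because the sets $W(x_{j'})$ are ordered by decreasing $Z_{j'}$).

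Finally I would assemble the error budget and the resource bounds. There are at most $r$ symbols to append; each append succeeds after $O(\text{poly}(r))$ loop iterations in expectation (since the per-trial acceptance probability is $\ge 1/\text{poly}(n) \ge 1/\text{poly}(r)$) — here I would invoke the finite-time truncation remark: cutting off after a fixed polynomial number of iterations and outputting garbage changes the distribution by at most $2^{-\Omega(\gamma r)}$, absorbed into the error. Each iteration's $\rho$-estimation uses $M = O(\log(N/\eps)\gamma r^{10}/\eps^2)$ oracle calls, and with $O(\text{poly}(r))$ iterations per symbol and $r$ symbols we get the claimed $O(\frac{\log(N/\eps)\gamma r^{11}}{\eps^2})$ oracle-call bound; multiplying by the membership-test cost $O(T|\Delta|) = O(Tr)$ per evaluation of $q_{a,j}$ and by additional $\text{poly}(r)$ and $\gamma$ factors from union-bounding all the Hoeffding events to failure probability $2^{-\Omega(\gamma r)}$, the runtime is $O(T\frac{\log(N/\eps)\gamma^2 r^{15}}{\eps^4})$ as stated. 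The compounded multiplicative error over $\le r$ symbols, each contributing $(1 \pm O(\eps/r^3))$ from the combination of $\eps_0 = \eps/(100r^4)$, Property 3's $(1\pm i\eps/r)$, Property 4's additive $\eps/r$ converted to multiplicative via $|W_w(s_i)|/|W(s_i)| \ge 1/N$ handled carefully, and the $\rho$-estimation error, stays within $(1 \pm \eps/(3r^2))$ for the final $\DD(w)$, using $\eps < 1/(300r)$; this bookkeeping is routine but tedious and I would present it compactly.
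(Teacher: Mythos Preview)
Your overall architecture matches the paper's, but the invariant you state for $q$ is wrong in a way that matters. You claim that after building suffix $w$, the variable $q$ equals $|W_w(s_i)|/N(s_i)$ up to $(1\pm\eps/(3r^2))$; at the end this would say $q=(1\pm\eps/(3r^2))/N(s_i)$. That is false: the per-symbol conditional probability the algorithm actually realises is only $(1\pm 10\eps)\,|W(s_i,aw)|/|W(s_i,w)|$ (the $10\eps$ coming from Property~3, Property~4, and the oracle's $\eps_0$), so over $r$ symbols the actual reach probability $\DD'(w)$ satisfies $\DD'(w)=(1\pm O(r\eps))/N(s_i)$, not $(1\pm\eps/(3r^2))/N(s_i)$. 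What the paper proves (and what makes the rejection step work) is the different statement that $q$ tracks $\DD'(w)$---the \emph{actual} algorithmic probability---to within $(1\pm\eps/(50r^2))$; this tight bound is achievable precisely because the $q$-update uses the same sketches $\aW(x_j)$, the same estimates $\aN(x_j),\aN(A_j)$, and the same acceptance rule as the sampling step itself, so the only discrepancy between $q$ and $\DD'(w)$ is the Monte-Carlo error in $\rho$ and the $(1\pm\eps_0)$ slack from the oracle. Then $\DD^\star(w)=\DD'(w)/(2q\aN(s_i))=(1\pm\eps/(50r^2))/(2\aN(s_i))$, with $\DD'(w)$ cancelling exactly. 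Your version relies on two large errors (``$q\approx 1/N(s_i)$'' and ``$q_{\text{true}}=1/N(s_i)$ exactly'') cancelling, but neither holds to the precision you need, and ``$q_{\text{true}}=1/N(s_i)$ exactly'' is simply not true.

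A second, smaller issue: your proposed additive-to-multiplicative conversion of Property~4's $\eps/r$ error ``via $|W_w(s_i)|/|W(s_i)|\ge 1/N$'' would yield a multiplicative factor of order $\eps N/r$, which is useless. The correct step (equation~\eqref{eqn11} in the paper) is $\sum_{j\in\BB(a)}|W(x_j)|\cdot(\eps/r)\le |W(s_i,aw)|\cdot\eps$, using $|\BB(a)|\le r$ and $|W(x_j)|\le|\bigcup_{j'\in\BB(a)}W(x_{j'})|=|W(s_i,aw)|$; this turns the additive $\eps/r$ into a clean multiplicative $(1\pm\eps)$ on $|W(s_i,aw)|$ with no dependence on $N$.
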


%\cristian{UNTIL HERE I AM USING THE NEW NOTATION.}

\begin{proof}
Assume first $W(s_i) = \emptyset$, so that $N(s_i) = 0$. Then given that $\aN(s_i) = (1 \pm \eps)N(s_i)$ by Property 3, we conclude that $\aN(s_i) = 0$ and the algorithm returns $\bot$ in line \ref{alg:empty_bot}. Notice that if $W(s_i) \neq \emptyset$, then $N(s_i) > 0$ and, therefore, $\aN(s_i) > 0$ by Property 3. Thus, if $W(s_i) \neq \emptyset$, then the algorithm does not return $\bot$.

Assume that $W(s_i) \neq \emptyset$, and notice that this implies $r \geq 4$.
	Consider an element $w$ sampled so far at any intermediate state of the execution of \SampleFromVertex$(s_i, \aN(s_i))$.
	%Algorithm  \SampleFromVertex. 
	Let $W(s_i,w) = \{t \in W(s_i) \mid t = w' \cdot w\}$. In other words, $W(s_i,w) \subseteq W(s_i)$ is the subset of words with suffix equal to $w$. We now want to sample the next symbol $a \in \Gamma$ conditioned on having sampled the suffix $w$ of a path so far. In other words, we want to sample $a$ with probability proportional to the number of words in $W(s_i,w)$ which have the suffix $aw$, meaning we want to choose $a$ with probability:
	\[	\frac{|\eval(s_i,aw)|}{|\eval(s_i,w)|}.	\]
	However, we do not know these sizes exactly, so we must approximately sample from this distribution. Let us consider the probability that our algorithm samples $a$ on this step (given $w$).
	For the algorithm to sample $a$, it must first choose to sample a transition $(x_j,A_j,y_j)$ from the set $\FF(s_i,w) = \{(x_1,A_1,y_1),(x_2,A_2,y_2),\ldots,(x_k,A_k,y_k)\}$
	such that $a \in A_i$, which occurs with probability $Z_j/Z$ with $Z = \sum_{j'=1}^k Z_{j'}$ and $Z_{j'} = \aN (x_{j'}) \aN(A_{j'})$. Then, on the call to the oracle on line \ref{line:sample}, it must obtain $a \sim A_j$ as the almost uniform sample, which occurs with probability $(1 \pm \eps_0)|A_j|^{-1}$ by Definition \ref{def:prop}. Finally, it must choose to keep $a$ on line \ref{line:acceptprob}, which occurs with probability $\frac{\big|\aW(x_j) \setminus \big(\bigcup_{j' \in \BB(a) \,:\, j' < j} W(x_{j'}) \big) \big|}{\big|\aW(x_j)\big|}$, where  $\BB(a) = \{j' \in [k]  \mid a  \in A_{j'} \}$. Thus, altogether, the probability that we choose $a \in \Gamma$ on this step is 
	\begin{align}
	&\nonumber \sum_{j \in \BB(a)} \frac{\aN (x_j) \aN(A_j) }{Z} \cdot \frac{1 \pm \eps_0}{|A_j|} \cdot \frac{\big|\aW(x_j) \setminus \big(\bigcup_{j' \in \BB(a) \,:\, j' < j} W(x_{j'}) \big) \big|}{\big|\aW(x_j)\big|} \\
& \hspace{40pt} =\ (1 \pm 3 \eps_0)\sum_{j \in \BB(a)} \frac{\aN (x_j)  }{Z}  \cdot \frac{\big|\aW(x_j) \setminus \big(\bigcup_{j' \in \BB(a) \,:\, j' < j} W(x_{j'}) \big) \big|}{\big|\aW(x_j)\big|}\label{eq:5} \\	
&\nonumber\hspace{40pt} =\ (1 \pm 3 \eps_0)\sum_{j \in \BB(a)} \frac{(1 \pm \eps)|\eval(x_j)|}{Z}\cdot \frac{\big|\aW(x_j) \setminus \big(\bigcup_{j' \in \BB(a) \,:\, j' < j} W(x_{j'}) \big) \big|}{\big|\aW(x_j)\big|} \\	
&\nonumber\hspace{40pt} =\ (1 \pm 2 \eps)\sum_{j \in \BB(a)} \frac{|\eval(x_j)|}{Z}\cdot \frac{\big|\aW(x_j) \setminus \big(\bigcup_{j' \in \BB(a) \,:\, j' < j} W(x_{j'}) \big) \big|}{\big|\aW(x_j)\big|} \\	
&\nonumber\hspace{40pt} =\ (1 \pm 2 \eps)\sum_{j \in \BB(a)} \frac{|\eval(x_j)|}{Z} \bigg[ \frac{\big| W(x_j) \setminus \big(\bigcup_{j' \in \BB(a) \,:\, j' < j} W(x_{j'}) \big) \big|}{\big| W(x_j)\big|} \pm \frac{\eps}{r}\bigg] \\
&\nonumber\hspace{40pt} =\ (1 \pm 2 \eps)\frac{1}{Z}\sum_{j \in \BB(a)}  \bigg[ \bigg| W(x_j) \setminus \big(\bigcup_{j' \in \BB(a) \,:\, j' < j} W(x_{j'}) \big) \bigg| \pm |\eval(x_j)| \frac{\eps}{r}\bigg] \\
&\nonumber \hspace{40pt} =\ (1 \pm 2 \eps)\frac{1}{Z}\bigg(\sum_{j \in \BB(a)} \bigg| W(x_j) \setminus \big(\bigcup_{j' \in \BB(a) \,:\, j' < j} W(x_{j'}) \big) \bigg| \pm\sum_{j \in \BB(a)} |\eval(x_j)| \frac{\eps}{r}\bigg) \\
&\hspace{40pt} =\ (1 \pm 2 \eps)\frac{1}{Z}\left(|\eval(s_i,aw)|  \pm  |\eval(s_i,aw)| \eps\right) \label{eqn11}\\
&\nonumber\hspace{40pt} =\ (1 \pm 4 \eps)\frac{1}{Z}|\eval(s_i,aw)| 
	\end{align}
	Where equation \eqref{eq:5} uses the fact that $\aN(A_j) = (1 \pm \eps_0)|A_j|$, and equation \eqref{eqn11} uses the fact that $\eval(s_i,aw) = (\bigcup_{j \in \BB(a)} W(x_j)) \cdot \{aw\}$.
The above demonstrates that on a \textbf{single} trial of the inner while loop in lines \ref{Line:forstart} to \ref{Line:forend}, conditioned on having chosen the sample $w$ so far, for each $a\in \eval(s_i,w)$ we choose $a$ with probability $(1 \pm 4 \eps) \frac{ |\eval(s_i,aw)| }{Z}$. However, we do not break out of the while loop on line \ref{Line:forend} and move to the next step in the outer for loop in line \ref{line:outerfor} until we have chosen an $a\in \eval(s_i,w)$ to append to $w$. If on a given trial of the loop in line \ref{Line:forstart}, the algorithm does not choose some element to append to $w$, we say that it outputs no sample. Call the event that we output some sample $\mathcal{E}_i$, and let $\mathcal{E}_i(a)$ denote the event that we specifically output $a \in \Gamma$.  Then
	\begin{eqnarray*}
	\pr{ \mathcal{E}_i} &=& \sum_{a \in \eval(s_i,w)}	\pr{ \mathcal{E}_i(a)}  \\
 &=& (1 \pm 4 \eps)\sum_{a \in \eval(s_i,w)} \frac{ |\eval(s_i,aw)| }{Z}\\
	& =& (1 \pm 4 \eps)\frac{ |\eval(s_i,w)| }{Z}.
	\end{eqnarray*}
	Therefore,
		\begin{eqnarray*}
	\pr{\mathcal{E}_i(a) \mid \mathcal{E}_i} &=& \frac{\pr{\mathcal{E}_i(a)} }{\pr{ \mathcal{E}_i} }  \\
	 &=&\frac{ (1 \pm 4 \eps)}{ (1 \pm 4 \eps)}\bigg(\frac{Z}{|\eval(s_i,w)| }\bigg)  \frac{ |\eval(s_i,aw)| }{Z} \\
	 	 &=&(1 \pm 10 \eps)\frac{ |\eval(s_i,aw)|}{|\eval(s_i,w)|}.
	\end{eqnarray*}
	Thus, conditioned on outputting a sample at this step, we choose $a \in \Gamma$ with probability 
	\begin{eqnarray}\label{eq:out-ep}
	(1 \pm 10 \eps) \frac{ |\eval(s_i,aw)|}{|\eval(s_i,w)|}
	\end{eqnarray}
	Observe the above is within $(1 \pm 10\eps)$ of the correct sampling probability.
	
	\paragraph{Estimating the probability that we sample a given $w \in \eval(s_i)$.} We now analyize the quantity $q$ in the algorithm, and argue that at the point where line \ref{Line:final} is executed, $q$ is a good approximation of the probability that our algorithm sample $w$ at this point. 
	Now let $\rho^*_\beta$ be the probability that, within step $\beta \in \{1,2,\dots,d(s_0,s_i)\}$ of the outer for loop on line \ref{line:outerfor}, a given run of the inner while loop between lines \ref{Line:forstart} to \ref{Line:forend} fails to append a new sample $a$ to $w$. Let $\rho_\beta$ be the value that \textit{we} assign to the variable $\rho$ at the end of the for loop in line \ref{line:estimatefail} (note that this loop is executed at most once within step $\beta$ of the outer loop \ref{line:outerfor}).  The variable $\rho_\beta$ will be our estimate of $\rho^*_\beta$.
	
	Note that each trial of the inner while loop is independent, so $\rho^*_\beta$ only depends on the $\beta$ from the outer loop, and the value of $w$ sampled so far. Let $\DD_\beta'(aw)$ be the exactly probability that entry $a$ is chosen on step $\beta$ of the outer loop of our algorithm, conditioned on having chosen $w$ so far. Being in step $\beta$ of the outer loop then implies that $|aw| = \beta $. Now fix any $w = w_1 w_2 \dots w_{d(s_0,s_i)} \in \eval(s_i)$.
	Let $\DD'(w)$ be the exact probability that $w$ is sampled at this point right before the execution of line~\ref{Line:final}.  By definition we have 
	 \[	 \DD'(w) =  \DD_0'(w_{d(s_0,s_i) } ) \cdot \left( \prod_{j=1}^{d(s_0,s_i) - 1} 
%	 \DD'_j\left(w_{d(s_0,s_i) - j-1} , w_{d(s_0,s_i) - j} \ldots w_{d(s_0,s_i)}\right) \right) \] 
	 \DD'_j\left(w_{d(s_0,s_i) - j} \ldots w_{d(s_0,s_i)}\right) \right) \] 
	 so via \eqref{eq:out-ep} we obtain:
	 \begin{eqnarray}\label{eq:ddp-range}
	  \DD'(w) \ =\ \frac{1}{|\eval(s_i)|}\prod_{j=1}^{d(s_0,s_i) }(
	 1 \pm 10 \eps)  \ = \ \frac{(1 \pm 10\eps)^{r}}{|\eval(s_i)|}  \ =  \ \frac{(1 \pm 20r \eps)}{|\eval(s_i)|} 
	 \end{eqnarray}
	 
	 \begin{claim}\label{claim1}
	 	 If $q$ is the value the variable $q$ takes at the point where line \ref{Line:final} is executed, given that $w = w_1 \dots w_{d(s_0,s_i)}$ is the value of $e$ at this point, then 
	 	 \[\DD'(w) \  = \ \left(1 \pm \frac{\eps}{50 r^2}\right) q\]
	 	 with probability at least $1-r(\eps/N)^{2\gamma}$.
	 \end{claim}
	 \begin{proof}
	 To see this, consider step $\beta \in \{1,2,\dots,d(s_0,s_i)\}$ of the for outer loop in line \ref{line:outerfor}. We first claim that $\rho_\beta^* \leq 1- \frac{1}{r}$. To see this, note that the probability that $Z_1$ is chosen is at least $\frac{1}{k} \geq \frac{1}{r}$, since we ordered $Z_1 \geq Z_2 \geq \dots \geq Z_k$, and if $Z_1$ is chosen the sample $a \sim A_1$ is never rejected, which completes the claim. Now each iteration of the for loop in line \ref{line:estimatefail} defines a random variable $Z$ which indicates if a random trial of the inner loop in line \ref{Line:forstart} would result in a failure. Here, if $Z=1$ (a trials fails), then we increment $\rho = \rho+1$, otherwise we do not. Thus $\ex{Z} = \rho^*_\beta$, and by Hoeffding's inequality, after repeating $M=\Theta(\frac{\log(N/\eps)\gamma r^{10}}{\eps^2})$ times, it follows that with probability 	 $(1 - (\eps/N)^{2\gamma})$ that we have $\rho_\beta = \rho_\beta^* \pm \eps/(400r^5)$ and, therefore, $1-\rho_\beta = (1 \pm \eps/(400r^4))(1- \rho_\beta^*)$ since $1/r \leq 1 - \rho_\beta^*$. Thus, it holds that
%	 \begin{comment}To see why this is the case, notice that $\frac{1}{r^2} \leq 1- \rho_\beta^*$ and
%	 \begin{align*}
%	 &\rho^*_\beta - \frac{\eps}{200r^6} \leq \rho_\beta \leq  \rho^*_\beta + \frac{\eps}{200r^6} \ \Rightarrow\\
%	 &\hspace{100pt}  -\rho^*_\beta - \frac{\eps}{200r^6} \leq -\rho_\beta \leq  -\rho^*_\beta + \frac{\eps}{200r^6} \  \Rightarrow\\ 
%	 &\hspace{100pt}  1-\rho^*_\beta - \frac{\eps}{200r^6} \leq 1-\rho_\beta \leq  1-\rho^*_\beta + \frac{\eps}{200r^6} \ \Rightarrow\\ 
%	 &\hspace{100pt}  1-\rho^*_\beta + \frac{\eps}{200r^4}\cdot\frac{-1}{r^2} \leq 1-\rho_\beta \leq  1-\rho^*_\beta + \frac{\eps}{200r^4}\cdot\frac{1}{r^2} \ \Rightarrow\\ 
%	 &\hspace{100pt}  1-\rho^*_\beta + \frac{\eps}{200r^4}\cdot(-(1-\rho^*_\beta)) \leq 1-\rho_\beta \leq  1-\rho^*_\beta + \frac{\eps}{200r^4}\cdot(1-\rho^*_\beta) \ \Rightarrow\\
%	 &\hspace{100pt}  (1-\rho^*_\beta)\bigg(1 - \frac{\eps}{200r^4}\bigg) \leq 1-\rho_\beta \leq  (1-\rho^*_\beta)\bigg(1 + \frac{\eps}{200r^4}\bigg)
%	 \end{align*}
%	 \end{comment}
	\begin{eqnarray}\label{eq:app_beta}
	\frac{1}{1- \rho_\beta^*} & = & \bigg(1 \pm \frac{\eps}{200r^4}\bigg)\frac{1}{1-\rho_\beta}
	\end{eqnarray}
%	We condition on this now.	 
	 Let $\tau = d(s_0,s_i) - \beta +1$, so that on step $\beta$ of the for outer loop in line \ref{line:outerfor} we are considering the probability that we sample a $w_\tau \in \Gamma$ given that we have already sampled $w_{-\tau} = w_{\tau+1} \dots w_{d(s_0,s_i)}$. Now as shown above, the probability that $w_\tau$ is accepted on one trial of the while loop is precisely: 
	\[	 q^*(w_\tau)  = \sum_{j \in \BB(w_\tau)} \frac{\aN(x_j) \aN(A_j)  }{Z} \cdot \frac{\epsilon'}{|A_j|} \cdot \frac{\big|\aW(x_j) \setminus \big(\bigcup_{j' \in \BB(w_{\tau}) \,:\, j' < j} W(x_{j'}) \big) \big|}{\big|\aW(x_j)\big|}.\]
Notice that we are not trying to bound $q^*(w_\tau)$ in this expression, we are computing the exact value of $q^*(w_\tau)$, but based on an unknown value $\eps'$. 
However, we know by Definition \ref{def:prop} that $1 - \eps_0 \leq \epsilon' \leq 1 + \eps_0$.
 %(1 \pm \eps_0)$ is unknown. 
Thus, 
although we do not know the exact value of $q^*(w_\tau)$, we do know that $1-3\eps_0 \leq \aN(A_j) \cdot \epsilon' / |A_j| \leq 1 + 3\eps_0$ by the assumptions of Definition \ref{def:prop}. Thus, we can estimate $q^*(w_\tau)$ by 
\[	\hat{q}(w_\tau)= \sum_{j \in B(w_\tau)} \frac{\aN(x_j) }{Z}\frac{\big|\aW(x_j) \setminus \big(\bigcup_{j' \in \BB(w_{\tau}) \,:\, j' < j} W(x_{j'}) \big) \big|}{\big|\aW(x_j)\big|}\] 
so that $q^*(w_\tau)=
(1 \pm 3\eps_0)\hat{q}(w_\tau)$.
	 The probability that $w_\tau$ is accepted overall before moving to the next step of the loop is $\sum_{j=1}^\infty q^*(w_\tau)(\rho^*_\beta)^{j-1} = q^*(w_\tau)(\frac{1}{1-\rho^*_\beta})$, 
	 for which by equation \eqref{eq:app_beta} we have a $(1 \pm \eps/(200r^4))(1 \pm 3\eps_0) = (1 \pm \eps/(100r^3))$ estimate of via the value $\hat{q}(w_\tau)/(1-\rho_\beta)$ (recall that $r \geq 4$).
	 Note that this is precisely the value which we scale the variable $q$ by after an iteration of the inner loop that appends a new sample $a$ to $w$ in line \ref{alg:append_a_w} of the algorithm. 
	 %each of the $i$ iterations of the loop.
	 It follows that at the end of the main loop, we have:
	 \begin{eqnarray*}%\label{eq:final-claim-inv}
%	 q & = & (1 \pm (\eps/(90r^3))^{d_{g_i}} \prod_{j=1}^{d_i} D_j'(e_{d_i - j +1}, (e_{d_i - j + 2}, \dots,e_{d_i}))  =  (1 \pm (\eps/(90r^3))^r D'(e),
	\DD'(w)  \ =  \ \bigg(1 \pm \frac{\eps}{100r^3}\bigg)^{d(s_0,s_i)} \cdot q \ =  \ \bigg(1 \pm \frac{\eps}{50r^2}\bigg) \cdot q
	 \end{eqnarray*}
%	 from which we conclude that $q  =  (1 \pm \eps/(20r^2)) \DD'(w)$ 
	 as needed.
	 Notice that this equality holds under the condition that for every $\beta = 1, \ldots, d(s_0,s_i)$, it holds that $\rho_\beta = \rho_\beta^* \pm \eps/(400r^5)$, 
	 which occurs with probability $1 - (\eps/N)^{2\gamma}$ for each $\beta$. By a union bound, we obtain the desired success probability of at least $1 - r (\eps/N)^{2\gamma}$. 
%	 \begin{comment}	 
%	 . As mentioned before,  it is possible to conclude by using Hoeffding's inequality that such a condition holds for each $\beta$ with probability at least $(1 - 2(\eps/N)^{2\gamma})$. Thus, we have that:
%	 \begin{eqnarray*}
%	 \pr{\bigvee_{\beta=1}^{d_{g_i}} \bigg(|\rho_\beta - \rho_\beta^*| \geq \frac{\eps}{200r^6}\bigg)} & \leq & \sum_{\beta=1}^{d_{g_i}} \pr{|\rho_\beta - \rho_\beta^*| \geq \frac{\eps}{200r^6}}\\
%	 & \leq & \sum_{\beta=1}^{d_{g_i}} 2(\eps/N)^{2\gamma}
%	 \ = \ d_{g_i} 2(\eps/N)^{2\gamma}
%	 \end{eqnarray*}
%	 Thus, we deduce that \eqref{eq:final-claim} holds with probability at least $(1-d_{g_i} 2(\eps/N)^{2\gamma})$, which concludes the proof of the claim.
%	 \end{comment}
	 \end{proof}
	 Thus, by rejecting with probability $\frac{1}{2q \aN(s_i)}$, it follows from Claim  \ref{claim1}
	 that the true probability $\DD^\star(w)$ that we output a given $w\in \eval(s_i)$ is 
	 \begin{eqnarray}\label{eq:dstar}
	 \DD^\star(w) \ = \ \frac{\DD'(w)}{2q\aN(s_i)} \ =\  \bigg(1 \pm \frac{\eps}{50r^2}\bigg) \frac{1}{2\aN(s_i)}  
	 \end{eqnarray}
% \begin{comment}
%	Besides, considering $\zeta = \eps/(92r^2)$, we have that:
%	\begin{eqnarray*}
%	 \bigg(1 + \frac{\eps}{92r^3}\bigg)^{r} & = &
%         \bigg(1 + \frac{\zeta}{r}\bigg)^{r}\\
%         &=& \bigg(\bigg(1 + \frac{1}{\frac{r}{\zeta}}\bigg)^{\frac{r}{\zeta}}\bigg)^\zeta\\
%         &\leq& e^\zeta \ \leq \ (1+2 \zeta) \quad\quad \text{given that } \zeta \in [0,1]
%	 \end{eqnarray*}
%	 and we have that:
%	 \begin{eqnarray*}
%	 \bigg(1 - \frac{\eps}{92r^3}\bigg)^{r} & = &
%         \bigg(1 - \frac{\zeta}{r}\bigg)^{r}\\
%         &=& \bigg(\bigg(1 - \frac{1}{\frac{r}{\zeta}}\bigg)^{\frac{r}{\zeta}}\bigg)^\zeta\\
%         &\geq& e^{-2\zeta} \ \geq \ (1-2 \zeta) \quad\quad \text{given that } \zeta \in [0,1]
%	 \end{eqnarray*}
%	 Summing up, we conclude that:
%	 \begin{eqnarray}\label{eq:dd-range-f}
%	 \DD^\star(e) \ =\ \bigg(1 \pm \frac{\eps}{92r^3}\bigg)^{r} \frac{1}{2R_{g_i}} 
%	 \ =\ \bigg(1 \pm  \frac{2\eps}{92r^2}\bigg) \frac{1}{2R_{g_i}} 
%	 \ =\ \bigg(1 \pm  \frac{\eps}{46r^2}\bigg) \frac{1}{2R_{g_i}} 
%	 \end{eqnarray}
%	 \end{comment}
	 Note that for the above fact to be true, we need that $\frac{1}{2q} \leq \aN(s_i)$, else the above rejection probability could be larger than $1$. But again by Claim \ref{claim1}  we have that
	 \begin{eqnarray*}
	 \frac{1}{2q \aN(s_i)} &\leq & \bigg(1+ \frac{\eps}{50 r^2}\bigg)\frac{1}{2 \aN(s_i)\DD'(w)} \\
	 &\leq& (1+ 2\eps)\frac{1}{2 |W(s_i)|\DD'(w)} \\
	 	  &\leq& (1+ 122 r \eps)\frac{1}{2 } \\
	 	  & \leq& 3/4
	 \end{eqnarray*}
	 where the second to last inequality holds applying \eqref{eq:ddp-range}, and the last inequality holds give that $\eps < 1/(300r)$.	  
%	 \begin{comment}
%	 $\DD'(w)/(2q\aN(s_i)) = \DD^\star(w) = (1 \pm  \eps/(20r^2)(1/(2\aN(s_i)))$ and, hence, $\DD'(e) \leq (1 +  \eps/(20r^2))q$. Thus, by considering \eqref{eq:ddp-range} we have that
%	 \begin{eqnarray*}
%	 q & \geq & \bigg(1 -  \frac{\eps}{46r^2}\bigg) \DD'(e)\\
%	 & \geq & \bigg(1 -  \frac{\eps}{46r^2}\bigg)(1 - 3d_{g_i}\eps)^{d_{g_i}}\frac{1}{|\eval(g_i)|}\\
%	 & \geq & \bigg(1 -  \frac{\eps}{46r^2}\bigg)(1 - 3r\eps)^{r}\frac{1}{|\eval(g_i)|}\\
%	 & \geq & \bigg(1 -  \frac{\eps}{46r^2}\bigg)(1 - 3r^2\eps)\frac{1}{|\eval(g_i)|}\\
%	 & = & \bigg(1 -  \frac{\eps}{46r^2} - 3 r^2 \eps + \frac{3\eps^2}{46}\bigg)\frac{1}{|\eval(g_i)|}\\
%	& \geq & \bigg(1 -  \frac{\eps}{46r^2} - 3r^2\eps\bigg)\frac{1}{|\eval(g_i)|}\\
%	& \geq & (1 -  4 r^2 \eps)\frac{1}{|\eval(g_i)|}
%	 \end{eqnarray*}
%	 We conclude that:
%	 \begin{eqnarray*}
%\frac{1}{2qR_{g_i}} \ \leq \ \frac{1}{1- 4r^2 \eps}  \frac{|\eval(g_i)|}{2R_{g_i}} \ \leq \ (1+ 8r^2 \eps)(1 + d_{g_i} \eps)  \frac{1}{2}  \ \leq \ (1+ 4r^2 \eps)(1+ r\eps)  \frac{1}{2} \ \leq \  \frac{3}{4}
%	\end{eqnarray*}
%	\end{comment}
	 Therefore, the rejection probability is always a valid probability. Similarly:
  \begin{eqnarray*}
\frac{1}{2q \aN(s_i)} & \geq & \bigg(1-\frac{\eps}{50 r^2}\bigg)\frac{1}{2 \aN(s_i)\DD'(w)} \\
&\geq & (1- 2\eps)\frac{1}{2 |W(s_i)|\DD'(w)} \\
&\geq & (1- 42 r \eps)\frac{1}{2 } \\
& \geq & 1/4
\end{eqnarray*}
Thus, by the above, we can bound the probability that we output \textbf{FAIL} on this last step by $3/4$ as required. Now, we are ready to analyize the true output distribution $\mathcal{D}$ over $\eval(s_i)$, which is given by the distribution $D^\star$ conditioned on not outputting \textbf{FAIL}. Now for any  $w \in \eval(s_i)$, we can apply equation \eqref{eq:dstar} to compute $\mathcal{D}(w)$ via:
% \begin{comment}
%	 \begin{eqnarray}\label{eq:bound_not_reject}
%	 \sum_{e \in \eval(g_i)}\DD^\star(e) \ =\  \sum_{e \in \eval(g_i)}\bigg(1 \pm  \frac{\eps}{46r^2}\bigg) \frac{1}{2R_{g_i}} \ = \ \bigg(1 \pm  \frac{\eps}{46r^2}\bigg)\frac{|\eval(g_i)|}{2R_{g_i}}
%	 \end{eqnarray}	 
%	 \end{comment}
%	  
%	 %Hence, for every $e \in \eval(g_i)$, conditioned on outputting a sample (not failing in the last step), the probability that we output $e$ is:	 
%	 
	 \begin{eqnarray*}
	 \DD(w) &=& 
	 \pr{\text{output } w \in \eval(s_i) \mid \neg \FAIL }\\ 
	 &=& \frac{\pr{\text{output } w \in \eval(s_i) \wedge \neg \FAIL }}{\pr{\text{not output } \FAIL}}\\
	 &=& \frac{\pr{\text{output } w \in \eval(s_i)}}{\pr{\neg \FAIL}}\\
	&=& 	\frac{\DD^\star(w)}{\sum_{w \in W(s_i)}D^\star(w) }\\
		&=& 	\frac{(1 \pm \eps/(50r^2))}{2 \aN(s_i)(\sum_{w \in W(s_i)}\frac{1 \mp \eps/(50r^2)}{2\aN(s_i)} ) }\\
				&=& 	\frac{(1 \pm \eps/(50r^2))}{ \sum_{w \in W(s_i)}(1 \mp \eps/(50r^2 ))}\\
					&=& 	\frac{(1 \pm \eps/(50r^2))}{ |W(s_i)|(1 \mp \eps/(50r^2 ))}\\
						&=& 	\bigg(1 \pm \frac{3\eps}{50r^2}\bigg)\frac{1}{ |W(s_i)|}\\
						&=& 	\bigg(1 \pm \frac{\eps}{10r^2}\bigg)\frac{1}{ |W(s_i)|}
	 \end{eqnarray*}
	 which is the desired result.
	 
%	 \begin{comment} From \eqref{eq:bound_not_reject} note moreover that
%	 \begin {eqnarray*}
%	 \sum_{e \in \eval(g_i)}\DD^\star(e) & \geq & \bigg(1- \frac{\eps}{46r^2}\bigg)\frac{|\eval(g_i)|}{2R_{g_i}}\\
%	 & \geq & \bigg(1- \frac{\eps}{46r^2}\bigg)(1-d_{g_i} \eps)\\
%	 & \geq & \bigg(1- \frac{\eps}{46r^2}\bigg)(1-r \eps)\\
%	 & \geq & \bigg(1- \frac{\eps}{46r^2} -r \eps + \frac{\eps^2}{46r}\bigg)\\
%	 & \geq & \bigg(1- \frac{\eps}{46r^2} -r \eps\bigg)
%	 \ \geq \ \frac{3}{4}
%	 \end{eqnarray*}
%	Thus, the probability that we fail is at most $\frac{3}{4}$, as desired.
%	
%	\marcelo{I made a pass on the proof up to this point.}
%	\end{comment}
	
	\paragraph{Oracle complexity and runtime}
	For the complexity of the sample procedure, note that each iteration to sample a $a \in \Gamma$ has failure probability at most $\frac{1}{r}$ independently, thus with probability $1-(\eps/(rN))^22^{-10r\gamma}$ it requires at most $10 r^3 \log(Nr/\eps) \gamma$ iterations. Thus with probability $1-(\eps/(N))^22^{-10r\gamma}$, the total number of iterations required to produce a single sample (or output \textbf{FAIL} at the end) is $10 r^4 \log(Nr/\eps) \gamma$. Note that each iteration that fails to accept an $a \in \Gamma$ produces one call to the unit oracle. Once an $a$ is accepted, we run an experiment $M$ times, which produces $M = O(\frac{\log(N/\eps)\gamma r^{10}}{\eps^2})$ oracle calls. Since this occurs at most $r$ times, the total number of oracle calls is $O(\frac{\log(N/\eps)\gamma r^{11}}{\eps^2})$. Note that the runtime is dominant by the cost of the $\rho$ estimation procedure, wherein the probability $q_{a_h,j}$ is computed at each step of line \ref{line:estimatefail}. Note that to compute $q_{a_h,j}$, we must test for each sample in $s \in \widetilde{W}(x_j)$ if $s$ is contained in the union of at most $r$ sets, which requires at most $Tr$ runtime by the assumptions of Definition \ref{def:prop}. Note that each set has size at most $O(\frac{\gamma r^3}{\eps^2})$. Thus the total runtime can be bounded by $O(T\frac{\log(N/\eps)\gamma^2 r^{15}}{\eps^4})$.

In summary, with probability $1-(\eps/(N))^22^{-10r\gamma}$, the total number of samples (unit oracle calls) required is $O(\frac{\log(N/\eps)\gamma r^{11}}{\eps^2})$ (and the runtime is as stated above). 	
	Now if the sample complexity becomes too large we can safely output anything we would like (specifically, we can output \textbf{FAIL}, or even an arbitrary sequence of bits). The probability that this occurs, or that any of our $O(r)$ estimate of the inner failure probabilities $\rho$ fails to be within our desired bounds, is at most  $(\eps/(N))^22^{-10r\gamma} + r(\eps/N)^22^{-\gamma} \leq (\eps/N) 2^{-\gamma}$. Call the event that the sample complexity becomes too large  $\mathcal{Q}$, and let $\mathcal{P}$ be the event that any of our $O(r)$ estimate of $\rho$ fail to be within our desired bounds. We have just proven that
	\[	\pr{\text{we output } w \in W(s_i) \; | \; \neg \mathcal{P}} = (1 \pm \eps/(10r^2) )\frac{1}{|W(s_i)|}.	\]
	Now since $\pr{\mathcal{P}\cup \mathcal{Q}} \leq (\eps/N) 2^{-\gamma}$, we have
	\begin{equation}
	\begin{split}
		\pr{\text{we output } w \in W(s_i) \; | \; \neg \mathcal{Q}}& =\pr{\text{ we output } w \in W(s_i) \; | \; \neg \mathcal{Q}, \neg \mathcal{P} } \pm (\eps/N)2^{-\gamma} \\
		&= \pr{\text{ we output } w \in W(s_i) \; | \; \neg \mathcal{P}} \pm 3(\eps/N) 2^{-\gamma},
	\end{split}
	\end{equation} so it follows that for each $w \in W(s_i)$, we have 
	\begin{equation}
	\begin{split}
		\pr{\text{we output } w \in W(s_i) \; | \; \neg \mathcal{Q} } &=  (1 \pm  \eps/(10r^2))\frac{1}{|\eval(s_i)|} \pm 3(\eps/N)2^{-\gamma} 	\\
		&=  (1 \pm  \eps/(10r^2))\frac{1}{u_i} \pm \eps\frac{3}{|\eval(s_i)|}2^{-\gamma} \\
			&=  (1 \pm  \eps/(3r^2) )\frac{1}{|\eval(s_i)|}, \\
	\end{split}
	\end{equation}
	which shows that our sampler is still correct even if we output random bits whenever $\mathcal{Q}$ fails to hold, which is the desired result taking $\gamma = \Omega(\log(r/\eps))$.

\end{proof}

We can use the above sampling regime to now show that having properties $3,4,5$ for $s_j$ with $j<i$ will imply them for $s_i$.

\begin{lemma}\label{lem:alphabeta}
	Fix any $\gamma>0$. Suppose Properties $3,4,5$ hold for all $s_j$ with $j < i$. Then with probability $1-2^{-10\gamma}$, properties $3,4$ and $5$ hold for $s_i$. Moreover, the total number of oracle calls is at most $O(\gamma^2 \log(N/\eps) r^{17} / \eps^4)$, and the total runtime is $O(T\frac{\log(N/\eps)\gamma^2 r^{17}}{\eps^4})$.
\end{lemma}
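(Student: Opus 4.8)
The plan is to derive the three properties for $s_i$ in sequence, each built on those for $s_j$ with $j<i$ (given by the inductive hypothesis) together with the properties for $s_i$ already established in the same round. \emph{Property 3} is immediate: since Properties 3, 4 and 5 hold for all $s_j$ with $j<i$, Lemma~\ref{lem:alpha} applies verbatim and produces $\aN(s_i)=(1\pm i\eps/r)N(s_i)$ with probability at least $1-2^{-\gamma r}$, using $O(\gamma^2 r^{17}\log(N/\eps)/\eps^4)$ oracle calls and $O(T\gamma^2 r^{17}\log(N/\eps)/\eps^4)$ time. This one invocation is, up to constants, the dominant cost, so it already accounts for the oracle and runtime bounds in the statement; everything that follows will be cheaper by polynomial factors in $r$ and $1/\eps$, and will fail only with probability $2^{-\Omega(r\gamma)}$.

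\emph{Property 5.} With Property 3 in hand for all $j\leq i$ and Property 4 for all $j<i$, Lemma~\ref{lem:beta} shows that \SampleFromVertex$(s_i,\aN(s_i))$, conditioned on not returning \FAIL, outputs $w\in W(s_i)$ with probability $(1\pm\eps/(3r^2))\,|W(s_i)|^{-1}$, returns \FAIL\ with probability at most $3/4$, and spends $O(\gamma r^{11}\log(N/\eps)/\eps^2)$ oracle calls per run. To meet the $1/4$ failure bound required by Property 5 I would run this routine a constant number of times (six suffices, since $(3/4)^6<1/4$) and output the first non-\FAIL\ result; this leaves the conditional output distribution unchanged and inflates the oracle budget only by a constant. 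Thus Property 5 holds for $s_i$ with certainty, given the lower-level properties and the freshly obtained Property 3 for $s_i$.

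\emph{Property 4.} To assemble the sketch $\aW(s_i)$ I would call the amplified sampler $\Theta(r^3\gamma/\eps^2)$ times and keep the non-\FAIL\ outputs; a Chernoff bound gives, with probability $1-2^{-\Omega(r^3\gamma/\eps^2)}$, the required sketch size $|\aW(s_i)|=\Theta(r^3\gamma/\eps^2)$, and these are i.i.d.\ draws from the distribution of Property 5. Fix a subset $L\subseteq\{0,1,\dots,i-1\}$ and let $\mu_L$ be the exact ratio $|W(s_i)\setminus\bigcup_{j\in L}W(s_j)|/|W(s_i)|$; the indicator that a sample falls in $W(s_i)\setminus\bigcup_{j\in L}W(s_j)$ has expectation $(1\pm\eps/(3r^2))\mu_L$, so the empirical ratio is biased away from $\mu_L$ by at most $\eps/(3r^2)$, and Hoeffding over $\Theta(r^3\gamma/\eps^2)$ samples contributes a further deviation of at most $\eps/(2r)$ with probability $1-2^{-\Omega(r\gamma)}$; the total stays below $\eps/r$ since $r$ is large. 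A union bound over the at most $2^i\leq 2^r$ choices of $L$ costs a factor $2^r$, which is exactly why the sketch must be \emph{cubic} in $r$: then $2^r\cdot 2^{-\Omega(r\gamma)}$ is still $2^{-\Omega(r\gamma)}$. (If $W(s_i)=\emptyset$ then $\aN(s_i)=0$ and \SampleFromVertex\ returns $\bot$ with probability $1$, and all three properties hold trivially.)

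Summing the failure probabilities---$2^{-\gamma r}$ from Lemma~\ref{lem:alpha}, $2^{-\Omega(r^3\gamma/\eps^2)}$ from the sketch-collection loop, and $2^{-\Omega(r\gamma)}$ from the union bound over subsets---gives $2^{-\Omega(r\gamma)}$, which is at most $2^{-10\gamma}$ once $r$ exceeds an absolute constant (and if it does not, the instance is of constant size and $|\cL_k(\mathcal{N})|$ is computed exactly). For complexity, building $\aW(s_i)$ costs $O(r^3\gamma/\eps^2)$ sampler runs of $O(\gamma r^{11}\log(N/\eps)/\eps^2)$ oracle calls, i.e.\ $O(\gamma^2 r^{14}\log(N/\eps)/\eps^4)$ calls, which is dominated by the $O(\gamma^2 r^{17}\log(N/\eps)/\eps^4)$ of Lemma~\ref{lem:alpha}; the runtime bound follows the same way. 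The delicate part is the Property-4 step: one must keep the near-uniformity bias ($\eps/(3r^2)$) safely below the target precision ($\eps/r$), survive an exponential-in-$r$ union bound with only a polynomially sized sketch, and verify that neither the sketch loop nor the \FAIL-amplification pushes the oracle or runtime budget past the claimed polynomial---all without leaving slack in the $\gamma$ and $r$ exponents.
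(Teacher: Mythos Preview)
Your proposal is correct and follows essentially the same approach as the paper's proof: invoke Lemma~\ref{lem:alpha} for Property~3, observe that Lemma~\ref{lem:beta} then gives Property~5 deterministically, and build the sketch $\aW(s_i)$ with $\Theta(r^3\gamma/\eps^2)$ near-uniform samples, applying Hoeffding plus a union bound over the at most $2^r$ subsets $L$ for Property~4; the cost is dominated by Lemma~\ref{lem:alpha}. One place where you are in fact more careful than the paper: Property~5 as stated requires a failure probability of at most $1/4$, whereas Lemma~\ref{lem:beta} only guarantees $3/4$, and you explicitly close this gap by a constant number of independent repetitions (the paper's proof simply asserts that ``the sampler satisfies the other conditions of Property~5'' without mentioning this amplification).
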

\begin{proof}
	We obtain property $3$ with probability $1-2^{-\gamma r}$ by Lemma \ref{lem:alpha}, which uses $O(\frac{\log(N/\eps)\gamma^2 r^{17}}{\eps^4})$ sampling oracle calls. 	
	   By Lemma \ref{lem:beta}, conditioned on property $4$ holding for all levels $j<i$ and property $3$ holding for all $j \leq i$, we now have a procedure which can sample each $w \sim W(s_i)$ with probability in the range $(1 \pm \eps/(3r^2))\frac{1}{|W(s_i)|}$, and such that the sampler satisfies the other conditions of Property $5$. Thus property $5$ for level $i$ now holds deterministically conditioned on property $3$ holding for $i$ and all $j<i$. 
	
Now for property $4$, we can take $s' = \Theta(\gamma r^3 /\eps^2)$ samples to build $\widetilde{W}(s_i)$. By Lemma \ref{lem:beta}, each run of the algorithm requires $O(\gamma r^{11} \log(N/\eps) /\eps^2)$ oracle calls, and fails to return a sample with probability at most $3/4$. Applying Hoeffding's inequality on the required number of trials of the sampling algorithm to obtain $s'$ independent samples, this requires $O(\gamma^2 r^{15} \log(N/\eps) /\eps^4)$ oracle calls with probability $1-2^{-100\gamma}$. Given this, we have that each sample in $\widetilde{W}(s_i)$ is a $(1 \pm \eps/(3r^2))$-relative error almost uniform sample. Applying Hoeffding's inequality again, it follows that for a fixed set $L \subset \{s_0,\dots,s_{i-1}\}$, we have	
	\[	\left|	\frac{|\widetilde{W}(s_i) \setminus \left( \cup_{s_j \in L} W(s_j) \right)|}{|\widetilde{W}(s_i)|}	- \frac{|W(s_i) \setminus \left( \cup_{s_j \in L} W(s_j) \right)|}{|W(s_i)|}	\right| \leq \frac{\eps}{3r^2} + \frac{\eps}{2r } \leq \frac{\eps}{r}	\]
	with probability $1-2^{-100 \gamma r}$, and since there are only at most $2^{r}$ such subsets $L$, by a union bound this holds for all such subsets with probability $1-2^{-100 \gamma r - r}$. 
	Thus the overall probability of success is $1-2^{-100 \gamma r -r} - 2^{-\gamma r} > 1- 2^{-10 \gamma}$. Note that the runtime is dominated by the time required to obtain Property $3$ via Lemma \ref{lem:alpha}, which is $O(T\frac{\log(N/\eps)\gamma^2 r^{17}}{\eps^4})$.
\end{proof}
We are now ready the prove the main theorem. 

\begin{proof}[Proof of Theorem \ref{thm:progmain}]
	By Lemma \ref{lem:alphabeta}, conditioned on having Properties $3,4,$ and $5$ for a level $i$, we get it for $i+1$ with probability $1-2^{-10\gamma}$ with at most $O(\gamma^2 \log(N/\eps) r^{17} / \eps^4)$ oracle calls. It follows inductively that with probability $1-r2^{-10\gamma}$, we have Property $3$ and $5$ for all levels with at most $O(\gamma^2 \log(N/\eps) r^{18} / \eps^4)$ oracle calls, which completes the proof after recalling that $\gamma:= \log(1/\delta)$. The runtime for each level is  $O(T\frac{\log(N/\eps)\gamma^2 r^{17}}{\eps^4})$ by Lemma \ref{lem:alphabeta}, thus the total runtime is $O(T\frac{\log(N/\eps)\gamma^2 r^{18}}{\eps^4})$. 
\end{proof}

\begin{comment}
\begin{mdframed}[frametitle = {\texttt{EstimatePartitionSize}$(t,v^i,\{Q(u^j)\}_{u \in V, j < i}, \{R_{u^j}\}_{u \in V, j < i} )$  \\ \texttt{Input:} Partial tree $t$, root state $v^i$, and uniform samples $Q(u^j)$ and estimate $R_{u^j} = (1 \pm j \eps)|U_{u^j}|$ for all $u \in V, j < i$},skipabove=\baselineskip plus 2pt minus 1pt,
	skipbelow=\baselineskip plus 2pt minus 1pt,
	linewidth=1.0pt,
	frametitlerule=true]
	\begin{enumerate}
		\item Construct the linear homogeneous step-wise $(\cup,\times)$ program $C$ as described in paragraph \textbf{Construction of a linear homogeneous step-wise $(\cup,\times)$-program $C$ encoding $U_{v^i}|_t$.}
		\item By Theorem \ref{thm:progmain}, using the samples $Q(u^j)$ of size at most $O(m^{21} \gamma /\eps^4)$ and estimates $R_{u^j}$ compute estimate $R = (1 \pm \eps)|U_{v^i}|_t|$.
		\item Return $R$. 	
	\end{enumerate}
\end{mdframed}
\end{comment}

\section{Other Applications of our Main Results}
\label{sec:applications}

%recomment:
% !TeX spellcheck = en_US
%!TEX root = main.tex

\subsection{Constraint satisfaction problems}
\label{sec-csp}
Constraint satisfaction problems offer a general and natural setting to represent a large number of problems where solutions must satisfy some constraints, and which can be found in different areas such as artificial intelligence, satisfiability, programming languages, temporal reasoning, scheduling, graph theory, and databases~\cite{V00,DBLP:books/daglib/0004131,rossi2006handbook,DBLP:books/daglib/0013017,DBLP:series/faia/2009-185,russell2016artificial}. Formally, a constraint satisfaction problem (CSP) is a triple $\cP = (V,D,C)$ such that $V = \{x_1, \ldots, x_m\}$ is a set of variables, $D$ is a set of values and $C = \{C_1, \ldots, C_n\}$ is a set of constraints, where each constraint $C_i$ is a pair $(\bar t_i,R_i)$ such that 
$\bar t_i$ is a tuple of variables from $V$ of arity $k$, for some $k \geq 1$, and $R_i \subseteq D^k$. Moreover, an assignment $\nu : V \to D$ is said to be a solution for $\cP$ if for every $i \in [n]$, it holds that $\nu(\bar t_i) \in R_i$~\cite{russell2016artificial}, where $\nu(\bar t_i)$ is obtained by replacing each variable $x_j$ occurring in $\bar t_i$ by $\nu(x_j)$. The set of solutions for CSP $\cP$ is denoted by $\sol(\cP)$. 

The two most basic tasks associated to a CSP are the evaluation and the satisfiability problems. In the evaluation problem, we are given a CSP $\cP$ and an assignment $\nu$, and the question to answer is whether $\nu \in \sol(\cP)$. In the satisfiability problem, we are given a CSP $\cP$, and the question to answer is whether $\sol(\cP) \neq \emptyset$. Clearly, these two problems have very different complexities, as in the former we only need to verify the simple condition that $\nu(t) \in R$ for every constraint $(t,R)$ in $\cP$, while in the latter we need to search in the space of all possible assignments for one that satisfies all the constraints. In fact, these two problems also look  different in terms of our interest in the specific values for the variables of the CSP; in the former we are interested in the value of each one of them that is given in the assignment $\nu$, while in the latter the variables of $\cP$ are considered as existential quantifiers, as we are interested in knowing whether there exists a solution for $\cP$ even if we do not know how to construct it.
As a way to unify these two problems, and to indicate for which variables we are interested in their values, a projection operator has been used in the definition of CSPs~\cite{CJ06,W10}. Notice that the definition of this operator has also played an important role when classifying the complexity of CSPs in terms of algebraic properties of relations~\cite{CJ06}. Formally, an {\em existential} CSP (ECSP) is defined as a pair $\cE = (U, \cP)$, where $\cP = (V,D,C)$ is a CSP and $U \subseteq V$. 
%$V$ is the set of variables of a CSP $\cP$ and $U \subseteq V$, then $\cE = (U, \cP)$ is an {\em existential} CSP (ECSP) whose solutions are defined as follows:
Moreover, the set of solution for $\cE$ is defined as
\begin{eqnarray*}
\sol(\cE) & = & \{ \nu|_{U} \mid \nu \in \sol(\cP)\},
\end{eqnarray*}
where $\nu|_U$ is the restriction of function $\nu$ to the domain $U$. Notice that both the evaluation and the satisfiability problems for a CSP $\cP$ can be reduced to the evaluation problem for an ECSP.
In fact, the satisfiability problem for $\cP$ corresponds to the problem of verifying whether the assignment with empty domain belongs to $\sol(\cE)$, where $\cE$  is the ECSP $(\emptyset, \cP)$.
%, while the problem of checking whether $\nu \in \sol(\cP)$ corresponds to the problem of verifying whether $\nu \in \sol(\cE_2)$ with $\cE_2 = (V, \cP)$. 
Moreover, the evaluation and satisfiability problems are polynomially interreducible for ECSPs, so ECSPs provide a uniform framework for these two problems allowing us to focus only on the evaluation problem.
%In the rest of this section, we focus on the 
% ECSPs provide a uniform framework for both the satisfiability and the evaluation problems for CSPs; in fact, these two problems are polynomially interreducible for ECSPs. 

Clearly the satisfiability problem for CSPs, as well as the evaluation problem for ECSPs, is $\np$-complete; in particular, $\np$-hardness is a consequence that the satisfiability of 3-CNF propositional formulae can be easily encoded as a constraint satisfaction problem. Thus, a large body of research has been devoted to understanding the complexity of the evaluation problem for ECSPs, and finding tractable cases. In particular, two prominent approaches in this investigation have been based on the idea of viewing an ECSP as a homomorphism problem where the target structure is fixed~\cite{FV98,B17,Z17} or on the use of decomposition methods that require of some acyclicity conditions on an ECSP to be satisfied~\cite{GLS00,GLS02}. In this section, we focus on the latter class of methods, and show how the main results of this article can be used to deal with the fundamental problem of counting the number of solutions to an~ECSP. 

The evaluation problem for ECSPs is equivalent to the evaluation problem for CQs~\cite{KV00}. 
To see why this is the case, take an existential CSP $\cE = (U, \cP)$, where $\cP = (V,D,C)$ is a CSP with $U = \{y_1, \ldots, y_m\} \subseteq V$ and $C = \{C_1, \ldots, C_n\}$  where each constraint $C_i$ is a pair $(\bar t_i,R_i)$ such that 
$\bar t_i$ is a tuple of variables from $V$ of arity $k$, for some $k \geq 1$, and $R_i \subseteq D^k$. For each $R_i$,  let $\bar{R}_i$ be a $k$-ary relational symbol. Then define the CQ: 
\begin{eqnarray} 
	Q_{\cE}(\bar{y})  & \leftarrow & \bar{R}_1(\bar t_i), \ldots, \bar{R}_n(\bar t_n),
\end{eqnarray} 
with $\bar{y} = (y_1, \ldots, y_m)$ and the database $D_{\cE}$ such that $\bar{R}_i(\bar{a}) \in D_{\cE}$ if, and only if, $\bar{a} \in R_i$ for each $i \leq n$. Then it is easy to see that for every assignment $\nu$ it holds that:
\begin{eqnarray*}
	\nu \in \sol(\cE) & \text{ if and only if } & \nu(\bar{y}) \in Q_{\cE}(D_{\cE})
\end{eqnarray*}
This tight connection can be used to extend the notions of acyclicity  given in Section~\ref{sec:cqreduction} to the case of ECSPs. More precisely, $\cE$ is said to be acyclic if and only if $Q_{\cE}$ is acyclic~\cite{Y81,GLS98}, and $\hw(\cE)$ is defined as $\hw(Q_{\cE})$~\cite{GLS02}. 

The notion of acyclic CSP coincides with the notion of $\alpha$-acyclicity for hypergraphs~\cite{F83,BFMY83}, and it has played an important role in finding tractable cases for ECSPs~\cite{GLS00}. In fact, if $\aecsp = \{ (\cE,\nu) \mid \cE$ is an acyclic ECSP and $\nu \in \sol(\cE)\}$, then it holds that \aecsp\ is \logcfl-complete under many-to-one logspace reductions~\cite{GLS98}. Recall that \logcfl\ consists of all decision problems that are logspace reducible to a context-free language, and it holds that $\nlog \subseteq \logcfl \subseteq \acone$. Thus, we have that all problems in \logcfl\ can be solved in polynomial time and are highly parallelizable. 

Concerning to our investigation, we are interested in the fundamental problem of counting the number of solution of a ECSP. In general, this problem is $\sharpp$-complete and cannot admit an FPRAS (unless $\np = \rp$, given that the evaluation problem for CSP is $\np$-complete). Thus, we focus on the following fundamental problems where the degree of cyclicity of ECSP is bounded. 
\begin{center}
	\begin{tabular}{c}
		\framebox{
			\begin{tabular}{lp{10.3cm}}
				\textbf{Problem:} &  $\saecsp$\\
				\textbf{Input:} & An acyclic ECSP $\cE$\\
				\textbf{Output:} & $|\sol(\cE)|$
			\end{tabular}
		}\\
		\\
		\framebox{
			\begin{tabular}{lp{10.3cm}}
				\textbf{Problem:} &  $\skhwecsp$\\
				\textbf{Input:} & An ECSP $\cE$ such that $\hw(\cE) \leq k$\\
				\textbf{Output:} & $|\sol(\cE)|$
			\end{tabular}
		}
	\end{tabular}	
\end{center}
From the characterization of the evaluation problem for ECSPs in terms of the evaluation problem of CQs and Theorem~\ref{thm:acq-skhw-2}, we conclude that these problems admit FPRAS.
\begin{theorem}\label{thm:ecsp-skhw}
	\saecsp\ and $\skhwecsp$ admit an FPRAS for every $k \geq 1$.
\end{theorem}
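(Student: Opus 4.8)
The plan is to reduce $\saecsp$ and $\skhwecsp$ to the counting problems $\sacq$ and $\skhw$ for conjunctive queries, and then invoke Theorem~\ref{thm:acq-skhw-2}. The reduction is essentially already spelled out in the surrounding text: given an existential CSP $\cE = (U,\cP)$ with $\cP = (V,D,C)$, one builds the conjunctive query $Q_\cE(\bar y)$ whose atoms are $\bar R_i(\bar t_i)$ (one relation symbol per constraint relation $R_i$) with $\bar y$ enumerating the variables in $U$, together with the database $D_\cE$ containing exactly the tuples of each $R_i$. The text already records the pointwise equivalence $\nu \in \sol(\cE)$ iff $\nu(\bar y) \in Q_\cE(D_\cE)$.

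First I would check that this transformation is a \emph{parsimonious} (in fact, bijective) reduction on the level of solution sets: the map $\nu \mapsto \nu|_U = \nu(\bar y)$ is well-defined from $\sol(\cP)$ to $\sol(\cE)$ by definition of $\sol(\cE)$, and the equivalence above shows its image, viewed as tuples, is exactly $Q_\cE(D_\cE)$; conversely every answer $\bar a \in Q_\cE(D_\cE)$ arises from some $\nu \in \sol(\cP)$ and hence equals some $\nu|_U$. Since distinct restrictions give distinct tuples, this yields $|\sol(\cE)| = |Q_\cE(D_\cE)|$. The construction of $Q_\cE$ and $D_\cE$ from $\cE$ is clearly polynomial-time: we list the relations $R_i$ out explicitly as tables. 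Thus $\saecsp \prs \sacq$ and, for each fixed $k$, $\skhwecsp \prs \skhw$, provided we verify the width bound is preserved.

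Second, I would verify the acyclicity/width transfer. By the discussion preceding the statement, $\cE$ is acyclic iff $Q_\cE$ is acyclic, and $\hw(\cE)$ is \emph{defined} as $\hw(Q_\cE)$; so an acyclic input $\cE$ maps to an acyclic $Q_\cE$, and an ECSP with $\hw(\cE) \le k$ maps to a CQ with $\hw(Q_\cE) \le k$. Hence the image of an instance of $\saecsp$ (resp.\ $\skhwecsp$) is a legal instance of $\sacq$ (resp.\ $\skhw$). Combining with the parsimonious reduction, and recalling (as noted after the definition of parsimonious reductions) that $f \prs g$ and $g$ admitting an FPRAS implies $f$ admits an FPRAS, Theorem~\ref{thm:acq-skhw-2} yields an FPRAS for $\saecsp$ and for $\skhwecsp$ for every $k \ge 1$, which is exactly the claim of Theorem~\ref{thm:ecsp-skhw}.

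I do not expect any serious obstacle here — the entire content is the CQ/ECSP correspondence of~\cite{KV00}, which is already stated in the text, plus the observation that it is parsimonious. The only point requiring a line of care is confirming the reduction is genuinely bijective on solutions (so that no overcounting or undercounting occurs), but this follows immediately from the fact that $\nu \mapsto \nu|_U$ is surjective onto $\sol(\cE)$ by definition and that the tuple $\nu(\bar y)$ determines and is determined by $\nu|_U$. One could also remark, if desired, that the result applies uniformly: the reduction runs in polynomial time independent of $k$, and only the invocation of the (non-uniform in $k$) Theorem~\ref{thm:acq-skhw-2} introduces the fixed-$k$ restriction. No FPAUS claim is made in Theorem~\ref{thm:ecsp-skhw}, but the same bijective reduction would transfer the FPAUS of Theorem~\ref{thm:acq-skhw-2} as well, which could be stated as a remark.
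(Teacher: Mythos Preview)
Your proposal is correct and takes essentially the same approach as the paper: the paper simply states that the result follows from the characterization of ECSP evaluation in terms of CQ evaluation together with Theorem~\ref{thm:acq-skhw-2}, and your write-up spells out precisely this parsimonious reduction with the (straightforward) verification that solution counts and hypertree width are preserved.
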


\subsection{Nested words}

Nested words have been proposed as a model for the formal verification of correctness of structured programs that can contain nested calls to subroutines~\cite{AEM04,AM04,AM09}. In particular, the execution of a program is viewed as a linear sequence of states, but where a matching relation is used to specify the correspondence between each point during the execution at which a procedure is called with the point when we return from that procedure call. 

Formally, a binary relation $\mu$ on an interval $[n]$ is a matching if the following conditions hold:
(a) if $\mu(i,j)$ holds then $i < j$; 
(b) if $\mu(i,j)$ and $\mu(i,j')$ hold
then $j = j'$, and if $\mu(i,j)$ and $\mu(i',j)$ hold then $i = i'$; 
(c) if $\mu(i,j)$ and $\mu(i',j')$ hold, where $i \neq i'$ and $j \neq j'$, then either $[i,j] \cap [i',j'] = \emptyset$ or $[i,j] \subseteq [i',j']$ or $[i',j'] \subseteq [i,j]$. Moreover, given a finite alphabet $\Sigma$,  a {\em nested word} of length $n$ over $\Sigma$ is a tuple $\w=(w,\mu)$, where
$w \in \Sigma^*$ is a string of length $n$, and $\mu$ is a matching on $[n]$.

%We say that 
A position $i$ in a nested word 
$\w$ is a call (resp., return) position if there exists $j$ such that $\mu(i,j)$ (resp., $\mu(j,i)$) holds.
If $i$ is neither a call nor a return position in $\w$, then $i$ is said to be an internal position in $\w$. 
%Moreover, if $\mu(i,j)$ holds, we say that $i$ is the matching call of $j$, and $j$ is
%the matching return of $i$.
Figure~\ref{fig:shape} shows a nested word (without the
labeling with alphabet symbols). Solid lines are used to draw the linear edges that define a standard word, while nesting edges are drawn using dashed
lines. Thus, the relation $\mu$ is $\{(2,4), (5,6), (1,7)\}$, the set of call positions is $\{1,2,5\}$, the set of return positions is $\{4,6,7\}$ and the set of internal positions is~$\{3,8\}$.
%Linear edges from call positions are call edges going down, and linear
%edges to returns are return edges going up.  
%Thus, the sequence has a pending call, which is depicted by a dashed outgoing edge.

Properties to be formally verified are specified by using nested word automata. Such automata have the same expressiveness as monadic second order logic over
nested words~\cite{AM09}, so they are expressive enough to allow the specification and automatic verification of a large variety of properties over programs with nested calls to subroutines. 
Formally, a (nondeterministic) {\em nested word automaton} (NWA) $\cN$ 
is a tuple $(S, \Sigma, S_0, F, P, \cd, \id, \rd )$ consisting of
a finite set of states $S$,
an alphabet $\Sigma$, 
a set of initial states $S_0 \subseteq S$,
a set of final states   $F \subseteq S$,
a finite set of hierarchical symbols $P$,
%a set of initial hierarchical symbols $P_0\subseteq P$,
%a set of final hierarchical symbols $P_f\subseteq P$,
a call-transition relation $\cd \subseteq S \times \Sigma \times S \times P $,
an  internal-transition relation $\id \subseteq S \times \Sigma \times S $, 
and a return-transition relation $\rd \subseteq S \times P \times \Sigma \times S  $.

An NWA $\cN = (S, \Sigma, S_0, F, P, \cd, \id, \rd )$ works as follows with input a nested word $\w$. $\cN$ starts in an initial state in $S_0$ and reads $\w$ from left to
right. The state is propagated along the linear edges of $\w$ as in case of
a standard word automaton.
However, at a call position in $\w$, the nested word automaton propagates a state
along the linear edge together with a hierarchical symbol along the nesting edge of $\w$. 
At a return position in $\w$, the new state is determined based on the
state propagated along the linear edge as well as the symbol along the incoming
nesting edge.
Formally, a run $\rho$ of the automaton $\cN$ over a nested word
$\w=(a_1 \cdots a_n, \mu)$ is a sequence $s_0, s_1, \ldots, s_n$ of
states along the linear edges,
and a sequence $p_i$, for every call position $i$, of hierarchical symbols 
along the nesting edges,
such that: (a) $s_0\in S_0$; (b) for each call position $i$, it holds that $(s_{i-1},a_i,s_i,p_i)\in\cd$; (c) for each internal position $i$, it holds that $(s_{i-1},a_i,s_i)\in\id$; and (d) for each return position $i$ such that $\mu(j, i)$ holds, we have that $(s_{i-1}, p_j, a_i,s_i)\in\rd$. Moreover, the run $\rho$ is accepting if $s_n \in F$, and 
\begin{eqnarray*}
&\cL(\cN) \ = \ \{ \w \mid \w \text{ is a nested word over } \Sigma^* \text{ and there exists an accepting run of } \cN \text{ with input }\w\}.&
\end{eqnarray*}
The emptiness problem for nested word automata ask whether, given a NWA $\cN$, there exists a nested word $\w$ accepted by $\cN$. This is a fundamental problem when looking for faulty executions of a program with nested calls to subroutines; if $\cN$ is used to encode the complement of a property we expect to be satisfied by a program, then a nested word $\w \in \cL(\cN)$ encodes a bug of this program. In this sense, the following is also a very relevant problem for understanding how faulty a program~is:
%whether a program has faulty executions:
\begin{center}
	\framebox{
		\begin{tabular}{ll}
			\textbf{Problem:} &  $\snwa$\\
			\textbf{Input:} & A nested word automaton $\cN$ and a string $0^n$\\
			\textbf{Output:} & $|\{ \w \in \cL(\cN) \mid |\w| = n\}|$
		\end{tabular}
	}
\end{center}
As there exists a trivial polynomial-time parsimonious reduction from $\snfa$ to $\snwa$, we have that $\snwa$ is $\sharpp$-complete. Interestingly, from the existence of an FPRAS for $\stta$ (see Corollary \ref{cor:fpras-ta-bta}) and the results in \cite{AM09} showing how nested word automata can be represented by using tree automata over binary trees, it is possible to prove that:
\begin{theorem}
$\snwa$ admits an FPRAS.
\end{theorem}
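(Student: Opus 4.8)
The plan is to establish a polynomial-time parsimonious reduction from $\snwa$ to $\stta$, after which the result follows immediately from Corollary \ref{cor:fpras-ta-bta}, which states that $\stta$ admits an FPRAS. The central ingredient is the classical translation of nested word automata into tree automata over binary trees, due to Alur and Madhusudan~\cite{AM09}. Recall that a nested word $\w = (w,\mu)$ of length $n$ can be encoded canonically as a binary tree $t_{\w}$: the linear structure together with the matching relation $\mu$ has a natural ``parenthesized'' shape, and one reads off a tree by nesting the subtrees corresponding to matched call/return pairs and stringing internal positions along a spine. The key facts we need from \cite{AM09} are (i) this encoding $\w \mapsto t_{\w}$ is a bijection between nested words over $\Sigma$ and trees in a suitable regular set of binary trees over an alphabet $\Sigma'$ derived from $\Sigma$ (e.g. tagging symbols as call/return/internal), and (ii) given an NWA $\cN$ one can construct in polynomial time a tree automaton $\cT_{\cN}$ over binary trees such that $\w \in \cL(\cN)$ if and only if $t_{\w} \in \cL(\cT_{\cN})$.

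The steps I would carry out are as follows. First, fix a concrete encoding of nested words as binary trees following \cite{AM09}, being careful that the map $\w \mapsto t_{\w}$ is injective and that $|t_{\w}|$ is a fixed polynomial function (in fact a linear function) of $|\w| = n$; call this function $f(n)$. The encoding should be chosen so that every tree in $\cL(\cT_{\cN})$ is of the form $t_{\w}$ for a unique nested word $\w$ — this is what makes the reduction parsimonious rather than merely counting-preserving up to a constructible factor. Second, invoke the construction of $\cT_{\cN}$ from $\cN$: it simulates the call-transitions via the branching into a matched subtree (passing the hierarchical symbol $p \in P$ down one branch and the linear state down the other), the internal-transitions along the spine, and the return-transitions at the bottom of a nested block, with $S_0$ and $F$ handled through the initial and leaf transitions. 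One must verify that $\|\cT_{\cN}\|$ is polynomial in $\|\cN\|$; since $P$ and $S$ are part of the input this is straightforward as the state set of $\cT_{\cN}$ is essentially $S \cup (S \times P)$ or similar. Third, combine the two: the map $h$ sending the pair $(\cN, 0^n)$ to the pair $(\cT_{\cN}, 0^{f(n)})$ is polynomial-time computable, and by (i) and (ii) together with injectivity of the encoding we get
\[
|\{\w \in \cL(\cN) \mid |\w| = n\}| \ = \ |\{t \in \cL(\cT_{\cN}) \mid |t| = f(n)\}| \ = \ |\cL_{f(n)}(\cT_{\cN})|,
\]
so $\snwa \prs \stta$. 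Finally, since by Corollary \ref{cor:fpras-ta-bta} the problem $\stta$ admits an FPRAS, and FPRAS membership is closed under polynomial-time parsimonious reductions (as noted in Section~\ref{sec:fpras-fpaus}), we conclude that $\snwa$ admits an FPRAS.

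The main obstacle I expect is not conceptual but careful bookkeeping in the encoding: one must nail down the binary-tree representation of nested words so that it is simultaneously (a) a bijection onto a regular tree language recognizable by a polynomially-sized tree automaton, (b) length-preserving up to an explicitly constructible polynomial $f(n)$ so that the unary input $0^n$ can be transformed into $0^{f(n)}$, and (c) such that $\cL(\cT_{\cN})$ contains \emph{only} valid encodings, so no spurious trees inflate the count. The delicate point is (c): a naive translation of the NWA into a tree automaton might accept trees that do not correspond to any well-matched nested word, which would break parsimony. The fix is to intersect $\cT_{\cN}$ with a fixed ``well-formedness'' tree automaton that accepts exactly the trees in the image of the encoding map — this automaton is of constant size (depending only on how we tag call/return/internal symbols) and the intersection keeps everything polynomial. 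With this care, all the pieces are standard, and the only genuinely new input is Corollary \ref{cor:fpras-ta-bta}; everything else is the reduction of \cite{AM09} plugged into the closure property of FPRAS under $\prs$.
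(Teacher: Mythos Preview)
Your proposal is correct and follows exactly the approach sketched in the paper: reduce $\snwa$ parsimoniously to $\stta$ via the nested-word-to-binary-tree translation of~\cite{AM09}, then invoke Corollary~\ref{cor:fpras-ta-bta}. In fact your write-up is more detailed than the paper's own treatment, which simply cites~\cite{AM09} and the corollary without spelling out the bookkeeping you flag in (a)--(c).
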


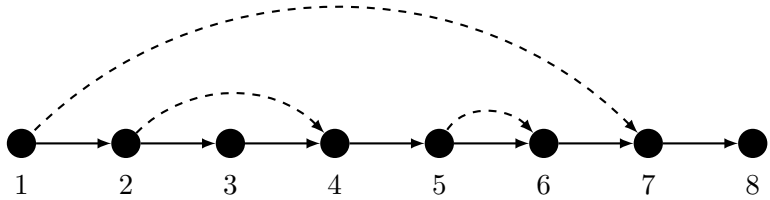
\begin{figure}[t]
\begin{center}
\begin{tikzpicture}
\node[circ, fill=black] (a1) {};
\node[circw, below=0mm of a1] (v1) {$1$};
\node[circ, fill=black,right=10mm of a1] (a2) {}
edge[arrin] (a1);
\node[circw, below=0mm of a2] (v2) {$2$};
\node[circ, fill=black,right=10mm of a2] (a3) {}
edge[arrin] (a2);
\node[circw, below=0mm of a3] (v3) {$3$};
\node[circ, fill=black,right=10mm of a3] (a4) {}
edge[arrin, dashed, bend right, in=-135, out =-45] (a2)
edge[arrin] (a3);
\node[circw, below=0mm of a4] (v4) {$4$};
\node[circ, fill=black,right=10mm of a4] (a5) {}
edge[arrin] (a4);
\node[circw, below=0mm of a5] (v5) {$5$};
\node[circ, fill=black,right=10mm of a5] (a6) {}
edge[arrin, dashed, bend right, in=-120, out =-45] (a5)
edge[arrin] (a5);
\node[circw, below=0mm of a6] (v6) {$6$};
\node[circ, fill=black,right=10mm of a6] (a7) {}
edge[arrin, dashed, bend right, in=-135, out =-45] (a1)
edge[arrin] (a6);
\node[circw, below=0mm of a7] (v7) {$7$};
\node[circ, fill=black,right=10mm of a7] (a8) {}
edge[arrin] (a7);
\node[circw, below=0mm of a8] (v8) {$8$};
\end{tikzpicture}
\end{center}
\caption{A nested word}
\label{fig:shape}
%\end{center}
\end{figure}

\subsection{Knowledge compilation}

% !TeX spellcheck = en_US
%!TEX root = main.tex

Model counting is the problem of counting the number of satisfying assignments given a propositional formula. Although this problem is $\#\textsc{P}$-complete~\cite{valiant1979complexity}, there have been several approaches to tackle it~\cite{GomesSS09}.
One of them comes from the field of \emph{knowledge compilation}, a subarea in artificial intelligence~\cite{darwiche2002knowledge}.
Roughly speaking, this approach consists in dividing the reasoning process in two phases. 
The first phase is to compile the formula into a target language (e.g. Horn formulae, BDDs, circuits) that has good algorithmic properties. The second phase is to use the new representation to solve the problem efficiently.
The main goal then is to find a target language that is expressive enough to encode a rich set of propositional formulae and, at the same time, that allows for efficient algorithms to solve the counting problem.
%one can solve the counting problem efficiently. 

One of the most used formalism in knowledge compilation are circuits in Negation Normal Form (NNF for short).
 %Let $\Var$ be a finite set of variables and $\oVar$ its corresponding set of literals (i.e. variables in $\Var$ or its negation). 
An NNF circuit $C = (V, E, g_0, \mu)$  is a directed acyclic graph $(V, E)$ where $V$ are called gates, edges $E$ are called wires, and $g_0 \in V$ is a distinguished gate called the output gate. The function $\mu$ assigns a type to each gate that can be $\wedge$ (AND), $\vee$ (OR), or a literal (i.e. a variable or the negation of a variable). We assume that all literals have in-degree $0$ and we call them input gates. 
%Further, the variable uses by an input gate $g$ is denoted by $\svar(g)$. 
Without loss of generality, we assume that all $\wedge$-gate and $\vee$-gate have in-degree two (if not, we can convert any NNF circuit in poly-time to binary gates). 
%We say that $g_1$ and $g_2$ are the incident gates to $g$ if $(g_1,g),(g_2, g) \in E$. 
For a gate $g$ we define the set $\Vars(g)$ of all variables whose value can alter the value of $g$, formally, $v \in \Vars(g)$ if and only if there exists an input gate $g'$ with variable $v$ (i.e. $\mu(g) = v$ or $\mu(g) = \bar{v}$) and there is a path from $g'$ to $g$ in $(V, E)$. 
A valuation for $C$ is a mapping $\nu$ from the variables of $C$ to $\{0,1\}$. The valuation of $C$ with $\nu$, denoted by $\nu(C)$, is the value (i.e. $0$ or $1$) taken by $g_0$ when $C$ is evaluated in a bottom up fashion. 
%For any gate $g \in V$, we define $\nu(C, g)$ the value taken by $g$ when $C$ is evaluated with $\nu$ (in particular, $\nu(C, g_0) = \nu(C)$).

A target language for knowledge compilation that has attracted a lot of attention is the class of DNNF circuits.
An NNF circuit $C$ is called decomposable~\cite{darwiche2001decomposable} if and only if for every $\wedge$-gate $g$ with incident gates $g_1, g_2$ it holds that $\Vars(g_1) \cap \Vars(g_2) = \emptyset$. In other words, if the incident gates of every $\wedge$-gate share no variables. For example, one can easily check that the NNF circuit of Figure~\ref{fig:DNNFcircuit} is decomposable. 
DNNF is the set of all NNF circuits that are decomposable. 
DNNF has good algorithmic properties in terms of satisfiability and logical operations. Furthermore, DNNF can be seen as a generalization of DNF formulae and, in particular, of binary decision diagrams (BDD), in the sense that every BDD can be transformed into a DNNF circuit in polynomial time. Nevertheless, DNNF is exponentially more succint than DNF or BDD, and then it is a more appealing language for knowledge compilation.

\begin{figure}
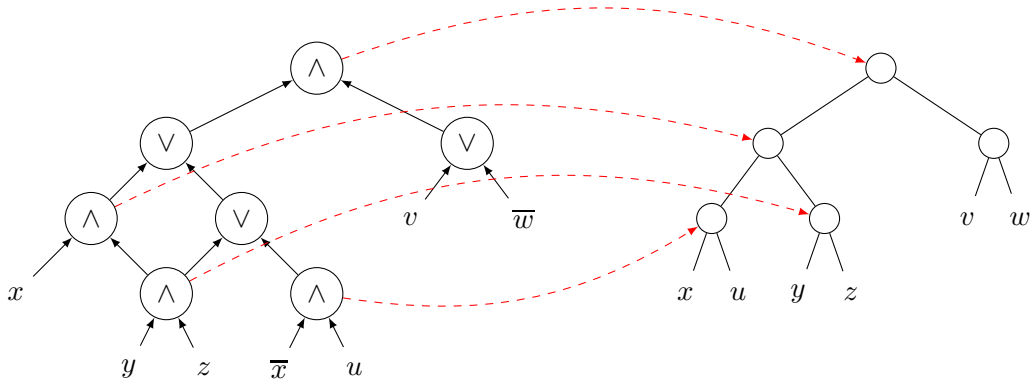

	\ctikzfig{dnnfcircuit}
	\caption{A structured DNNF circuit and its corresponding v-tree.}
	\label{fig:DNNFcircuit}
\end{figure}

Regarding model counting, DNNF circuits can easily encode \#P-complete problems (e.g. \#DNF) and, therefore, researchers have look into subclasses of DNNF with efficient counting properties. 
Deterministic DNNF (d-DNNF for short) is a subclass of DNNF where the counting problem can be solved in polynomial time (see~\cite{darwiche2001tractable} for a definition of d-DNNF). 
Indeed, several problems can be compiled into d-DNNF circuits, finding applications in probabilistic reasoning~\cite{chavira2006compiling}, query evaluation~\cite{Beame0RS17}, planning~\cite{bonet2006heuristics}, among others. 

However, as pointed out in~\cite{pipatsrisawat2008new} the compilation into d-DNNF circuits usually satisfies a structural property between variables, which naturally brings the class of structured DNNF circuits.  
A v-tree is a binary tree $t$ whose leaves are in one-to-one correspondence with a set of variables. Similar than for circuits, for a node $u$ in a v-tree, we denote by $\Vars(u)$ the set of all variables in the leaves of the subtree rooted at $u$. 
Then we say that a DNNF circuit $C$ respects a v-tree $t$ if for every
$\wedge$-gate $g$ and the two incident gates $g_1$ and $g_2$ of $g$, there exists a node $u$ in $t$ such that $\Vars(g_1) \subseteq \Vars(u_1)$ and $\Vars(g_2) \subseteq \Vars(u_2)$, where $u_1$ and $u_2$ are the left and right child of $u$ in $t$, respectively.
We say that a DNNF circuit $C$ is structured if and only if there exists a v-tree $t$ such that $C$ respects $t$. 
For example, in the right-hand side of Figure~\ref{fig:DNNFcircuit}, we show a v-tree for variables $\{x,y,z,u,v,w\}$. The red dashed lines show how $\wedge$-gates have to be assigned to the nodes in the v-tree in order for the circuit to respect this v-tree. %Therefore, we say that our circuit is structured. 
Structured DNNF is the class of all DNNF circuits that are structured.
As it was already mentioned, the compilation into d-DNNF circuits usually produces circuits that are also structured~\cite{pipatsrisawat2008new}. Structured DNNF have been recently used for efficient enumeration~\cite{AmarilliBJM17,AmarilliBMN19} and in~\cite{oztok2014cv} it was shown that CNF formulae with bounded width (e.g. CV-width) can be efficiently compiled into structured DNNF circuits. 
Since structured DNNF circuits includes the class of DNF formulae, its underlying counting problem is $\#\textsc{P}$-complete. We now prove a positive approximation result. Specifically, consider the problem:
\begin{center}
	\framebox{
		\begin{tabular}{lp{11cm}}
			\textbf{Problem:} &  $\ssddnnf$\\
			\textbf{Input:} & A DNNF circuit $C$ and a v-tree $t$ such that $C$ respects $t$. \\
			\textbf{Output:} & $|\{\nu \mid \nu(C) = 1 \}|$
		\end{tabular}
	}
\end{center}
By using the existence of an FPRAS for $\sta$, we can show that  $\ssddnnf$ also admits an FPRAS. 

\begin{theorem}
	$\ssddnnf$ admits an FPRAS.
\end{theorem}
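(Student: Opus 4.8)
The plan is to reduce $\ssddnnf$ to $\sta$ via a polynomial-time parsimonious reduction, and then invoke the existence of an FPRAS for $\sta$ (Corollary~\ref{cor:fpras-ta-bta}). Given a structured DNNF circuit $C = (V,E,g_0,\mu)$ respecting a v-tree $t$, the goal is to build a tree automaton $\cT$ together with a size parameter $n$ so that the trees of size $n$ in $\cL(\cT)$ are in bijection with the satisfying valuations of $C$. The natural skeleton for these trees is the v-tree $t$ itself: a satisfying assignment $\nu$ assigns a bit to each leaf/variable of $t$, so a tree of the ``shape'' of $t$ labeled with the values $\nu(x)$ at the leaves (and some canonical label at internal nodes) encodes $\nu$ uniquely. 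Thus $n = |t|$ is fixed and known in advance, which is exactly the regime the reduction to $\sta$ needs.

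The key step is to design the states and transitions of $\cT$ so that it accepts a labeled copy of $t$ if and only if the encoded valuation satisfies $C$. Here the structuredness of $C$ is essential: because $C$ respects $t$, for every node $u$ of $t$ the set of gates $g$ with $\Vars(g) \subseteq \Vars(u)$ forms a ``slice'' of the circuit that depends only on the variables in the subtree of $t$ rooted at $u$. I would run $\cT$ top-down over (the shape of) $t$: the state at a node $u$ records which gates in that slice evaluate to $1$ under the portion of $\nu$ restricted to $\Vars(u)$ — more precisely, since the slice can be large, one records this information in a way that can be propagated bottom-up consistently, e.g. the restriction of the ``true set'' of gates. The transition relation at an internal node $u$ with children $u_1,u_2$ must check that the state at $u$ (the set of satisfied gates in the slice of $u$) is exactly what one obtains by combining the states at $u_1$ and $u_2$ according to the gate definitions: an $\vee$-gate is satisfied iff one of its inputs is, and an $\wedge$-gate $g$ with inputs $g_1,g_2$ where (by decomposability/structuredness) $g_1$ lives in the slice of $u_1$ and $g_2$ in the slice of $u_2$ is satisfied iff both $g_1$ and $g_2$ are. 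At a leaf labeled by variable $x$ with value $b \in \{0,1\}$, the transition sets the state to record which literal gates over $x$ are true. Finally the initial state at the root must require that $g_0$ is in the true set. One must argue this can all be encoded with polynomially many states and transitions; this requires care, since a slice may contain polynomially many gates and the naive ``subset of satisfied gates'' state space is exponential. The fix is that only the gates on the ``frontier'' between adjacent v-tree nodes need be tracked — i.e. gates whose variable set straddles the split at $u$ — and structuredness bounds this frontier appropriately, or alternatively one processes the circuit gate-by-gate interleaved with the v-tree so that each automaton state tracks only a single gate's value together with a pointer into a topological order. Either way, after unfolding one obtains a tree automaton of size polynomial in $|C| + |t|$, and the number of accepting runs over trees of size $n=|t|$ — counted as trees, so that each satisfying $\nu$ corresponds to exactly one accepted labeled tree regardless of how many runs witness it — equals $|\{\nu \mid \nu(C)=1\}|$.

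Once the parsimonious reduction $\ssddnnf \prs \sta$ is established, the theorem follows immediately: by Corollary~\ref{cor:fpras-ta-bta}, $\sta$ admits an FPRAS, and since a polynomial-time parsimonious reduction composed with an FPRAS yields an FPRAS (as noted in Section~\ref{sec:fpras-fpaus}), $\ssddnnf$ admits an FPRAS as well.

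The main obstacle I anticipate is exactly the state-space blow-up: a structured DNNF circuit can have many $\wedge$-gates mapped to the same v-tree node, and a single run of a top-down tree automaton must simultaneously certify the values of all the circuit gates. Keeping the number of states polynomial requires either exploiting that the ``communication'' between the two subtrees of any v-tree node is limited (which is the real content of structuredness), or reorganizing the circuit into a layered form aligned with the v-tree so that the automaton only ever needs to remember one gate's status at a time. Getting this encoding right — and proving the resulting count is genuinely the number of satisfying assignments and not the number of (circuit-evaluation, assignment) pairs — is where the bulk of the work lies; everything downstream is a black-box application of Corollary~\ref{cor:fpras-ta-bta}.
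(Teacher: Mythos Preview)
Your high-level strategy is exactly the paper's: a polynomial-time parsimonious reduction from $\ssddnnf$ to $\sta$, encoding each valuation $\nu$ as a tree of the same shape as $t$ with leaves labeled by $\nu$'s bits, and then invoking Corollary~\ref{cor:fpras-ta-bta}. You also correctly identify the crux (keeping the state space polynomial) and correctly observe that $\sta$ counts accepted trees, not runs, so multiple accepting runs over the same labeled tree are harmless.

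Where your proposal stays vague is precisely the point the paper resolves cleanly. Your option (a) --- tracking a ``frontier'' subset of gates --- is the wrong instinct: nothing in the definition of structured DNNF bounds how many $\wedge$-gates map to a single v-tree node, so a subset-of-gates state will not shrink to polynomial size that way. Your option (b) is the right one, and the paper's construction makes it concrete. A state is simply a pair $(u,g)$ with $u$ a node of $t$ and $g$ a single $\wedge$-gate or literal whose v-tree node $f(g)$ lies in the subtree of $t$ rooted at $u$ (together with bare nodes $u\in t$ for subtrees where nothing more needs witnessing). When $f(g)=u$ and $g$ is an $\wedge$-gate, the automaton does not record which of $g$'s inputs are true; instead it nondeterministically chooses, for each side $i\in\{1,2\}$, one gate $g_i\in\Vand$ that feeds into $g$ through a chain of $\vee$-gates and whose $f(g_i)$ lies under $u\cdot i$, and transitions to the pair of states $(u\cdot 1,g_1),(u\cdot 2,g_2)$. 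All $\vee$-gates are thus absorbed into the automaton's nondeterminism rather than into its state. This gives at most $O(|t|\cdot|V|)$ states and a polynomial transition relation, and one checks directly that $t_\nu$ is accepted iff $\nu(C)=1$. No layered reorganization of the circuit is needed.
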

\begin{proof}
	The connection between structured DNNF and tree automata was already used in~\cite{AmarilliBJM17,AmarilliBMN19}, so this connection is not new. 
	Here, we show that there exists a parsimonious reduction from $\ssddnnf$ into $\sta$, which proves the FPRAS for structured DNNF.

	Let $t$ be a v-tree and $C = (V, E, g_0, \mu)$ be a DNNF circuit such that $C$ respects $t$. Let $\Vand$ be all gates in $V$ that are $\wedge$-gates or input gates.  Furthermore, let $f: \Vand \rightarrow t$ be the function that realizes that $C$ respects $t$, namely, for every $\wedge$-gate $g$ and gates $g_1$ and $g_2$ incident to $g$ it holds that $\Vars(g_1) \subseteq \Vars(f(g) \cdot 1)$ and $\Vars(g_2) \subseteq \Vars(f(g) \cdot 2)$. We also assume that if $g\in \Vand$ is a literal, then $f(g)$ is the leaf in $t$ that has the same variable as $g$ in $C$. One can easily check that for two gates $g_1, g_2 \in \Vand$ in $C$, if there is a path from $g_2$ to $g_1$ in $C$, then $f(g_2)$ is a descendant of $f(g_1)$ in~$t$.  
	Without loss of generality, we assume that $g_0$ is a $\wedge$-gate in $C$ and $f(g_0) = \epsilon$. 
	Finally, for any $\wedge$-gate $g$ we define the set $D(g)$ of all gates $g' \in \Vand$ such that there exists a path from $g'$ to $g$ in $(V, E)$ passing only through $\vee$-gates. Intuitively, if $g' \in D(g)$ then $g'$ is directly affecting the value of $g$ in the sense that if $g'$ is true, then at least one of the incident wires to $g$ is true. 
	
	The idea for the parsimonious reduction is to construct a tree automaton $\cT_C$ that will accept trees that encode valuations that makes $C$ true. To encode a  valuation, $\cT_C$ will only accept binary trees having exactly the same tree-shape as $t$, but its leaves are labeled with $0$ or $1$ (internal node will have any symbol, e.g. $@$). Given that leaves of $t$ are in one-to-one correspondence with the variables in $C$, then these encodings are in one-to-one correspondence with valuations in $C$. So, for a valuation $\nu$ let $t_\nu$ be the tree that has the same tree-shape as $t$ and whose leaves encode~$\nu$. Furthermore, suppose that $t_\nu$ is an input tree for the tree automaton $\cT_C$. 
	For checking that $\nu(C) = 1$, the states of $\cT_C$ will be either nodes $u \in t$ or pairs of the form $(u, g)$ where $u \in t$, $g \in \Vand$, and $f(g)$ is a descendant of $u$ in $t$. A node $u$ in a state (e.g. in the pair $(u,g)$) will take care of checking that the tree-shape of $t_\nu$ is the same as $t$. On the other hand, the gate $g$ in the pair $(u,g)$ will be use to navigate $C$ and find whether $g$ is evaluated to $1$ given the valuation encoded by $t_\nu$. When $f(g)$ is a strict descendant of $u$ (i.e. $f(g) \neq u$) we will continue down $t_\nu$ trying to find a node $u'$ such that $f(g) = u'$. When a node $u$ with $f(g) = u$ is found, then to evaluate $g$ to $1$ we need to find two gates $g_1, g_2 \in D(g)$ that are also evaluated to $1$ with $\nu$. Given that $C$ respects $t$ we know that $f(g_1)$ and $f(g_2)$ must be descendants of $f(g)$ in $t_\nu$, and then $\cT_C$ will recurse into the states $(u1, g_1)$ and $(u2, g_2)$ continuing into $g_1$ and $g_2$. If the non-deterministic decisions of $\cT_C$ are taken correctly, $\cT_C$ will reach the leaves $u$ of $t_\nu$ that has the same variable as the gate $g$ and it will check if the value $t_\nu(u)$ is correct with respect to $g$. 
	
	Let $\cT_C = (Q, \Sigma, \Delta, \sinit)$ be the tree automaton constructed from $C$ and $t$ such that $\Sigma = \{@, 0, 1\}$, $Q = t \, \cup \, \{(u, g) \in t \times \Vand \mid \text{$f(g)$ is a descendant of $u$ in $t$}\}$, and $\sinit = (\es, g_0)$ (recall that we assume that $g_0$ is a $\wedge$-gate and $f(g_0) = \es$). 
	We define the transition relation $\Delta$ by case analysis:
	\begin{itemize}
		\item For $u \in t$ and $u$ is not a leaf, then $(u, @, u1 \cdot u2) \in \Delta$.
		\item For $u \in t$ and $u$ is a leaf, then $(u, a, \es) \in \Delta$ for every $a \in \{0,1\}$.
		\item For $(u, g) \in Q$ such that $f(g) \neq u$, if $f(g)$ is a descendant of $u1$, then $((u,g), @, (u1,g) \cdot u2) \in \Delta$. Otherwise, if $f(g)$ is a descendant of $u2$, then $((u,g), @, u1 \cdot (u2,g)) \in \Delta$.
		
		\item For $(u, g) \in Q$ such that $f(g) = u$ and $u$ is not a leaf in $t$, then $((u,g), @, (u1, g_1)\cdot(u2,g_2)) \in \Delta$ for every $g_1, g_2 \in D(g)$ with $f(g_1)$ is a descendant of $u1$ and $f(g_2)$ is a descendant of $u2$. 
		
		\item For $(u, g) \in Q$ such that $f(g) = u$ and $u$ is a leaf in $t$, then $((u,g), 1, \es) \in \Delta$ iff $\mu(g) = t(u)$, that is, if $g$ is a positive literal and its variable coincide with the variable assign to $u$ in $t$.  Similarly, $((u,g), 0, \es) \in \Delta$ iff $\mu(g) = \neg t(u)$.
	\end{itemize}
	Finally, given a circuit $C$ the reduction produces the tree automaton $\cT_C$ and the value $0^{|t|}$. From the construction, it is straightforward to check that $\cT_C$ will accept trees that have the same tree-shape as $t$ and whose leaves encode a valuation of $C$. Furthermore, for a valuation $\nu$ and its tree $t_\nu$ one can check that $\nu(C) = 1$ if, and only if, $t_\nu \in \cL(\cT_C)$. Therefore, the reduction is parsimonious. Finally, the number of states and transitions of $\cT_C$ is polynomial in the size of $C$ and $t$, and thus the reduction can be computed in polynomial time. 
\end{proof}

\bibliography{cluster}

\newcommand{\etalchar}[1]{$^{#1}$}
\begin{thebibliography}{BHvMW09}

\bibitem[ABJM17]{AmarilliBJM17}
Antoine Amarilli, Pierre Bourhis, Louis Jachiet, and Stefan Mengel.
\newblock A circuit-based approach to efficient enumeration.
\newblock In {\em 44th International Colloquium on Automata, Languages, and
  Programming, {ICALP} 2017, July 10-14, 2017, Warsaw, Poland}, pages
  111:1--111:15, 2017.

\bibitem[ABMN19]{AmarilliBMN19}
Antoine Amarilli, Pierre Bourhis, Stefan Mengel, and Matthias Niewerth.
\newblock Enumeration on trees with tractable combined complexity and efficient
  updates.
\newblock In {\em Proceedings of the 38th {ACM} {SIGMOD-SIGACT-SIGAI} Symposium
  on Principles of Database Systems, {PODS} 2019}, pages 89--103, 2019.

\bibitem[ACJR19]{Arenas19}
Marcelo Arenas, Luis~Alberto Croquevielle, Rajesh Jayaram, and Cristian
  Riveros.
\newblock Efficient logspace classes for enumeration, counting, and uniform
  generation.
\newblock In {\em Proceedings of the 38th ACM SIGMOD-SIGACT-SIGAI Symposium on
  Principles of Database Systems}, pages 59--73. ACM, 2019.

\bibitem[AD20]{AD20}
Serge Abiteboul and Gilles Dowek.
\newblock {\em The Age of Algorithms}.
\newblock Cambridge University Press, 2020.

\bibitem[AEM04]{AEM04}
Rajeev Alur, Kousha Etessami, and P.~Madhusudan.
\newblock A temporal logic of nested calls and returns.
\newblock In {\em Tools and Algorithms for the Construction and Analysis of
  Systems, 10th International Conference, {TACAS} 2004, Proceedings}, pages
  467--481, 2004.

\bibitem[{\`{A}}J93]{AJ93}
Carme {\`{A}}lvarez and Birgit Jenner.
\newblock A very hard log-space counting class.
\newblock {\em Theor. Comput. Sci.}, 107(1):3--30, 1993.

\bibitem[AM04]{AM04}
Rajeev Alur and P.~Madhusudan.
\newblock Visibly pushdown languages.
\newblock In {\em Proceedings of the 36th Annual {ACM} Symposium on Theory of
  Computing, Chicago, IL, USA, June 13-16, 2004}, pages 202--211, 2004.

\bibitem[AM09]{AM09}
Rajeev Alur and P.~Madhusudan.
\newblock Adding nesting structure to words.
\newblock {\em J. {ACM}}, 56(3):16:1--16:43, 2009.

\bibitem[BCM{\etalchar{+}}03]{DBLP:conf/dlog/2003handbook}
Franz Baader, Diego Calvanese, Deborah~L. McGuinness, Daniele Nardi, and
  Peter~F. Patel{-}Schneider, editors.
\newblock {\em The Description Logic Handbook: Theory, Implementation, and
  Applications}.
\newblock Cambridge University Press, 2003.

\bibitem[BFMY83]{BFMY83}
Catriel Beeri, Ronald Fagin, David Maier, and Mihalis Yannakakis.
\newblock On the desirability of acyclic database schemes.
\newblock {\em J. {ACM}}, 30(3):479--513, 1983.

\bibitem[BG06]{bonet2006heuristics}
Blai Bonet and Hector Geffner.
\newblock Heuristics for planning with penalties and rewards using compiled
  knowledge.
\newblock In {\em KR}, pages 452--462, 2006.

\bibitem[BGS00]{bertoni2000random}
Alberto Bertoni, Massimiliano Goldwurm, and Massimo Santini.
\newblock Random generation and approximate counting of ambiguously described
  combinatorial structures.
\newblock In {\em Annual Symposium on Theoretical Aspects of Computer Science},
  pages 567--580. Springer, 2000.

\bibitem[BHvMW09]{DBLP:series/faia/2009-185}
Armin Biere, Marijn Heule, Hans van Maaren, and Toby Walsh, editors.
\newblock {\em Handbook of Satisfiability}, volume 185 of {\em Frontiers in
  Artificial Intelligence and Applications}. {IOS} Press, 2009.

\bibitem[BLRS17]{Beame0RS17}
Paul Beame, Jerry Li, Sudeepa Roy, and Dan Suciu.
\newblock Exact model counting of query expressions: Limitations of
  propositional methods.
\newblock {\em {ACM} Trans. Database Syst.}, 42(1):1:1--1:46, 2017.

\bibitem[Bul17]{B17}
Andrei~A. Bulatov.
\newblock A dichotomy theorem for nonuniform csps.
\newblock In {\em 58th {IEEE} Annual Symposium on Foundations of Computer
  Science, {FOCS} 2017, Berkeley, CA, USA, October 15-17, 2017}, pages
  319--330, 2017.

\bibitem[Bv20]{10.1145/3389390}
Andrei~A. Bulatov and Stanislav \v{Z}ivn\'{y}.
\newblock Approximate counting {CSP} seen from the other side.
\newblock {\em ACM Trans. Comput. Theory}, 12(2), May 2020.

\bibitem[CDG{\etalchar{+}}07]{tata2007}
H.~Comon, M.~Dauchet, R.~Gilleron, C.~L\"oding, F.~Jacquemard, D.~Lugiez,
  S.~Tison, and M.~Tommasi.
\newblock Tree automata techniques and applications.
\newblock Available on: \url{http://tata.gforge.inria.fr/}, 2007.
\newblock release October, 12th 2007.

\bibitem[CDGL99]{calvanese1999reasoning}
Diego Calvanese, Giuseppe De~Giacomo, and Maurizio Lenzerini.
\newblock Reasoning in expressive description logics with fixpoints based on
  automata on infinite trees.
\newblock In {\em IJCAI}, volume~99, pages 84--89, 1999.

\bibitem[CDJ06]{chavira2006compiling}
Mark Chavira, Adnan Darwiche, and Manfred Jaeger.
\newblock Compiling relational bayesian networks for exact inference.
\newblock {\em International Journal of Approximate Reasoning}, 42(1-2):4--20,
  2006.

\bibitem[CJ06]{CJ06}
David~A. Cohen and Peter Jeavons.
\newblock The complexity of constraint languages.
\newblock In {\em Handbook of Constraint Programming}, pages 245--280.
  Elsevier, 2006.

\bibitem[CKS01]{DBLP:books/daglib/0004131}
Nadia Creignou, Sanjeev Khanna, and Madhu Sudan.
\newblock {\em Complexity classifications of Boolean constraint satisfaction
  problems}, volume~7 of {\em {SIAM} monographs on discrete mathematics and
  applications}.
\newblock {SIAM}, 2001.

\bibitem[CM77]{CM77}
Ashok~K. Chandra and Philip~M. Merlin.
\newblock Optimal implementation of conjunctive queries in relational data
  bases.
\newblock In {\em Proceedings of the 9th Annual {ACM} Symposium on Theory of
  Computing}, STOC'77, pages 77--90, 1977.

\bibitem[CMN99]{ChaudhuriMN99}
Surajit Chaudhuri, Rajeev Motwani, and Vivek~R. Narasayya.
\newblock On random sampling over joins.
\newblock In {\em SIGMOD}, pages 263--274, 1999.

\bibitem[Cou90]{C90}
Bruno Courcelle.
\newblock The monadic second-order logic of graphs. i. recognizable sets of
  finite graphs.
\newblock {\em Information and computation}, 85(1):12--75, 1990.

\bibitem[CR97]{CR97}
Chandra Chekuri and Anand Rajaraman.
\newblock Conjunctive query containment revisited.
\newblock In {\em Database Theory - {ICDT} '97, 6th International Conference,
  Delphi, Greece, January 8-10, 1997, Proceedings}, pages 56--70, 1997.

\bibitem[CY20]{Chen020}
Yu~Chen and Ke~Yi.
\newblock Random sampling and size estimation over cyclic joins.
\newblock In {\em ICDT}, pages 7:1--7:18, 2020.

\bibitem[Dar01a]{darwiche2001decomposable}
Adnan Darwiche.
\newblock Decomposable negation normal form.
\newblock {\em Journal of the ACM (JACM)}, 48(4):608--647, 2001.

\bibitem[Dar01b]{darwiche2001tractable}
Adnan Darwiche.
\newblock On the tractable counting of theory models and its application to
  truth maintenance and belief revision.
\newblock {\em Journal of Applied Non-Classical Logics}, 11(1-2):11--34, 2001.

\bibitem[DJ04]{DALMAU2004315}
V\'ictor Dalmau and Peter Jonsson.
\newblock The complexity of counting homomorphisms seen from the other side.
\newblock {\em Theoretical Computer Science}, 329(1):315 -- 323, 2004.

\bibitem[DM02]{darwiche2002knowledge}
Adnan Darwiche and Pierre Marquis.
\newblock A knowledge compilation map.
\newblock {\em Journal of Artificial Intelligence Research}, 17:229--264, 2002.

\bibitem[DM15]{DBLP:journals/mst/0001M15}
Arnaud Durand and Stefan Mengel.
\newblock {S}tructural {T}ractability of {C}ounting of {S}olutions to
  {C}onjunctive {Q}ueries.
\newblock {\em Theory Comput. Syst.}, 57(4):1202--1249, 2015.

\bibitem[EJ91]{emerson1991tree}
E~Allen Emerson and Charanjit~S Jutla.
\newblock Tree automata, mu-calculus and determinacy.
\newblock In {\em [1991] Proceedings 32nd Annual Symposium of Foundations of
  Computer Science}, pages 368--377. IEEE, 1991.

\bibitem[Fag83]{F83}
Ronald Fagin.
\newblock Degrees of acyclicity for hypergraphs and relational database
  schemes.
\newblock {\em J. {ACM}}, 30(3):514--550, 1983.

\bibitem[FG06]{FG06}
J{\"{o}}rg Flum and Martin Grohe.
\newblock {\em Parameterized Complexity Theory}.
\newblock Texts in Theoretical Computer Science. An {EATCS} Series. Springer,
  2006.

\bibitem[FV98]{FV98}
Tom{\'{a}}s Feder and Moshe~Y. Vardi.
\newblock The computational structure of monotone monadic {SNP} and constraint
  satisfaction: {A} study through datalog and group theory.
\newblock {\em {SIAM} J. Comput.}, 28(1):57--104, 1998.

\bibitem[GGLS16]{GGLS16}
Georg Gottlob, Gianluigi Greco, Nicola Leone, and Francesco Scarcello.
\newblock Hypertree decompositions: Questions and answers.
\newblock In {\em Proceedings of the 35th {ACM} {SIGMOD-SIGACT-SIGAI} Symposium
  on Principles of Database Systems, {PODS} 2016, San Francisco, CA, USA, June
  26 - July 01, 2016}, pages 57--74, 2016.

\bibitem[GJK{\etalchar{+}}97]{gore1997quasi}
Vivek Gore, Mark Jerrum, Sampath Kannan, Z~Sweedyk, and Steve Mahaney.
\newblock A quasi-polynomial-time algorithm for sampling words from a
  context-free language.
\newblock {\em Information and Computation}, 134(1):59--74, 1997.

\bibitem[GLS98]{GLS98}
Georg Gottlob, Nicola Leone, and Francesco Scarcello.
\newblock The complexity of acyclic conjunctive queries.
\newblock In {\em 39th Annual Symposium on Foundations of Computer Science,
  {FOCS}'98, November 8-11, 1998, Palo Alto, California, {USA}}, pages
  706--715, 1998.

\bibitem[GLS00]{GLS00}
Georg Gottlob, Nicola Leone, and Francesco Scarcello.
\newblock A comparison of structural {CSP} decomposition methods.
\newblock {\em Artif. Intell.}, 124(2):243--282, 2000.

\bibitem[GLS02]{GLS02}
Georg Gottlob, Nicola Leone, and Francesco Scarcello.
\newblock Hypertree decompositions and tractable queries.
\newblock {\em J. Comput. Syst. Sci.}, 64(3):579--627, 2002.

\bibitem[GSS01]{GSS01}
Martin Grohe, Thomas Schwentick, and Luc Segoufin.
\newblock When is the evaluation of conjunctive queries tractable?
\newblock In {\em Proceedings on 33rd Annual {ACM} Symposium on Theory of
  Computing, July 6-8, 2001, Heraklion, Crete, Greece}, STOC'01, pages
  657--666, 2001.

\bibitem[GSS09]{GomesSS09}
Carla~P. Gomes, Ashish Sabharwal, and Bart Selman.
\newblock Model counting.
\newblock In {\em Handbook of Satisfiability}, pages 633--654. 2009.

\bibitem[HN04]{DBLP:books/daglib/0013017}
Pavol Hell and Jaroslav Nesetril.
\newblock {\em Graphs and homomorphisms}, volume~28 of {\em Oxford lecture
  series in mathematics and its applications}.
\newblock Oxford University Press, 2004.

\bibitem[JVV86a]{JVV86}
Mark Jerrum, Leslie~G. Valiant, and Vijay~V. Vazirani.
\newblock Random generation of combinatorial structures from a uniform
  distribution.
\newblock {\em Theor. Comput. Sci.}, 43:169--188, 1986.

\bibitem[JVV86b]{jerrum1986random}
Mark~R Jerrum, Leslie~G Valiant, and Vijay~V Vazirani.
\newblock Random generation of combinatorial structures from a uniform
  distribution.
\newblock {\em Theoretical Computer Science}, 43:169--188, 1986.

\bibitem[KL83]{KL83}
Richard~M. Karp and Michael Luby.
\newblock Monte-carlo algorithms for enumeration and reliability problems.
\newblock In {\em 24th Annual Symposium on Foundations of Computer Science,
  Tucson, Arizona, USA, 7-9 November 1983}, FOCS'83, pages 56--64, 1983.

\bibitem[KLM89]{karp1989monte}
Richard~M Karp, Michael Luby, and Neal Madras.
\newblock Monte-carlo approximation algorithms for enumeration problems.
\newblock {\em Journal of algorithms}, 10(3):429--448, 1989.

\bibitem[KSM95]{kannan1995counting}
Sampath Kannan, Z~Sweedyk, and Steve Mahaney.
\newblock Counting and random generation of strings in regular languages.
\newblock In {\em Proceedings of the sixth annual ACM-SIAM symposium on
  Discrete algorithms}, pages 551--557. Society for Industrial and Applied
  Mathematics, 1995.

\bibitem[KV00]{KV00}
Phokion~G. Kolaitis and Moshe~Y. Vardi.
\newblock Conjunctive-query containment and constraint satisfaction.
\newblock {\em J. Comput. Syst. Sci.}, 61(2):302--332, 2000.

\bibitem[Mai94]{mairson1994generating}
Harry~G Mairson.
\newblock Generating words in a context-free language uniformly at random.
\newblock {\em Information Processing Letters}, 49(2):95--99, 1994.

\bibitem[Nev02]{N02}
Frank Neven.
\newblock Automata theory for {XML} researchers.
\newblock {\em {SIGMOD} Record}, 31(3):39--46, 2002.

\bibitem[OD14]{oztok2014cv}
Umut Oztok and Adnan Darwiche.
\newblock Cv-width: A new complexity parameter for cnfs.
\newblock In {\em ECAI}, pages 675--680, 2014.

\bibitem[PD08]{pipatsrisawat2008new}
Knot Pipatsrisawat and Adnan Darwiche.
\newblock New compilation languages based on structured decomposability.
\newblock In {\em AAAI}, volume~8, pages 517--522, 2008.

\bibitem[PS13]{PS13}
Reinhard Pichler and Sebastian Skritek.
\newblock Tractable counting of the answers to conjunctive queries.
\newblock {\em J. Comput. Syst. Sci.}, 79(6):984--1001, 2013.

\bibitem[Rab69]{R69}
Michael~O Rabin.
\newblock Decidability of second-order theories and automata on infinite trees.
\newblock {\em Transactions of the american Mathematical Society}, 141:1--35,
  1969.

\bibitem[RGG03]{ramakrishnan2003database}
Raghu Ramakrishnan, Johannes Gehrke, and Johannes Gehrke.
\newblock {\em Database management systems}, volume~3.
\newblock McGraw-Hill New York, 2003.

\bibitem[RN16]{russell2016artificial}
Stuart~J Russell and Peter Norvig.
\newblock {\em Artificial intelligence: a modern approach}.
\newblock Malaysia; Pearson Education Limited,, 2016.

\bibitem[RVBW06]{rossi2006handbook}
Francesca Rossi, Peter Van~Beek, and Toby Walsh.
\newblock {\em Handbook of constraint programming}.
\newblock Elsevier, 2006.

\bibitem[Sch07]{S07}
Thomas Schwentick.
\newblock Automata for xml?a survey.
\newblock {\em Journal of Computer and System Sciences}, 73(3):289--315, 2007.

\bibitem[Sei90]{S90}
Helmut Seidl.
\newblock Deciding equivalence of finite tree automata.
\newblock {\em {SIAM} J. Comput.}, 19(3):424--437, 1990.

\bibitem[Ter99]{T99}
Eugenia Ternovskaia.
\newblock Automata theory for reasoning about actions.
\newblock In {\em Proceedings of the Sixteenth International Joint Conference
  on Artificial Intelligence, {IJCAI} 99, Stockholm, Sweden, July 31 - August
  6, 1999. 2 Volumes, 1450 pages}, pages 153--159, 1999.

\bibitem[Tho97]{T97}
Wolfgang Thomas.
\newblock Languages, automata, and logic.
\newblock In {\em Handbook of formal languages}, pages 389--455. Springer,
  1997.

\bibitem[TW68]{TR68}
James~W. Thatcher and Jesse~B. Wright.
\newblock Generalized finite automata theory with an application to a decision
  problem of second-order logic.
\newblock {\em Mathematical systems theory}, 2(1):57--81, 1968.

\bibitem[Val79]{valiant1979complexity}
Leslie~G Valiant.
\newblock The complexity of enumeration and reliability problems.
\newblock {\em SIAM Journal on Computing}, 8(3):410--421, 1979.

\bibitem[Var95]{V95}
Moshe~Y Vardi.
\newblock Alternating automata and program verification.
\newblock In {\em Computer Science Today}, pages 471--485. Springer, 1995.

\bibitem[Var00]{V00}
Moshe~Y. Vardi.
\newblock Constraint satisfaction and database theory: a tutorial.
\newblock In {\em Proceedings of the Nineteenth {ACM} {SIGMOD-SIGACT-SIGART}
  Symposium on Principles of Database Systems, May 15-17, 2000, Dallas, Texas,
  {USA}}, pages 76--85, 2000.

\bibitem[VSBR83]{valiant1983fast}
LG~Valiant, S~Skyum, S~Berkowitz, and C~Rackoff.
\newblock Fast parallel computation of polynomials using few processors.
\newblock {\em SIAM Journal on Computing}, 12(4):641--644, 1983.

\bibitem[Wil10]{W10}
Ross Willard.
\newblock Testing expressibility is hard.
\newblock In {\em Principles and Practice of Constraint Programming - {CP} 2010
  - 16th International Conference, {CP} 2010, St. Andrews, Scotland, UK,
  September 6-10, 2010. Proceedings}, pages 9--23, 2010.

\bibitem[Yan81]{Y81}
Mihalis Yannakakis.
\newblock Algorithms for acyclic database schemes.
\newblock In {\em Very Large Data Bases, 7th International Conference,
  September 9-11, 1981, Cannes, France, Proceedings}, pages 82--94, 1981.

\bibitem[ZCL{\etalchar{+}}18]{ZhaoC0HY18}
Zhuoyue Zhao, Robert Christensen, Feifei Li, Xiao Hu, and Ke~Yi.
\newblock Random sampling over joins revisited.
\newblock In {\em SIGMOD}, pages 1525--1539, 2018.

\bibitem[Zhu17]{Z17}
Dmitriy Zhuk.
\newblock A proof of {CSP} dichotomy conjecture.
\newblock In {\em 58th {IEEE} Annual Symposium on Foundations of Computer
  Science, {FOCS} 2017, Berkeley, CA, USA, October 15-17, 2017}, pages
  331--342, 2017.

\end{thebibliography}

\end{document}